\renewcommand*{\backref}[1]{}
\renewcommand*{\backrefalt}[4]{{
		\ifcase #1 Not cited.%
		\or Cited on page~#2.%
		\else Cited on pages #2.%
		\fi%
	}}
\newcommand{\thmenvcounter}{chapter}
\newcommand{\cu}[1]{
	\ifcat\noexpand#1\relax
	\bm{#1}
	\else
	\mathbf{#1}
	\fi
}
\newcommand{\tash}[2]{\frac{\partial #1}{\partial #2}}
\newcommand{\tashh}[3]{\frac{\partial^2 #1}{\partial #2 \, \partial #3}}
\newcommand{\diff}{\mathop{}\!\mathrm{d}}
\newcommand{\cond}{{\;|\;}}
\newcommand{\condbig}{{\;\big|\;}}
\newcommand{\condBigg}{{\;\Bigg|\;}}
\let\sup\relax
\let\inf\relax
\let\lim\relax
\DeclareMathOperator*{\argmin}{arg\,min\,}  
\DeclareMathOperator*{\sup}{sup\,}  
\DeclareMathOperator*{\inf}{inf\,}  
\DeclareMathOperator*{\lim}{lim\,}  
\newcommand{\sgn}{\operatorname{sgn}}           
\newcommand{\expecsym}{\operatorname{\mathbb{E}}}     
\newcommand{\covsym}{\operatorname{Cov}}     
\newcommand{\varrsym}{\operatorname{Var}}     
\newcommand{\diagsym}{\operatorname{diag}}     
\newcommand{\tracesym}{\operatorname{tr}}           
\let\expec\relax
\let\cov\relax
\let\varr\relax
\let\diag\relax
\let\trace\relax
\newcommand{\expec}{\@ifstar{\@expecauto}{\@expecnoauto}}
\newcommand{\@expecauto}[1]{\expecsym \left[ #1 \right]}
\newcommand{\@expecnoauto}[1]{\expecsym \, [#1]}
\newcommand{\expecbig}[1]{\expecsym \big[ #1 \big]}
\newcommand{\expecBig}[1]{\expecsym \Big[ #1 \Big]}
\newcommand{\cov}{\@ifstar{\@covauto}{\@covnoauto}}
\newcommand{\@covauto}[1]{\covsym \left[ #1 \right]}
\newcommand{\@covnoauto}[1]{\covsym \, [#1]}
\newcommand{\covbig}[1]{\covsym \big[ #1 \big]}
\newcommand{\varr}{\@ifstar{\@varrauto}{\@varrnoauto}}
\newcommand{\@varrauto}[1]{\varrsym \left[ #1 \right]}
\newcommand{\@varrnoauto}[1]{\varrsym \, [#1]}
\newcommand{\varrbig}[1]{\varrsym \big[ #1 \big]}
\newcommand{\diag}{\@ifstar{\@diagauto}{\@diagnoauto}}
\newcommand{\@diagauto}[1]{\diagsym \left( #1 \right)}
\newcommand{\@diagnoauto}[1]{\diagsym \, (#1)}
\newcommand{\trace}{\@ifstar{\@traceauto}{\@tracenoauto}}
\newcommand{\@traceauto}[1]{\tracesym \left( #1 \right)}
\newcommand{\@tracenoauto}[1]{\tracesym \, (#1)}
\newcommand{\A}{\mathcal{A}}           
\newcommand{\Am}{\overline{\mathcal{A}}}           
\newcommand*{\trans}{{\mkern-1.5mu\mathsf{T}}}
\newcommand*{\T}{\mathbb{T}} 
\newcommand*{\R}{\mathbb{R}} 
\newcommand*{\N}{\mathbb{N}} 
\newcommand*{\FF}{\mathcal{F}} 
\newcommand*{\PP}{\mathbb{P}} 
\newcommand*{\GP}{\mathrm{GP}} 
\newcommand{\mineig}{\lambda_{\mathrm{min}}}
\newcommand{\maxeig}{\lambda_{\mathrm{max}}}
\let\norm\relax
\DeclarePairedDelimiter{\normbracket}{\lVert}{\rVert}
\newcommand{\norm}{\normbracket}
\newcommand{\normbig}[1]{\big \lVert #1 \big \rVert}
\newcommand{\normBig}[1]{\Big \lVert #1 \Big\rVert}
\let\innerp\relax
\DeclarePairedDelimiter{\innerpbracket}{\langle}{\rangle}
\newcommand{\innerp}{\innerpbracket}
\let\abs\relax
\DeclarePairedDelimiter{\absbracket}{\lvert}{\rvert}
\newcommand{\abs}{\absbracket}
\newcommand{\absbig}[1]{\big \lvert #1 \big \rvert}
\newcommand{\absBig}[1]{\Big \lvert #1 \Big\rvert}
\newcommand{\mBesselsec}{\operatorname{K}_\nu}
\newcommand{\jacob}{\operatorname{J}}
\newcommand{\hessian}{\operatorname{H}}
\def\matern{Mat\'{e}rn }
	\newtheorem{envcounter}{EnvcounterDummy}[\thmenvcounter]
	\newtheorem{theorem}[envcounter]{Theorem}
	\newtheorem{lemma}[envcounter]{Lemma}
	\newtheorem{corollary}[envcounter]{Corollary}
	\newtheorem{remark}[envcounter]{Remark}
	\newtheorem{example}[envcounter]{Example}
	\newtheorem{definition}[envcounter]{Definition}
	\newtheorem{algorithm}[envcounter]{Algorithm}
	\newtheorem{assumption}[envcounter]{Assumption}
\author{Zheng Zhao}
\title{State-Space Deep Gaussian Processes with Applications}
\begin{document}

\includepdf[pages=-]{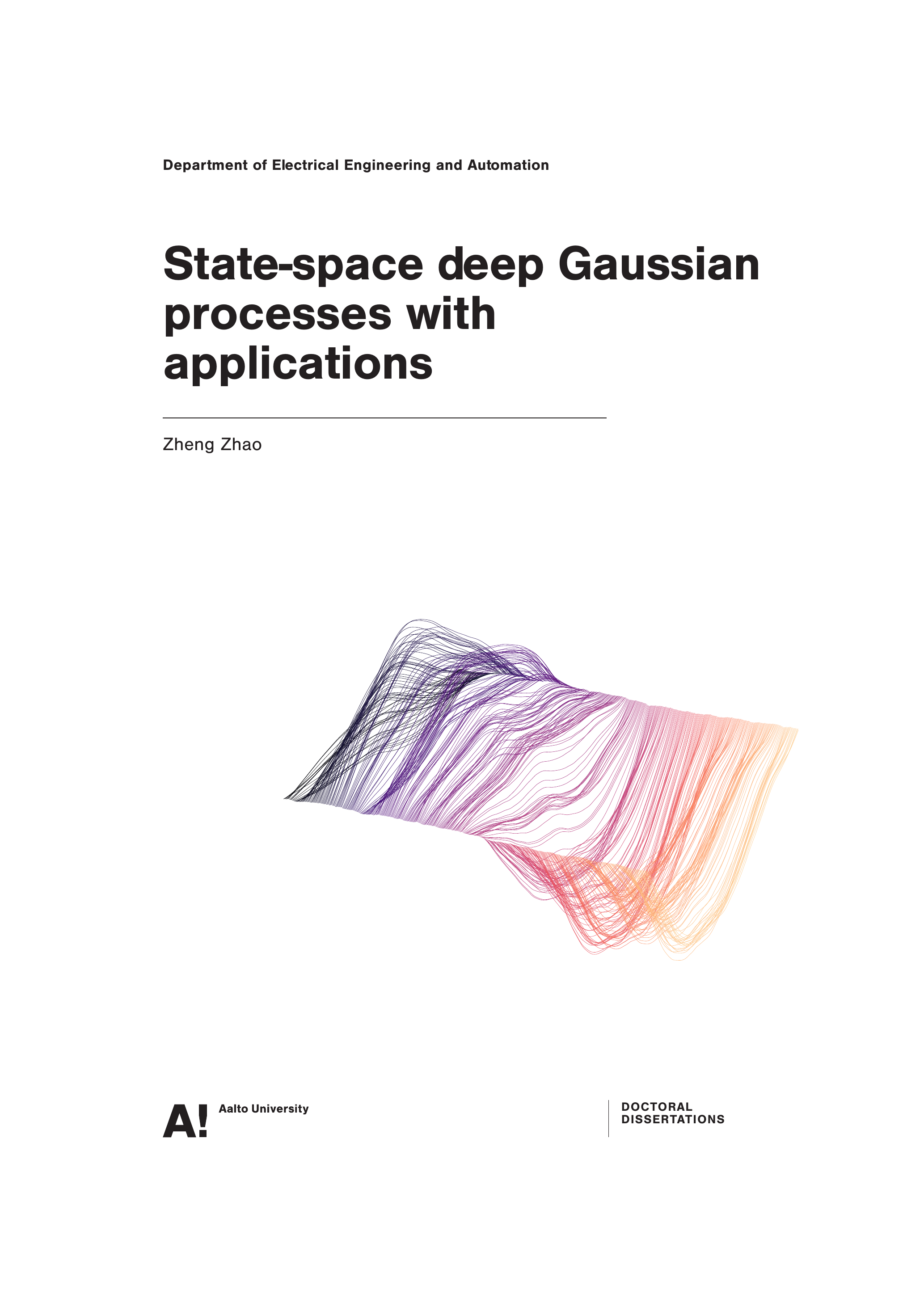}

\draftabstract{This thesis is mainly concerned with state-space approaches for solving deep (temporal) Gaussian process (DGP) regression problems. More specifically, we represent DGPs as hierarchically composed systems of stochastic differential equations (SDEs), and we consequently solve the DGP regression problem by using state-space filtering and smoothing methods. The resulting state-space DGP (SS-DGP) models generate a rich class of priors compatible with modelling a number of irregular signals/functions. Moreover, due to their Markovian structure, SS-DGPs regression problems can be solved efficiently by using Bayesian filtering and smoothing methods. The second contribution of this thesis is that we solve continuous-discrete Gaussian filtering and smoothing problems by using the Taylor moment expansion (TME) method. This induces a class of filters and smoothers that can be asymptotically exact in predicting the mean and covariance of stochastic differential equations (SDEs) solutions.  Moreover, the TME method and TME filters and smoothers are compatible with simulating SS-DGPs and solving their regression problems. Lastly, this thesis features a number of applications of state-space (deep) GPs. These applications mainly include, (i) estimation of unknown drift functions of SDEs from partially observed trajectories and (ii) estimation of spectro-temporal features of signals.}

\setcounter{page}{0}

\begin{preface}[Helsinki]{\large{\begin{CJK}{UTF8}{bkai}\\趙~正\end{CJK}}
}
The research work in this thesis has been carried out in the Department of Electrical Engineering and Automation, Aalto University, during the years 2018-2021. My doctoral studies officially started in April of 2018, while most of the pivotal work came in 2020-2021. During this time, my doctoral research was financially supported by Academy of Finland and Aalto ELEC Doctoral School. The Aalto Scientific Computing team and the Aalto Learning Center also provided useful computational and literature resources for my studies. I particularly enjoyed the Spring, Autumn, and Winter in Finland, which allowed me to find inner peace and focus on my research.

I would like to offer my greatest gratitude to Prof. Simo S\"{a}rkk\"{a} who is my supervisor and mentor, and without whom this work would never have been possible. After finishing my master studies in Beijing University of Technology in 2017, I found myself lost in finding a ``meaningful'' way of life in the never-sleeping metropolis that is Beijing. This quest was fulfilled when Simo offered me the opportunity of pursuing a doctoral degree under his supervision. Disregarding my bewilderment on the research path in the beginning, Simo's patience and valuable guidance led me to a research area that I am fascinated in. Over the years, Simo's help, support, and friendship have helped me become a qualified and independent researcher. I think very highly of Simo's supervision, and I almost surely could not have found a better supervisor. 

During my years in the campus, I owe a great thanks to Rui Gao (\begin{CJK}{UTF8}{bkai}高~睿\end{CJK}) who is a brilliant, learnt, and erudite researcher.

I would like to thank these few people that have accompanied me through joy and sorrow, I name: Adrien Corenflos and Christos Merkatas. I thank you for the friendship and relieving me from solitude\footnote{This was written under constraint.}.  

During my years in Aalto university, I have shared my office with Marco Soldati, Juha Sarmavuori, Janne Myll\"{a}rinen, Fei Wang (\begin{CJK}{UTF8}{bkai}王~斐\end{CJK}), Jiaqi Liu (\begin{CJK}{UTF8}{bkai}劉~佳琦\end{CJK}), Ajinkya Gorad, Masaya Murata (\begin{CJK}{UTF8}{bkai}村田~真哉\end{CJK}), and Otto Kangasmaa. I thank them all for filling the office with happiness and joy. I especially thank Marco Soldati who offered me honest friendship, lasagne, and taught me many useful Italian phrases. My thanks also go to Lauri Palva, Zenith Purisha, Joel Jaskari, Sakira Hassan, Fatemeh Yaghoobi, Abubakar Yamin, Zaeed Khan, Xiaofeng Ma (\begin{CJK}{UTF8}{bkai}馬 曉峰\end{CJK}), Prof. Ivan Vujaklija, Dennis Yeung, Wendy Lam, Prof. Ilkka Laakso, Marko Mikkonen, Noora Matilainen, Juhani Kataja, Linda Srbova, and Tuomas Turunen. All these amazing people made working at Aalto a real pleasure. I would also like to give my thanks to Laila Aikala who kindly offered me a peaceful place to stay in Espoo.

I warmly thank Prof. Leo K\"{a}rkk\"{a}inen for the collaboration on the AI in Health Technology course and our inspiring discussions on many Thursdays and Fridays. I particularly enjoyed the collaboration with Muhammad Fuady Emzir who offered me knowledge generously and with no reservations. Many thanks go to my coauthors Prof. Roland Hostettler, Prof. Ali Bahrami Rad, Filip Tronarp, and Toni Karvonen. I also appreciated the collaboration with Sarang Thombre and Toni Hammarberg from Finnish Geospatial Research Institute, Prof. Ville V. Lehtola from University of Twente, and Tuomas Lumikari from Helsinki University Hospital. I also thank Prof. Lassi Roininen and Prof. Arno Solin for their time and valuable advice. 

Lastly, I would like to thank my parents and sister who support me persistently as always. 

\end{preface}

\clearpage
\tableofcontents

\languagecheck{Adrien Corenflos, Christos Merkatas, and Dennis Yeung}

%
\def\authorscontributionname{Author's contribution}
\listofpublications


\abbreviations

\begin{description}[style=multiline,leftmargin=3cm]
\item[CD-FS] Continuous-discrete filtering and smoothing
\item[DGP] Deep Gaussian process
\item[GFS] Gaussian approximated density filter and smoother
\item[GMRF] Gaussian Markov random field
\item[GP] Gaussian process
\item[It\^{o}-1.5] It\^{o}--Taylor strong order 1.5
\item[LCD] Locally conditional discretisation
\item[MAP] Maximum a posteriori
\item[MCMC] Markov chain Monte Carlo
\item[MLE] Maximum likelihood estimation
\item[NSGP] Non-stationary Gaussian process
\item[ODE] Ordinary differential equation
\item[PDE] Partial differential equation
\item[RBF] Radial basis function
\item[R-DGP] Regularised (batch) deep Gaussian process
\item[R-SS-DGP] Regularised state-space deep Gaussian process
\item[RTS] Rauch--Tung--Striebel
\item[SDE] Stochastic differential equation
\item[SS-DGP] State-space deep Gaussian process
\item[SS-GP] State-space Gaussian process
\item[TME] Taylor moment expansion
\end{description}

\symbols

\begin{description}[style=multiline,leftmargin=3cm]
\item[$a$] Drift function of SDE
\item[$A$] Drift matrix of linear SDE
\item[$\A$] Infinitesimal generator
\item[$\Am$] Multidimensional infinitesimal generator
\item[$b$] Dispersion function of SDE
\item[$B$] Dispersion matrix of linear SDE
\item[$c$] Constant
\item[$\mathcal{C}^k(\Omega; \Pi)$] Space of $k$ times continuously differentiable functions on $\Omega$ mapping to $\Pi$
\item[$C(t,t')$] Covariance function
\item[$C_{\mathrm{Mat.}}(t,t')$] \matern covariance function
\item[$C_{\mathrm{NS}}(t,t')$] Non-stationary \matern covariance function
\item[$C_{1:T}$] Covariance/Gram matrix by evaluating the covariance function $C(t, t')$ on Cartesian grid $(t_1,\ldots, t_T) \times (t_1,\ldots, t_T)$
\item[$\covsym$] Covariance
\item[$\cov{X \mid Y}$] Conditional covariance of random variable $X$ given another random variable $Y$
\item[$\cov{X \mid y}$] Conditional covariance of random variable $X$ given the realisation $y$ of random variable $Y$
\item[$d$] Dimension of state variable
\item[$d_i$] Dimension of the $i$-th GP element
\item[$d_y$] Dimension of measurement variable
\item[$\det$] Determinant
\item[$\diagsym$] Diagonal matrix 
\item[$\expecsym$] Expectation
\item[$\expec{X \mid \mathcal{F}}$] Conditional expectation of $X$ given sigma-algebra $\mathcal{F}$
\item[$\expec{X \cond Y}$] Conditional expectation of $X$ given the sigma-algebra generated by random variable $Y$
\item[$\expec{X \cond y}$] Conditional expectation of $X$ given the realisation $y$ of random variable $Y$
\item[$f$] Approximate transition function in discrete state-space model
\item[$f^M$] $M$-order TME approximated transition function in discrete state-space model
\item[$\check{f}$] Exact transition function in discrete state-space model
\item[$\mathring{f}_j$] $j$-th frequency component
\item[$\FF$] Sigma-algebra
\item[$\FF_t$] Filtration
\item[$\FF_t^W$] Filtration generated by $W$ and initial random variable
\item[$g$] Transformation function
\item[$\mathrm{GP}(0, C(t,t'))$] Zero-mean Gaussian process with covariance function $C(t,t')$.
\item[$h$] Measurement function
\item[$H$] Measurement matrix
\item[$\hessian_x f$] Hessian matrix of $f$ with respect to $x$
\item[$I$] Identity matrix
\item[$J$] Set of conditional dependencies of GP elements
\item[$\jacob_x f$] Jacobian matrix of $f$ with respect to $x$
\item[$K$] Kalman gain
\item[$\mBesselsec$] Modified Bessel function of the second kind with parameter $\nu$
\item[$\ell$] Length scale parameter
\item[$\mathcal{L}^\mathrm{A}$] Augmented Lagrangian function
\item[$\mathcal{L}^\mathrm{B}$] MAP objective function of batch DGP
\item[$\mathcal{L}^\mathrm{B-REG}$] $L^1$-regularisation term for batch DGP
\item[$\mathcal{L}^\mathrm{S}$] MAP objective function of state-space DGP
\item[$\mathcal{L}^\mathrm{S-REG}$] $L^1$-regularisation term for state-space DGP
\item[$m(t)$] Mean function
\item[$m^-_k$] Predictive mean at time $t_k$
\item[$m^f_k$] Filtering mean at time $t_k$
\item[$m^s_k$] Smoothing mean at time $t_k$ 
\item[$M$] Order of Taylor moment expansion
\item[$N$] Order of Fourier expansion
\item[$\mathrm{N}(x\mid m, P)$] Normal probability density function with mean $m$ and covariance $P$
\item[$\N$] Set of natural numbers
\item[$O$] Big $O$ notation
\item[$p_X(x)$] Probability density function of random variable $X$
\item[$p_{X \cond Y}(x\cond y)$] Conditional probability density function of $X$ given $Y$ taking value $y$
\item[$P^-_k$] Predictive covariance at time $t_k$
\item[$P^f_k$] Filtering covariance at time $t_k$
\item[$P^s_k$] Smoothing covariance at time $t_k$
\item[$P^{i,j}_k$] Filtering covariance of the $i$ and $j$-th state elements at time $t_k$
\item[$\mathbb{P}$] Probability measure
\item[$q_k$] Approximate process noise in discretised state-space model at time $t_k$
\item[$\check{q}_k$] Exact process noise in discretised state-space model at time $t_k$
\item[$Q_k$] Covariance of process noise $q_k$
\item[$R_{M, \phi}$] Remainder of $M$-order TME approximation for target function $\phi$
\item[$\R$] Set of real numbers
\item[$\R_{>0}$] Set of positive real numbers
\item[$\R_{<0}$] Set of negative real numbers
\item[$\sgn$] Sign function
\item[$\mathcal{S}_{m, P}$] Sigma-point approximation of Gaussian integral with mean $m$ and covariance $P$
\item[$t$] Temporal variable
\item[$\tracesym$] Trace
\item[$t_0$] Initial time
\item[$T$] Number of measurements
\item[$\T$] Temporal domain $\T\coloneqq [t_0, \infty)$
\item[$U$] (State-space) GP
\item[$U^i_{j_i}$] (State-space) GP element in $\mathcal{V}$ indexed by $i$, and it is also a parent of the $j_i$-th GP element in $\mathcal{V}$
\item[$U_{1:T}$] Collection of $U(t_1), U(t_2),\ldots, U(t_T)$
\item[$\mathcal{U}^i$] Collection of parents of $U^i_{j_i}$
\item[$V$] (State-space) deep GP
\item[$V_k$] Shorthand of $V(t_k)$
\item[$V_{1:T}$] Collection of $V(t_1), V(t_2),\ldots, V(t_T)$
\item[$\mathcal{V}$] Collection of GP elements
\item[$\varrsym$] Variance
\item[$w$] Dimension of Wiener process
\item[$W$] Wiener process
\item[$X$] Stochastic process
\item[$X_0$] Initial random variable
\item[$X_k$] Shorthand of $X(t_k)$
\item[$Y_k$] Measurement random variable at time $t_k$
\item[$Y_{1:T}$] Collection of $Y_1, Y_2,\ldots, Y_T$

\item[$\gamma$] Dimension of the state variable of Mat\'{e}rn GP
\item[$\Gamma$] Shorthand of $b(x) \, b(x)^\trans$
\item[$\varGamma$] Gamma function
\item[$\Delta t$] Time interval $t-s$
\item[$\Delta t_k$] Time interval $t_k-t_{k-1}$
\item[$\eta$] Multiplier for augmented Lagrangian function
\item[$\theta$] Auxiliary variable used in augmented Lagrangian function
\item[$\Theta_{r}$] $r$-th polynomial coefficient in TME covariance approximation
\item[$\mineig$] Minimum eigenvalue
\item[$\maxeig$] Maximum eigenvalue
\item[$\Lambda(t)$] Solution of a matrix ordinary differential equation
\item[$\cu{\Lambda}(t, s)$] Shorthand of $\Lambda(t) \, (\Lambda(s))^{-1}$
\item[$\xi_k$] Measurement noise at time $t_k$
\item[$\Xi_k$] Variance of measurement noise $\xi_k$
\item[$\rho$] Penalty parameter in augmented Lagrangian function
\item[$\sigma$] Magnitude (scale) parameter
\item[$\Sigma_M$] $M$-order TME covariance approximant
\item[$\phi$] Target function
\item[$\phi_{ij}$] $i,j$-th element of $\phi$
\item[$\phi^\mathrm{I}$] $\phi^\mathrm{I}(x) \coloneqq x$
\item[$\phi^\mathrm{II}$] $\phi^\mathrm{II}(x) \coloneqq x \, x^\trans$
\item[$\Phi$] Sparsity inducing matrix
\item[$\chi(\Delta t)$] Polynomial of $\Delta t$ associated with TME covariance approximation
\item[$\Omega$] Sample space

\item[$(\Omega, \FF, \FF_t, \PP)$] Filtered probability space with sample space $\Omega$, sigma-algebra $\FF$, filtration $\FF_t$, and probability measure $\PP$
\item[$\abs{\cdot}$] Absolute value
\item[$\norm{\cdot}_p$] $L^p$ norm or $L^p$-induced matrix norm
\item[$\norm{\cdot}_G$] Euclidean norm weighted by a non-singular matrix $G$
\item[$\nabla_x f$] Gradient of $f$ with respect to $x$
\item[$\binom{\cdot}{\cdot}$] Binomial coefficient
\item[$\innerp{\cdot, \cdot}$] Inner product
\item[$\circ$] Mapping composition
\item[$\coloneqq$] By definition
\item[$\times$] Cartesian product
\item[$a \, \wedge \, b$] Minimum of $a$ and $b$
\end{description}

\chapter{Introduction}
\label{chap:intro}
In signal processing, statistics, and machine learning, it is common to consider that noisy measurements/data are generated from a latent, unknown, function. In statistics, this is often regarded as a regression problem over the space of functions. Specifically, Bayesian statistics impose a prior belief over the latent function of interest in the form of a probability distribution. It is therefore of vital importance to choose the prior appropriately, since it will encode the characteristics of the underlying function. In recent decades, Gaussian processes\footnote{In the statistics and applied probability literature, Gaussian processes can also be found under the name of Gaussian fields, in particular when they are multidimensional in the input. Depending on the context, we may use one or the other terminology interchangeably.}~\citep[GPs,][]{Carl2006GPML} have become a popular family of prior distributions over functions, and they have been used successfully in numerous applications~\citep{Roberts2013, Hennig2015, Kocijan2016}.

Formally, GPs are function-valued random variables that have Gaussian distributions fully determined by their mean and covariance functions. The choice of mean and covariance functions is in itself arbitrary, which allows for representing functions with various properties. As an example, \matern covariance functions are used as priors to functions with different degrees of differentiability~\citep{Carl2006GPML}. However, the use of GPs in practice usually involves two main challenges.

The first challenge lies in the expensive \textit{computational cost} of training and parameter estimation. Due to the necessity of inverting covariance matrices during the learning phase, the computational complexity of standard GP regression and parameter estimation is cubic in the number of measurements. This makes GP computationally infeasible for large-scale datasets. Moreover, when the sampled data points are densely located, the covariance matrices that need inversion may happen to be numerically singular or close to singular, making the learning process unstable.

The second challenge is related to modelling of irregular functions, such as piecewise smooth functions, or functions that have time-varying features (e.g., frequency or volatility). Many commonly-used GPs (e.g., with \matern covariance functions) fail to cover these irregular functions mainly because their probability distributions are invariant under translation (i.e., they are said to be \textit{stationary}). This behaviour is illustrated in Figure~\ref{fig:gp-fail}, where we show that a \matern GP poorly fits two irregular functions (i.e., a rectangular signal and a composite sinusoidal signal), because the GP's parameters/features are assumed to be constant over time. Specifically, in the rectangular signal example, in order to model the discontinuities, the \matern GP recovers a small global length scale ($\ell \approx 0.04$) which results in poor fitting in the continuous and flat parts. Similarly, in the composite sinusoidal signal example, the GP learns a small global length scale ($\ell \approx 0.01$) in order to model the high-frequency sections of the signal. This too results in poor fitting the low-frequency section of the signal.

\begin{figure}[t!]
	\centering
	\includegraphics[width=.95\linewidth]{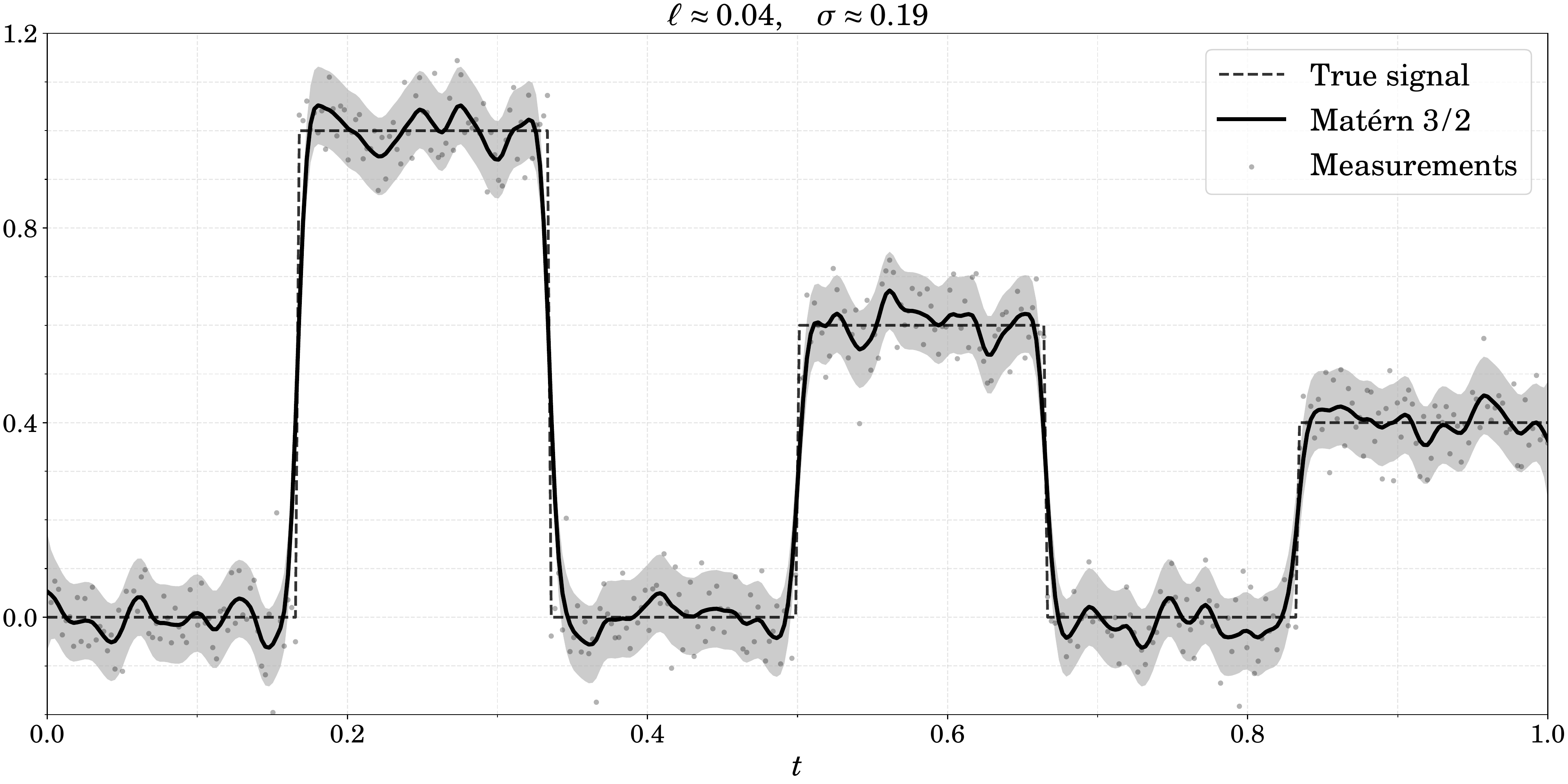}\\
	\includegraphics[width=.95\linewidth]{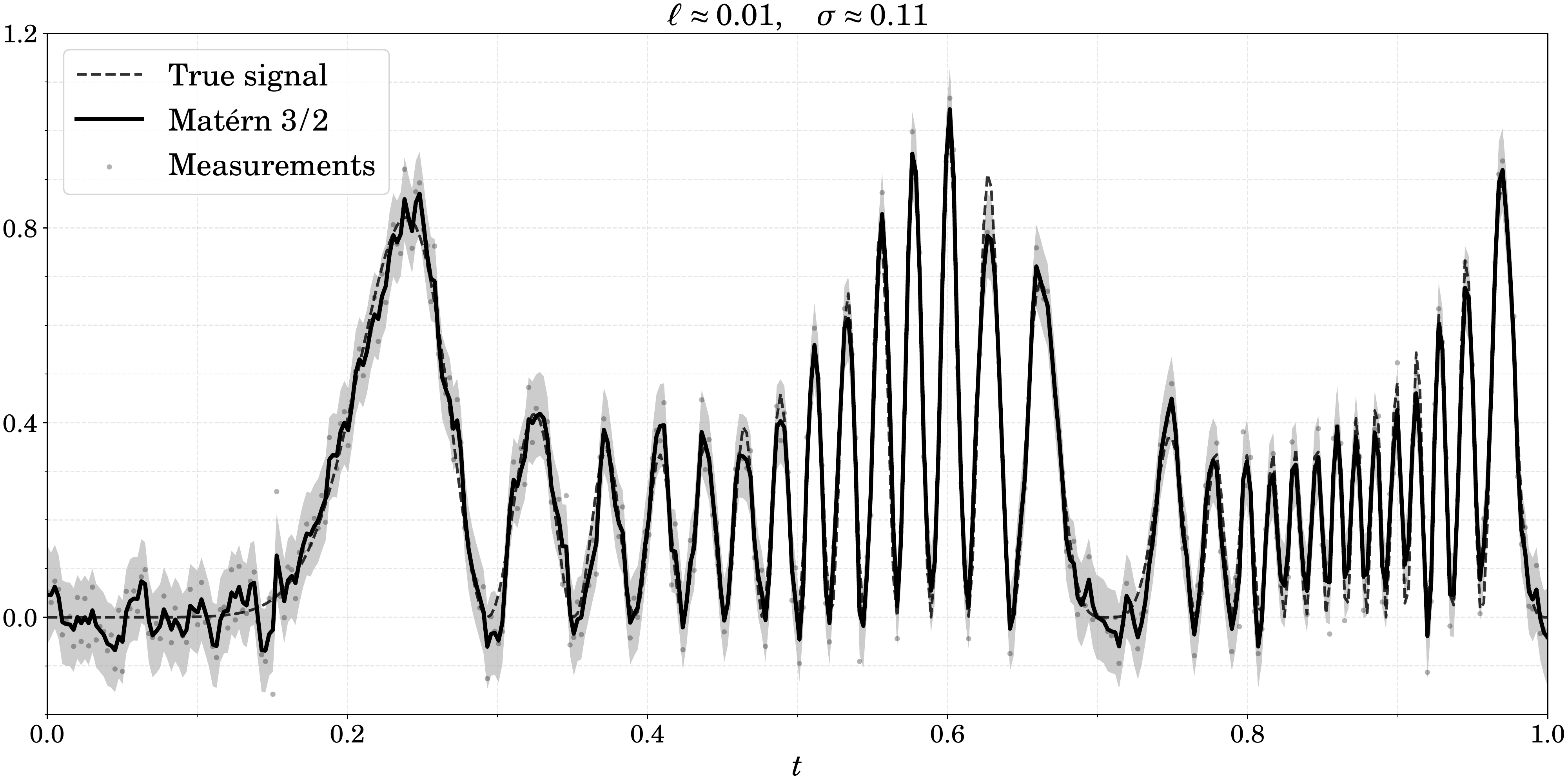}
	\caption{Mat\'{e}rn $\nu=3\,/\,2$ GP regression on a magnitude-varying rectangular signal (top) and a composite sinusoidal signal (bottom). The parameters $\ell$ and $\sigma$ are learnt by maximum likelihood estimation. The figures are taken from~\citet{Zhao2020SSDGP}.}
	\label{fig:gp-fail}
\end{figure}

The main aim of this thesis is thus to introduce a new class of non-stationary (Gaussian) Markov processes, that we name \textit{state-space deep Gaussian processes (SS-DGPs)}\footnote{Please note that although the name includes the term Gaussian, SS-DGPs are typically not Gaussian distributed, but instead hierarchically conditionally Gaussian, hence the name.}. These are able to address the computational and non-stationarity challenges aforementioned, by hierarchically composing the state-space representations of GPs. Indeed, SS-DGPs are computationally efficient models due to their Markovian structure. More precisely, this means that the resulting regression problem can be solved in linear computational time (with respect to the number of measurements) by using Bayesian filtering and smoothing methods. Moreover, due to their hierarchical nature, SS-DGPs are capable of changing their features/characteristics (e.g., length scale) over time, thereby inducing a rich class of priors compatible with irregular functions. The thesis ends with a collection of applications of state-space (deep) GPs.

\section{Bibliographical notes}
\label{sec:literature-review}
In this section we provide a short and non-exhaustive review of related works in the GP literature. In particular we will focus on works that consider specifically reducing their computational complexity and allowing the non-stationarity in GPs.

\subsection*{Scalable Gaussian processes}
We now give a list of GP methods and approximations that are commonly used to reduce the computational costs of GP regression and parameter learning.

\subsubsection{Sparse approximations of Gaussian processes}
Sparse GPs approximate full-rank GPs with sparse representations by using, for example, inducing points~\citep{Snelson2006}, subsets of data~\citep{Snelson2007, Csato2002}, or approximations of marginal likelihoods~\citep{Titsias2009}, mostly relying on so-called pseudo-inputs. These approaches can reduce the computational complexity to quadratic in the number of pseudo-inputs and linear in the number of data points. In practice, the number and position of pseudo-inputs used in sparse representation must either be assigned by human experts or learnt from data~\citep{Hensman2013}. For more comprehensive reviews of sparse GPs, see, for example,~\citet{Quinonero2005unifying, Chalupka2013, LiuHaitao2020}.

\subsubsection*{Gaussian Markov random fields}
Gaussian Markov random fields~\citep[GMRFs,][]{Rue2005Book} are indexed collections of Gaussian random variables that have a Markov property (defined on graph). They are computationally efficient models because their precision matrices are sparse by construction. Methodologies for solving the regression and parameter learning problems on GMRFs can be found, for example, in~\citet{Rue2007, Rue2009N}. However, GMRFs are usually only approximations of Gaussian fields~\citep[see, e.g.,][Chapter 5]{Rue2005Book}, although explicit representations exist for some specific Gaussian fields~\citep{Lindgren2011}.

\subsubsection*{State-space representations of Gaussian processes}
State-space Gaussian processes (SS-GPs) are (temporal) Markov GPs that are solutions of stochastic differential equations~\citep[SDEs,][]{Simo2013SSGP, Sarkka2019}. Due to their Markovian structure, probability distributions of SS-GPs factorise sequentially in the time dimension. The regression problem can therefore be solved efficiently in linear time with respect to the number of data points. Moreover, leveraging the sparse structure of the precision matrix \citep{Grigorievskiy2017}, or leveraging the associativity of the Kalman filtering and smoothing operations \citep{Corenflos2021SSGP} can lead to a sublinear computational complexity.

\subsubsection*{Other data-scalable Gaussian processes}
\citet{Rasmussen2002GPexperts, Meeds2006} form mixtures of GPs by splitting the dataset into batches resulting in a computational complexity that is cubic in the batch size. This methodology can further be made parallel \citep{ZhangMinyi2019}. \citet{Lazaro2010} approximate stationary GPs with sparse spectral representations (i.e., trigonometric expansions). \citet{Gardner2018} and \citet{KeWang2019} use conjugate gradients and stochastic trace estimation to efficiently compute the marginal log-likelihood of standard GPs, as well as their gradients with respect to parameters, resulting in a quadratic computational complexity in the number of data points.

\subsection*{Non-stationary Gaussian processes}
In the below we give a list of methods that are introduced in order to induce non-stationarity in GPs.

\subsubsection*{Non-stationary covariance function-based Gaussian processes}
Non-stationary covariance functions can be constructed by making their parameters (e.g., length scale or magnitude) depend on the data position. For instance, \citet{Gibbs} and \citet{Higdon1999non} present specific examples of covariance functions where the length scale parameter depends on the spatial location. On the other hand, \citet{Paciorek2004, Paciorek2006} generalise these constructions to turn any stationary covariance function into a non-stationary one. There also exist some other non-stationary covariance functions, such as the polynomial or neural network covariance functions~\citep{Williams1998, Carl2006GPML} that can also give non-stationary GPs, but we do not review them here as they are not within the scope of this thesis. 

\subsubsection*{Composition-based Gaussian processes}
\citet{Sampson1992, Schmidt2003, Carl2006GPML} show that it is possible to construct a non-stationary GP as the pullback of an existing stationary GP by a non-linear transformation. Formally, given a stationary GP $U\colon E \to \R$, one can find a suitable transformation $\Upsilon\colon \T \to E$, such that the composition $U \circ \Upsilon\colon \T\to\R$ is a non-stationary GP on $\T$. For example, \citet{Calandra2016ManifoldGP} and~\citet{Wilson2016DeepKernel} choose $\Upsilon$ as neural networks.

\subsubsection{Warping-based Gaussian processes}
Conversely to the composition paradigm above, it is also possible to transform GPs the other way around, that is, to consider that GPs are the transformations of some non-Gaussian processes by non-linear functions~\citep{Snelson2004}. Computing the marginal log-likelihood function of these warped GPs is then done by leveraging the change-of-variables formula for Lebesgue integrals (when it applies). However, the warping can be computationally demanding as the change-of-variables formula requires computing the inverse determinant of the transformation Jacobian. This issue can be mitigated, for example, by writing the the warping scheme with multiple layers of elementary functions which have explicit inverses~\citep{Rios2019}.

\subsection*{Deep Gaussian processes}
The deterministic constructions for introducing non-stationarity GPs can be further extended in order to give a class of non-stationary non-Gaussian processes that can also represent irregular functions. While they are different in structure, the three subclasses of models presented below are usually all referred as deep Gaussian processes (DGPs) in literature.

\subsubsection*{Composition-based deep Gaussian processes}
\citet{Gredilla2012} extends the aforementioned pullback idea by taking $\Upsilon\colon \T\to E$ to be a GP instead of a deterministic mapping in order to overcome the overfitting problem. Resulting compositions of the form $U \circ \Upsilon\colon \T\to\R$ may not necessarily be GPs anymore but may provide a more flexible family of priors than that of deterministic compositions. This construction can be done recursively leading to a subclass of DGPs~\citep{Damianou2013}. However, the training of these DGPs is found to be challenging and requires approximate inference methods~\citep{Bui2016, Salimbeni2017Doubly}. Moreover, \citet{Duvenaud2014Thesis, Duvenaud2014} show that increasing the depth of DGPs can lead to a representation pathology, where samples of DGPs tend to be flat in high probability and exhibit sudden jumps. This problem can be mitigated by making their latent GP components explicitly depend on their original inputs~\citep{Duvenaud2014}.

\subsubsection*{Hierarchical parametrisation-based deep Gaussian processes}
A similar idea to compositional DGPs is to model the parameters of GPs as latent GPs. The posterior distribution of the joint model can then be computed by successive applications of Bayes' rule. As an example,~\citet{Roininen2016} consider putting a GP prior on the length scale parameter of a \matern GP and use Metropolis-within-Gibbs to sample from the posterior distribution. Similarly,~\citet{Salimbeni2017ns} model the length scale parameter of the non-stationary covariance function introduced by~\citet{Paciorek2004} as a GP, but use a variational approximation to approximate its posterior distribution. Other sampling techniques to recover the posterior distribution of these models can be found, for example, in~\citet{Heinonen2016, Karla2020}. 

\citet{Zhao2020SSDGP} and~\citet{Emzir2020} show that this hierarchy in parametrisation can be done recursively, leading to another subclass of DGPs that can be represented by stochastic (partial) differential equations. The relationship between the composition-based and parametrisation-based DGPs is also briefly discussed in~\citet{Dunlop2018JMLR}.

\section{Reproducibility}
\label{sec:codes}
In order to allow for reproducibility of our work, we provide  the following implementations.

\begin{itemize}
	\item Taylor moment expansion (Chapter~\ref{chap:tme}). Python and Matlab codes for it are available at \href{https://github.com/zgbkdlm/tme}{https://github.com/zgbkdlm/tme}.
	\item State-space deep Gaussian processes (Chapter~\ref{chap:dssgp}). Python and Matlab codes for it are available at \href{https://github.com/zgbkdlm/ssdgp}{https://github.com/zgbkdlm/ssdgp}.
	\item The Python codes for reproducing Figures~\ref{fig:gp-fail}, \ref{fig:disc-err-dgp-m12}, \ref{fig:ssdgp-vanishing-cov}, and~\ref{fig:spectro-temporal-demo}, as well as the simulations in Examples~\ref{example-tme-benes}, \ref{example:tme-softplus}, \ref{example:ssdgp-m12}, and~\ref{example:ssdgp-m32} are available at \href{https://github.com/zgbkdlm/dissertation}{https://github.com/zgbkdlm/dissertation}.
\end{itemize}

\section{Outline of the thesis}
\label{sec:outline}
This thesis consists of seven publications and overviews of them, and the thesis is organised as follows.

In Chapter~\ref{chap:cd-smoothing} we review stochastic differential equations (SDEs) and Bayesian continuous-discrete filtering and smoothing (CD-FS) problems. This chapter lays out the preliminary definitions and results that are needed in the rest of the thesis. 

Chapter~\ref{chap:tme} (related to Publication~\cp{paperTME}) shows how to solve Gaussian approximated CD-FS problems by using the Taylor moment expansion (TME) method. This chapter also features some numerical demonstrations and analyses the positive definiteness of TME covariance approximations as well as the stability of TME Gaussian filters.

Chapter~\ref{chap:dssgp} (related to Publications~\cp{paperSSDGP} and~\cp{paperRNSSGP}) introduces SS-DGPs. In particular, after defining DGPs formally, we introduce their state-space representations. Secondly, we present how to sample from SS-DGPs by combining discretisation and numerical integration. Thirdly, we illustrate the construction of SS-DGPs in the \matern sense. Fourhtly, we represent SS-DGP regression problems as CD-FS problems that we can then solve using the methods introduced in Chapter~\ref{chap:cd-smoothing}. Finally, we explain how DGPs can be regularised in the $L^1$ sense, in particular to promote sparsity at any level of the DGP component hierarchy.

Chapter~\ref{chap:apps} (related to Publications~\cp{paperDRIFT},~\cp{paperKFSECG},~\cp{paperKFSECGCONF},~\cp{paperSSDGP}, and~\cp{paperMARITIME}) introduces various applications of state-space (deep) GPs. These include estimation of the drift functions in SDEs, probabilistic spectro-temporal signal analysis, as well as modelling real-world signals (from astrophysics, human motion, and maritime navigation) with SS-DGPs.

Finally, Chapter~\ref{chap:summary} offers a summary of the contributions of the seven publications presented in this thesis, and concludes with a discussion of unsolved problems and possible future extensions. 

\chapter{Preliminaries}
\label{chap:cd-smoothing}
The main scope of this thesis is to reduce deep Gaussian process (DGP) regression problems into continuous-discrete filtering and smoothing problems by representing DGPs as stochastic differential equations. In this chapter we focus on introducing the technical materials that will be necessary in constructing and solving such representations. Section~\ref{sec:sde} is concerned with introducing stochastic differential equations and their properties. Section~\ref{sec:cd-smoothing} focuses on continuous-discrete filtering and smoothing problems as well as algorithms to solve them. Additionally, for the sake of completeness, we list several intermediate results that will be used in the course of this thesis in Section~\ref{sec:useful-theorem}.

\section{Stochastic differential equations (SDEs)}
\label{sec:sde}
Solutions to stochastic differential equations (SDEs) are a large class of continuous-time Markov processes that are commonly used to model physical, biological, or financial dynamic systems~\citep{Kloeden1992, Braumann2019}. In this section, we introduce SDEs via their stochastic integral equation interpretations, and we thereupon present a few important concepts and results, including, the notion of existence and uniqueness of their solutions, their Markovian nature, and It\^{o}'s formula. For more comprehensive reviews of SDEs, we refer the reader to, for example, \citet{Chung1990, Karatzas1991, Ikeda1992, Oksendal2003}.

\subsection{Stochastic integral equations}
One may think of SDEs as ordinary differential/integral equations with additional stochastic driving terms. Wiener processes, which are also known as Brownian motions, are the \textit{de facto} choice for modelling these driving terms as they allow to represent a rich class of stochastic processes with varying characteristics. 

\begin{definition}[Wiener process]
	\label{def:wiener-process}
	A stochastic process $W \colon \T\times \Omega \to \R$ on some probability space $(\Omega, \FF, \PP)$ is called an $\R$-valued Wiener process on $\T\coloneqq [t_0, \infty)$, if
	\begin{itemize}
		\item $W(t_0) = 0$ almost surely,
		\item $t \mapsto W(t)$ is continuous almost surely,
		\item for every integer $k\geq 1$ and real numbers $t_1 \leq t_2 \leq \cdots \leq t_k\in\T$, the increments $W(t_k) - W(t_{k-1}), \ldots, W(t_2) - W(t_1)$ are mutually independent,
		\item and, for every $t>s\in\T$, the increment $W(t) - W(s) \sim \mathrm{N}(0, t-s)$ is Gaussian distributed of mean zero and covariance $t-s$,
	\end{itemize}
	where $W(t)$ is a shorthand for the random variable $\omega \mapsto W(t, \omega)$.
\end{definition}

There are several ways to construct Wiener processes. The first rigorous construction of Wiener processes is due to Nobert Wiener~\citep{Wiener1923} who construct the Wiener process by considering the space of real-valued continuous functions on an interval (i.e., $\mathcal{C}([0,1];\R)$), and equipping it with a canonical measure (called Wiener measure) that corresponds to the law of the Wiener process~\citep{ReneBrownianBook2012, Kuo1975Book, Kuo2006Book}. The space of continuous functions equipped with the Wiener measure is called the classical/canonical Wiener space.

Nobert Wiener and Raymond Paley also show that one can construct the Wiener process by representing it with a trigonometric orthonormal basis on $[0, 1]$, and independent identically distributed Gaussian random variables~\citep[][Chapter IX]{Paley1934}. This approach was further generalised by Paul L\'{e}vy and Zbigniew Ciesielski for any orthonormal basis of the Hilbert space of square integrable functions $L^2([0, 1])$. This is known as the L\'{e}vy--Ciesielski's construction~\citep{Karatzas1991}. For more comprehensive reviews on the existence/construction of Wiener processes, see, for example,~\citet[][Chapter 3]{ReneBrownianBook2012} or~\citet{Morters2010book}.

Definition~\ref{def:wiener-process} defines scalar-valued Wiener processes. In order to generalise Wiener processes to $\R^w$, it is common to think of $\R^w$-valued Wiener processes as vectors that are a collection of $w$ mutually independent Wiener processes~\citep[][Definition 18.5]{Koralov2007}. As for function-valued Wiener processes, such as $Q$-Wiener processes\footnote{The cover of the thesis illustrates a realisation of a $Q$-Wiener process taking value in a Sobolev space with homogenous Dirichlet boundary condition.}, the generalisation often leverages infinite-dimensional Gaussian measures~\citep{Kuo1975Book,Bogachev1998, Giuseppe2014, Lord_powell_shardlow_2014}.

The key ingredient to defining solutions of SDEs are stochastic integrals of the form
\begin{equation}
	\int^t_{t_0} b(s, \omega) \diff W(s, \omega),
	\label{equ:ito-integral}
\end{equation}
where $b$ is any suitable adapted process in the sense that $\omega \mapsto b(t, \omega)$ is measurable with respect to a filtration to which the Wiener process is adapted~\citep[][Chapter 4]{Kuo2006Book}. However, due to the fact that $t \mapsto W(t)$ has infinite first order variation almost surely~\citep[][Chapter 3]{Oksendal2003}, one cannot define the integral above in the classical Stieltjes sense. There exist multiple interpretations of such stochastic integral, and the two most popular constructions are due to \citet{Ito1944} and \citet{Stratonovich1966}. In It\^{o}'s construction, this leads to an integral being a (local) martingale with respect to the filtration that $W$ is adapted to~\citep{Kuo2006Book}. In particular, when the integrand $b$ does not depend on $\omega$ (i.e., is non-random), the integral~\eqref{equ:ito-integral} reduces to a Gaussian process~\citep{Kuo2006Book}.
\begin{remark}
	This thesis is only concerned with It\^{o}'s construction of stochastic integrals.
\end{remark}
The multidimensional extension of It\^{o} integrals is defined as follows. Suppose that $W$ is a $w$-dimensional Wiener process, and $b$ is an $\R^{d\times w}$-valued process, then the $i$-th element of a $d$-dimensional It\^{o} integral is defined as
\begin{equation}
	\sum^w_{j=1} \int^t_{t_0} b_{ij}(s,\omega) \diff W_{j}(s,\omega),
\end{equation}
where $ij$ and $j$ above stand for the usual element selection notations \citep[][Page 283]{Karatzas1991}.

With It\^{o} integrals defined, we can then formally interpret SDEs. Consider a $w$-dimensional Wiener process $W$ and a stochastic process $X\colon \T\to\R^d$ that satisfies the stochastic integral equation (SIE)
\begin{equation}
	\begin{split}
		X(t) &= X(t_0) + \int^t_{t_0} a(X(s),s) \diff s + \int^t_{t_0} b(X(s), s) \diff W(s),\\
		X(t_0) &= X_0,
	\end{split}
	\label{equ:SIE}
\end{equation}
on some probability space. The differential shorthand 
\begin{equation}
	\begin{split}
		\diff X(t) &= a(X(t), t) \diff t + b(X(t), t)\diff W(t), \\
		X(t_0) &= X_0,
	\end{split}
	\label{equ:SDE}
\end{equation}
of the SIE in Equation~\eqref{equ:SIE} is called a \emph{stochastic differential equation}. The SDE coefficients $a\colon \R^d\times\T\to \R^d$ and $b\colon\R^d\times\T \to \R^{d\times w} $ are called the \textit{drift} and \textit{dispersion} functions, respectively. 

\subsection{Existence and uniqueness of SDEs solutions}
\label{sec:sde-solution-existence-markov}
One fundamental question is whether an SDE admits a solution and, if so, what the properties (e.g., uniqueness and continuity) of the solution(s) are. In literature, the solution analysis of SDEs is usually described in the sense of strong or weak solutions. In this thesis we are mostly concerned with strong solutions that we detail in the following definition.
\begin{definition}[Strong solution]
	\label{def:strong-solution}
	Let $(\Omega, \FF, \FF_t, \mathbb{P})$ be a filtered probability space, $W\colon \T \to\R^w$ be a $w$-dimensional Wiener process defined on this space, and let $X_0\in\R^d$ be a random variable independent of $W$. Also let $\FF_t^W$ be the filtration generated by $W(t)$ and $X_0$. Then a continuous process $X\colon\T\to\R^d$ is said to be a strong solution of the SDE~\eqref{equ:SDE} if the following four conditions are satisfied.
	\begin{enumerate}[I.]
		\item $X(t)$ is adapted to $\FF_t^W$.
		\item $\PP$-almost surely $X(t)$ solves Equation~\eqref{equ:SIE} for all $t\in\T$.
		\item $\PP$-almost surely $\int^t_{t_0} \abs*{a_{i}(X(s), s)} + (b_{ij}(X(s), s))^2 \diff s < \infty$ holds for all $i=1,2,\ldots, d$, $j=1,2,\ldots, w$, and $t\in\T$.
		\item $\PP$-almost surely $X(t_0) = X_0$.
	\end{enumerate}
\end{definition}
The above definition is found in~\citet[][Definition 2.1]{Karatzas1991} or~\citet[][Section 10.4]{Chung1990}, but for simplicity, here we omit to augment $\FF_t^W$ with the null sets of $\Omega$. This definition means that if we are given a probability space which carries $W$ and $X_0$, the solution $X(t)$ must be adapted to the generated filtration $\FF_t^W$. In other words, $W$ and $X_0$ should completely characterise $X$, and one can write the strong solution as a function of $W$ and $X_0$ only. 

The third condition in Definition~\ref{def:strong-solution} is important to keep in mind as it makes the solutions continuous semimartingales~\citep{Chung1990, Williams2000Vol2}.

The notion of strong solution might not always be useful because the condition of being adapted to the generated filtration is sometimes too strict. For example, in Tanaka's equation~\citep[][Example 5.3.2]{Oksendal2003}, one cannot find such an $\FF_t^W$-adapted solution therefore, the equation does not admit a strong solution. To relax this restriction, we can allow flexibility on the Wiener process, and seek pairs $(X, W)$ solutions of the SDE~\eqref{equ:SDE}, instead of simply seeking $X$~\citep{Chung1990, Oksendal2003}. Such pairs are called weak solutions and are closely related to the martingale problem~\citep{Stroock1969, Strock1979, Williams2000Vol2}. Moreover, strong solutions are weak solutions but the converse is not true. However, since this thesis is not concerned with weak solutions, we refer the reader to, for example, \citet[][]{Chung1990} or~\citet[][Definition 3.1]{Karatzas1991} for technical expositions of these. 

\begin{remark}
	\label{remark:a-solution-of-sde}
	In the remainder of this thesis, unless mentioned otherwise, we will be solely concerned with strong solutions of SDEs (although some results may hold in the weak sense too). Moreover, strong solutions of SDEs will be referred to as It\^{o} processes.
\end{remark}

Pathwise and weak uniqueness of SDE solutions are defined as follows (see, \citealp[][Chapter 5.3]{Karatzas1991} or \citealp[][Page 247]{Chung1990}).
\begin{definition}[Pathwise uniqueness]
	\label{def:pathwise-unique}
	The pathwise uniqueness holds for the SDE in Equation~\eqref{equ:SDE} if for all solutions $\overline{X}$ and $\widetilde{X}$ that share the same probability space, Wiener process, and initial condition, we have
	\begin{equation}
		\PP\big(\big\lbrace \omega \colon  \absbig{\overline{X}(t) - \widetilde{X}(t)} = 0,~\mathrm{for~all~} t\in\T \big\rbrace\big) = 1.
		\label{equ:indistinguishable}
	\end{equation}
\end{definition}
Notice that the ``for all $t\in\T$'' condition in Equation~\eqref{equ:indistinguishable} can be moved outside of the probability because $\big\lbrace\omega\colon\absbig{\overline{X}(t) - \widetilde{X}(t)} = 0,~\mathrm{for~all~} t\in\T \big\rbrace$ includes $\big\lbrace\omega\colon\absbig{\overline{X}(t) - \widetilde{X}(t)} = 0\big\rbrace$ for all $t\in\T$, and the converse is true as well due to the continuity of the solutions.
\begin{definition}[Weak uniqueness]
	\label{def:weakly-unique}
	The weak uniqueness holds for the SDE in Equation~\eqref{equ:SDE}, if all solutions are identical in law. 
\end{definition}
%
Furthermore, a classical result by~\citet{Yamada1971} shows that the pathwise uniqueness implies the weak uniqueness. 


\subsection{Markov property of SDE solutions}
\label{sec:markov-property}
One of the main purposes of using SDEs is to construct continuous-time Markov processes. Hence, it is necessary to examine if solutions of SDEs admit the Markov property defined as follows. 
\begin{definition}[Markov process]
	\label{def:markov-process}
	Let $\FF_t$ be a given filtation on $\Omega$, and $X(t)$ be an $\FF_t$-adapted process. Then $X$ is said to be a Markov process (with respect to $\FF_t$) if 
	\begin{equation}
		\expec{\varphi(X(t + s)) \cond \FF_t} = \expec{\varphi(X(t+s)) \cond X(t)}
	\end{equation}
	for every $t\in\T, s\in\R_{\geq 0}$, and bounded Borel measurable function $\varphi$.
\end{definition}
%
%
It can be shown that It\^{o} processes are indeed Markov processes. Proofs can be found, for example, in~\citet[][Theorem 7.1.2]{Oksendal2003}, \citet[][Theorem 10.6.2]{Kuo2006Book},~\citet[][Theorem 18.13]{ReneBrownianBook2012}, \citet[][Theorem 8.6]{LeGall2016}, and~\citet[][Lemma 10.10]{Chung1990}. 

\begin{remark}
    Thanks to the martingale-problem method~\citep{Stroock1969}, the Markov property for SDEs can be proved in more general context than strong solutions of SDEs, if the associated martingale problem is well-posed. For details, see, for example, \citet[][Theorem 21.1]{Williams2000Vol2, Ethier1986}.
\end{remark}

The Markov property is useful in the sense that it allows for predicting the future given some past information (i.e., $\FF_t$) by only using the present (i.e., $X(t)$). This feature makes many applications -- such as Bayesian filtering and smoothing~\citep{Sarkka2013} -- computationally efficient. To see this, let $p_{X(t_1),\ldots, X(t_k), X(t_{k+1})}(x_1, \ldots,\allowbreak x_k, x_{k+1})$ be the finite-dimensional probability density function of $\big\lbrace X(t_1), \ldots,\allowbreak  X(t_k), X(t_{k+1})\big\rbrace$ for any integer $k\geq 1$ and $t_1\leq \cdots\leq t_k\leq t_{k+1}\in\T$. The Markov property implies that
\begin{equation}
	p_{X(t_{k+1}) \cond X(t_k), \ldots, X(t_1)}(x_{k+1} \cond x_k, \ldots, x_1) = p_{X(t_{k+1}) \cond X(t_k)}(x_{k+1} \cond x_k)
\end{equation}
and
\begin{equation}
	p_{X(t_k) \cond X(t_i)}(x_k \cond x_i) = \int p_{X(t_k) \cond X(t_j)}(x_k \cond x_j) \, p_{X(t_j) \cond X(t_i)}(x_j \cond x_i) \diff x_j
	\label{equ:chapman-kol}
\end{equation}
hold for every $t_i \leq t_j \leq t_k\in\T$. The conditional density $p_{X(t_{k+1}) \cond X(t_k)}(x_{k+1} \cond x_k)$ and Equation~\eqref{equ:chapman-kol} are known as the transition probability density function and the Chapman--Kolmogorov equation, respectively. In particular, the Chapman--Kolmogorov equation means that the joint probability density function of a Markov process at times $t_1,\ldots,t_k$ factorises with respect to its transition densities. Therefore, one can compute Markov processes marginal distributions sequentially with linear complexity in time. This is particularly useful in the context of Bayesian filtering and smoothing which will be the subject of Section~\ref{sec:cd-smoothing}. 

\subsection{It\^{o}'s formula}
\label{sec:ito-formula}
Suppose that $X\colon\T\to\R$ is a deterministic smooth function, and that $\phi\in\mathcal{C}^1(\R;\R)$ is another smooth function. Then by Newton--Leibniz formula/chain rule, we have
\begin{equation}
	\phi(X(t)) = \phi(X(t_0)) + \int^t_{t_0}\tash{\phi}{X}\tash{X}{t}(s) \diff s.\nonumber
\end{equation}
Unfortunately the rule above does not generally hold when $X$ is a stochastic process. As an example, if $X$ is a Wiener process then the derivative $\partial X \, / \, \partial t$ does not exist in the usual limit definition~\citep[][Chapter 14]{ReneBrownianBook2012}.

The differentiation rule for continuous semimartingales is given by the so-called It\^{o}'s formula~\citep[see, e.g.,][Theorem 5.10]{LeGall2016}. In the special case when $X$ is an It\^{o} process, It\^{o}'s formula takes the following form.

\begin{theorem}[It\^{o}'s formula]
	\label{thm:ito-formula}
	Let $\phi\colon \R^d \times \T \to \R$ be a function that is twice-differentiable in the first argument and differentiable in the second argument. Suppose that $X\colon \T\to\R^d$ is an It\^{o} process solving the SDE in Equation~\eqref{equ:SDE}, then
	\begin{equation}
		\begin{split}
			\phi(X(t), t) &= \phi(X(t_0), t_0)  + \int^t_{t_0} \tash{\phi}{t}(X(s), s) \diff s \\
			&\quad+ \int^t_{t_0}  (\nabla_X \phi)^\trans \, a(X(s), s) + \frac{1}{2} \, \trace*{\Gamma(X(s), s) \, \hessian_X\phi  } \diff s \\
			&\quad+ \int^t_{t_0} (\nabla_X \phi)^\trans \, b(X(s), s) \diff W(s), 
		\end{split}
	\end{equation}
	where $\Gamma(X(s), s) \coloneqq b(X(s), s) \, b(X(s), s)^\trans$, and $\nabla$ and $\hessian$ denote the gradient and Hessian operators, respectively. Moreover, $t \mapsto \phi(X(t), t)$ is also an It\^{o} process.
\end{theorem}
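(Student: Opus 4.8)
The plan is to prove the identity by a second-order Taylor expansion along a shrinking partition of $[t_0, t]$, exploiting the fact that the quadratic variation of the Wiener process contributes a term that is first order in time. Before the main argument I would reduce to a convenient bounded setting by localisation: introduce the stopping times $\tau_n \coloneqq \inf\{ s \in \T : \abs{X(s)} \geq n \}$, prove the formula for the stopped process $s \mapsto X(s \wedge \tau_n)$, and then let $n \to \infty$, using condition III of Definition~\ref{def:strong-solution} together with the almost-sure continuity of $X$ to pass to the limit. On $[t_0, t \wedge \tau_n]$ the path of $X$ stays in a compact set, so $X$, the composed coefficients $a(X(\cdot), \cdot)$, $b(X(\cdot), \cdot)$, and the first- and second-order derivatives of $\phi$ encountered are all bounded, which makes every integral and $L^2$ estimate below finite.

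For the core step, fix $t$ and a partition $t_0 = s_0 < s_1 < \cdots < s_N = t$ and write the telescoping sum
\[
\phi(X(t), t) - \phi(X(t_0), t_0) = \sum_{k=0}^{N-1} \big[ \phi(X(s_{k+1}), s_{k+1}) - \phi(X(s_k), s_k) \big].
\]
Writing $\Delta s_k \coloneqq s_{k+1} - s_k$ and $\Delta X_k \coloneqq X(s_{k+1}) - X(s_k)$, I would Taylor-expand each summand to first order in time and second order in state,
\[
\phi(X(s_{k+1}), s_{k+1}) - \phi(X(s_k), s_k) = \tash{\phi}{t} \, \Delta s_k + (\nabla_X \phi)^\trans \Delta X_k + \tfrac{1}{2} \, \Delta X_k^\trans \, \hessian_X \phi \, \Delta X_k + r_k,
\]
with all derivatives evaluated at $(X(s_k), s_k)$ and $r_k$ the Taylor remainder. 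Substituting the increment $\Delta X_k \approx a \, \Delta s_k + b \, \Delta W_k$, where $\Delta W_k \coloneqq W(s_{k+1}) - W(s_k)$, the linear term $(\nabla_X \phi)^\trans \Delta X_k$ splits, as the mesh tends to zero, into the Lebesgue integral $\int^t_{t_0} (\nabla_X \phi)^\trans \, a \diff s$ and the It\^{o} integral $\int^t_{t_0} (\nabla_X \phi)^\trans \, b \diff W$, the latter convergence following from the definition and the isometry of the It\^{o} integral.

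The decisive term is the quadratic one. Expanding $\Delta X_k \, \Delta X_k^\trans$, the contributions from $(\Delta s_k)^2$ and from the cross term $\Delta s_k \, \Delta W_k$ vanish in $L^2$ as the mesh shrinks, whereas the genuinely second-order piece $b \, \Delta W_k \, \Delta W_k^\trans \, b^\trans$ survives. Using the independence and the variance $\Delta s_k$ of the Wiener increments, the key estimate is that $\sum_k (\Delta W_k \, \Delta W_k^\trans - \Delta s_k \, I) \to 0$ in $L^2$; contracting this limit against $\hessian_X \phi$ (through $b$ and $b^\trans$) shows that the quadratic term converges to $\tfrac{1}{2} \int^t_{t_0} \trace*{\Gamma(X(s), s) \, \hessian_X \phi} \diff s$ with $\Gamma = b \, b^\trans$. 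I would then bound the aggregated remainder $\sum_k r_k$ using the continuity of the second derivatives of $\phi$ together with the control $\sum_k \abs{\Delta X_k}^2 = O(1)$ supplied by the quadratic variation, so that it tends to zero. The main obstacle is precisely this quadratic-variation analysis: establishing the $L^2$ convergence of $\sum_k \Delta W_k \, \Delta W_k^\trans$ against the random, merely continuous Hessian prefactors uniformly enough, jointly with the vanishing of the mixed and higher-order terms, and finally removing the localisation at $n \to \infty$. Once the limit is identified, every term on the right-hand side is either a Lebesgue integral of an adapted continuous integrand or an It\^{o} integral, so $t \mapsto \phi(X(t), t)$ is a sum of a finite-variation part and a (local) martingale, i.e. it solves an SDE with drift $\tash{\phi}{t} + (\nabla_X \phi)^\trans a + \tfrac{1}{2}\trace*{\Gamma \, \hessian_X \phi}$ and dispersion $(\nabla_X \phi)^\trans b$, and is therefore itself an It\^{o} process.
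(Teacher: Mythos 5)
Your outline is the classical direct proof and is sound in its approach, but it follows a genuinely different route from the paper's: the paper gives no proof of this theorem at all. It treats It\^{o}'s formula as a preliminary, invoking the version for continuous semimartingales~\citep[Theorem 5.10]{LeGall2016} and noting beforehand that condition III of Definition~\ref{def:strong-solution} makes It\^{o} processes continuous semimartingales, so the displayed identity is simply the specialisation of that general result. Your partition-and-Taylor argument is self-contained and more elementary; it buys an explicit view of where the correction term $\frac{1}{2} \, \trace*{\Gamma \, \hessian_X\phi}$ comes from, namely the $L^2$ limit $\sum_k \big(\Delta W_k \, \Delta W_k^\trans - \Delta s_k \, I\big) \to 0$, which the citation-based route hides entirely. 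What it costs is the bookkeeping you partly flag yourself, and two points would need care in a full write-up: first, the substitution $\Delta X_k \approx a \, \Delta s_k + b \, \Delta W_k$ is not exact, since inside the true increment the coefficients are evaluated along the path over $[s_k, s_{k+1}]$ rather than frozen at $s_k$; the standard fix~\citep[see][Chapter 4]{Oksendal2003} is to prove the formula first for elementary (piecewise-constant) coefficients and pass to the limit, or to bound the discrepancy using the localised moment estimates. Second, contracting the quadratic-variation limit against the random prefactor $b^\trans \, \hessian_X \phi \, b$ evaluated along the path requires uniform continuity of the second derivatives on the compact set visited by the stopped process, i.e.\ you are implicitly using $\phi \in \mathcal{C}^{2,1}$ (continuous derivatives) --- the same mild strengthening of the statement's literal ``twice-differentiable'' that every textbook proof makes.
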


\section{Continuous-discrete filtering and smoothing}
\label{sec:cd-smoothing}
In this section, we review Bayesian filtering and smoothing algorithms for continuous-discrete state-space models~\citep{Jazwinski1970, Sarkka2013, Sarkka2019}. 

\subsection{Continuous-discrete state-space models}
\label{sec:cd-ss-model}
Consider a system 
\begin{equation}
	\begin{split}
		\diff X(t) &= a(X(t), t) \diff t + b(X(t), t) \diff W(t), \quad X(t_0) = X_0, \\
		Y_k &= h(X_k) + \xi_k, \quad \xi_k \sim \mathrm{N}(0, \Xi_k),
	\end{split}
	\label{equ:cd-model}
\end{equation}
where $X\colon\T\to\R^d$, $X_k \coloneqq X(t_k)$, $Y_k\in\R^{d_y}$, $\Xi_k\in\R^{d_y \times d_y}$, and $h\colon \R^d \to \R^{d_y}$. Models represented by the combination of an SDE and a discrete-time measurement model as per Equation~\eqref{equ:cd-model} are called \textit{continuous-discrete state-space models}, or simply continuous-discrete models. These are ubiquitous in physics and engineering (see, e.g., Example~\ref{example:tme-ct-tracking} for manoeuvring target tracking). We call $X_k$ and $Y_k$ the \textit{state} and \textit{measurement}, respectively, of $X(t_k)$ at $t_k$.

Let $Y_{1:T} = \lbrace Y_k \colon k=1,2,\ldots,T \rbrace$ be a collection of measurement variables and $y_{1:T} = \lbrace y_k \colon k=1,2,\ldots,T\rbrace$ be the corresponding data at times $t_1 \leq t_2 \leq \cdots \leq t_T \in \T$. The continuous-discrete filtering and smoothing problem for model~\eqref{equ:cd-model} aims at solving the filtering posterior marginal densities
\begin{equation}
	p_{X_k \cond Y_{1:k}}(x_k \cond y_{1:k})
	\label{equ:cd-filtering}
\end{equation}
and the smoothing posterior marginal densities
\begin{equation}
	p_{X_k \cond Y_{1:T}}(x_k \cond y_{1:T}),
	\label{equ:cd_smoothing}
\end{equation}
for $k=1,2,\ldots, T$~\citep{Sarkka2019}. Although in principle the filtering and smoothing problems aim at more general posterior densities (i.e., $p_{X(t) \cond Y_{1:T}}(x, t \cond y_{1:T})$ for all $t\in\T$), for the sake of simplicity of exposition, we restrict ourselves to estimating the marginal filtering and smoothing distribution at the data points $\lbrace t_k \colon k=1,2,\ldots, T \rbrace$ only. 

Since solutions of SDEs are Markov processes, we can use the Markov property (see, Section~\ref{sec:markov-property}) to sequentially solve the filtering and smoothing posterior densities for $k=1,2,\ldots,T$~\citep{Sarkka2013}. To see this, suppose that the filtering density $p_{X_{k-1} \cond Y_{1:k-1}}(x_{k-1} \cond y_{1:k-1})$ at $t_{k-1}$ is known\footnote{We define $p_{X_{0} \cond Y_{1:0}}(x_{0} \cond y_{1:0}) \coloneqq p_{X_0}(x_0)$ at $t_0$.}. Then by leveraging Bayes' rule, the filtering density at $t_k$ reads
\begin{equation}
	p_{X_k \cond Y_{1:k}}(x_k \cond y_{1:k}) = \frac{p_{Y_k \cond X_k}(y_k \cond x_k) \, p_{X_k \cond Y_{1:k-1}}(x_k \cond y_{1:k-1})}{\int p_{Y_k \cond X_k}(y_k \cond x_k) \, p_{X_k \cond Y_{1:k-1}}(x_k \cond y_{1:k-1}) \diff x_k},
	\label{equ:cd-filtering-bayes}
\end{equation}
where the predictive density
\begin{equation}
	\begin{split}
		&p_{X_k \cond Y_{1:k-1}}(x_k \cond y_{1:k-1}) \\
		&= \int p_{X_k \cond X_{k-1}}(x_k \cond x_{k-1}) \, p_{X_{k-1} \cond Y_{1:k-1}}(x_{k-1} \cond y_{1:k-1}) \diff x_{k-1}
	\end{split} \label{equ:cd-filtering-predictive}
\end{equation}
needs to be computed by propagating $p_{X_{k-1} \cond Y_{1:k-1}}(x_{k-1} \cond y_{1:k-1})$ through the SDE.
One can then obtain the filtering densities sequentially for $k=1,2,\ldots, T$ starting from a known/given initial condition. 

The smoothing densities are solved backward for $k=T,\ldots, 1$ by using the filtering results. Suppose that the smoothing density $p_{X_{k+1} \cond Y_{1:T}} \allowbreak (x_{k+1} \cond y_{1:T})$ at $t_{k+1}$ is known, then again by Bayes' rule~\citep{Kitagawa1987, Sarkka2013}, the smoothing density at $t_{k}$ is
\begin{equation}
	\begin{split}
		&p_{X_k \cond Y_{1:T}}(x_k \cond y_{1:T}) \\
		&= p_{X_k \cond Y_{1:k}}(x_k \cond y_{1:k})  \int \frac{p_{X_{k+1} \cond X_k}(x_{k+1} \cond x_k) \, p_{X_{k+1} \cond Y_{1:T}}(x_{k+1} \cond y_{1:T})}{p_{X_{k+1} \cond Y_{1:k}}(x_{k+1} \cond y_{1:k})} \diff x_{k+1}.
		\label{equ:cd-smoothing-bayes}
	\end{split}
\end{equation}

Unfortunately, for non-linear state-space models, Equations~\eqref{equ:cd-filtering-bayes}, \eqref{equ:cd-filtering-predictive}, and~\eqref{equ:cd-smoothing-bayes} are rarely solvable in closed-form. In practice, one often needs to use approximation schemes, such as Taylor expansion, numerical integration, or particle approximations~\citep{Sarkka2013}. However, if the SDE and measurement model happen to be linear (and also starting from a Gaussian initial condition), then the filtering and smoothing densities are exactly Gaussian and their means and covariances can be computed in closed-form sequentially. This is known as the (continuous-discrete) Kalman filtering and Rauch--Tung--Striebel smoothing~\citep{Sarkka2019}, the details of which are given in the next section. 

\subsection{Rauch--Tung--Striebel smoothing}
\label{sec:rts}
Consider a linear continuous-discrete model
\begin{equation}
	\begin{split}
		\diff X(t) &= A(t) \, X(t) \diff t + B(t) \diff W(t), \quad X(t_0) = X_0, \\
		y_k &= H_k \, X_k + \xi_k, \quad \xi_k \sim \mathrm{N}(0, \Xi_k),
		\label{equ:cd-linear}
	\end{split}
\end{equation}
where $X_0 \sim \mathrm{N}(m_0, P_0)$ is a Gaussian random variable of mean $m_0$ and covariance $P_0$. Here the coefficients $A\colon \T \to \R^{d\times d}$, $B\colon \T \to \R^{d\times w}$, and $H_k\in \R^{d_y \times d}$ are deterministic matrix-valued functions and a constant, respectively. In this case, the filtering and smoothing densities in Equations~\eqref{equ:cd-filtering-bayes} and~\eqref{equ:cd-smoothing-bayes} can be solved exactly by using Kalman filters and Rauch--Tung--Striebel (RTS) smoothers as follows~\citep[cf.][]{Sarkka2019}.

\begin{algorithm}[Continuous-discrete Kalman filter and RTS smoother]
	\label{alg:kfs}
	Let $p_{X_k \cond Y_{1:k}}(x_k \cond y_{1:k}) = \mathrm{N}(x_k\cond m^f_k, P^f_k)$ and $p_{X_k \cond Y_{1:T}}(x_k \cond y_{1:T}) = \mathrm{N}(x_k \cond m^s_k, P^s_k)$ be the Gaussian parametrisations of the filtering and smoothing posterior densities, respectively, at $t_k$. Also let $m_0^f \coloneqq m_0$ and $P_0^f \coloneqq P_0$ at $t_0$. At each step for $k=1,2,\ldots,T$, the Kalman filter first obtains the predictive density $\mathrm{N}(x_k \cond m^-_k, P^-_k)$  by solving the system of ordinary differential equations (ODEs)
	\begin{equation}
		\begin{split}
			\frac{\diff m(t)}{\diff t} &= A(t) \, m(t), \\
			\frac{\diff P(t)}{\diff t} &= A(t) \, P(t) + P(t) \, A(t)^\trans + B(t) \, B(t)^\trans,
			\label{equ:linear-sde-moment-ode}
		\end{split}
	\end{equation}
	at $t_k$, starting from the initial values $m^f_{k-1}$ and $P^f_{k-1}$ at time $t_{k-1}$. Then, it updates the predictive density to get the filtering posterior mean $m^f_k$ and covariance $P^f_k$ at time $t_k$ by computing
	\begin{equation}
		\begin{split}
			K_k &= P^-_k \, H_k^\trans \, (H_k \, P^-_k \, H_k^\trans + \Xi_k)^{-1},\\
			m^f_k &= m^-_k + K_k \, (y_k - H_k \, m^-_k), \\
			P^f_k &= P^-_k - K_k \, (H_k \, P^-_k \, H_k^\trans + \Xi_k) \, K_k^\trans.
		\end{split}
		\label{equ:kfs-update}
	\end{equation}
	Let $m^s_T \coloneqq m^f_T$ and $P^s_T \coloneqq P^f_T$. At each step for $k=T-1, \ldots, 1$, the RTS smoother computes $m^s_k$ and $P^s_k$ at $t_k$ by solving the system of ODEs
	\begin{equation}
		\begin{split}
			\frac{\diff m(t)}{\diff t} 
			    &= A(t) \, m(t) + B(t) \, B(t)^\trans \, \big(P^f(t)\big)^{-1} \, \big(m(t) - m^f(t)\big), \\
			\frac{\diff P(t)}{\diff t} 
			    &= \Big[A(t) + B(t) \, B(t)^\trans \, \big(P^f(t)\big)^{-1} \Big] \, P(t) \\
			    &\quad+ P(t) \, \Big[A(t) + B(t) \, B(t)^\trans \, \big(P^f(t)\big)^{-1} \Big]^\trans - B(t) \, B(t)^\trans,
		\end{split}
	\end{equation}
	starting from the initial values $m^s_{k+1}$ and $P^s_{k+1}$ at time $t_{k+1}$, where $m^f(t)$ and $P^f(t)$ stand for the filtering mean and covariance at time $t$, respectively.
\end{algorithm}

Furthermore, if the SDE coefficients in Equation~\eqref{equ:cd-linear} do not depend on time (i.e., $A$ and $B$ are constant matrices), then the continuous-discrete filtering and smoothing problem can be reformulated in an equivalent discrete-discrete problem of the form
\begin{equation}
	\begin{split}
		X_k &= F_{k-1} \, X_{k-1} + q_{k-1}, \\
		Y_k &= H_k \, X_k + \xi_k,
	\end{split}
	\label{equ:disc-linear}
\end{equation}
where $q_{k-1} \sim \mathrm{N}(0, Q_{k-1})$. The coefficients $F_{k-1} \in \R^{d \times d}$ and $Q_{k-1} \in \R^{d \times d}$ are in turn determined by
\begin{equation}
	\begin{split}
		F_{k-1} &= e^{(t_k - t_{k-1}) \, A}, \\
		Q_{k-1} &= \int^{t_k}_{t_{k-1}} e^{(t_{k} - s) \, A} \,B \, B^\trans \, \big(e^{(t_{k} - s) \, A}\big)^\trans \diff s.
	\end{split}
	\label{equ:disc-coeff-exp}
\end{equation}
Provided one can numerically compute Equations~\eqref{equ:disc-coeff-exp}~\citep[see, e.g.,][for how to do so in practice]{Axelsson2015, Sarkka2019}, one can then apply standard Kalman filters and RTS smoothers~\citep[][Theorems 4.2 and 8.2]{Sarkka2013} to the discretised state-space model.

\subsection{Gaussian approximate smoothing}
\label{sec:gaussian-filter-smoother}
In this section, we review the Gaussian approximated density filtering and smoothing for non-linear continuous-discrete state-space models~\citep{Kazufumi2000, SimoGFS2013}. The idea of Gaussian filtering and smoothing is to approximate the filtering and smoothing densities by
\begin{equation}
	\begin{split}
		p_{X_k \cond Y_{1:k}}(x_k \cond y_{1:k}) &\approx \mathrm{N}\big(x_k \cond m^f_k, P^f_k\big), \\
		p_{X_k \cond Y_{1:T}}(x_k \cond y_{1:T}) &\approx \mathrm{N}\big(x_k \cond m^s_k, P^s_k\big).
	\end{split}
	\label{equ:gfs-posteriors}
\end{equation}
Then, by applying Gaussian identities, the general Bayesian filtering and smoothing formulations in Equations~\eqref{equ:cd-filtering-bayes} and~\eqref{equ:cd-smoothing-bayes} admit closed-form approximations. We therefore have the following algorithm~\citep[cf.][Chapter 10]{Sarkka2019}.
\begin{algorithm}[Continuous-discrete Gaussian filter and smoother]
	\label{alg:gfs}
	Let $p_{X_k \cond Y_{1:k}}(x_k \cond y_{1:k}) \approx \mathrm{N}\big(x_k \cond m^f_k, P^f_k\big)$ and $
	p_{X_k \cond Y_{1:T}}(x_k \cond y_{1:T}) \approx \mathrm{N}\big(x_k \cond m^s_k, P^s_k\big)$ be approximate filtering and smoothing densities. Also consider a Gaussian approximation to the initial density $p_{X_0}(x_0) \approx \mathrm{N}(x_0\cond m_0, P_0)$. The Gaussian filter obtains $\big\lbrace m_k^f, P^f_k\colon k=1,2,\ldots, T \big\rbrace$ by computing the following prediction and update steps sequentially for $k=1,2,\ldots, T$.
	\begin{enumerate}[I.]
		\item Prediction:
		\begin{equation}
			\begin{split}
				m^-_k &= \int x_k \, p_{X_k \cond Y_{1:k-1}}(x_k \cond y_{1:k-1}) \diff x_k, \\
				P^-_k &= \int (x_k - m_k^-) \, (x_k - m_k^-)^\trans \, p_{X_k \cond Y_{1:k-1}}(x_k \cond y_{1:k-1}) \diff x_k.
			\end{split}
		\end{equation}
		\item Update:
		\begin{equation}
			\begin{split}
				S_k &= \expecBig{\big(h(X_k) - \expec{h(X_k)}\big) \, \big(h(X_k) - \expec{h(X_k)}\big)^\trans} + \Xi_k,\\
				K_k &= \expecBig{\big(X_k - m_k^-\big) \, \big(h(X_k) - \expec{h(X_k)} \big)^\trans} \, S_k^{-1}, \\
				m^f_k &= m^-_k + K_k \, (y_k - \expec{h(X_k)}), \\
				P^f_k &= P^-_k - K_k \, S_k \, K_k^\trans.
			\end{split}
		\end{equation}
		Note that the expectations above are taken with respect to the predictive density $p_{X_k \cond Y_{1:k-1}}(x_k \cond y_{1:k-1})$. In addition, if the measurement model is linear, then the update step above reduces to Equation~\eqref{equ:kfs-update}.
	\end{enumerate}
	Let $m^s_T \coloneqq m^f_T$ and $P^s_T \coloneqq P^f_T$. The Gaussian smoother obtains $\big\lbrace m_k^s, P^s_k\colon k=1,2,\ldots, T-1 \big\rbrace$ by sequentially computing
	\begin{equation}
		\begin{split}
			D_{k+1} &= \cov*{X_k, X_{k+1}^\trans \cond y_{1:k}}, \\
			G_k &= D_{k+1} \, \big( P^-_{k+1} \big)^{-1}, \\
			m^s_k &= m_k^f + G_k \, (m^s_{k+1} - m^-_{k+1}), \\
			P^s_k &= P_k^f + G_k \, (P^s_{k+1} - P^-_{k+1}) \, G_k^\trans,
		\end{split}
	\end{equation}
	for $k=T-1, T-2,\ldots, 1$.
\end{algorithm}

In order to compute the integrals/expectations in Algorithm~\ref{alg:gfs}, it is often necessary to approximate the transition density by
\begin{equation}
	p_{X_{k} \cond X_{k-1}}(x_k \cond x_{k-1}) \approx \mathrm{N}\big(x_k \cond \expec{X_k \cond X_{k-1}}, \cov{X_k \cond X_{k-1}}\big).
	\label{equ:gfs-transition}
\end{equation}
There are various approaches to approximate the mean and covariance in the transition density above. One popular approach is linearising the SDE (or its discretisation) by using, for example, Taylor expansions. This leads to (continuous-discrete) extended Kalman filters and smoothers~\citep{Jazwinski1970}. Another commonly used approach is to solve the ODEs (see, e.g., Equation~\eqref{equ:mean-ode}) that characterise the mean and covariance functions of the SDE~\citep{Sancho1970, Jazwinski1970, Maybeck1982, SimoGFS2013}. However this ODE approach requires to compute expectations with respect to the probability measure of SDEs, which in practice requires further approximation schemes (such as Monte Carlo). 

We can also approximate the SDE by a Gaussian increment-based discretisation defined as
\begin{equation}
	X_k \approx f_{k-1}(X_{k-1}) + q_{k-1}(X_{k-1}), \label{equ:approx-transition-model}
\end{equation}
where $q_{k-1}(X_{k-1}) \sim \mathrm{N}(0, Q_{k-1}(X_{k-1}))$. In particular, $\expec{X_k \cond X_{k-1}} \approx f_{k-1}(X_{k-1})$ and $\cov{X_k \cond X_{k-1}} \approx Q_{k-1}(X_{k-1})$. The choice of the functions $f_{k-1} \colon \R^{d} \to \R^d$ and $Q_{k-1} \colon \R^{d} \to \R^{d \times d}$ depends on the discretisation method  used for the approximation. 
\begin{example}
For instance, the Euler--Maruyama scheme gives
\begin{equation}
	\begin{split}
		f_{k-1}(X_{k-1}) &= X_{k-1} + a(X_{k-1}, t_{k-1}) \, (t_k - t_{k-1}), \\
		Q_{k-1}(X_{k-1}) &= b(X_{k-1}, t_{k-1}) \, b(X_{k-1}, t_{k-1})^\trans \, (t_k - t_{k-1}).
	\end{split}
	\label{equ:gfs-euler-maruyama}
\end{equation}
Furthermore, in Chapter~\ref{chap:tme} we illustrate a Taylor moment expansion-based approach generalising the Euler--Maruyama scheme for approximating the transition coefficients in Equation~\eqref{equ:gfs-transition}. 
\end{example}
Recall that the expectations in Algorithm~\ref{alg:gfs} are usually hard to compute exactly for non-linear models. However, we can use quadrature methods, for example, Gauss--Hermite quadrature~\citep{Davis1984, Arasaratnam2007}, unscented transform~\citep{Julier2004}, spherical cubature~\citep{Cubature2009, Simo2012}, or sparse-grid quadratures~\citep{Jia2012, Radhakrishnan2016} to compute them numerically. 

\subsection{Non-Gaussian approximate smoothing}
\label{sec:other-filters-smoothers}
Despite the simplicity and efficiency of Gaussian approximated filtering and smoothing, these might lead to poor approximations for densities that are, for example, multi-modal or skewed~\citep{Sarkka2013}. Moreover,~\citet{Zhao2020SSDGP} show that, for many SS-DGPs constructions, the Kalman gain (i.e., $K_k$ in Algorithm~\ref{alg:gfs}) of Gaussian approximated filters and smoothers converge to zero as $t\to\infty$. This can be problematic as a zero Kalman gain means that no further information from data is used for updating the posterior distributions. This issue is detailed in Section~\ref{sec:identi-problem}. Hence, the aim of this section is to briefly review some other non-linear filters and smoothers that could be useful for solving the continuous-discrete model in Equation~\eqref{equ:cd-model} without relying on Gaussian approximations.  

One way to compute the general filtering and smoothing densities is by using sequential Monte Carlo (SMC) methods~\citep{Chopin2020}. This class of methods considers Monte Carlo approximations of the integrals in Equations~\eqref{equ:cd-filtering-bayes} and~\eqref{equ:cd-smoothing-bayes} instead of Gaussian quadrature ones. They sequentially propose new Monte Carlo samples that they then weight via a potential function, and use a resampling step in order to keep weight distribution non-degenerate~\citep{Docet2000, Simon2004, Christophe2010}. These result in two generic classes of methods called particle filters and particle smoothers retaining linear complexity at the cost of losing the closed-form interpretation. These methods can be customised to the problem at hand so as to provide better approximations of the distributions~\citep{Chopin2020}. In particular, in the context of SS-DGPs, \citet{Zhao2020SSDGP} show that they result in a better approximation of the posterior density for regression problems such as the rectangular signal in Figure~\ref{fig:gp-fail}. However, parameter learning in particle filters can be problematic, as the resampling procedure, in general, makes their loss functions non-differentiable. This can be addressed by using smooth resampling methods, such as the one in~\citet{Corenflos2021OT}.

Another way to compute the filtering and smoothing densities is to think of them as solutions of ODEs/partial differential equations (PDEs). These connections are well-known for continuous-continuous state-space models (i.e., where instead of the discrete measurements in Equation~\eqref{equ:cd-model} we have a continuous measurement modelled as an SDE depending on the state), such as the Kalman--Bucy filter~\citep{KalmanBucy1961} for linear models. More generally, for non-linear continuous state-space models, the filtering density~\citep{Kushner1964, Zakai1969, Bain2009, Sarkka2013} is governed by the Kushner--Stratonovich equation or Zakai's equation\footnote{Note that Zakai's equation gives \emph{unnormalised} filtering densities.}. For the PDEs that characterise the continuous smoothing solutions, see, for example,~\citet[][Algorithm 10.30]{Sarkka2019} or~\citet{Anderson1972}.

Analogously to the continuous filtering and smoothing, it is also possible to obtain continuous-discrete posterior densities by solving certain PDEs or ODEs. For example, \citet{Jazwinski1970, Beard1999, Challa2000} show that one can combine the Fokker--Planck--Kolmogorov equation and Bayes' rule in order to compute the filtering solution. More specifically, Fokker--Planck--Kolmogorov equation is used to predict the state in Equation \eqref{equ:cd-filtering-predictive}, while Bayes' rule is then used to update the predicted state into the filtered state as per the filtering formulation in Equation~\eqref{equ:cd-filtering-bayes}. In a different flavour, \citet{Brigo1998, Koyama2018} consider the projection filter and smoother, which consist in projecting the filtering and smoothing solutions (of certain families of probability densities) on the space of their density parameters (e.g., the natural parameters of the exponential family). This transforms the problem in a system of ODEs in their density parameters that one then can solve instead of solving the original problem.

\citet{Archambeau2007, Archambeau2008, XuechenLi2020} show that one can also approximate the filtering/smoothing solution by another SDE. The idea is to use a parametrised SDE~\citep[e.g., a linear SDE is used in][]{Archambeau2007, Archambeau2008} to approximate the filtering/smoothing solution and learn the SDE parameters by minimising the Kullback--Leibler (KL) divergence from the true filtering/smoothing distribution. Once this approximate SDE is learnt, the statistical properties (e.g., mean or covariance) of the filtering/smoothing solution can be computed in closed-form from the approximate linear SDE or by simulating trajectories from the approximate SDE (if the SDE is non-linear). Recall that solutions of SDEs are Markov processes. This SDE-based variational filtering/smoothing method is indeed reasonable in the sense that the optimal variational distribution (among a family of parametric variational distributions) for minimising the KL divergence admits the Markov property as shown in~\citet[][Lemma 1]{Courts2021}.

For more comprehensive reviews of non-linear filtering and smoothing methods, we refer the reader to, for example, \citet{Jazwinski1970, Maybeck1982, Sarkka2013, Bain2009, Law2015, Evensen2009, Doucet2001, Sarkka2019}.

\section{Some theorems}
\label{sec:useful-theorem}
For the sake of self-containedness, in this section we list several intermediate results that will be used in the course of the thesis.

\begin{theorem}[Cauchy product]
	\label{thm:cauchy-product}
	Let $\sum^\infty_{i=0} \alpha_i \, x^i$ and $\sum^\infty_{i=0} \beta_i \, x^i$ be two power series of $x$ with convergence radius $D_{\alpha} > 0$ and $D_{\beta} > 0$ . Then their product is a power series
	\begin{equation}
		\Big(\sum^\infty_{i=0} \alpha_i \,x^i \Big)\, \Big(\sum^\infty_{i=0} \beta_i \,x^i \Big) = \sum^\infty_{k=0} \Big(\sum^k_{j=0} \alpha_j\, \beta_{k-j}\Big)\, x^k
	\end{equation}
	on an open disk of radius $D \geq \min(D_{\alpha}, D_{\beta})$~\citep[see, e.g.,][Theorem 2.37]{Canuto2015}.
\end{theorem}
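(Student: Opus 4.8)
The plan is to reduce the statement to the classical rearrangement theorem for absolutely convergent double series. First I would recall that a power series converges absolutely in the interior of its disk of convergence; hence for any $x$ with $\abs{x} < \min(D_\alpha, D_\beta)$ both $\sum_i \alpha_i \, x^i$ and $\sum_i \beta_i \, x^i$ converge absolutely. Fix such an $x$ for the remainder of the argument.

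Next I would form the double-indexed family $\lbrace \alpha_i \, \beta_j \, x^{i+j} \rbrace_{i,j \geq 0}$ and show that it is absolutely summable. This follows because
\begin{equation}
	\sum_{i,j \geq 0} \abs{\alpha_i \, \beta_j \, x^{i+j}} = \Big( \sum_{i \geq 0} \abs{\alpha_i} \, \abs{x}^i \Big) \Big( \sum_{j \geq 0} \abs{\beta_j} \, \abs{x}^j \Big) < \infty,
\end{equation}
where the factorisation of the product of two nonnegative convergent series into a double sum is itself an instance of Tonelli's theorem for series (all terms are nonnegative, so no convergence subtlety arises). This absolute summability is the key hypothesis that licenses arbitrary reordering and regrouping of the terms.

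Then I would invoke the rearrangement theorem: an absolutely summable family over $\N \times \N$ may be summed in any order, and grouped according to any partition of the index set, without altering the total. I would take the diagonal partition $\N \times \N = \bigcup_{k \geq 0} \lbrace (i,j) : i+j = k \rbrace$. Summing along each diagonal produces the inner coefficient $\sum_{j=0}^k \alpha_j \, \beta_{k-j}$ multiplying $x^k$, and summing over $k$ then reproduces the right-hand side; on the other hand, summing by rows first gives $\big( \sum_i \alpha_i \, x^i \big) \big( \sum_j \beta_j \, x^j \big)$, the left-hand side. Equating the two groupings establishes the identity for this fixed $x$.

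Finally, since $x$ was an arbitrary point of the open disk of radius $\min(D_\alpha, D_\beta)$, the identity holds throughout that disk, and the series $\sum_k \big( \sum_{j=0}^k \alpha_j \, \beta_{k-j} \big) \, x^k$ is therefore a genuine power series whose radius of convergence is at least $\min(D_\alpha, D_\beta)$ (possibly strictly larger, consistent with the bound $D \geq \min(D_\alpha, D_\beta)$ in the statement). The only delicate point, and the real crux of the argument, is the passage from absolute summability to the legitimacy of the diagonal regrouping; everything else is bookkeeping, so I expect no genuine obstacle beyond citing the double-series rearrangement theorem cleanly.
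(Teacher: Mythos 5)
Your proposal is correct and takes essentially the same route as the source the paper defers to: the paper itself gives no proof of Theorem~\ref{thm:cauchy-product}, only the citation to \citet[][Theorem 2.37]{Canuto2015}, and the standard proof there is exactly your argument --- absolute convergence of both series strictly inside the smaller disk, absolute summability of the double family $\lbrace \alpha_i \, \beta_j \, x^{i+j} \rbrace$, and the rearrangement theorem applied to the diagonal partition. Nothing is missing; the diagonal-regrouping step you flag as the crux is indeed the only nontrivial ingredient, and your justification of it via absolute summability is complete.
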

We use the Cauchy product in Theorem~\ref{thm:tme-cov-pd} to truncate the product of two finite power series. 

\begin{theorem}[Weyl's inequality]
	\label{thm:weyl-inequality}
	Let $A$ and $B$ be Hermitian matrices of size $n\times n$. Also let $\lambda_{1} \geq \cdots \geq \lambda_{n}$ denote the ordered eigenvalues of any $n\times n$ Hermitian matrix. Then
	\begin{equation}
		\lambda_i(A) + \lambda_{n}(B) \leq \lambda_i(A+B) \leq \lambda_i(A) + \lambda_{1}(B),
	\end{equation}
	for $i=1,\ldots, n$.
\end{theorem}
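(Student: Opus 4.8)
The plan is to derive both inequalities from the Courant--Fischer min--max (variational) characterisation of the eigenvalues of a Hermitian matrix. Specifically, for any $n \times n$ Hermitian matrix $M$ with ordered eigenvalues $\lambda_1(M) \geq \cdots \geq \lambda_n(M)$ and each index $i = 1, \ldots, n$, one has
\[
	\lambda_i(M) = \max_{\substack{S \subseteq \mathbb{C}^n \\ \dim S = i}} \; \min_{\substack{x \in S \\ \norm{x} = 1}} \innerp{x, M \, x},
\]
where the maximum ranges over all $i$-dimensional subspaces $S$ of $\mathbb{C}^n$ and $\innerp{\cdot, \cdot}$ denotes the standard Hermitian inner product. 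I would first recall (or cite) this fact, as it is the only nontrivial ingredient; everything else reduces to elementary bounds on the Rayleigh quotient.

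For the upper bound, I would fix $i$ and note that for every unit vector $x$ the Rayleigh quotient of $B$ satisfies $\innerp{x, B\,x} \leq \lambda_1(B)$, which is exactly the $i=1$ instance of the characterisation above. Hence, for every unit $x$,
\[
	\innerp{x, (A+B)\, x} = \innerp{x, A\, x} + \innerp{x, B\, x} \leq \innerp{x, A\, x} + \lambda_1(B).
\]
Taking the minimum over the unit sphere of any fixed $i$-dimensional subspace $S$ and then the maximum over all such $S$, the additive constant $\lambda_1(B)$ factors out of both operations unchanged, which yields $\lambda_i(A+B) \leq \lambda_i(A) + \lambda_1(B)$.

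For the lower bound, rather than repeating the argument with the $\min$-form of Courant--Fischer, I would apply the upper bound just established to the pair $(A+B, -B)$. Since $-B$ is Hermitian with $\lambda_1(-B) = -\lambda_n(B)$, the upper bound gives $\lambda_i\big((A+B) + (-B)\big) \leq \lambda_i(A+B) + \lambda_1(-B)$, that is $\lambda_i(A) \leq \lambda_i(A+B) - \lambda_n(B)$, which rearranges to $\lambda_i(A) + \lambda_n(B) \leq \lambda_i(A+B)$. The only step that requires care is the interchange of the constant $\lambda_1(B)$ with the inner $\min$ and outer $\max$ in the upper-bound argument: because this constant depends on neither $x$ nor $S$, and because $f \leq g$ pointwise implies $\min f \leq \min g$ and $\max f \leq \max g$, the interchange is routine. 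I therefore expect no genuine obstacle beyond invoking the Courant--Fischer characterisation correctly and keeping the orientation of the max--min consistent.
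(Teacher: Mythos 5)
Your proposal is correct. Note, however, that the paper does not prove this statement at all: Theorem~\ref{thm:weyl-inequality} is listed among the background results in Section~\ref{sec:useful-theorem} and is justified purely by citation (to Weyl's 1912 paper, Bernstein, Horn--Johnson, and Helmke), so there is no in-paper proof to compare against. Your argument via the Courant--Fischer max--min characterisation is the standard one found in those references, and it is carried out correctly: the pointwise bound $\innerp{x, B\,x} \leq \lambda_1(B)$ on the unit sphere, the fact that an additive constant passes through both the inner $\min$ and the outer $\max$, and the observation that $\lambda_1(-B) = -\lambda_n(B)$ are all sound. The reduction of the lower bound to the already-established upper bound applied to the pair $(A+B, -B)$ is a particularly clean touch, since it avoids rerunning the variational argument in its dual ($\min$--$\max$) form. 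One could prove the sharper full family of Weyl inequalities $\lambda_{i+j-1}(A+B) \leq \lambda_i(A) + \lambda_j(B)$ by intersecting suitable subspaces, but for the weak two-sided version stated here (which is all that Theorem~\ref{thm:tme-cov-pd} needs), your argument is complete.
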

Weyl's inequality was originally posed by~\citet{Weyl1912}, and it can also be found, for example, in~\citet[][Theorem 8.4.11]{Bernstein2009},~\citet{HornBook1991}, or~\citet[][Section 5]{Helmke1995}. Weyl's inequality is used in Theorem~\ref{thm:tme-cov-pd} to form a lower bound on the minimum eigenvalue of a covariance approximation.

\begin{theorem}[Langenhop (1960)]
	\label{thm:langenhop}
	Let $u\colon \T \to \R_{\geq0}$ and $f\colon \T \to \R_{\geq0}$ be continuous functions, and let $v\colon \R_{\geq0} \to \R_{\geq 0}$ be a continuous non-decreasing function with $v>0$ on $\R_{>0}$. Now consider the invertible function
	\begin{equation}
	G(r) = \int^r_{r_0} \frac{1}{v(\tau)} \diff \tau, \quad r>0, \quad r_0>0,
	\end{equation}
	and its inverse function $G^{-1}$ defined on domain $E$. Suppose that there is a $t_2\in\T$ such that $G(u(s))-\int^t_s f(\tau) \diff \tau \in E$ for all $s,t\in\T$ and $s\leq t \leq t_2$. If the following inequality is verified,
	\begin{equation}
	u(t) \geq u(s) - \int^t_s f(\tau) \, v(u(\tau)) \diff \tau, \quad s,t\in\T, \quad s\leq t,
	\end{equation}
	then 
	\begin{equation}
	u(t) \geq G^{-1} \Bigg( G(u(s)) - \int^t_s f(\tau) \diff \tau \Bigg), \quad s, t, t_2 \in \T, \quad s\leq t \leq t_2.
	\end{equation}
\end{theorem}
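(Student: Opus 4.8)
The plan is to recognise the right-hand side as the solution of an auxiliary ordinary differential equation and then prove the inequality by a comparison (first-crossing) argument. First I would fix the lower endpoint $s$ and define $w(t) \coloneqq G^{-1}\big(G(u(s)) - \int^t_s f(\tau) \diff \tau\big)$; this is exactly the quantity we must dominate. Since $v>0$ on $\R_{>0}$, the integrand $1/v$ is positive, so $G$ is strictly increasing and hence invertible, and the standing hypothesis that $G(u(s)) - \int^t_s f(\tau)\diff\tau \in E$ for $s \leq t \leq t_2$ guarantees that $w$ is well defined and positive on $[s, t_2]$. Differentiating (equivalently, separating variables) shows that $w$ solves $w'(t) = -f(t)\,v(w(t))$ with $w(s) = u(s)$, that is, $w(t) = u(s) - \int^t_s f(\tau)\,v(w(\tau))\diff\tau$. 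Thus the claimed bound $u(t) \geq G^{-1}(\cdots)$ is precisely $u(t) \geq w(t)$ on $[s, t_2]$.

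Next I would argue by contradiction. Suppose the set $\lbrace t \in (s, t_2] : u(t) < w(t)\rbrace$ is non-empty. By continuity of $u$ and $w$ it is relatively open, so it contains a maximal subinterval $(c, d)$ on which $u < w$, with $u(c) = w(c)$ (the case $c = s$ being allowed, using $u(s) = w(s)$). On $[c, d)$ I apply the hypothesis with lower endpoint $c$, namely $u(t) \geq u(c) - \int^t_c f(\tau)\,v(u(\tau))\diff\tau$, and subtract the integral identity $w(t) = w(c) - \int^t_c f(\tau)\,v(w(\tau))\diff\tau$ satisfied by $w$. Using $u(c) = w(c)$ this yields
\begin{equation}
	u(t) - w(t) \geq \int^t_c f(\tau)\,\big(v(w(\tau)) - v(u(\tau))\big)\diff\tau, \quad t \in (c, d). \nonumber
\end{equation}
Because $u(\tau) < w(\tau)$ on $(c, d)$ and $v$ is non-decreasing, the integrand is non-negative (as $f \geq 0$), whence $u(t) \geq w(t)$ on $(c, d)$, contradicting the definition of this interval. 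Therefore the bad set is empty and $u \geq w$ on $[s, t_2]$, which is the assertion.

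The delicate point, and the reason the result is subtler than ordinary Gronwall, is the direction of monotonicity. If one instead tries the naive route of setting $z(t) = u(s) - \int^t_s f\,v(u)$, then $u \geq z$ together with monotonicity of $v$ gives $z'(t) \leq -f(t)\,v(z(t))$, which bounds $z$ (and hence the comparison quantity) from \emph{above} --- the wrong direction. The key idea I rely on is therefore to anchor the integral inequality at the crossing point $c$ rather than at $s$: on the interval where $u < w$, monotonicity of $v$ then delivers $v(w) \geq v(u)$, so the sign works in our favour. The remaining obstacles are essentially bookkeeping, namely verifying that $w$ stays inside the domain on which $G^{-1}$ acts (ensured by the hypothesis on $E$ and the positivity of $v$), treating the boundary case $c = s$, and confirming the integral identity for $w$; all of these follow from the continuity and positivity assumptions placed on $u$, $f$, and $v$.
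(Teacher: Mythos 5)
Your argument is correct, but note that the thesis offers no proof of this statement to compare against: it is quoted from the literature, with the proof deferred to Langenhop (1960) and Pachpatte (1997, Theorem 2.3.2), so the only meaningful comparison is with the classical argument in those references. Relative to that argument, your route is genuinely different. You compare $u$ with the extremal solution $w(t) = G^{-1}\big(G(u(s))-\int_s^t f(\tau)\diff\tau\big)$ of the separable ODE $w'=-f\,v(w)$, $w(s)=u(s)$, and exclude a first crossing by re-anchoring the integral hypothesis at the crossing point $c$, where the monotonicity of $v$ acts in the favourable direction. The classical treatment instead exploits the two-parameter form of the hypothesis by reading it with the endpoints exchanged, $u(s)\leq u(t)+\int_s^t f(\tau)\,v(u(\tau))\diff\tau$, which is an ordinary Bihari-type upper bound run backwards in time; the standard Bihari argument (equivalently, a Dini-derivative argument showing $t\mapsto G(u(t))+\int_s^t f(\tau)\diff\tau$ is non-decreasing) then yields $G(u(t))\geq G(u(s))-\int_s^t f(\tau)\diff\tau$ directly. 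Your proof buys self-containedness --- continuity, a crossing argument, and monotonicity of $v$ suffice, with no appeal to Bihari's inequality or Dini derivatives --- at the cost of the bookkeeping that $E$ is open and $w$ solves its ODE on $[s,t_2]$, which you address. Two of your observations deserve emphasis in a final write-up: the diagnosis of why the naive auxiliary function $z(t)=u(s)-\int_s^t f(\tau)\,v(u(\tau))\diff\tau$ bounds things in the wrong direction is exactly right, and the re-anchoring at $c$ is essential rather than cosmetic, since the conclusion fails if the inequality is assumed only for the single fixed lower endpoint $s$; any valid proof must use the hypothesis for all pairs $s\leq t$, as yours does.
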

\begin{remark}
    Note that Theorem~\ref{thm:langenhop} is independent of the choice of $r_0 > 0$.
\end{remark}
Langenhop's inequality was originally derived in \citet{Langenhop1960}. A more modern presentation can be found, for example, in ~\citet[][Theorem 2.3.2]{Pachpatte1997}. This theorem is used in Remark~\ref{remark:langenhop-var-f} to obtain a positive lower bound on the variance of an SDE solution.

\begin{theorem}[Peano--Baker series]
	\label{thm:peano-baker}
	Consider linear ODE of the form
	\begin{equation}
		\frac{\diff x(t)}{\diff t} = A(t) \, x(t) + z(t), \quad x(t_0) = x_0 \in \R^d,
	\end{equation}
    where the coefficients $A\colon \T \to \R^{d \times d}$ and $z\colon \T \to \R^{d}$ are locally bounded measurable functions. Then for every $t_0, t\in\T$, the ODE above has a unique solution of the form
	\begin{equation}
		x(t) = \cu{\Lambda}(t,t_0) \, \Bigg( x_0 + \int^t_{t_0} \cu{\Lambda}(t,s) \, z(s) \diff s \Bigg).
		\label{equ:peano-ode-solution}
	\end{equation}
	If moreover $A$ and $z$ are continuous functions, then $\cu{\Lambda}$ can be represented by its Peano--Baker series
	\begin{equation}
		\cu{\Lambda}(t,t_0) = I + \int^t_{t_0}A(s) \diff s + \int^t_{t_0} A(s_1) \int^{s_1}_{t_0} A(s_2) \diff s_2 \diff s_1 + \cdots,
	\end{equation}
	for all $t \in \T$.
\end{theorem}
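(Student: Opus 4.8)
The plan is to treat the inhomogeneous equation by first solving the homogeneous matrix equation and then applying variation of constants. First I would introduce the state-transition (fundamental) matrix $\cu{\Lambda}(\cdot, t_0)$ as the solution of
\[
\frac{\diff}{\diff t}\cu{\Lambda}(t,t_0) = A(t)\,\cu{\Lambda}(t,t_0), \qquad \cu{\Lambda}(t_0,t_0) = I ,
\]
understood, since $A$ is only locally bounded and measurable, through the equivalent Volterra integral equation $\cu{\Lambda}(t,t_0) = I + \int^t_{t_0} A(s)\,\cu{\Lambda}(s,t_0)\diff s$. Existence and uniqueness then hold in the Carath\'eodory sense: the Picard iterates converge uniformly on compact intervals to a locally absolutely continuous solution, uniqueness follows from a Gr\"onwall estimate, and because the equation is linear the a priori bound $\norm{\cu{\Lambda}(t,t_0)} \le \expp^{\int^t_{t_0}\norm{A(s)}\diff s}$ rules out finite-time blow-up, so the solution extends to all of $\T$.

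Next I would establish invertibility and the variation-of-constants formula. By Liouville's formula $\det \cu{\Lambda}(t,t_0) = \expp^{\int^t_{t_0}\trace{A(s)}\diff s}$, which is finite and strictly positive because $\trace A$ is locally integrable; hence $\cu{\Lambda}(t,t_0)$ is invertible for every $t$, and setting $\cu{\Lambda}(t,s) \coloneqq \cu{\Lambda}(t,t_0)\,\cu{\Lambda}(s,t_0)^{-1}$ gives the cocycle identity $\cu{\Lambda}(t,s) = \cu{\Lambda}(t,r)\,\cu{\Lambda}(r,s)$. Writing the ansatz $x(t) = \cu{\Lambda}(t,t_0)\,c(t)$ and substituting into the ODE, the homogeneous terms cancel and leave $\cu{\Lambda}(t,t_0)\,\dot c(t) = z(t)$, whence $\dot c(t) = \cu{\Lambda}(t_0,t)\,z(t)$; integrating from the initial condition $c(t_0) = x_0$ and multiplying back by $\cu{\Lambda}(t,t_0)$ gives the variation-of-constants representation of Equation~\eqref{equ:peano-ode-solution}, where the cocycle identity is precisely what lets one move the prefactor $\cu{\Lambda}(t,t_0)$ inside or outside the integral. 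Uniqueness of this $x$ is again a Gr\"onwall argument applied to the difference of two candidate solutions.

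For the series representation I would return, under the extra continuity hypothesis, to the Picard iteration $\cu{\Lambda}^{(0)} \equiv I$ and $\cu{\Lambda}^{(n+1)}(t,t_0) = I + \int^t_{t_0} A(s)\,\cu{\Lambda}^{(n)}(s,t_0)\diff s$. A short induction identifies the $n$-th iterate with the $n$-th partial sum of the Peano--Baker series. Setting $M \coloneqq \sup_{s\in[t_0,t]}\norm{A(s)}$, finite by continuity on the compact interval, the general term is an iterated integral over the simplex $t_0 \le s_n \le \cdots \le s_1 \le t$ and is therefore bounded in norm by $M^n\,(t-t_0)^n / n!$; the series thus converges absolutely and uniformly on compacts, dominated by $\expp^{M(t-t_0)}$. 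Its limit solves the Volterra equation, so by the uniqueness already proved it equals $\cu{\Lambda}(t,t_0)$, which is the claimed expansion.

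\textbf{Main obstacle.} The delicate part is the first half, where $A$ and $z$ are merely measurable and locally bounded: solutions are only absolutely continuous and the differential equation holds almost everywhere, so existence, uniqueness, and especially the invertibility of $\cu{\Lambda}$ must be argued in the Carath\'eodory framework rather than via classical Picard--Lindel\"of. Once continuity is assumed, the convergent Peano--Baker representation is comparatively routine, the only real care being the factorial decay of the simplex bounds that secures convergence.
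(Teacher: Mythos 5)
Your proof is correct, but you should know that the thesis itself does not prove this theorem: it appears in Section~\ref{sec:useful-theorem} as a background result quoted from the literature, with the justification deferred to \citet{Baake2011} and \citet{Brogan2011}, accompanied only by a remark that continuity of $A$ and $z$ matters for the compact convergence of the series and not for the existence of $\cu{\Lambda}$. What you have written is therefore the missing self-contained argument rather than a divergent one, and it follows the route those references take: Carath\'{e}odory existence and uniqueness for the Volterra equation defining $\cu{\Lambda}(\cdot,t_0)$, invertibility via Liouville's formula, variation of constants for the inhomogeneous term, and identification of the Picard iterates with the partial sums of the Peano--Baker series via the simplex bound $M^n\,(t-t_0)^n/n!$. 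Two details from your write-up are worth recording. First, your variation-of-constants computation yields $x(t) = \cu{\Lambda}(t,t_0)\big(x_0 + \int_{t_0}^t \cu{\Lambda}(t_0,s)\,z(s)\diff s\big)$, equivalently $\cu{\Lambda}(t,t_0)\,x_0 + \int_{t_0}^t \cu{\Lambda}(t,s)\,z(s)\diff s$ after distributing the prefactor through the cocycle identity; the display in Equation~\eqref{equ:peano-ode-solution}, which keeps the prefactor outside \emph{and} writes $\cu{\Lambda}(t,s)$ inside the bracket, applies the transition matrix twice and is evidently a typo that your derivation implicitly corrects. Second, your convergence argument for the series only uses that $\sup_{s\in[t_0,t]}\norm{A(s)}$ is finite on compacts, which already follows from local boundedness; so despite attributing this to continuity, your proof actually establishes the series representation under hypotheses weaker than the statement (and the paper's remark) requires, in agreement with \citet{Baake2011}, where convergence is obtained even for locally integrable coefficients.
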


While the continuity of $A$ and $z$ is not a necessary condition for the existence of $\cu{\Lambda}$ (cf. Theorem~\ref{thm:linear-sde-solution}), the fact that its Peano--Baker series approximation is compactly convergent relies on the continuity of $A$ and $z$~\citep{Baake2011, Brogan2011}. Other constructions of $\cu{\Lambda}$ include, for example, Magnus expansion~\citep{Moan2008MagnusConv} but they have somewhat stricter hypotheses.

\begin{theorem}[Solution of linear SDEs]
	\label{thm:linear-sde-solution}
	Let $A\colon \T \to \R^{d\times d}$ and $B\colon \T \to \R^{d\times w}$ be locally bounded measurable functions, and let $W \colon \T \to \R^w$ be a Wiener process. Then the solution of linear SDE of the form
	\begin{equation}
		\diff U(t) = A(t) \, U(t) \diff t + B(t) \diff W(t)
		\label{equ:thm-sol-lin-sde}
	\end{equation}
	is given by
	\begin{equation}
		U(t) = \Lambda(t)\,U(t_0) + \Lambda(t)\int^t_{t_0} (\Lambda(s))^{-1} \, B(s) \, \diff W(s),
		\label{equ:solution-linear-sdes}
	\end{equation}
	where $\Lambda$ is the unique solution of the matrix ODE
	\begin{equation}\label{equ:solution-linear-sdes-initial-condition}
		\frac{\diff x(t)}{\diff t} = A(t) \, x(t), \quad x(t_0) = I_d \in \R^{d \times d}, \quad t \in \T.
	\end{equation}
\end{theorem}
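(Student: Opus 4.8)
The plan is to \emph{verify} that the candidate process in Equation~\eqref{equ:solution-linear-sdes} solves the SDE~\eqref{equ:thm-sol-lin-sde} by applying It\^o's formula (Theorem~\ref{thm:ito-formula}) to a time-dependent linear map, the point being that a linear map has vanishing Hessian so that the only It\^o correction one might fear never appears. Concretely, I would write the candidate as $U(t) = \Lambda(t)\,X(t)$, where $X(t) \coloneqq U(t_0) + \int^t_{t_0}(\Lambda(s))^{-1}\,B(s)\diff W(s)$ is an It\^o process with zero drift and dispersion $(\Lambda(s))^{-1}B(s)$, and then show that differentiating this product reproduces the coefficients $A(t)\,U(t)$ and $B(t)$.

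First I would settle the well-posedness and invertibility of the fundamental matrix $\Lambda$ solving the matrix ODE~\eqref{equ:solution-linear-sdes-initial-condition}. Existence and uniqueness follow from Theorem~\ref{thm:peano-baker} applied column-wise with $z\equiv 0$, since $A$ is locally bounded and measurable. For invertibility, note that any column of $\Lambda$ is a solution of $\diff x \,/\, \diff t = A(t)\,x(t)$; were $\Lambda(t^\ast)$ singular at some $t^\ast$, a nonzero vector in its kernel would generate a solution vanishing at $t^\ast$, hence vanishing identically by uniqueness, contradicting its value $I_d$ at $t_0$. Equivalently this is Abel's identity $\det\Lambda(t) = \exp\big(\int^t_{t_0}\trace{A(s)}\diff s\big) \neq 0$. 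Differentiating $\Lambda(t)\,(\Lambda(t))^{-1} = I_d$ then records the (almost everywhere) identity $\frac{\diff}{\diff t}(\Lambda(t))^{-1} = -(\Lambda(t))^{-1}\,A(t)$, which I will need below.

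Next I would carry out the It\^o computation. Applying Theorem~\ref{thm:ito-formula} component-wise to $\phi(x,t) = \Lambda(t)\,x$, the relevant derivatives are $\tash{\phi}{t} = A(t)\,\Lambda(t)\,x$, the Jacobian $\nabla_x\phi = \Lambda(t)$, and $\hessian_x\phi = 0$ since $\phi$ is linear in $x$. Because $X$ has zero drift, both the Hessian and the drift contributions vanish, and It\^o's formula collapses to
\begin{equation*}
\begin{split}
	\diff U(t) &= A(t)\,\Lambda(t)\,X(t)\diff t + \Lambda(t)\,(\Lambda(t))^{-1}\,B(t)\diff W(t) \\
	&= A(t)\,U(t)\diff t + B(t)\diff W(t).
\end{split}
\end{equation*}
Evaluating at $t_0$ gives $U(t_0) = \Lambda(t_0)\,X(t_0) = U(t_0)$ since $\Lambda(t_0)=I_d$ and $X(t_0) = U(t_0)$, so the candidate satisfies both the dynamics and the initial condition. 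Uniqueness of the solution is then inherited from the linear coefficients, which are Lipschitz on each compact time interval, so pathwise uniqueness (Definition~\ref{def:pathwise-unique}) holds by the standard theory of linear SDEs.

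The hard part will be the regularity mismatch rather than the algebra: Theorem~\ref{thm:ito-formula} is stated for a $\phi$ that is differentiable in $t$, whereas for a merely locally bounded measurable $A$ the matrix $\Lambda$ is only absolutely continuous, so $t\mapsto\phi(x,t)$ need not be differentiable on a Lebesgue-null set. The remedy is that absolute continuity makes $\Lambda$ a finite-variation (hence non-martingale) factor contributing no quadratic-variation term, and the $\diff t$-integral in It\^o's formula is insensitive to the null set on which $\dot\Lambda$ fails to exist; the computation above therefore goes through verbatim using the almost-everywhere derivative recorded earlier. I would thus spend most of the rigour budget justifying this extension of It\^o's formula to absolutely continuous time-dependence (or, equivalently, invoking the integration-by-parts formula for a finite-variation process multiplied by an It\^o process), after which the drift cancellation is immediate.
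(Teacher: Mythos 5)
Your proof is correct and follows essentially the same route as the paper, which gives no in-text argument but defers to Karatzas and Shreve (Section 5.6) for exactly this variation-of-constants verification: well-posedness and invertibility of the fundamental matrix $\Lambda$, an It\^{o}/product-rule computation in which the vanishing Hessian of the linear map kills the correction term, and pathwise uniqueness inherited from the locally Lipschitz linear structure. Your explicit handling of the regularity mismatch---that under merely locally bounded measurable $A$ the matrix $\Lambda$ is only absolutely continuous, so the finite-variation integration-by-parts formula must stand in for the $C^1$-in-time It\^{o} formula of Theorem~\ref{thm:ito-formula}---is precisely the detail the paper leaves implicit in its remark that $\Lambda$ is absolutely continuous on $\T$.
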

\begin{remark}
    The $\Lambda$ appearing in Theorem~\ref{thm:linear-sde-solution} is absolutely continuous on $\T$, and is such that $\Lambda(t)$ is a non-singular matrix for all $t\in\T$. Moreover, we have $\cu{\Lambda}(t,s) = \Lambda(t) \, (\Lambda(s))^{-1}$, for all $s, t\in \T$, where $\cu{\Lambda}$ is defined in Theorem~\ref{thm:peano-baker}.
\end{remark}
Theorem~\ref{thm:linear-sde-solution} can be found in~\citet[][Section 5.6]{Karatzas1991}. We used it in Theorem~\ref{thm:ss-dgp-cov} in order to prove the strong existence of solutions to the SDE characterisation of SS-DGP as well as to give an explicit expression for the covariance functions of SS-DGP solutions. Noting that the conditions in Definitions~\ref{def:strong-solution} and~\ref{def:pathwise-unique} are verified, the process defined in Equation~\eqref{equ:solution-linear-sdes} is a strong solution, and the pathwise uniqueness holds for the SDE in Equation~\eqref{equ:thm-sol-lin-sde}~\citep[see,][Section 5.6]{Karatzas1991}.

\chapter{Taylor moment expansion filtering and smoothing}
\label{chap:tme}
This chapter is concerned with Publication~\cp{paperTME}. More specifically, this chapter presents the Taylor moment expansion (TME) scheme for approximating the statistical properties of SDE solutions, such as their mean and covariance. Based on this, we thereupon present TME-based Gaussian filters and smoothers and analyse their stability.

The chapter starts with a general discussion on the motivation and background of the TME method. In Section~\ref{sec:generator}, we briefly review diffusion processes and related infinitesimal generators which are the key ingredients of TME. Then, in Sections~\ref{sec:tme} we formally introduce TME, and in Section~\ref{sec:tme-pd} we analyse the positive definiteness of their covariance approximants. Section~\ref{sec:tme-examples} features several examples that illustrate how to use TME in practice. Finally, in Section~\ref{sec:tme-filter-smoother}, we present Gaussian approximated density filters and smoothers that leverage the TME method for approximating the predictive means and covariances of the system.

\section{Motivation}
\label{sec:tme-motivation}
Let $X(t)$ be an It\^{o} process that satisfies the SDE given by Equation~\eqref{equ:SDE}. In stochastic filtering and smoothing~\citep{Jazwinski1970, Bain2009, Sarkka2013}, it is often of interest to compute the conditional expectation of a given target function $\phi$ for any two time points $t\geq s \in\T$.  For instance, as shown in Algorithm~\ref{alg:gfs}, Gaussian approximated density filters and smoothers require to be able to compute the predictive mean and covariance of SDE solutions. These conditional expectations take the form
\begin{equation}
	\expec{\phi(X(t)) \cond X(s)},
	\label{equ:tme-motivation-moment}
\end{equation}
where different target functions result in different statistical quantities, such as mean, covariance or higher-order moments.

There exist several approaches to computing the expectation in Equation~\eqref{equ:tme-motivation-moment} numerically. One approach is based on forming an ODE that governs the conditional expectation in Equation~\eqref{equ:tme-motivation-moment}~\citep[see, e.g., ][]{DongbinXiu2010, Khasminskii2012, Sarkka2019}. For example, let $\phi(x) = x$, then by It\^{o}'s formula we can obtain an ODE
\begin{equation}
	\frac{\diff \expec{X(t) \cond X(s)}}{\diff t} = \expec{a(X(t), t)  \cond X(s) }
	\label{equ:mean-ode}
\end{equation}
starting from $s\in \T$. However, it is usually hard to solve the ODE in Equation~\eqref{equ:mean-ode} analytically. This is due to the fact that computing its driving term $\expec{a(X(t), t)  \cond X(s)}$ requires computing an expectation with respect to the SDE distribution, which is in general intractable analytically. One solution to this problem is to approximate the expectation using quadrature integration methods~\citep{Sarkka2007, Sarkka2010CD, Kulikova2014}, but the approximation error can accumulate in time, resulting in unstable estimation. Another solution is to iteratively form ODEs that characterise their parent driving terms. Explicitly, one can choose $\phi=a$ in the above, so as to characterise $\expec{a(X(t), t) \cond X(s)}$ by another ODE driven by some function $a'$, then choose $\phi=a'$, and so on. However, this leads to a so-called closure problem as explained in~\citet[][Section 4.4.2]{DongbinXiu2010}.

It is also common to approximate Equation~\eqref{equ:tme-motivation-moment} using numerical discretisation methods, such as Euler--Maruyama, Milstein's method, or higher-order It\^{o}--Taylor expansions~\citep{Kloeden1992}. The upside of these methods is that if the function $\phi$ happens to be a polynomial function, then these methods can give analytical approximations of Equation~\eqref{equ:tme-motivation-moment}. As an example, let $\phi(x) = (x - \expec{X(t)})\, (x - \expec{X(t)})^\trans.$ Then the Euler--Maruyama method gives the approximation
\begin{equation}
	\expec{\phi(X(t)) \cond X(s)} = \cov{X(t) \cond X(s)} \approx (t-s) \, b(X(s), s) \, b(X(s), s)^\trans \nonumber.
\end{equation}
However, for more general non-linear $\phi$ these approaches usually fail to give analytical approximations (see, e.g., Example~\ref{example-tme-benes}), and one often needs to use Monte Carlo methods to approximate the expectation. 

In the remainder of this section, we present the so-called Taylor moment expansion~\citep{Zmirou1986, Zmirou1989, Kessler1997, ZhaoTME2020} approach for computing expectations of the form given in Equation~\eqref{equ:tme-motivation-moment}. This method relies on approximating the expectation in Equation~\eqref{equ:tme-motivation-moment} in terms of a Taylor expansion up to a given order $M$ that depends on the regularity of the coefficients of the SDE verified by $X$. The terms in this expansion are expressed as iterative applications of the infinitesimal generator of the SDE at hand on the target function $\phi$ (see, Section~\ref{sec:generator} for a formal definition). When the coefficients are infinitely smooth, this method offers asymptotically exact representations. We start by giving an overview of diffusion processes and the infinitesimal generator which is an essential part of the TME method.

\section{Infinitesimal generator}
\label{sec:generator}
Diffusion processes~\citep{Dynkin1965, Ikeda1992, Ito2004} are an important subclass of continuous-time Markov processes whose transition probability densities verify certain (infinitesimal) regularities~\citep[see, e.g.,][Definition 10.8.3]{Kuo2006Book}. These processes are entirely characterised by their infinitesimal generators which are defined as follows. Let $X\colon \T \to \R^d$ be a diffusion process starting from any $x\in\R^d$ at $t_0$ and let $\phi$ be a suitable function. The operator $\A$ defined by
\begin{equation}
	\A \phi(x) = \lim_{t\downarrow t_0} \frac{\expec{\phi (X(t)) \cond x} - \phi (x)}{t-t_0}
	\label{equ:generator-general}
\end{equation}
is called the \emph{infinitesimal generator} of the diffusion $X$. Heuristically, the infinitesimal generator represents the expected rate of change of $\phi(X(t))$ around $x$. 

There are many approaches to construct diffusion processes (with desired drift and diffusion coefficients), such as the semigroup approach, the PDE approach (i.e., Kolmogorov backward equation), and the (It\^{o}'s) SDE approach~\citep{Kuo2006Book, ReneBrownianBook2012}. In particular, if one considers diffusion processes that are solutions of SDE, then their infinitesimal generators can be expressed in terms of their SDE coefficients. 

\begin{theorem}[Infinitesimal generator in It\^{o}'s SDE representation]
    \label{thm:generator-in-sde}
	Let $X \colon \T \to \R^d$ be a diffusion process that is the solution of the following time-homogeneous SDE
	\begin{equation}
		\diff X(t) = a(X(t)) \diff t + b(X(t)) \diff W(t), 
		\label{equ:t-homo-sde}
	\end{equation}
	where $a \colon \R^d \to \R^d$, $b \colon \R^d \to\R^{d\times w}$, and $W \colon \T \to \R^w$ is a Wiener process. Also let us define $\Gamma(x) \coloneqq b(x) \, b(x)^\trans$. Then, the infinitesimal generator $\A$ defined in Equation~\eqref{equ:generator-general} is given by
	\begin{equation}
		\begin{split}
			\A \phi(x) &= \sum^d_{i=1} a_{i}(x) \, \tash{\phi}{x_i}(x) + \frac{1}{2} \sum_{i,j=1}^d \Gamma_{ij}(x) \, \frac{\partial^2 \phi}{\partial x_i \, \partial x_j}(x)\\
			&\coloneqq (\nabla_x \phi(x))^\trans \, a(x) + \frac{1}{2} \trace*{\Gamma(x) \, \hessian_x \phi(x)}
			\label{equ:generator-ito}
		\end{split}
	\end{equation}
	for any suitable $\phi\in \mathcal{C}^2(\R^d;\R)$.
	The drift and diffusion coefficients of the diffusion $X$ are then given by $a$ and $\Gamma$, respectively.
\end{theorem}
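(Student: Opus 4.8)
The plan is to read off the generator directly from It\^o's formula (Theorem~\ref{thm:ito-formula}), specialised to the time-homogeneous SDE~\eqref{equ:t-homo-sde}. Since $\phi\in\mathcal{C}^2(\R^d;\R)$ carries no explicit time dependence, the $\partial\phi/\partial t$ term drops out, and applying It\^o's formula to $s\mapsto\phi(X(s))$ between $t_0$ and $t$ gives
\[
\phi(X(t)) = \phi(x) + \int^t_{t_0}\Big[(\nabla_x\phi)^\trans\,a(X(s)) + \frac{1}{2}\,\trace*{\Gamma(X(s))\,\hessian_x\phi(X(s))}\Big]\diff s + \int^t_{t_0}(\nabla_x\phi)^\trans\,b(X(s))\diff W(s),
\]
where we start from $X(t_0)=x$. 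The crucial remark is that the bracketed drift integrand is \emph{exactly} the candidate expression for $\A\phi$ evaluated along the trajectory $X(s)$; the whole proof then amounts to showing that this integrand, averaged and differentiated at $s=t_0$, reproduces the definition~\eqref{equ:generator-general}.

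First I would condition on $X(t_0)=x$ and take expectations on both sides. Under the integrability hypotheses implicit in calling $\phi$ ``suitable'' (namely that $s\mapsto(\nabla_x\phi)^\trans b(X(s))$ lies in the class of square-integrable adapted integrands), the It\^o integral is a genuine martingale and hence has zero expectation, so that
\[
\expec{\phi(X(t)) \cond x} - \phi(x) = \expecbig{\int^t_{t_0}\big[(\nabla_x\phi)^\trans\,a(X(s)) + \tfrac{1}{2}\,\trace*{\Gamma(X(s))\,\hessian_x\phi(X(s))}\big]\diff s \condbig x}.
\]
Dividing by $t-t_0$ and letting $t\downarrow t_0$, I would invoke the almost-sure continuity of the sample paths of $X$ together with the continuity of $a$, $\Gamma$ and the first and second derivatives of $\phi$. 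By the fundamental theorem of calculus (interchanging the limit with the expectation), the right-hand side converges to the integrand evaluated at the starting point, i.e.\ to $(\nabla_x\phi(x))^\trans a(x) + \frac{1}{2}\trace*{\Gamma(x)\,\hessian_x\phi(x)}$, which matches the left-hand side of~\eqref{equ:generator-general} and yields the claimed formula~\eqref{equ:generator-ito}. Identifying the $O(t-t_0)$ coefficients of $\expec{X(t)\cond x}$ and of the centred second moment then pins down the drift and diffusion coefficients as $a$ and $\Gamma$.

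The hard part will be the two analytic justifications I have glossed over. The first is ensuring that the stochastic integral is a true martingale rather than merely a local one: if $\phi$ or the coefficients satisfy only local regularity and growth conditions, this forces a localisation argument using stopping times such as $\tau_n = \inf\{s\colon \abs{X(s)}\geq n\}$, running the computation up to $t\wedge\tau_n$ and passing $n\to\infty$ with a uniform integrability control. The second is justifying the interchange of the limit $t\downarrow t_0$ with the expectation, which requires a domination bound on $\A\phi(X(s))$ that is uniform for $s$ in a neighbourhood of $t_0$; here continuity of the paths gives pointwise convergence, and a bound of the form $\abs{\A\phi(X(s))}\leq c\,(1+\sup_{s\leq t}\abs{X(s)}^p)$ together with moment estimates for $\sup_{s}\abs{X(s)}$ supplies the dominating function for dominated convergence. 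Both steps are standard but hinge entirely on which regularity and growth conditions are packaged into the word ``suitable''.
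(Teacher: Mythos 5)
Your proposal is correct and follows exactly the canonical route that the paper itself delegates to: the paper's ``proof'' is only a citation to \citet[Theorem~7.3.3]{Oksendal2003} and \citet[Theorem~10.9.11]{Kuo2006Book}, and those references prove the result precisely by applying It\^{o}'s formula, killing the stochastic integral via its (localised) martingale property, and passing to the limit $t \downarrow t_0$ with a domination argument, as you do. Your identification of the two technical points hidden in the word ``suitable'' (true vs.\ local martingale, and the limit--expectation interchange) is exactly where the cited proofs spend their effort, e.g.\ \O ksendal's restriction to compactly supported $\phi$ to make both issues trivial.
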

\begin{proof}
	The proof can be found, for example, in \citet[][Theorem~7.3.3]{Oksendal2003} or \citet[][Theorem 10.9.11]{Kuo2006Book}.
\end{proof}

Theorem~\ref{thm:generator-in-sde} can be extended to the case of time-dependent $x, t \mapsto \phi(x, t)$ and SDE coefficients $x, t \mapsto a(x, t)$, $x, t \mapsto b(x, t)$. For details of this, see, for example,  \citet[][Definition 5.3]{Sarkka2019}.

\section{Taylor moment expansion (TME)}
\label{sec:tme}
Recall that the aim of this section is to compute expectations of the form
\begin{equation}
	\expec{\phi(X(t)) \cond X(s)}
	\label{equ:tme-of-interests}
\end{equation}
for $t\geq s \in \T$ and any given target function $\phi$. 

The idea of TME~\citep{Zmirou1989} is to approximate Equation~\eqref{equ:tme-of-interests} by means of a Taylor expansion
\begin{equation}
	\expec{\phi(X(t)) \cond X(s)} \approx \sum^M_{r=0} \frac{1}{r!} \, \frac{\diff^r \expec{\phi(X(t)) \cond X(s)}}{\diff t^r}(s) \, \Delta t^r
	\label{equ:tme-}
\end{equation}
centred at time $s,$ where $\Delta t \coloneqq t - s$, and $M$ is the expansion order. The right hand side of Equation~\eqref{equ:tme-} involves computing derivatives (when they exist) of the conditional expectation in Equation~\eqref{equ:tme-of-interests} when seen as a function of $t$. It turns out that these derivatives can be explicitly expressed as iterations of the infinitesimal generator in Equation~\eqref{equ:generator-ito}. This is formally stated in the following theorem.
\begin{theorem}[Taylor moment expansion]
	\label{thm:tme}
	Let $M\geq 0$ be an integer and $X\colon \T\to \R^d$ be the solution of the SDE given in Equation~\eqref{equ:t-homo-sde}, where the SDE coefficients $a\colon \R^d\to\R^d$ and $b\colon \R^d\to\R^{d\times w}$ are $M$ times differentiable. Suppose that the target function $\phi \in\mathcal{C}^{2\,(M+1)}(\R^d; \R)$, then we have
	\begin{equation}
		\begin{split}
			\expec{\phi(X(t)) \cond X(s)} &= \sum^M_{r=0} \frac{1}{r!} \, \A^r \phi(X(s)) \, \Delta t^r + R_{M,\phi}(X(s),\Delta t)
			\label{equ:tme-expansion-full}
		\end{split}
	\end{equation}
	for every $t \geq s \in \T$, where $\Delta t \coloneqq t - s$, and
	\begin{equation}
		R_{M,\phi}(X(s),\Delta t) = \int^t_s \int^{\tau_1}_s\cdots \int^{\tau_M}_s \expecbig{\A^{M+1} \phi(X(\tau)) \cond X(s)}\diff \tau
		\label{equ:tme-expansion-remainder}
	\end{equation} 
	is the remainder.
\end{theorem}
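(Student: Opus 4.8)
The plan is to reduce the statement to an iterated application of Dynkin's formula, which itself is a direct consequence of It\^{o}'s formula (Theorem~\ref{thm:ito-formula}). The single building block I would establish first is the one-step identity
\[
	\expec{\A^r \phi(X(t)) \cond X(s)} = \A^r \phi(X(s)) + \int^t_s \expec{\A^{r+1} \phi(X(\tau)) \cond X(s)} \diff \tau,
\]
valid for $0 \leq r \leq M$. To obtain it, apply It\^{o}'s formula to the time-independent function $\A^r \phi$ evaluated along $X$; since $\phi \in \mathcal{C}^{2(M+1)}$ and $a, b$ are $M$ times differentiable, the composite $\A^r \phi$ lies in $\mathcal{C}^2$ for each $r \leq M$, so Theorem~\ref{thm:ito-formula} applies and its drift term is exactly $\A(\A^r\phi) = \A^{r+1}\phi$ by the generator formula~\eqref{equ:generator-ito}. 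Taking the conditional expectation given $X(s)$ then annihilates the stochastic-integral term and leaves the displayed identity.

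With this identity in hand, I would prove~\eqref{equ:tme-expansion-full} by induction on $M$. The base case $M = 0$ is precisely the one-step identity with $r = 0$, whose integral term is the remainder $R_{0,\phi}$ in~\eqref{equ:tme-expansion-remainder}. For the inductive step, assume the expansion holds at order $M-1$, so that
\[
	\expec{\phi(X(t)) \cond X(s)} = \sum^{M-1}_{r=0} \frac{1}{r!} \, \A^r \phi(X(s)) \, \Delta t^r + R_{M-1,\phi}(X(s), \Delta t),
\]
where $R_{M-1,\phi}$ is the $M$-fold iterated integral of $\expec{\A^{M}\phi(X(\tau_M)) \cond X(s)}$ over the simplex $\{s \leq \tau_M \leq \cdots \leq \tau_1 \leq t\}$. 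I would then substitute the one-step identity (with $r = M$) for the innermost integrand. Its constant part $\A^M \phi(X(s))$ factors out of all integrations and, multiplied by the simplex volume $\int^t_s \int^{\tau_1}_s \cdots \int^{\tau_{M-1}}_s \diff\tau_M \cdots \diff\tau_1 = \Delta t^M / M!$, supplies exactly the missing $r = M$ term; the residual part produces one further nested integral of $\expec{\A^{M+1}\phi(X(\tau))\cond X(s)}$, i.e.\ $R_{M,\phi}$. This closes the induction.

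The main obstacle is the step where the conditional expectation of the stochastic integral is claimed to vanish, which is only immediate when $\int^t_s (\nabla \A^r\phi)^\trans b(X(\tau)) \diff W(\tau)$ is a true martingale rather than merely a local one. Under the stated smoothness alone this need not hold, so I would handle it by localisation: introduce stopping times $\theta_n \uparrow \infty$ that confine $X$ to balls on which $a$, $b$, and all derivatives of $\phi$ entering $\A^r \phi$ are bounded, apply the argument above on $[s, t \wedge \theta_n]$ where the martingale property is clear, and then pass to the limit $n \to \infty$ using dominated convergence (legitimate once one assumes, or derives from linear-growth bounds on the coefficients, finiteness of the relevant moments of $X$). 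It is worth emphasising that the regularity hypotheses are sharp for the bookkeeping above: applying It\^{o}'s formula up to $r = M$ requires precisely $\phi \in \mathcal{C}^{2(M+1)}$ together with $a, b \in \mathcal{C}^{M}$, so that $\A^{M+1}\phi$ appearing in the remainder~\eqref{equ:tme-expansion-remainder} is well defined and continuous.

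As an alternative to the induction, one could set $u(t) \coloneqq \expec{\phi(X(t)) \cond X(s)}$ and note that the one-step identity, via the fundamental theorem of calculus, yields $u^{(r)}(t) = \expec{\A^r \phi(X(t)) \cond X(s)}$ for $r \leq M+1$; Taylor's theorem with integral remainder then gives~\eqref{equ:tme-expansion-full} directly, after rewriting the single-integral remainder $\frac{1}{M!}\int^t_s (t - \tau)^M \, u^{(M+1)}(\tau) \diff\tau$ in the equivalent iterated form of~\eqref{equ:tme-expansion-remainder}.
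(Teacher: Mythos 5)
Your proposal is correct, and its core ingredient---applying It\^{o}'s formula to $\A^r\phi$ and taking conditional expectations to obtain the one-step (Dynkin) identity---is exactly the engine of the paper's proof. The bookkeeping is organised differently, though: the paper proves, by induction on the derivative order $r$, that $\frac{\diff^r}{\diff t^r}\expec{\phi(X(t))\cond X(s)} = \expec{\A^r\phi(X(t))\cond X(s)}$ and then invokes Taylor's theorem, obtaining the iterated-integral remainder by substituting back repeatedly (with details deferred to the cited works of Zmirou); your main line instead runs the induction on the expansion order $M$ itself, unrolling the remainder directly and replacing Taylor's theorem by the elementary simplex-volume identity $\int_s^t\int_s^{\tau_1}\cdots\int_s^{\tau_{M-1}}\diff\tau_M\cdots\diff\tau_1 = \Delta t^M/M!$. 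Your closing ``alternative'' (derivative identity plus Taylor with integral remainder) coincides with the paper's route, so the two proofs are interchangeable in content. One place where you go beyond the paper: the paper takes expectations and silently discards the stochastic-integral term, which is only legitimate when that integral is a true martingale rather than a local one; your localisation argument with stopping times, together with the explicit acknowledgement that a moment or linear-growth hypothesis is needed to pass to the limit, addresses a gap that the paper leaves implicit. That extra care is warranted rather than pedantic, since the theorem's stated hypotheses are purely local smoothness conditions and do not by themselves guarantee the required integrability.
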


\begin{proof}
	We prove that
	\begin{equation}
		\frac{\diff^r \expec{\phi(X(t)) \cond X(s)}}{\diff t^r}(t) = \expec{\A^r \phi(X(t)) \cond X(s)}
		\label{equ:tme-prop-claim}
	\end{equation}
	for every $r = 0,1,\ldots, M$ by induction. When $r=0$, the result trivially holds. When $r=1$ by It\^{o}'s formula (see, Theorem~\ref{thm:ito-formula}) we obtain 
	\begin{equation}
		\begin{split}
			\phi(X(t)) &= \phi(X(s)) + \int^t_s (\nabla_X \phi)^\trans\, a(X(\tau)) + \frac{1}{2}\trace*{\Gamma(X(\tau)) \, \hessian_X\phi} \diff \tau \\
			&\quad+ \int^t_s b(X(\tau)) \diff W(\tau).
			\label{equ:tme-derivation-1}
		\end{split}
	\end{equation}
	Taking the expectation on both sides of the equation above yields
	\begin{equation}
		\expec{\phi(X(t)) \cond X(s)} = \phi(X(s)) + \int^t_s \expec{\A \phi(X(\tau)) \cond X(s)} \diff \tau.
		\label{equ:tme-derivation-1-2}
	\end{equation}
	The fundamental theorem of calculus ensures that $\expec{\phi(X(t)) \cond X(s)}$ is differentiable with respect to $t$ because the integrand in the integral above is continuous. Therefore, we can interchangeably use its differential form
	\begin{equation}
		\frac{\diff \expec{\phi(X(t)) \cond X(s)}}{\diff t}(t) = \expec{\A \phi(X(t)) \cond X(s)}
		\label{equ:tme-derivation-2}
	\end{equation}
	when $\expec{\phi(X(t)) \cond X(s)}$ is seen as function of $t$, so that the claim in Equation~\eqref{equ:tme-prop-claim} holds for $r=1$. Suppose now that Equation~\eqref{equ:tme-prop-claim} holds for an $r> 1$, then by applying It\^{o}'s formula again on $\A^r \phi(X(t))$ we obtain
	\begin{equation}
		\A^r \phi(X(t)) = \A^r \phi(X(s)) + \int^t_s \A^{r+1} \phi(X(\tau)) \diff \tau.
		\label{equ:tme-derivation-3}
	\end{equation}
	Noting the fact that 
	\begin{equation*}
		\frac{\diff^r \expec{\phi(X(t)) \cond X(s)}}{\diff t^r}(s) = \A^r \phi(X(s)),
	\end{equation*} 
	we can take expectations on both sides of Equation~\eqref{equ:tme-derivation-3}, and substitute the resulting expression into Equation~\eqref{equ:tme-prop-claim}, we thus obtain
	\begin{equation}
		\begin{split}
			\frac{\diff^r \expec{\phi(X(t)) \cond X(s)}}{\diff t^r}(t) &= \A^r \phi(X(s)) +  \int^t_s \expecbig{\A^{r+1} \phi(X(\tau)) \cond X(s)} \diff \tau \\
			&= \frac{\diff^r \expec{\phi(X(t)) \cond X(s)}}{\diff t^r}(s) +  \int^t_s \expecbig{\A^{r+1} \phi(X(\tau)) \cond X(s)} \diff \tau\nonumber
		\end{split}
	\end{equation}
	which is the integral form of the ordinary differential equation
	\begin{equation}
		\frac{\diff^{r+1} \expec{\phi(X(t)) \cond X(s)}}{\diff t^{r+1}}(t) = \expecbig{\A^{r+1} \phi(X(t)) \cond X(s)}.\nonumber
	\end{equation}
	Hence, Equation~\eqref{equ:tme-prop-claim} is proven. 
	
	Finally, by Taylor's theorem, we arrive at Equation~\eqref{equ:tme-expansion-full}. The remainder in Equation~\eqref{equ:tme-expansion-remainder} is obtained by taking expectations on both sides of Equation~\eqref{equ:tme-derivation-3} and substituting back into Equation~\eqref{equ:tme-derivation-1-2} multiple times for $r=1,\ldots, M$. The proof details can be found in \citet[][Lemma 4]{Zmirou1986} or \citet[][Lemma 1]{Zmirou1989}, for example.
\end{proof}

Note that even though the expansion is taken up an order $M \geq 0$, the TME method gives an exact representation of $\expec{\phi(X(t)) \cond X(s)}$ for any suitable function $\phi$. However, computing the remainder is infeasible in practice, and we usually approximate the representation by discarding the remainder\footnote{If we discard the remainder, then the TME \emph{approximation} only needs $a$ and $b$ to be $M-1$ times differentiable and $\phi$ to be $2\,M$ times differentiable.}. This leads to a polynomial approximation with respect to $\Delta t$. However, please note that the order $M$ cannot be chosen entirely arbitrarily because it depends on the smoothness of the SDE coefficients and function $\phi$.

In Gaussian filtering and smoothing we are particularly interested in estimating the conditional means and covariances of the process $X$. In order to do so, we introduce the following target functions
\begin{equation}
	\begin{split}
		\phi^{\mathrm{I}}(x) &= x, \\
		\phi^{\mathrm{II}}(x) &= x\,x^\trans,
	\end{split}
\end{equation}
corresponding to the first and second moments, respectively. Their TME representations are then given in Lemma~\ref{lemma:tme-1-2-moments}.
\begin{remark}
	\label{remark:multidim-generator}
	While generator $\A$ in Equation~\eqref{equ:generator-ito} is defined for scalar-valued target functions only, this definition can be extend to vector/matrix-valued target functions by introducing an elementwise operator $\Am$. Namely, let $\phi\colon \R^d \to \R^{m\times n}$, then $\Am$ is defined via
	\begin{equation}
		\Am \phi(x) = 
		\begin{bmatrix}
			\A \phi_{11} & \cdots & \A \phi_{1n} \\
			\vdots & \ddots & \vdots \\
			\A \phi_{m1} & \cdots & \A \phi_{mn}
		\end{bmatrix}(x),
	\end{equation}
	where $\phi_{ij}$ stands for the $i,j$-th element of $\phi$.
\end{remark}
\begin{lemma}[TME for first and second moments]
	\label{lemma:tme-1-2-moments}
	The first and second conditional moments of $X$ are given by
	\begin{equation}
		\expec{X(t) \cond X(s)} = \sum^M_{r=0} \frac{1}{r!} \, \Am \phi^{\mathrm{I}}(X(s)) \Delta t^r + R_{M,\phi^\mathrm{I}}(X(s), \Delta t)
	\end{equation}
	and
	\begin{equation}
		\expecbig{X(t)\,X(t)^\trans \cond X(s)} = \sum^M_{r=0} \frac{1}{r!} \, \Am \phi^{\mathrm{II}}(X(s)) \Delta t^r + R_{M,\phi^{\mathrm{II}}}(X(s), \Delta t)
	\end{equation}
	for all $s < t \in \T$, respectively. 
\end{lemma}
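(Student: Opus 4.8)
The plan is to reduce the vector- and matrix-valued claim to the scalar Taylor moment expansion of Theorem~\ref{thm:tme} applied componentwise, and then to repackage the resulting scalar expansions using the elementwise generator $\Am$ introduced in Remark~\ref{remark:multidim-generator} together with the linearity of conditional expectation. The guiding observation is that conditional expectation of a vector- or matrix-valued random variable is by definition taken entrywise, and $\Am$ is defined precisely so that its $i,j$-th entry is $\A$ applied to the $i,j$-th scalar component of the target function. Hence it suffices to expand each scalar entry via the already-proven theorem and then collect the entries back into a vector or a matrix.

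First I would verify the hypotheses of Theorem~\ref{thm:tme} for every scalar component. The components of $\phi^{\mathrm{I}}$ are the coordinate maps $x \mapsto x_i$, which are linear, and the components of $\phi^{\mathrm{II}}$ are the products $x \mapsto x_i \, x_j$, which are quadratic; both are polynomials and therefore lie in $\mathcal{C}^{2\,(M+1)}(\R^d;\R)$ for every $M$, so the smoothness requirement on the target function is met for any expansion order. The differentiability conditions on the drift $a$ and dispersion $b$ are inherited directly from the standing assumptions of Theorem~\ref{thm:tme}, and are exactly what guarantees that the iterated generators $\A^r \phi^{\mathrm{I}}_i$ and $\A^r \phi^{\mathrm{II}}_{ij}$ are well defined up to $r = M+1$.

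Next I would apply Theorem~\ref{thm:tme} to each scalar component. For a fixed index $i$ this yields $\expec{\phi^{\mathrm{I}}_i(X(t)) \cond X(s)} = \sum_{r=0}^M \tfrac{1}{r!} \, \A^r \phi^{\mathrm{I}}_i(X(s)) \, \Delta t^r + R_{M,\phi^{\mathrm{I}}_i}(X(s), \Delta t)$, and analogously for each pair $i,j$ with $\phi^{\mathrm{II}}_{ij}$. Stacking these scalar identities over all entries, and using that both conditional expectation and the generator act entrywise, the finite sums assemble into $\sum_{r=0}^M \tfrac{1}{r!}\, \Am^r \phi^{\mathrm{I}}(X(s)) \, \Delta t^r$ and $\sum_{r=0}^M \tfrac{1}{r!}\, \Am^r \phi^{\mathrm{II}}(X(s)) \, \Delta t^r$, while the entrywise remainders assemble into the vector- and matrix-valued remainders $R_{M,\phi^{\mathrm{I}}}$ and $R_{M,\phi^{\mathrm{II}}}$, understood through Equation~\eqref{equ:tme-expansion-remainder} with $\A^{M+1}$ replaced by its elementwise version $\Am^{M+1}$. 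Recognising $\phi^{\mathrm{I}}(X(t)) = X(t)$ and $\phi^{\mathrm{II}}(X(t)) = X(t)\,X(t)^\trans$ then gives the two claimed identities.

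I do not expect a genuine obstacle: the statement is a direct corollary of Theorem~\ref{thm:tme}, and the only care required is the bookkeeping step of confirming that forming a vector or matrix of scalar expectations commutes with the expansion, which is immediate from linearity of conditional expectation and the definition of $\Am$. The one point worth stating explicitly is that the expansion order $M$ is unconstrained by the choice of $\phi^{\mathrm{I}}$ or $\phi^{\mathrm{II}}$, since polynomials are infinitely differentiable; the only binding smoothness restriction therefore stems from the regularity of the coefficients $a$ and $b$.
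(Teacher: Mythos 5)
Your proposal is correct and follows exactly the route the paper intends: the paper states this lemma without a separate proof, treating it as an immediate corollary of Theorem~\ref{thm:tme} applied entrywise to the polynomial components $\phi^{\mathrm{I}}_i$ and $\phi^{\mathrm{II}}_{ij}$, with the elementwise generator $\Am$ of Remark~\ref{remark:multidim-generator} and linearity of conditional expectation doing the repackaging, which is precisely your argument. As a minor side benefit, your write-up also makes explicit the power $\Am^r$ in the summands, correcting a typographical slip in the lemma's statement where the exponent $r$ is omitted.
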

Notice that if we choose $M=1$ in the lemma above, then the resulting TME approximation $\sum^1_{r=0} \frac{1}{r!} \, \Am \phi^{\mathrm{I}}(X(s)) \Delta t^r$ is exactly the same as the Euler--Maruyama approximation for the first moment. Moreover, the TME covariance approximation (formulated in the next section) will also coincide with the Euler--Maruyama approximation for the covariance when $M=1$.

\section{Covariance approximation by TME}
\label{sec:tme-pd}
This section shows how to use the TME method to approximate conditional covariances of the form in Equation~\eqref{equ:tme-of-interests}. Based on the first and second moment representations in Lemma~\ref{lemma:tme-1-2-moments}, it seems that we can approximate the covariance $\cov{X(t) \cond X(s)}$ by
\begin{align}
	\cov{X(t) \cond X(s)} &= \expecbig{X(t)\,X(t)^\trans \cond X(s)} - \expec{X(t) \cond X(s)} \expec{X(t) \cond X(s)}^\trans \nonumber\\
	&\approx \sum^M_{r=0} \frac{1}{r!} \, \Am^r \phi^{\mathrm{II}}(X(s)) \Delta t^r \label{equ:tme-crude-cov}\\
	&\quad- \Bigg(\sum^M_{r=0} \frac{1}{r!} \, \Am^r \phi^{\mathrm{I}}(X(s)) \Delta t^r \Bigg) \, \Bigg(\sum^M_{r=0} \frac{1}{r!} \, \Am^r \phi^{\mathrm{I}}(X(s)) \Delta t^r\Bigg)^\trans \nonumber
\end{align}
up to an order $M$. However, this approximation has two problems. First, the polynomial degree in this approximation is inconsistent with the approximations of the first and second moments. This is because the power of $\Delta t$ in Equation~\eqref{equ:tme-crude-cov} is now up to order $2\,M$ instead of $M$. Hence, we need to truncate the polynomial terms $\Delta t^r$ for $r>M$ in Equation~\eqref{equ:tme-crude-cov} for the sake of consistency. 

The second problem is that the positive definiteness of the covariance approximation is not guaranteed as we discard the remainders~\citep{Iacus2008, ZhaoTME2020}. To see this, let us consider a simple one-dimensional example as follows. Let $X\colon\T\to\R$ be an It\^{o} process that solves the  SDE~\eqref{equ:t-homo-sde}. Suppose that its dispersion term is non-zero and let us also choose $M=2$. Then the variance approximation in Equation~\eqref{equ:tme-crude-cov}, after truncating the polynomial terms $\Delta t^r$ for $r>M$, becomes $\Gamma(X(s)) \, \Delta t + \Gamma(X(s))\,\frac{\diff a}{\diff X}(X(s)) \Delta t^2$. This approximation is not positive in general because it is positive if and only if $\frac{\diff a}{\diff X}(X(s)) > -1 \, / \, \Delta t$. Moreover, if one requires the positivity hold uniformly for all $\Delta t\in\R_{>0}$ and all $X(s)\in\R$, then the function $\frac{\diff a}{\diff X}$ must be positive on its domain. 

Therefore, in the following theorem we derive the TME approximation for the covariance $\cov{X(t) \cond X(s)}$ by truncating the unnecessary polynomial terms of $\Delta t$ in Equation~\eqref{equ:tme-crude-cov}, and we thereupon provide a sufficient criterion to ensure the positive definiteness of such approximation. 
\begin{theorem}[TME covariance approximation]
	\label{thm:tme-cov-pd}
	Let $X\colon \T \to\R^d$ be the solution of the SDE that verifies Theorem~\ref{thm:tme}. Let integer $M\geq 1$. The $M$-order TME approximation for $\cov{X(t) \cond X(s)}$ is
	\begin{equation}
		\Sigma_M(\Delta t) =  \sum^M_{r=1} \frac{1}{r!} \, \Theta_{ r} \, \Delta t^r,
		\label{equ:tme-cov-sigma}
	\end{equation}
	where 
	\begin{equation}
		\begin{split}
			\Theta_{r} &\coloneqq  \Theta_{X(s), r} \\
			&= \Am^r \phi^{\mathrm{II}}(X(s)) - \sum^r_{k=0} \binom{r}{k}\, \Am^k \phi^{\mathrm{I}}(X(s)) \, \Big(\Am^{r-k} \phi^{\mathrm{I}}(X(s))\Big)^\trans,
		\end{split}
		\label{equ:tme-theta}
	\end{equation}
	and $\binom{r}{k}$ denotes binomial coefficient. The approximation $\Sigma_M(\Delta t)$ is positive definite if the associated polynomial 
	\begin{equation}
		\chi(\Delta t) = \sum^M_{r=1} \frac{1}{r!} \, \mineig(\Theta_{r}) \Delta t^r >0.
		\label{equ:tme-cov-polynomial}
	\end{equation}
\end{theorem}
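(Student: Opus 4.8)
The plan is to prove the theorem in two stages: first, to \emph{derive} the stated form of $\Sigma_M(\Delta t)$ by feeding the moment expansions of Lemma~\ref{lemma:tme-1-2-moments} into the covariance identity and truncating the resulting product with the Cauchy product (Theorem~\ref{thm:cauchy-product}); and second, to establish the positive-definiteness criterion by bounding $\mineig(\Sigma_M(\Delta t))$ from below with Weyl's inequality (Theorem~\ref{thm:weyl-inequality}). Since the whole argument can be kept at the level of manipulating finite polynomials in $\Delta t$ with matrix coefficients, there are no analytic subtleties beyond those already dispatched in Theorem~\ref{thm:tme}.

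For the derivation I would start from $\cov{X(t) \cond X(s)} = \expecbig{X(t)\,X(t)^\trans \cond X(s)} - \expec{X(t) \cond X(s)}\,\expec{X(t) \cond X(s)}^\trans$, substitute the two series of Lemma~\ref{lemma:tme-1-2-moments}, and discard the remainders. The second-moment term is already a degree-$M$ polynomial whose coefficient of $\Delta t^r$ is $\frac{1}{r!}\Am^r\phi^{\mathrm{II}}(X(s))$. Writing the first-moment coefficients as $a_i = \frac{1}{i!}\Am^i\phi^{\mathrm{I}}(X(s))$, the subtracted term is the degree-$2M$ product $\big(\sum_i a_i \Delta t^i\big)\big(\sum_j a_j^\trans \Delta t^j\big)$; applying Theorem~\ref{thm:cauchy-product} entrywise collects it by total power $r=i+j$ with coefficient $\sum_{k=0}^{r} a_k\,a_{r-k}^\trans = \frac{1}{r!}\sum_{k=0}^{r}\binom{r}{k}\Am^k\phi^{\mathrm{I}}(X(s))\,(\Am^{r-k}\phi^{\mathrm{I}}(X(s)))^\trans$, using $\frac{1}{k!\,(r-k)!}=\frac{1}{r!}\binom{r}{k}$. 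Retaining only $r\leq M$ (for consistency with the first- and second-moment orders) yields $\sum_{r=0}^{M}\frac{1}{r!}\Theta_r\Delta t^r$ with $\Theta_r$ as stated, and since $\Am^0$ is the identity the $r=0$ coefficient is $\Theta_0 = X(s)\,X(s)^\trans - X(s)\,X(s)^\trans = 0$, so the sum starts at $r=1$, giving Equation~\eqref{equ:tme-cov-sigma}.

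For the positive-definiteness claim I would first record that each $\Theta_r$ is symmetric: $\Am^r\phi^{\mathrm{II}}$ is symmetric because $\phi^{\mathrm{II}}(x)=x\,x^\trans$ is and $\Am$ acts elementwise, while the convolution sum is fixed under transposition because reindexing $k\mapsto r-k$ together with $\binom{r}{k}=\binom{r}{r-k}$ maps it to itself. With every summand Hermitian, the case $i=n$ of Theorem~\ref{thm:weyl-inequality} gives the subadditivity $\mineig(A+B)\geq\mineig(A)+\mineig(B)$; iterating over the $M$ summands of $\Sigma_M(\Delta t)$ and using $\mineig(c\,\Theta_r)=c\,\mineig(\Theta_r)$ for the positive scalar $c=\Delta t^r/r!>0$ (valid since $\Delta t = t-s>0$) yields $\mineig(\Sigma_M(\Delta t)) \geq \sum_{r=1}^{M}\frac{1}{r!}\mineig(\Theta_r)\Delta t^r = \chi(\Delta t)$. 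Thus $\chi(\Delta t)>0$ forces $\mineig(\Sigma_M(\Delta t))>0$, i.e.\ positive definiteness.

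The bulk of this is routine, so the main points requiring genuine care are the bookkeeping in the Cauchy-product truncation (confirming that the surviving coefficient of $\Delta t^r$ is exactly $\frac{1}{r!}\Theta_r$ and that $\Theta_0$ vanishes) and the verification that each $\Theta_r$ is symmetric. The latter is the real linchpin, since this symmetry is precisely what licenses the eigenvalue bookkeeping of Theorem~\ref{thm:weyl-inequality}; without it the reduction to $\mineig$ subadditivity would be unavailable. The positivity of the weights $\Delta t^r/r!$ is equally essential, as it is what preserves the direction of the eigenvalue inequality under scaling.
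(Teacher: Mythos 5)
Your proposal is correct and follows essentially the same route as the paper's proof: substituting the expansions of Lemma~\ref{lemma:tme-1-2-moments} into the covariance identity, truncating the product via the Cauchy product (Theorem~\ref{thm:cauchy-product}), and then lower-bounding $\mineig(\Sigma_M(\Delta t))$ with Weyl's inequality (Theorem~\ref{thm:weyl-inequality}). Your explicit verification that each $\Theta_r$ is symmetric (via the reindexing $k \mapsto r-k$ and $\binom{r}{k} = \binom{r}{r-k}$) is a detail the paper leaves implicit behind the remark that $\Sigma_M(\Delta t)$ is symmetric, but it does not change the substance of the argument.
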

\begin{proof}
	Let us denote by $\phi^{\mathrm{II}}_{ij}$ the $i,j$-th element of $\phi^{\mathrm{II}}$, and let us also denote by $\phi^{\mathrm{I}}_{i}$ the $i$-th element of $\phi^{\mathrm{I}}$. Then the $i,j$-th element of the covariance approximation in Equation~\eqref{equ:tme-crude-cov} is
	\begin{equation}
		\begin{split}
			&\sum^M_{r=1} \frac{1}{r!} \, \A^r \phi^{\mathrm{II}}_{ij}(X(s)) \Delta t^r\\
			&\quad- \Bigg(\sum^M_{r=1} \frac{1}{r!} \, \A^r \phi^{\mathrm{I}}_i(X(s)) \Delta t^r\Bigg) \, \Bigg(\sum^M_{r=1} \frac{1}{r!} \, \A^r \phi^{\mathrm{I}}_j(X(s)) \Delta t^r\Bigg).
			\label{equ:tme-sigma-untrancated}
		\end{split}
	\end{equation}
	Let $[\Sigma_M]_{ij}$ be the truncation of Equation~\eqref{equ:tme-sigma-untrancated} up to order $M$ (i.e., eliminating terms with $\Delta t^r$ for all $r>M$).
	Then, by Cauchy product (see, Theorem~\ref{thm:cauchy-product}) we have
	\begin{equation}
		\begin{split}
			[\Sigma_M]_{ij} &= \sum^M_{r=1} \left[ \frac{1}{r!} \, \A^r \phi^{\mathrm{II}}_{ij}(X(s)) - \Bigg(\sum^r_{k=0} \frac{\A^k \phi^{\mathrm{I}}_i(X(s)) \, \A^{r-k} \phi^{\mathrm{I}}_j(X(s))}{k!\,(r-k)!}\Bigg)\right]  \Delta t^r \\
			&= \sum^M_{r=1} \frac{1}{r!} \, \Bigg[ \A^r \phi^{\mathrm{II}}_{ij}(X(s)) - \sum^r_{k=0}\binom{r}{k}\, \A^k \phi^{\mathrm{I}}_i(X(s)) \, \A^{r-k} \phi^{\mathrm{I}}_j(X(s)) \Bigg]\, \Delta t^r. \nonumber
		\end{split}
	\end{equation}
	Hence, by rearranging $[\Sigma_M]_{ij}$ into a matrix for $i,j=1,\ldots,d$ we obtain Equation~\eqref{equ:tme-cov-sigma}. Since $\Sigma_M(\Delta t)$ is symmetric by definition, its eigenvalues are real. Then by using Weyl's inequality (see, Theorem~\ref{thm:weyl-inequality}) we obtain
	\begin{equation}
		\mineig(\Sigma_M(\Delta t)) \geq \sum^M_{r=1} \frac{1}{r!} \, \mineig(\Theta_{r}) \, \Delta t^r.
	\end{equation}
	Hence, $\Sigma_M(\Delta t)$ is positive definite if Equation~\eqref{equ:tme-cov-polynomial} holds. Note that $\Theta_0 = 0$.
\end{proof}

Theorem~\ref{thm:tme-cov-pd} shows that $\Sigma_M(\Delta t)$ is a polynomial of $\Delta t$ with coefficients determined by Hermitian matrices $\lbrace \Theta_r\colon r=1,\ldots,M \rbrace$. These matrices depend on the starting condition $X(s)$. In order to guarantee the positive definiteness of $\Sigma_M(\Delta t)$, we use Weyl's inequality in order to find a lower bound on its minimum eigenvalue, resulting in another polynomial $\chi(\Delta t)$ of $\Delta t$. This reduces the problem of analysing the positive definiteness of $\Sigma_M(\Delta t)$ into the problem of analysing the positivity of polynomial $\chi(\Delta t)$.

To ensure the positivity of polynomial $\chi(\Delta t)$, one can trivially restrict all the coefficients $\lbrace \mineig(\Theta_{r})\colon r=1,\ldots,M \rbrace$ to be positive, but this in turn significantly limit the SDE models that the TME approximation applies. Another solution is to let $\chi(\Delta t)$ have no real roots on $\R_{>0}$ and $\chi(\Delta t)>0$ for some $\Delta t\in\R_{>0}$. For instance, one can bound/count the number of real roots of polynomial on any intervals by using Budan's theorem or Sturm's theorem~\citep{Basu2006}.

The positive definiteness of $\Sigma_M(\Delta t)$ is entirely determined by the order $M,$ the time interval $\Delta t,$ the starting point $X(s),$ and the SDE coefficients. If $\Delta t$ is somehow tunable, one can then let $\Delta t$ be small enough to guarantee the positive definiteness numerically. This is true because the term $\Theta_1=\Gamma(X(s))$ which is positive semi-definite by definition, dominates $\Sigma_M(\Delta t)$ in the limit $\Delta t \to 0.$ This numerical approach is especially useful in Gaussian filtering and smoothing, as it is common to perform multiple integration steps with small $\Delta t$ in the prediction steps~(see, Algorithm~\ref{alg:gfs}).

However, it might not always be possible to tune $\Delta t$. For example, if we have limited computational resources, using multiple integration steps with smaller $\Delta t$ in Gaussian filtering and smoothing may not be realistic. Hence, it is also important to show the positive definiteness conditions of $\Sigma_M(\Delta t)$ that are independent of the choice of $\Delta t$. A few results on these conditions are given in the following corollary.

\begin{corollary}
	\label{corol-tme-pd-all-dt}
	The following results hold for all $\Delta t \in\R_{>0}$.
	\begin{enumerate}[I.]
		\item $\Sigma_1(\Delta t)$ is positive definite, if $\Gamma(X(s))$ is positive definite. Notice that $\Gamma(X(s))$ is always positive semi-definite by definition.
		\item $\Sigma_2(\Delta t)$ is positive definite, if $\Theta_2$ and $\Gamma(X(s))$ are positive semi-definite, and one of the two is positive definite.
		\item $\Sigma_3(\Delta t)$ is positive definite, if $\Theta_{3}$ is positive semi-definite and $\mineig(\Theta_{2}) > \frac{-2\,\sqrt{6}}{3}\,\sqrt{\mineig(\Theta_{1})\, \mineig(\Theta_{3})}$.
	\end{enumerate}
\end{corollary}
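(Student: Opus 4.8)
The plan is to route all three parts through the scalar positivity criterion supplied by Theorem~\ref{thm:tme-cov-pd}, which reduces positive definiteness of $\Sigma_M(\Delta t)$ to the positivity of $\chi(\Delta t) = \sum_{r=1}^M \frac{1}{r!}\mineig(\Theta_r)\,\Delta t^r$ on $\R_{>0}$. Throughout I would exploit that $\Theta_1 = \Gamma(X(s))$ and, by hypothesis in Part~III, $\Theta_3$ are positive semi-definite, so that $\mineig(\Theta_1)\geq 0$ and $\mineig(\Theta_3)\geq 0$; in particular the product under the square root in the stated bound is non-negative and the bound is well-defined.

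For Part~I this is immediate: $\Sigma_1(\Delta t) = \Gamma(X(s))\,\Delta t$ is positive definite for $\Delta t>0$ exactly when $\Gamma(X(s))$ is. For Part~II I would argue directly on quadratic forms, writing for any nonzero $v$
\[
	v^\trans \Sigma_2(\Delta t)\, v = \big(v^\trans \Gamma(X(s))\, v\big)\,\Delta t + \tfrac{1}{2}\big(v^\trans \Theta_2\, v\big)\,\Delta t^2 ;
\]
since $\Delta t>0$, both coefficients are non-negative when $\Gamma(X(s))$ and $\Theta_2$ are positive semi-definite, while the coefficient attached to the positive-definite matrix is strictly positive, so the sum is strictly positive.

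The substantive case is Part~III, where the cross-term interaction cannot be handled by merely summing positive semi-definite pieces, which is precisely why I would invoke the scalar criterion. Applying Theorem~\ref{thm:tme-cov-pd} and factoring the non-negative variable $\Delta t$ out of $\chi$ gives $\chi(\Delta t) = \Delta t\, p(\Delta t)$ with
\[
	p(\Delta t) = \mineig(\Theta_1) + \tfrac{1}{2}\mineig(\Theta_2)\,\Delta t + \tfrac{1}{6}\mineig(\Theta_3)\,\Delta t^2 .
\]
Because $\mineig(\Theta_3)\geq 0$, this is an upward-opening quadratic (or affine, if $\mineig(\Theta_3)=0$), whose minimum over $\R_{>0}$ is attained either at the boundary $\Delta t\downarrow 0$, with value $\mineig(\Theta_1)\geq 0$ when the vertex abscissa is non-positive, or at the interior vertex $\Delta t^\star = -3\,\mineig(\Theta_2)/\big(2\,\mineig(\Theta_3)\big)$ when $\mineig(\Theta_2)<0$. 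Evaluating $p$ at the vertex yields $\mineig(\Theta_1) - \tfrac{3}{8}\,\mineig(\Theta_2)^2/\mineig(\Theta_3)$, and demanding this be positive rearranges to $\mineig(\Theta_2)^2 < \tfrac{8}{3}\,\mineig(\Theta_1)\,\mineig(\Theta_3)$; taking square roots and using $\mineig(\Theta_2)<0$ together with $\sqrt{8/3} = 2\sqrt{6}/3$ recovers exactly the claimed bound $\mineig(\Theta_2) > -\tfrac{2\sqrt{6}}{3}\sqrt{\mineig(\Theta_1)\,\mineig(\Theta_3)}$.

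The hard part will be the careful case analysis on the sign of $\mineig(\Theta_2)$ and on the degenerate endpoints $\mineig(\Theta_1)=0$ or $\mineig(\Theta_3)=0$. I must confirm that when $\mineig(\Theta_2)\geq 0$ the stated inequality holds automatically, its right-hand side being non-positive, while $p>0$ still follows from the remaining non-negative terms; and that when $\mineig(\Theta_3)=0$ the hypothesis collapses to $\mineig(\Theta_2)>0$, for which $\chi$ remains positive as an affine expression in $\Delta t$. Verifying that the single clean inequality genuinely captures all of these sub-cases, rather than only the interior-vertex one, is the main point requiring attention; once assembled, these pieces deliver the constant $2\sqrt{6}/3$ and complete the proof.
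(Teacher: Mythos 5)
Your proposal is correct and takes essentially the same route as the paper: the paper's proof likewise reduces all three parts to Theorem~\ref{thm:tme-cov-pd} and the condition that the quadratic/cubic polynomial $\chi(\Delta t)$ have no positive real roots, deferring the explicit computation (the factor-out-$\Delta t$ vertex analysis that produces the constant $2\sqrt{6}/3$, together with the degenerate sub-cases you flag) to the cited reference, and your sketch carries out precisely those details correctly. The only cosmetic difference is that you handle Part~II by a direct quadratic-form decomposition of $\Sigma_2(\Delta t)$ instead of via $\chi$, an equally valid (indeed slightly sharper) variant.
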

\begin{proof}
	This corollary follows from Theorem~\ref{thm:tme-cov-pd} and the root conditions of quadratic and cubic polynomials (i.e., by letting $\chi(\Delta t)$ have no real roots on $\R_{>0}$). See,~\citet[][Proposition 5]{ZhaoTME2020} for details.
\end{proof}

\begin{remark}
	For $r=0,1,2$ we can immediately derive $\Theta_0 = 0$, $\Theta_1 = \Gamma(X(s))$, and $\Theta_2 = \Gamma(X(s)) \, \jacob_X a(X(s)) + (\Gamma(X(s)) \, \jacob_X a(X(s)))^\trans$. For results in higher orders (and in one state dimension), see, \citet[][Example 9]{ZhaoTME2020}.
\end{remark}

The approximation $\Sigma_M$ has an important property that it does not \textit{explicitly} depend on $X(s)$. More precisely, the expression of $\Sigma_M$ only have $X(s)$ appearing inside the SDE coefficients and their derivatives. With a slight abuse of terminology, we say that $\Sigma_M$ is $X(s)$-\emph{homogeneous}. This property is meaningful in the sense that it is possible to ensure the positive definiteness of $\Sigma_M$ independent of $X(s)$ by manipulating the SDE coefficients.

\begin{lemma}[$X(s)$-homogeneity]
	\label{lemma:tme-cov-homo}
	Let $b(X(t)) = b\in\R^{d\times w}$ be a constant, hence $\Gamma = b \, b^\trans$. Denote by $\Theta^{uv}_r$ the $u,v$-th element of $\Theta_r$ and $\Gamma_{ij}$ the $i,j$-th element of $\Gamma$. Also denote by $\alpha^u_r \coloneqq \A^r \phi^{\mathrm{I}}_u(X(s))$. Then 
	\begin{equation}
		\begin{split}
			\Theta_{r}^{uv} &= \sum^d_{i,j=1} \sum^{r-1}_{k=0} \binom{r-1}{k} \tash{\alpha^u_k}{X_i(s)} \, \tash{\alpha^v_{r-k-1}}{X_j(s)} \, \Gamma_{ij} + \A \Theta_{r-1}^{uv} \\
			&= \sum^{r-1}_{k=0} \Am^k \sum^{r-k-1}_{l=0} \binom{r-k-1}{l} \trace*{\nabla_X \alpha^v_{r-k-l-1} \, \big(\nabla_X \alpha^u_k\big)^\trans \, \Gamma}
		\end{split}
		\label{equ:tme-homo-theta}
	\end{equation}
	and $\Theta^{uv}_0=0$, for all $r\geq 1$ and $u,v\leq d$. Notice that $\Am^r \phi^{\mathrm{I}}(X(s))$ is $X(s)$-homogeneous for $r\geq 0$.
\end{lemma}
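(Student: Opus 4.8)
\emph{Setup and the one tool.} The entry I must control is built from a single scalar product, so the first move is to reduce everything to one product rule for $\A$. Since $\phi^{\mathrm{II}}_{uv}(x) = x_u\,x_v = \phi^{\mathrm{I}}_u(x)\,\phi^{\mathrm{I}}_v(x)$, writing $f \coloneqq \phi^{\mathrm{I}}_u$, $g \coloneqq \phi^{\mathrm{I}}_v$, $\alpha^u_k = \A^k f(X(s))$, $\alpha^v_k = \A^k g(X(s))$, the definition \eqref{equ:tme-theta} becomes
\[
\Theta_r^{uv} = \A^r(f\,g)(X(s)) - \sum_{k=0}^r \binom{r}{k}\, \alpha^u_k \, \alpha^v_{r-k}.
\]
The only analytic ingredient I would need is the Leibniz / carr\'{e}-du-champ identity for the generator: expanding \eqref{equ:generator-ito} for a product and using the symmetry of $\Gamma$ yields, for $f,g\in\mathcal{C}^2(\R^d;\R)$,
\[
\A(f\,g) = f\,\A g + g\,\A f + (\nabla_x f)^\trans\,\Gamma\,\nabla_x g .
\]
The cross term is exactly the trace appearing in the statement, since $(\nabla_x f)^\trans\,\Gamma\,\nabla_x g = \trace*{\nabla_x g\,(\nabla_x f)^\trans\,\Gamma}$.

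\emph{First (recursive) identity.} I would prove the first line of \eqref{equ:tme-homo-theta} by induction on $r$, comparing $\Theta_r^{uv}$ with $\A\Theta_{r-1}^{uv}$. Setting $S_r \coloneqq \sum_{k=0}^r\binom{r}{k}\alpha^u_k\alpha^v_{r-k}$, linearity of $\A$ gives $\A\Theta_{r-1}^{uv} = \A^r(f g)(X(s)) - \A S_{r-1}$, hence $\Theta_r^{uv} - \A\Theta_{r-1}^{uv} = \A S_{r-1} - S_r$. Applying the Leibniz identity to each summand $(\A^k f)(\A^{r-1-k}g)$ of $S_{r-1}$ splits $\A S_{r-1}$ into three sums: two index-shifted product sums and one cross-term sum. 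The combinatorial heart is that the two product sums recombine, via Pascal's rule $\binom{r-1}{k}+\binom{r-1}{k-1}=\binom{r}{k}$, into precisely $S_r$; what remains is $\sum_{k=0}^{r-1}\binom{r-1}{k}\,(\nabla_x\alpha^u_k)^\trans\,\Gamma\,\nabla_x\alpha^v_{r-1-k}$, i.e.\ the first term of \eqref{equ:tme-homo-theta}. The base case is immediate: $\Theta_0^{uv}=0$ and $\nabla_x\alpha^u_0 = e_u$ give $\Theta_1^{uv}=\Gamma_{uv}$.

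\emph{Second (closed) form and homogeneity.} The closed expression follows by unrolling the recursion down to $\Theta_0^{uv}=0$: iterating $\Theta_r^{uv} = \A\Theta_{r-1}^{uv} + D_r$, where $D_r$ is the cross-term sum above, yields $\Theta_r^{uv} = \sum_{k=0}^{r-1}\A^k D_{r-k}$. Substituting the definition of each $D_{r-k}$ and rewriting its cross terms as traces against $\Gamma$ produces the doubly-indexed sum in the statement. This is exactly where I would invoke the hypothesis that $b$, and hence $\Gamma = b\,b^\trans$, is constant: constancy lets me commute $\Gamma$ (and the trace, which is linear) through the elementwise operator $\Am^k$, so that $\Am^k$ acts only on the gradient factors. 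The final claim, that $\Am^r\phi^{\mathrm{I}}(X(s))$ is $X(s)$-homogeneous, is then automatic, since iterating $\A$ on the coordinate maps $\phi^{\mathrm{I}}$ introduces no dependence beyond the SDE coefficients and their derivatives evaluated at $X(s)$.

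\emph{Main obstacle.} The analysis is light and the difficulty is purely bookkeeping. The delicate point is tracking the two indices through the unrolling and checking that the binomial weights align after the Pascal collapse, so that the upper summation limits and the shifted subscripts on $\alpha^u,\alpha^v$ are mutually consistent between the recursive and closed forms. A secondary point I would verify explicitly is the interchange of $\Am^k$ with $\trace*{\,\cdot\,\Gamma}$, which is precisely what the constant-dispersion assumption provides.
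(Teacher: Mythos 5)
Your proposal is correct and follows essentially the same route as the paper's proof: both hinge on the generator's product (Leibniz/carr\'{e}-du-champ) rule applied to the binomial sum $\sum_k \binom{r}{k}\alpha^u_k\,\alpha^v_{r-k}$, a Pascal-rule recombination yielding the recursion $\Theta^{uv}_r = \A\Theta^{uv}_{r-1} + D_r$, and an unrolling from $\Theta^{uv}_0 = 0$ for the closed form --- the paper merely packages this as a coordinate-level induction on the identity $\A^r\phi^{\mathrm{II}}_{uv}(X(s)) = \sum_k\binom{r}{k}\alpha^u_k\,\alpha^v_{r-k} + \Theta^{uv}_r$, whereas you obtain the recursion directly from linearity of $\A$. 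A small point in your favour: the unrolling produces $\nabla_X\alpha^u_l$ (inner summation index) inside the trace, so the lemma's printed second line, which shows $\nabla_X\alpha^u_k$, carries an index slip that your derivation implicitly corrects.
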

\begin{proof}
	Define $\beta_r^{uv} \coloneqq \A^r \phi^{\mathrm{II}}_{uv}(X(s))$. Since $\Theta^{uv}_r = \beta^{uv}_r - \sum^r_{k=0}\binom{r}{k} \alpha^u_k \, \alpha^v_{r-k}$ by Equation~\eqref{equ:tme-theta}, the task is to find an $X(s)$-homogeneous expression for $\Theta^{uv}_r$. If we do a few initial trials for $r=0,1,\ldots$, we will find a pattern
	\begin{equation}
		\begin{split}
			\beta^{uv}_0 &= \alpha^u_0 \, \alpha^v_0, \\
			\beta^{uv}_1 &= \alpha^u_0 \, \alpha^v_1 + \alpha^v_0 \, \alpha^u_1 + \Gamma_{uv}, \\
			&\vdots
		\end{split}
	\end{equation}
	Hence, we want to prove that
	\begin{equation}
		\beta_r^{uv} = \sum^r_{k=0}\binom{r}{k} \alpha^u_k \, \alpha^v_{r-k} + \Theta_r^{uv},
		\label{equ:tme-x-homo-induction1}
	\end{equation}
	where 
	\begin{equation}
		\Theta_{r}^{uv} = \sum^d_{i,j=1} \sum^{r-1}_{k=0} \binom{r-1}{k} \tash{\alpha^u_k}{X_i(s)} \, \tash{\alpha^v_{r-k-1}}{X_j(s)} \, \Gamma_{ij} + \A \Theta_{r-1}^{uv}.
		\label{equ:tme-x-homo-induction2}
	\end{equation}
	Equation~\eqref{equ:tme-x-homo-induction1} holds for $r=0$ and $1$. Now let us suppose that they hold for an $r> 1$. Then, by the definition of $\beta^{uv}_r$ we have
	\begin{equation}
		\begin{split}
			\beta^{uv}_{r+1} = \A^{r+1} \phi^{\mathrm{II}}_{uv}(X(s)) &= \A \beta^{uv}_{r} \\
			&= \sum^d_{i=1} \tash{\beta^{uv}_r}{X_i(s)} \, a_i(X(s)) + \frac{1}{2} \sum^d_{i,j=1}\frac{\partial^2 \beta^{uv}_r}{\partial X_i(s) \, \partial X_j(s)} \, \Gamma_{uv}.
		\end{split}
		\label{equ:tme-x-homo-beta-r+1}
	\end{equation}
	Now, by substituting Equation~\eqref{equ:tme-x-homo-induction1} in Equation~\eqref{equ:tme-x-homo-beta-r+1},  Equation~\eqref{equ:tme-x-homo-beta-r+1} becomes
	\begin{align*}
		&\beta^{uv}_{r+1} = \\
		&\sum_{i=1}^d\sum^r_{k=0}\binom{r}{k}\Bigg(\tash{\alpha^u_k}{X_i(s)} \, \alpha^v_{r-k} + \alpha^u_k \, \tash{\alpha^v_{r-k}}{X_i(s)} \Bigg) \, a_i(X(s)) \\
		&\quad+ \frac{1}{2}\sum_{i, j=1}^d\Bigg(\sum^r_{k=0}\binom{r}{k}\Big(\tashh{\alpha_k^u}{X_i(s)}{X_j(s)} \, \alpha^v_{r-k} + \tash{\alpha_k^u}{X_i(s)} \, \tash{\alpha^v_{r-k}}{X_j(s)} + \tash{\alpha_k^u}{X_j(s)} \, \tash{\alpha^v_{r-k}}{X_i(s)} \\
		&\qquad\qquad\qquad\qquad\qquad  + \alpha^u_k \, \tashh{\alpha^v_{r-k}}{X_i(s)}{X_j(s)} \Big)\Bigg)\Gamma_{ij} + \A \Theta^{uv}_r.
	\end{align*}
	By the definition of generator $\A$ we have that $\sum^d_{i=1} \tash{\alpha^u_k}{X_i(s)} \, \alpha^v_{r-k} \, a_i(X(s)) + \frac{1}{2} \sum^d_{i,j=1} \tashh{\alpha_k^u}{X_i(s)}{X_j(s)} \, \alpha^v_{r-k} = \A \alpha^u_k \, \alpha^v_{r-k} = \allowbreak \alpha^u_{k+1} \, \alpha^v_{r-k}$. Defining $\binom{r}{-1} = 0,$ we arrive at
	\begin{align*}
		\beta^{uv}_{r+1} &= \sum^r_{k=0} \binom{r}{k} \big(\alpha^u_{k+1} \, \alpha^v_{r-k} + \alpha^u_{k} \, \alpha^v_{r-k+1}\big) + \sum_{i, j=1}^d \sum^r_{k=0} \binom{r}{k}\tash{\alpha_k^u}{X_i(s)} \, \tash{\alpha^v_{r-k}}{X_j(s)} \,\Gamma_{ij} + \A \Theta^{uv}_r\\
		&= \sum^{r+1}_{k=0} \binom{r}{k-1}\alpha^u_k \, \alpha^v_{r-k+1} + \sum^{r+1}_{k=0}\binom{r}{k}\alpha^u_{k} \, \alpha^v_{r-k+1} + \Theta^{uv}_{r+1} \\
		&= \sum^{r+1}_{k=0} \binom{r+1}{k} \alpha^u_{k} \, \alpha^v_{r-k+1} + \Theta^{uv}_{r+1}
	\end{align*}
	which is exactly Equation~\eqref{equ:tme-x-homo-induction1} at $r+1$. Thus, Equation~\eqref{equ:tme-x-homo-induction1} is proven by mathematical induction. Finally, 
	\begin{equation}
		\begin{split}
			\Theta^{uv}_{r} &= \beta^{uv}_r - \sum^r_{k=0}\binom{r}{k} \alpha^u_k \, \alpha^v_{r-k} \\
			&= \sum^d_{i,j=1} \sum^{r-1}_{k=0} \binom{r-1}{k} \tash{\alpha^u_k}{X_i(s)} \, \tash{\alpha^v_{r-k-1}}{X_j(s)} \, \Gamma_{ij} + \A \Theta_{r-1}^{uv} \\
			&= \sum^{r-1}_{k=0} \binom{r-1}{k} \trace*{\nabla_X \alpha^v_{r-k-1} \, \big(\nabla_X \alpha^u_k\big)^\trans \, \Gamma} + \A \Theta_{r-1}^{uv}.
			\nonumber
		\end{split}
	\end{equation}
	Starting from $\Theta_0 = 0$, one can arrive at the last line in Equation~\eqref{equ:tme-homo-theta} by iterating $\Theta^{uv}_r$ for $r\geq 1$.
\end{proof}

The homogeneity property does not hold for the first and second moment approximations in Lemma~\ref{lemma:tme-1-2-moments}. For instance, the TME mean approximation reads $\expec{X(t) \cond X(s)} \approx X(s) + a(X(s))\, \Delta t + \cdots$ which explicitly depends on $X(s)$.

\section{Numerical examples of TME}
\label{sec:tme-examples}
In this section we present a few examples that apply the TME method for approximating expectations of the form in Equation~\eqref{equ:tme-of-interests}. In addition, we compare the results of TME against some commonly-used methods, such as the Euler--Maruyama scheme and the It\^{o}--Taylor strong order 1.5 (It\^{o}-1.5) method~\citep{Kloeden1992}. In Example~\ref{example:tme-softplus}, we present a concrete example showing how to use Theorem~\ref{thm:tme-cov-pd} to analyse the positive definiteness of a TME covariance approximation. 

For simplicity we call TME-$M$ the $M$-order TME approximation.

\begin{figure}[t!]
	\centering
	\includegraphics[width=.9\linewidth]{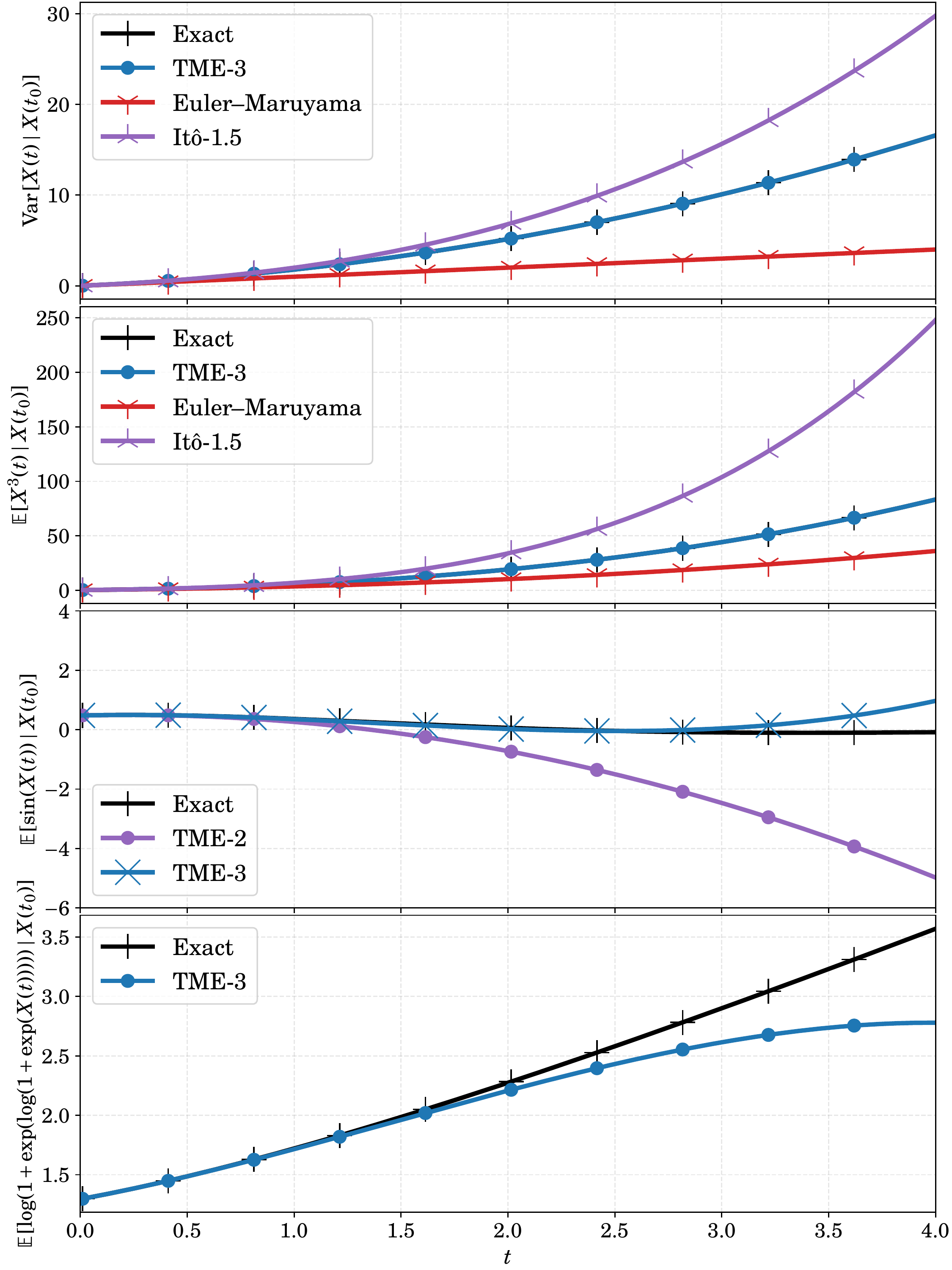}
	\caption{Expectation approximations in Example~\ref{example-tme-benes}. The exact solutions are computed by numerical integration, as the transition density of the Bene\v{s} model is explicitly known~\citep[][Equation 10.58]{Sarkka2019}.}
	\label{fig:tme-benes}
\end{figure}

\begin{example}
	\label{example-tme-benes}
	Consider an It\^{o} process $X\colon \T \to \R$ which solves the Bene\v{s} model
	\begin{equation}
		\diff X(t) = \tanh(X(t)) \diff t + \diff W(t),
	\end{equation}
	starting from $X(t_0) = 0.5$. We are interested in computing its variance $\varr{X(t) \cond X(t_0)}$, third moment $\expec{X(t)^3 \cond X(t_0)}$, and two expectations 
	\begin{equation}
		\begin{split}
			&\expec{\sin(X(t)) \cond X(t_0)}, \\
			&\expec{\log(1+\exp(\log(1+\exp(X(t))))) \cond X(t_0)}.
		\end{split}
		\label{equ:tme-benes-nn}
	\end{equation}
	Notice that one can understand the last expectation above as a way to describe the propogation of $X$ through a neural network consiting of two single-neuron layers with Softplus activation functions.
	
	The TME-2 approximation for the variance $\varr{X(t) \cond X(t_0)}$ is exact. Specifically, $\Sigma_2(\Delta t) = \Delta t + (1-\tanh(X(t_0))^2)\,\Delta t^2$ is equal to $\varr{X(t) \cond X(t_0)}$.
\end{example}

In Figure~\ref{fig:tme-benes}, we plot the results for the expectations in Example~\ref{example-tme-benes}. In addition, we compare the TME method against the Euler--Maruyama and It\^{o}-1.5 methods. From the figure, we see that the TME approach outperforms the Euler--Maruyama and It\^{o}-1.5 methods significantly. Also, the TME approach can approximate the expectations in Equation~\eqref{equ:tme-benes-nn} to a good extent within a small time span.  Note that the Euler--Maruyama and It\^{o}-1.5 schemes do not give closed-form approximations for the expectations in Equation~\eqref{equ:tme-benes-nn}, we thus omit the two methods for these expectations. 

In the next example, we show how to use Theorem~\ref{thm:tme-cov-pd} and Corollary~\ref{corol-tme-pd-all-dt} in practice to analyse the positive definiteness of the TME covariance approximation of a non-linear multidimensional SDE. 

\begin{figure}[t!]
	\centering
	\includegraphics[width=.7\linewidth]{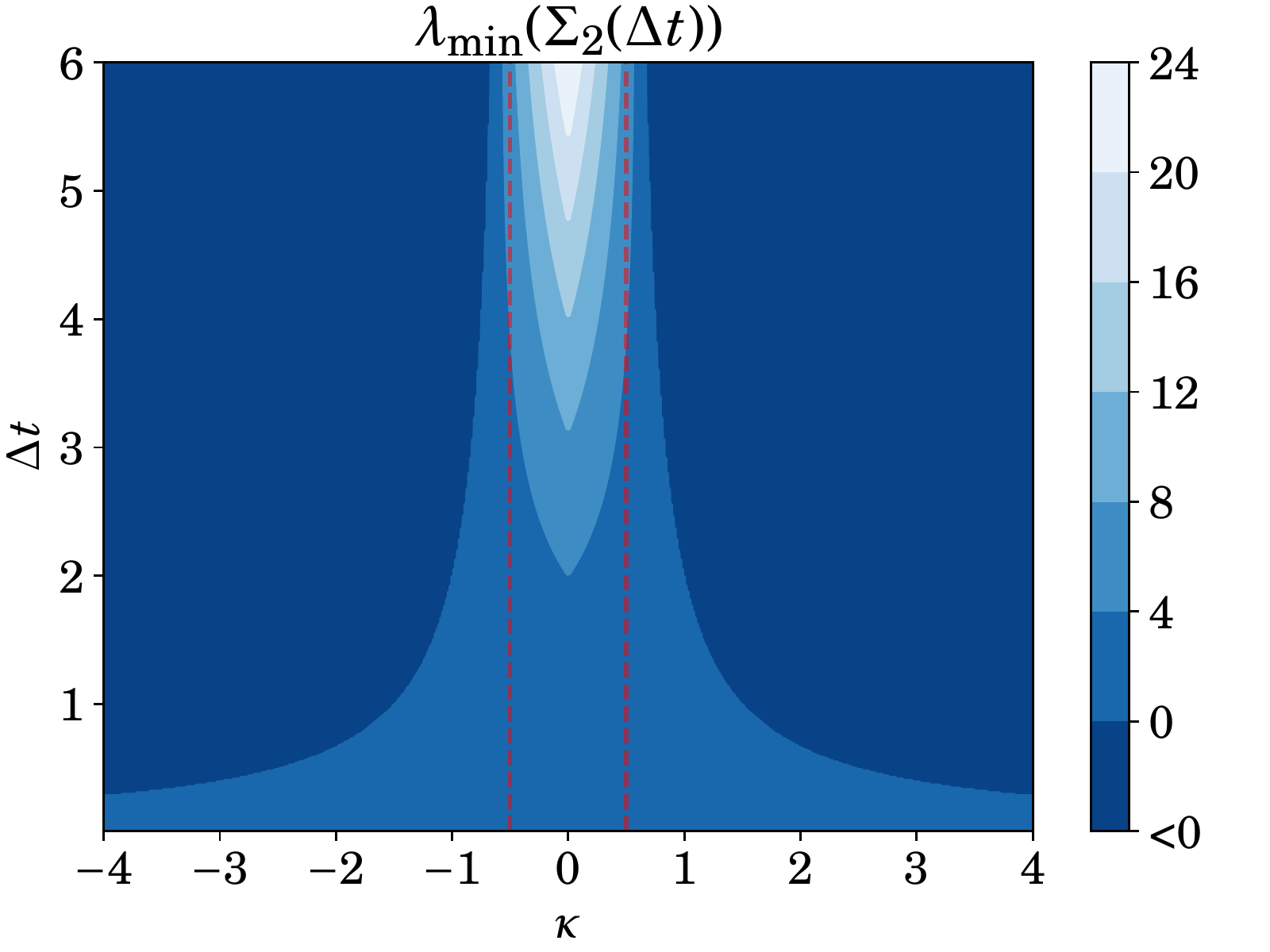}
	\caption{Contour plot of the minimum eigenvalue of $\Sigma_2$ with respect to $\Delta t$ and $\kappa$ in Example~\ref{example:tme-softplus}. Red dashed lines stand for $\kappa=-0.5$ and $0.5$.}
	\label{fig:tme-softplus}
\end{figure}

\begin{example}
	\label{example:tme-softplus}
	Consider a two-dimensional SDE
	\begin{equation}
		\begin{split}
			\diff X^1(t) &= (\log(1 + \exp(X^1(t))) + \kappa \, X^2(t)) \diff t + \diff W_1(t), \\
			\diff X^2(t) &= (\log(1 + \exp(X^2(t))) + \kappa \, X^1(t)) \diff t + \diff W_2(t), 
		\end{split}
	\end{equation}
	where $\kappa \in \R$ is a tunable parameter. We want to ensure the positive definiteness of $\Sigma_2(\Delta t)$ for all $\Delta t \in \R_{>0}$ by tuning $\kappa$. In order to do so, we can first explicitly derive $\Theta_1$ and $\Theta_2$. It turns out that $\Theta_1$ is an identity matrix and
	\begin{equation}
		\Theta_2 = 2\,
		\begin{bmatrix}
			\frac{e^{X^1(t_0)}}{e^{X^1(t_0)}+1} & \kappa \\
			\kappa & \frac{e^{X^2(t_0)}}{e^{X^2(t_0)}+1}
		\end{bmatrix}.
	\end{equation}
	Then, by Corollary~\ref{corol-tme-pd-all-dt} it is sufficient to guarantee the positive definiteness of $\Sigma_2(\Delta t)$ for all $\Delta t\in\R_{>0}$ by ensuring that $\Theta_2$ is positive semi-definite. Thus, one should let 
	\begin{equation}
		\abs{\kappa} \leq \sqrt{\frac{e^{X^1(t_0) + X^2(t_0)}}{(e^{X^1(t_0)}+1) \, (e^{X^2(t_0)}+1)}} \nonumber.
	\end{equation}
	Figure~\ref{fig:tme-softplus} plots the minimum eigenvalues of $\Sigma_2(\Delta t)$ with respect to $\Delta t$ and $\kappa$ when $X^1(t_0)=X^2(t_0)=0$. In this case, $\abs{\kappa}$ should be less than $0.5$ (red dashed lines in the figure) in order to guarantee the positive definiteness of $\Sigma_2(\Delta t)$. The figure shows that $\Sigma_2(\Delta t)$ is indeed positive definite for all $\Delta t\in\R_{>0}$ within the region $\abs{\kappa}\leq 0.5$, and that this sufficient region is very close to the true region (i.e., the region of $\kappa$ that $\lambda_\mathrm{min}(\Sigma_2(\Delta t))>0$ for all $\Delta t$).
\end{example}

The positive definiteness analysis via Theorem~\ref{thm:tme-cov-pd} might be limited to simple SDEs and low orders of expansion. For $M\leq 3$, some explicit results can be stated as shown in Corollary~\ref{corol-tme-pd-all-dt}, but higher-order expansions can result in complicated polynomials to be analysed. 

\section{TME Gaussian filter and smoother}
\label{sec:tme-filter-smoother}

In the pioneering works by~\citet{Zmirou1986, Kessler1997, Sahalia2003}, the TME method was originally introduced for estimating unknown parameters of SDEs. More specifically, they use TME to discretise SDEs in order to perform maximum likelihood estimations. In this section, we show that the TME method could also be applied for solving Gaussian filtering and smoothing problems~\citep{ZhaoTME2020, ZhaoZTMEsmoother2021}.

Consider a (time-homogeneous) continuous-discrete state-space model
\begin{equation}
	\begin{split}
		\diff X(t) &= a(X(t)) \diff t + b(X(t)) \diff W(t), \quad X(t_0) = X_0,\\
		Y_k &= h(X_k) + \xi_k, \quad \xi_k \sim \mathrm{N}(0, \Xi_k),
	\end{split}
	\label{equ:tme-cd-model}
\end{equation}
where the solution $X\colon \T\to\R^d$ is observed through a non-linear function $h\colon \R^d\to\R^{d_y}$ and additive Gaussian noises $\lbrace \xi_k\colon k=1,2,\ldots \rbrace$. Furthermore, we assume that the SDE coefficients satisfy the conditions in Theorem~\ref{thm:tme}, so that we can apply the TME method. 

As shown in Algorithm~\ref{alg:gfs}, a key procedure of Gaussian filtering is to propagate the previous filtering result through the SDE and compute the predictive mean $m^-_k$ and covariance $P^-_k$. As for the Gaussian smoothing steps, one needs to compute the cross-covariance $D^-_{k+1}$ in Algorithm~\ref{alg:gfs}. These quantities can be approximated by using the TME method as follows.

Let us denote by $f^M$ and $Q^M$ the $M$-order TME approximations to the conditional mean and covariance (see, Lemma~\ref{lemma:tme-1-2-moments} and Theorem~\ref{thm:tme-cov-pd}), that are,
\begin{equation}
	\begin{split}
		\expec{X_k \cond X_{k-1}} &\approx f^M(X_{k-1}), \\
		\cov{X_k \cond X_{k-1}} &\approx Q^M(X_{k-1}).
	\end{split}
\end{equation}
Then by substituting $f^M$ and $Q^M$ into the prediction step in Algorithm~\ref{alg:gfs} we obtain the TME-approximated predictive mean and covariance
\begin{equation}
	\begin{split}
		&\int x_k \, p_{X_k \cond Y_{1:k-1}}(x_k \cond y_{1:k-1}) \diff x_k \\
		&= \iint x_k \, p_{X_k \cond X_{k-1}}(x_k \cond x_{k-1}) \, p_{X_{k-1} \cond Y_{k-1}} (x_{k-1} \cond y_{k-1}) \diff x_{k-1} \diff x_k\\
		&\approx \int f^M(x_{k-1}) \, \mathrm{N} \big(x_{k-1} \cond m^f_{k-1}, P^f_{k-1} \big) \diff x_{k-1}\\
		&= m^-_k, \\
		&\int (x_k - m_k^-) \, (x_k - m_k^-)^\trans \, p_{X_k \cond Y_{1:k-1}}(x_k \cond y_{1:k-1}) \diff x_k \\
		&\approx \int \Big(Q^M(x_{k-1}) + f^M(x_{k-1}) \, \big(f^M(x_{k-1})\big)^\trans \Big) \,\mathrm{N} \big(x_{k-1} \cond m^f_{k-1}, P^f_{k-1} \big) \diff x_{k-1} \\
		&\quad- m_k^- \, (m_k^-)^\trans \\
		&= P^-_k.
	\end{split}
	\label{equ:tme-gfs-pred}
\end{equation}
Similarly, for the cross-covariance $D_{k+1}$ in the smoothing pass we have
\begin{equation}
	\begin{split}
		&\iint x_k \, x_{k+1}^\trans \, p_{X_{k+1} \cond X_{k}}(x_{k+1} \cond x_{k}) \, p_{X_k \cond Y_{1:k}}(x_k \cond y_{1:k}) \diff x_k \diff x_{k+1} - m_k^f \, (m^-_{k+1})^\trans \\
		&\approx \int x_k \, \big(f^M(x_k)\big)^\trans \, \mathrm{N} \big(x_{k} \cond m^f_{k}, P^f_{k} \big)  \diff x_k  - m_k^f \, (m^-_{k+1})^\trans \\
		&= D_{k+1}.
	\end{split}
	\label{equ:tme-gfs-D}
\end{equation}
We formally define the TME Gaussian filter and smoother in the following algorithm.

\begin{algorithm}[TME Gaussian filter and smoother]
	\label{alg:tme-gfs}
	The algorithm is the same as Algorithm~\ref{alg:gfs}, except that the computations for the prediction (i.e., $m^-_k$ and $P^-_k$) and cross-covariance (i.e., $D_{k+1}$) are replaced by Equations~\eqref{equ:tme-gfs-pred} and~\eqref{equ:tme-gfs-D}, respectively.
\end{algorithm}

The expectations in Equations~\eqref{equ:tme-gfs-pred} and~\eqref{equ:tme-gfs-D} are usually computed by quadrature integration methods (e.g., sigma-point methods), since the approximations $f^M$ and $Q^M$ are usually non-linear functions. 

\subsection{Filter stability}
\label{sec:tme-stability}
The filter stability in this context refers to the error bound of the filtering estimates in the mean-square sense. For Kalman filters, some classical stability results are already shown, for example, by~\citet{Jazwinski1970} and~\citet{Anderson1981}. As for non-linear filters, their stability analyse has also been studied extensively in recent decades. For example, \citet{Reif1999} analyse the stability of extended Kalman filters, while the stability of more general Gaussian filters are found in~\citet{Kazufumi2000, Xiong2006}. There are also stability analysis that are model-specific. For instance, \citet{Blomker2013} and~\citet{LawK2014} analyse the stability of a class of Gaussian filters on the Navier--Stokes equation and a Lorenz model, respectively. In the remainder of this section, we rely on the stability results in~\citet{Toni2020} which apply for a wide class of non-linear filters and non-linear state-space models including ours. 

In this section, we analyse the stability of the TME Gaussian filters (see, Algorithm~\ref{alg:tme-gfs}) that use sigma-point integration methods for computing the expectations in Equation~\eqref{equ:tme-gfs-pred}. This analysis is necessary, as it is important to know if the TME Gaussian filtering error -- which accumulates in time -- is in some sense bounded. The sources of the error include, for example, TME approximations, Gaussian approximations to the filtering posterior distributions, and numerical integration. 

To proceed, let
\begin{equation}
	X_k = \check{f}(X_{k-1}) + \check{q}(X_{k-1})
\end{equation}
stand for the \textit{exact} discretisation of the SDE in Equation~\eqref{equ:tme-cd-model} for $k=1,2,\ldots$, where $\check{f}(X_{k-1}) \coloneqq \expec{X_k \cond X_{k-1}}$, and $\check{q}(X_{k-1})$ is a zero-mean random variable whose conditional covariance is $\check{Q}(X_{k-1}) \coloneqq \cov{\check{q}(X_k) \cond X_{k-1}}$. The principle of TME Gaussian filters is such that the TME method approximates $X_k$ via the discretisation
\begin{equation}
	X_k \approx f^M(X_{k-1}) + q^M(X_{k-1}),
\end{equation}
where $q^M(X_{k-1}) \sim \mathrm{N}(0, Q^M(X_{k-1}))$. By Theorem~\ref{thm:tme} or Lemma~\ref{lemma:tme-1-2-moments} we have 
\begin{equation}
	\check{f}(X_{k-1}) = f^M(X_{k-1}) + R_{M}(X_{k-1}),
\end{equation}
where we abbreviate the remainder by $R_{M}(X_{k-1}) \coloneqq R_{M, \phi^{\mathrm{I}}}(X_{k-1}, \Delta t_k)$.

Now suppose that we perform TME Gaussian filtering on a linearly-observed state-space model
\begin{equation}
	\begin{split}
		X_k &= \check{f}(X_{k-1}) + \check{q}(X_{k-1}),\\
		Y_k &= H \, X_k + \xi_k, \quad \xi_k\sim \mathrm{N}(0, R),
	\end{split}
	\label{equ:tme-stability-model}
\end{equation}
defined on a probability space $(\Omega, \FF, \PP)$, where $H \in \R^{d_y \times d}$ and $R\in\R^{d_y\times d_y}$ are constant matrices. Here, we limited ourselves to linear measurement models in order to use the preceding results by~\citet{Toni2020}. 

In the following, sigma-point approximations of Gaussian integrals of the form $\int z(x) \, \mathrm{N}(x \cond m, P) \diff x$ are denoted by $\mathcal{S}_{m, P}(z)$. The sigma-point TME Gaussian filter is such that the predictive mean in Algorithm~\ref{alg:tme-gfs} becomes $m^-_k = \mathcal{S}_{m^f_{k-1}, P^f_{k-1}}(f^M)$. 
\begin{remark}
	Sigma-point approximations of the form $\mathcal{S}_{m, P}(z)$ are weighted summations of $z$ evaluated at integration nodes that are determined by $m$, $P$, and their quadrature rules. For details of these, see, for example, \citet{Sarkka2013}.
\end{remark}

We show the stability of sigma-point TME Gaussian filters in the sense that 
\begin{equation}
	\sup_{k\geq 1}  \expec*{\normbig{X_k - m^f_k}^2_2} < \infty.
	\label{equ:tme-stability-aim}
\end{equation}
\begin{remark}
	Note that if $m^f_k = \expec{X_k \cond Y_{1:k}}$ is obtained exactly, then the mean-square in the equation above is minimised, since $\expec{X_k \cond Y_{1:k}}$ is an orthogonal projection of $X_k$. But in practice, one can only hope for approximating $\expec{X_k \cond Y_{1:k}}$ by using, for example, TME Gaussian filters. The stability analysis here is devoted to show that the TME Gaussian filtering error has a finite (contractive) bound that depends on step $k$.
\end{remark}

We use the following assumptions.

\begin{assumption}
	\label{assump:tme-stability-sys}
	There exist constants $c_M \geq 0$, $c_{\check{q}} \geq 0$, and $c_P \geq 0$ such that $\sup_{k\geq 1} \norm{R_M(X_{k-1})}_2 \leq c_M$ $\PP$-almost surely, $\sup_{k\geq 1} \expec*{\trace*{\check{Q}(X_{k-1})}} \leq c_{\check{q}}$, and $\sup_{k\geq 1} \expecbig{\trace{P^f_k}} \leq c_P$.
\end{assumption}

\begin{assumption}
	\label{assump:tme-stability-sp}
	There exists $c_\mathcal{S}\geq 0$ such that
	\begin{equation}
		\normbig{f^M(x) - \mathcal{S}_{m, p}(f^M)}^2_2 \leq \normbig{ \jacob_x f^M(x) }^2_2 \, \norm{x - m}^2_2 + c_\mathcal{S} \trace{P},
	\end{equation}
	for all $x \in \R^d$, $m \in \R^d$, and positive semi-definite matrix $P \in \R^{d\times d}$.
\end{assumption}

\begin{assumption}
	\label{assump:tme-stability-k}
	There exists $c_K \geq 0$ such that $\sup_{k \geq 1} \norm{I - K_k \, H}_2 \leq c_K$ $\PP$-almost surely, and
	\begin{equation}
		c_f^2 \coloneqq c_K^2 \, \sup_{x \in \R^d} \! \normbig{\jacob_x f^M(x)}_2^2 < \frac{1}{4}.
	\end{equation}
\end{assumption}
Indeed, the assumptions above are in some sense restrictive. In particular, the TME remainder and the covariance of the transition density are required to be bounded by $c_M$ and $c_{\check{q}}$, respectively. In order to satisfy these assumptions, it is sufficient to require that the SDE coefficients are smooth enough and all their derivatives up to a certain order are uniformly bounded (e.g., the Bene\v{s} model in Example~\ref{example-tme-benes}). For more detailed explanations of these assumptions can be found in~\citet{ZhaoTME2020} and~\citet{Toni2020}.

The main result is shown in the following theorem.

\begin{theorem}[TME Gaussian filter stability]
	\label{thm:tme-stability}
	Suppose that Assumptions~\ref{assump:tme-stability-sys} to~\ref{assump:tme-stability-k} hold. Then the sigma-point TME Gaussian filter for system~\eqref{equ:tme-stability-model} is such that 
	\begin{equation}
		\begin{split}
			\expec*{\normbig{X_k - m^f_k}^2_2} &\leq \big(4\, c_f^2\big)^k \trace{P_0}  \\
			&\quad+ \frac{4 \, \big( c_K^2 \,\big(c_\mathcal{S} \, c_P + c_M^2 + c_{\check{q}}\big) + \trace{R}  \, c_P^2 \, \norm{H}_2^2 \, \norm{R^{-1}}^2_2 \big)}{1 - 4\, c_f^2}.
		\end{split}
		\label{equ:tme-stab-bound}
	\end{equation}
\end{theorem}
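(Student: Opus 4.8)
The plan is to follow the contraction argument for non-linear filters in \citet{Toni2020}: derive a one-step recursion for the mean-square error $\expec*{\normbig{X_k - m^f_k}^2_2}$ and resolve it as a geometric series. First I would set $e_k \coloneqq X_k - m^f_k$ and use the update equation $m^f_k = m^-_k + K_k \, (Y_k - H \, m^-_k)$ together with the measurement model $Y_k = H \, X_k + \xi_k$ to obtain the affine identity $e_k = (I - K_k \, H) \, (X_k - m^-_k) - K_k \, \xi_k$, which isolates the two error sources: the prediction error propagated through $I - K_k \, H$, and the fresh measurement noise.

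Next I would decompose the prediction error. Writing the exact transition as $X_k = \check{f}(X_{k-1}) + \check{q}(X_{k-1})$, using $\check{f} = f^M + R_M$ from Theorem~\ref{thm:tme} and Lemma~\ref{lemma:tme-1-2-moments}, and recalling that for the sigma-point filter $m^-_k = \mathcal{S}_{m^f_{k-1}, P^f_{k-1}}(f^M)$, I get
\begin{equation*}
	X_k - m^-_k = \big(f^M(X_{k-1}) - \mathcal{S}_{m^f_{k-1}, P^f_{k-1}}(f^M)\big) + R_M(X_{k-1}) + \check{q}(X_{k-1}).
\end{equation*}
Thus $e_k$ is a sum of four terms: the sigma-point quadrature error, the TME remainder, the process noise, and the measurement noise. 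Applying the crude inequality $\normbig{\sum_{i=1}^4 z_i}^2_2 \leq 4 \, \sum_{i=1}^4 \normbig{z_i}^2_2$ (which is the source of the factor $4$ pervading the claim) and taking expectations reduces the proof to bounding four expectations.

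Then I would bound each term using the hypotheses. For the quadrature term, submultiplicativity gives $\norm{I - K_k \, H}_2^2 \leq c_K^2$ (Assumption~\ref{assump:tme-stability-k}), while Assumption~\ref{assump:tme-stability-sp} splits the sigma-point error into $\normbig{\jacob_x f^M(X_{k-1})}_2^2 \, \normbig{e_{k-1}}^2_2 + c_\mathcal{S} \, \trace{P^f_{k-1}}$; replacing the Jacobian by its supremum and using $c_f^2 = c_K^2 \, \sup_x \normbig{\jacob_x f^M(x)}_2^2$ yields the recursive term $4 \, c_f^2 \, \expec*{\normbig{e_{k-1}}^2_2}$, the second piece being controlled by $c_P$ via Assumption~\ref{assump:tme-stability-sys}. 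The remainder and process-noise terms are handled directly by $\norm{R_M(X_{k-1})}_2 \leq c_M$ and $\expec*{\trace{\check{Q}(X_{k-1})}} \leq c_{\check{q}}$, using $\expec*{\normbig{\check{q}(X_{k-1})}^2_2} = \expec*{\trace{\check{Q}(X_{k-1})}}$ since $\check{q}$ is conditionally zero-mean. Collecting these produces $\expec*{\normbig{e_k}^2_2} \leq 4 \, c_f^2 \, \expec*{\normbig{e_{k-1}}^2_2} + 4 \, D$; because $4 \, c_f^2 < 1$ by Assumption~\ref{assump:tme-stability-k}, unrolling the recursion and summing the geometric series (bounded by $1 / (1 - 4 \, c_f^2)$) with $\expec*{\normbig{e_0}^2_2} = \trace{P_0}$ gives exactly Equation~\eqref{equ:tme-stab-bound}.

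The main obstacle I anticipate is the measurement-noise term $\expec*{\normbig{K_k \, \xi_k}^2_2}$, because $K_k$ is defined through the predictive covariance whereas the hypotheses only control the filtering covariance $P^f_k$. The key step is the Kalman identity $K_k = P^f_k \, H^\trans \, R^{-1}$, which (together with $\xi_k$ being independent of the past) lets me write $\expec*{\normbig{K_k \, \xi_k}^2_2} = \expec*{\trace*{K_k \, R \, K_k^\trans}} \leq \trace{R} \, \norm{H}_2^2 \, \norm{R^{-1}}_2^2 \, \expec*{(\trace{P^f_k})^2}$, using $\trace{R \, A} \leq \trace{R} \, \norm{A}_2$ for positive semi-definite $A$; bounding $\expec*{(\trace{P^f_k})^2}$ by $c_P^2$ then gives the additive constant $\trace{R} \, c_P^2 \, \norm{H}_2^2 \, \norm{R^{-1}}^2_2$. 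Care is also needed so that the quadrature and Jacobian bounds of Assumptions~\ref{assump:tme-stability-sp} and~\ref{assump:tme-stability-k} hold uniformly in the random conditioning point, allowing the suprema to be pulled outside the expectation.
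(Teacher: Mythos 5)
Your proposal is correct and follows essentially the same route as the paper's proof: the same decomposition of $X_k - m^f_k$ into sigma-point quadrature error, TME remainder $R_M$, process noise $\check{q}$, and measurement noise $K_k \, \xi_k$ (all but the last premultiplied by $I - K_k \, H$), the same factor-$4$ norm inequality, the same term-wise bounds from Assumptions~\ref{assump:tme-stability-sys}--\ref{assump:tme-stability-k}, and the same geometric-series unrolling of the recursion under $4 \, c_f^2 < 1$. Your explicit treatment of the Kalman-gain term via the identity $K_k = P^f_k \, H^\trans \, R^{-1}$ and trace inequalities supplies a detail the paper delegates to its references, and it recovers exactly the stated constant $\trace{R} \, c_P^2 \, \norm{H}_2^2 \, \norm{R^{-1}}_2^2$.
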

\begin{proof}
	Define $Z_k \coloneqq I - K_k \, H$. By substituting the sigma-point TME Gaussian filtering steps and the model~\eqref{equ:tme-stability-model} in $X_k - m^f_k$, we get
	\begin{equation}
		\begin{split}
			X_k - m^f_k &= \check{f}(X_{k-1}) + \check{q}(X_{k-1}) - m^-_k - K_k \, (Y_k - H \, m^-_k), \\
			&= Z_k \, \Big( \check{f}(X_{k-1}) - \mathcal{S}_{m^f_{k-1}, P^f_{k-1}}(f^M) + \check{q}(X_{k-1}) \Big) - K_k \, \xi_k \\
			&= Z_k \, \Big( f^M(X_{k-1}) - \mathcal{S}_{m^f_{k-1}, P^f_{k-1}}(f^M)\Big) \\
			&\quad+Z_k \, R_{M}(X_{k-1}) + Z_k \, \check{q}(X_{k-1}) - K_k \, \xi_k.
		\end{split}
	\end{equation}
	Then 
	\begin{align}
		\expec*{\normbig{X_k - m^f_k}_2^2} &\leq 4 \, \expec*{ \normBig{Z_k \, \Big( f^M(X_{k-1}) - \mathcal{S}_{m^f_{k-1}, P^f_{k-1}}(f^M)\Big)}^2_2 } \label{equ:tme-stab-e}\\
		&\quad+ 4 \, \Big( \expecbig{\norm{Z_k \, R_{M}(X_{k-1})}^2_2} + \expecbig{\norm{Z_k \, \check{q}(X_{k-1})}^2_2} + \expecbig{\norm{K_k \, \xi_k}_2^2} \Big). \nonumber
	\end{align}
	Now, by substituting the bounds 
	\begin{equation}
		\begin{split}
			\expec*{ \normBig{Z_k \, \Big( f^M(X_{k-1}) - \mathcal{S}_{m^f_{k-1}, P^f_{k-1}}(f^M)\Big)}^2_2 } &\leq c_f^2 \, \expec*{\normbig{X_{k-1} - m^f_{k-1}}_2^2} \\
			&\quad+ c_K^2 \, c_\mathcal{S} \, c_P, \\
			\expecbig{\norm{Z_k \, R_{M}(X_{k-1})}^2_2} &\leq c_K^2 \, c_M^2, \\
			\expecbig{\norm{Z_k \, \check{q}(X_{k-1})}^2_2} &\leq c_K^2 \, c_{\check{q}}, \\
			\expecbig{\norm{K_k \, \xi_k}_2^2} &\leq \trace{R}  \, c_P^2 \, \norm{H}_2^2 \, \norm{R^{-1}}^2_2,
		\end{split}
	\end{equation}
	following Assumptions~\ref{assump:tme-stability-sys},~\ref{assump:tme-stability-sp}, and~\ref{assump:tme-stability-k} into Equation~\eqref{equ:tme-stab-e}, we obtain the recursive inequality
	\begin{equation}
		\begin{split}
			\expec*{\normbig{X_k - m^f_k}_2^2} &\leq 4 \, c_f^2 \, \expec*{\normbig{X_{k-1} - m^f_{k-1}}_2^2} \\
			&\quad+ 4 \, \big( c_K^2 \,\big(c_\mathcal{S} \, c_P + c_M^2 + c_{\check{q}}\big) + \trace{R}  \, c_P^2 \, \norm{H}_2^2 \, \norm{R^{-1}}^2_2 \big).
		\end{split}
	\end{equation}
	The assumption $4 \, c_f^2< 1$ concludes the bound in Equation~\eqref{equ:tme-stab-bound}.
\end{proof}

Stability analysis of Gaussian smoothers that use the TME method can be found in~\citet{ZhaoZTMEsmoother2021}.

\subsection{Signal estimation and target tracking examples}
\label{sec:tme-gfs-examples}
This section presents a few applications of TME Gaussian filters and smoothers on signal estimation and target tracking problems. In the examples below, we uniformly use the expansion order $M=3$, and we use the Gauss--Hermite quadrature method (of order 3) to approximate the Gaussian expectations in Equations~\eqref{equ:tme-gfs-pred} and~\eqref{equ:tme-gfs-D}.

\begin{example}[Bene\v{s}]
	\label{example:tme-benes-filter-smoother}
	Consider the Bene\v{s} model in Example~\ref{example-tme-benes}, and also consider a linear measurement model $Y_k = X(t_k) + \xi_k$, where $\xi_k \sim \mathrm{N}(0, 0.5)$. We simulate a pair of a signal and its measurements at times $\lbrace t_k=0.01 \, k \colon k=0,1,\ldots, 500 \rbrace$, then we apply the TME-3 Gaussian filter and smoother to estimate the signal from the measurements. The results are plotted in Figure~\ref{fig:tme-benes-filter-smoother}.
\end{example}

\begin{figure}[t!]
	\centering
	\includegraphics[width=.99\linewidth]{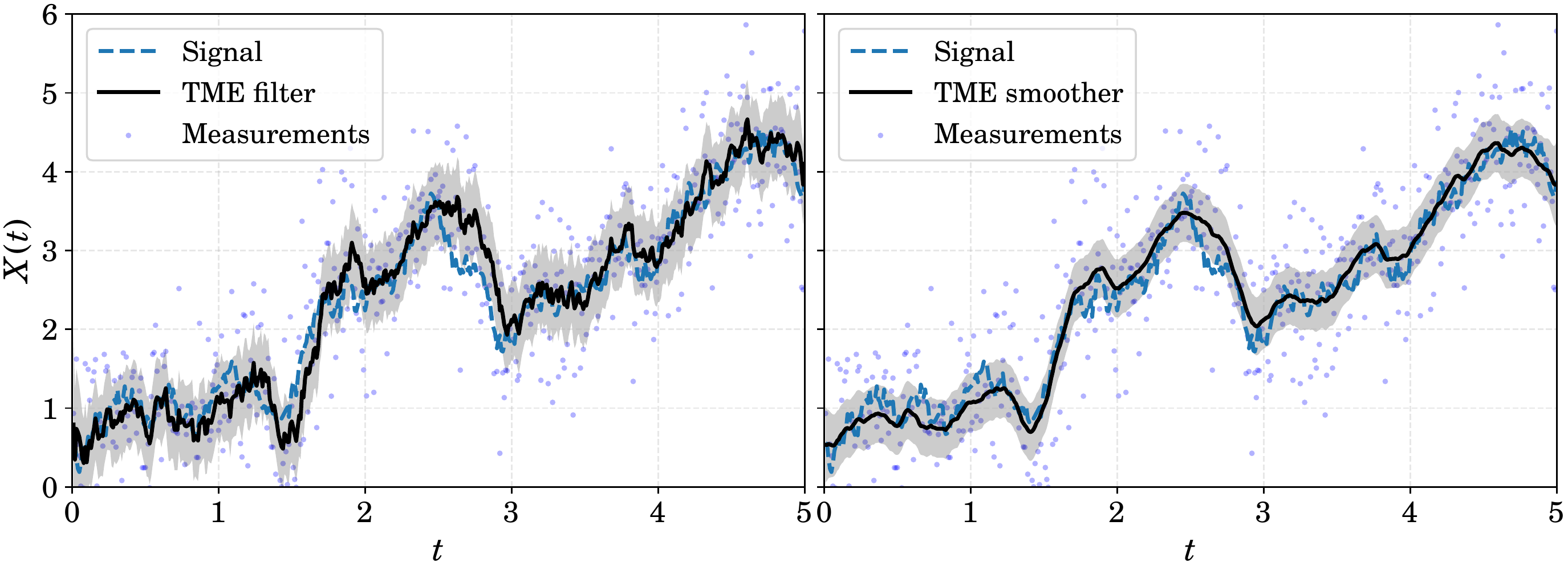}
	\caption{TME Gaussian filtering and smoothing for the Bene\v{s} model in Example~\ref{example:tme-benes-filter-smoother}. Shaded area stands for $0.95$ confidence interval.}
	\label{fig:tme-benes-filter-smoother}
\end{figure}

\begin{figure}[t!]
	\centering
	\includegraphics[width=.9\linewidth]{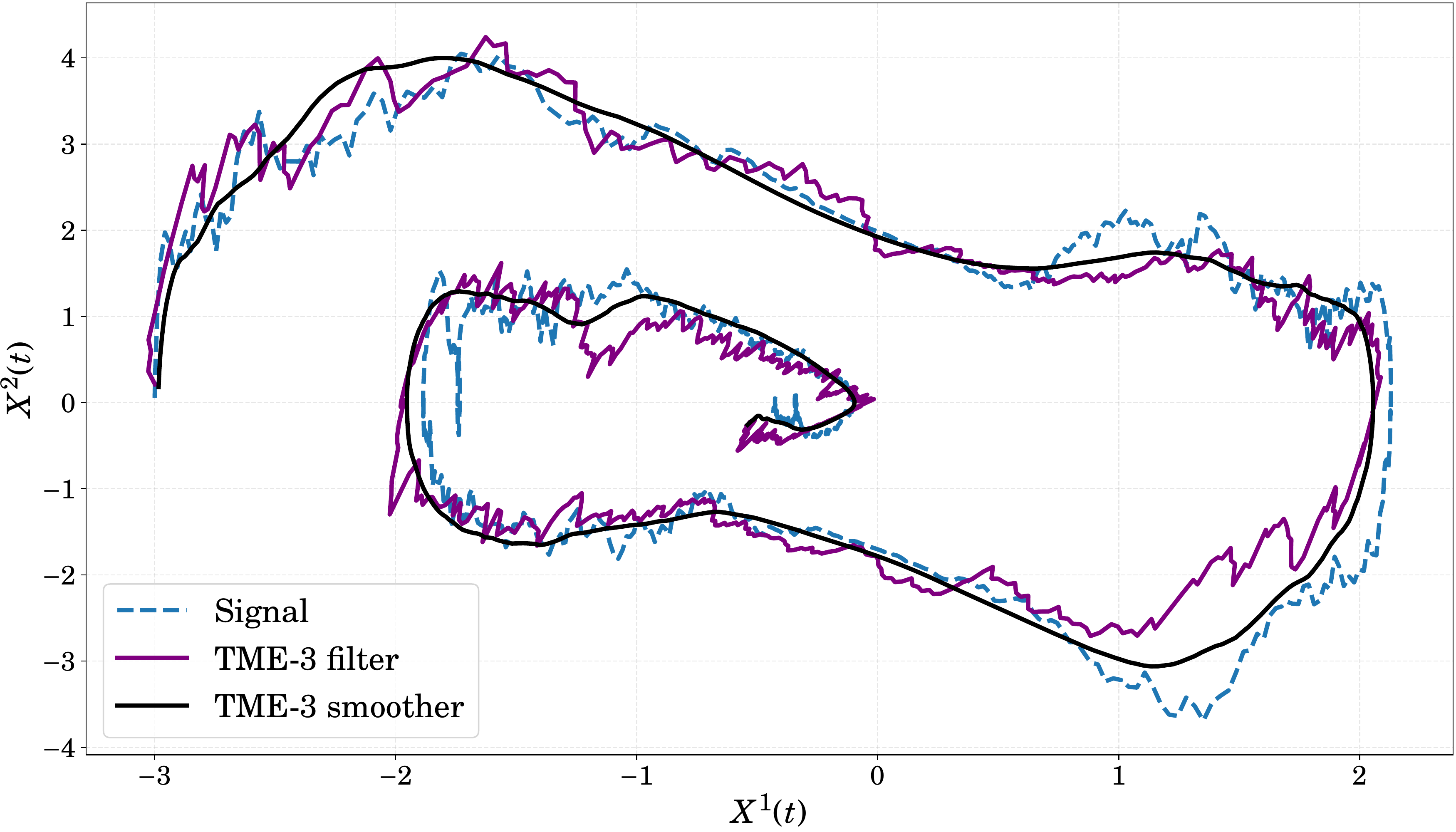} \\
	\includegraphics[width=.494\linewidth]{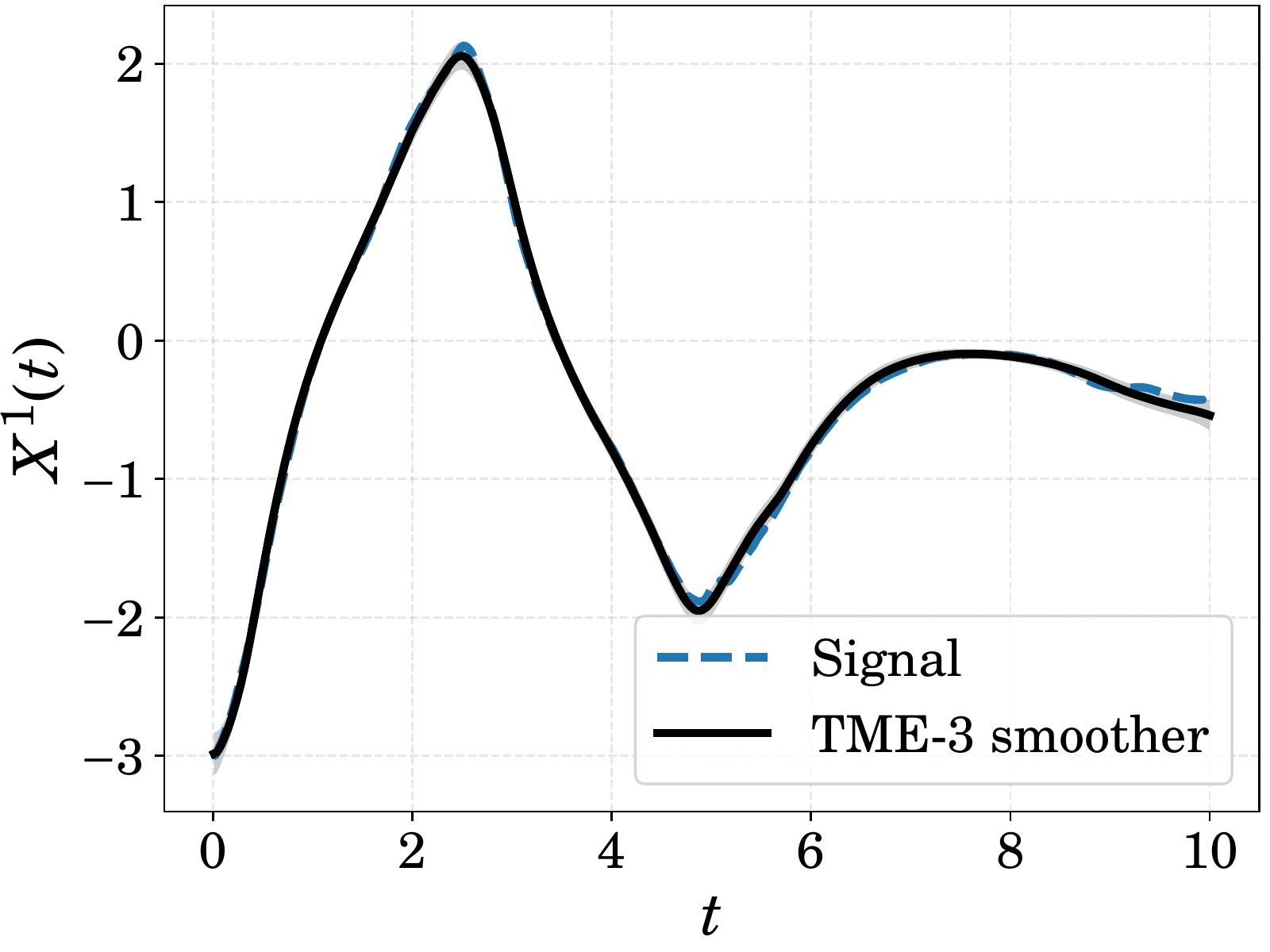}
	\includegraphics[width=.494\linewidth]{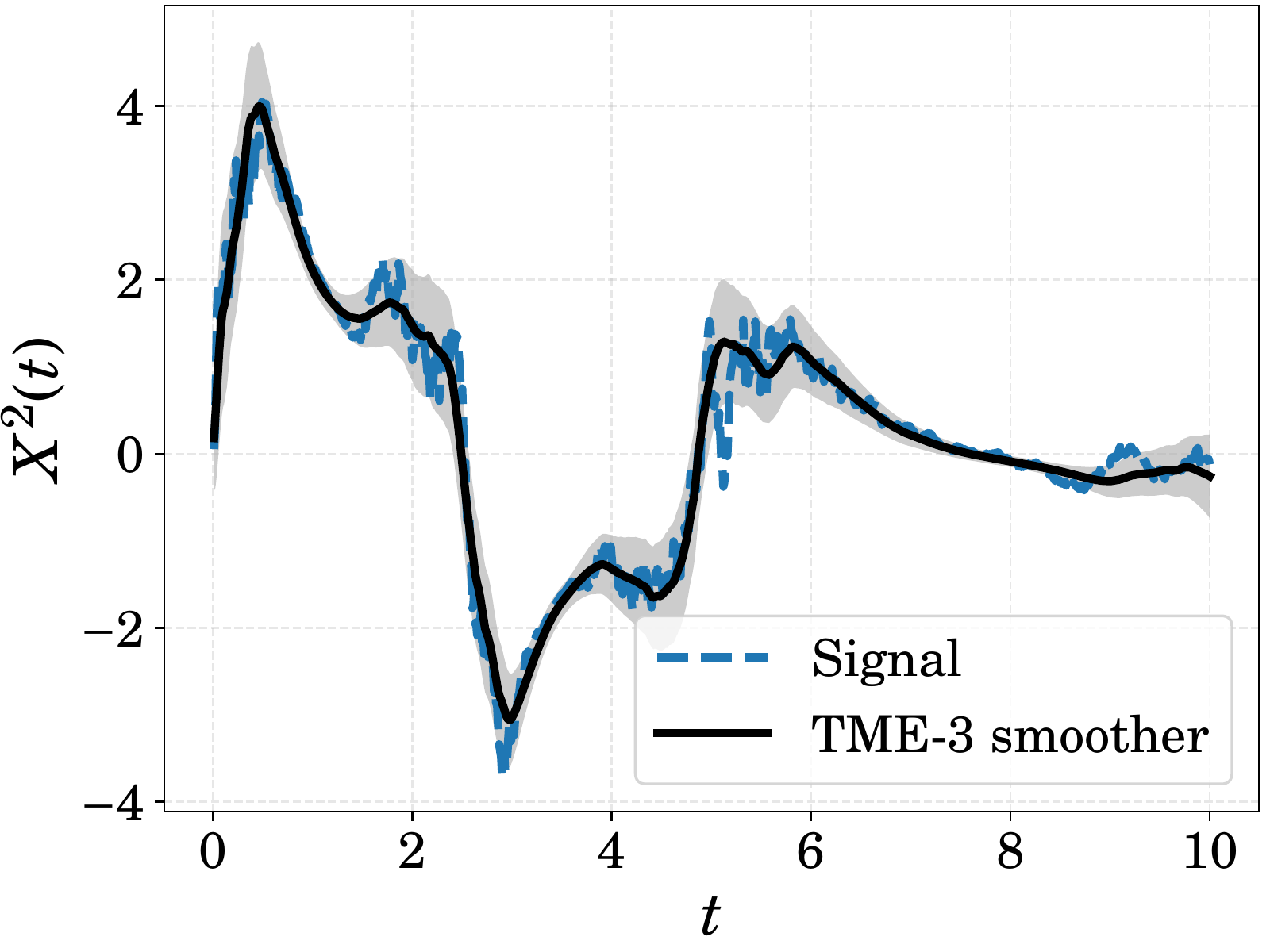}
	\caption{TME Gaussian filtering and smoothing for the Duffing--van der Pol model in Example~\ref{example:tme-duffing-filter-smoother}. }
	\label{fig:tme-duffing-filter-smoother}
\end{figure}

\begin{example}[Duffing--van der Pol]
	\label{example:tme-duffing-filter-smoother}
	Consider a continuous-discrete state-space model
	\begin{equation}
	\begin{split}
	\diff X^1(t) &= X^2(t) \diff t, \\
	\diff X^2(t) &= \Big(X^1(t) \, \big(\kappa - \big(X^1(t)\big)^2\big) - X^2(t) \Big) \diff t + X^1(t) \diff W(t), \\
	Y_k &= X^1(t_k) + 0.1 \, X^2(t_k) + \xi_k,
	\end{split}
	\end{equation}
	starting from the initial values $X^1(t_0) = -3$ and $X^2(t_0)=0$, where $\kappa = 2$ and $\xi_k \sim \mathrm{N}(0, 0.1)$. The non-linear multiplicative SDE above is called a modified stochastic Duffing--van der Pol oscillator equation~\citep{Lord_powell_shardlow_2014, Sarkka2019}. We simulate a pair of a signal and its measurements at times $\lbrace t_k=0.01\,k \colon k=0,1,\ldots,1000 \rbrace$. The results of the TME-3 Gaussian filtering and smoothing for this model is illustrated in Figure~\ref{fig:tme-duffing-filter-smoother}. 
	
	It is worth mentioning that the Euler--Maruyama-based Gaussian smoothing methods on this model may encounter numerical problems because the Euler--Maruyama scheme gives singular covariance approximation.
\end{example}

\begin{figure}[t!]
	\centering
	\includegraphics[width=.494\linewidth]{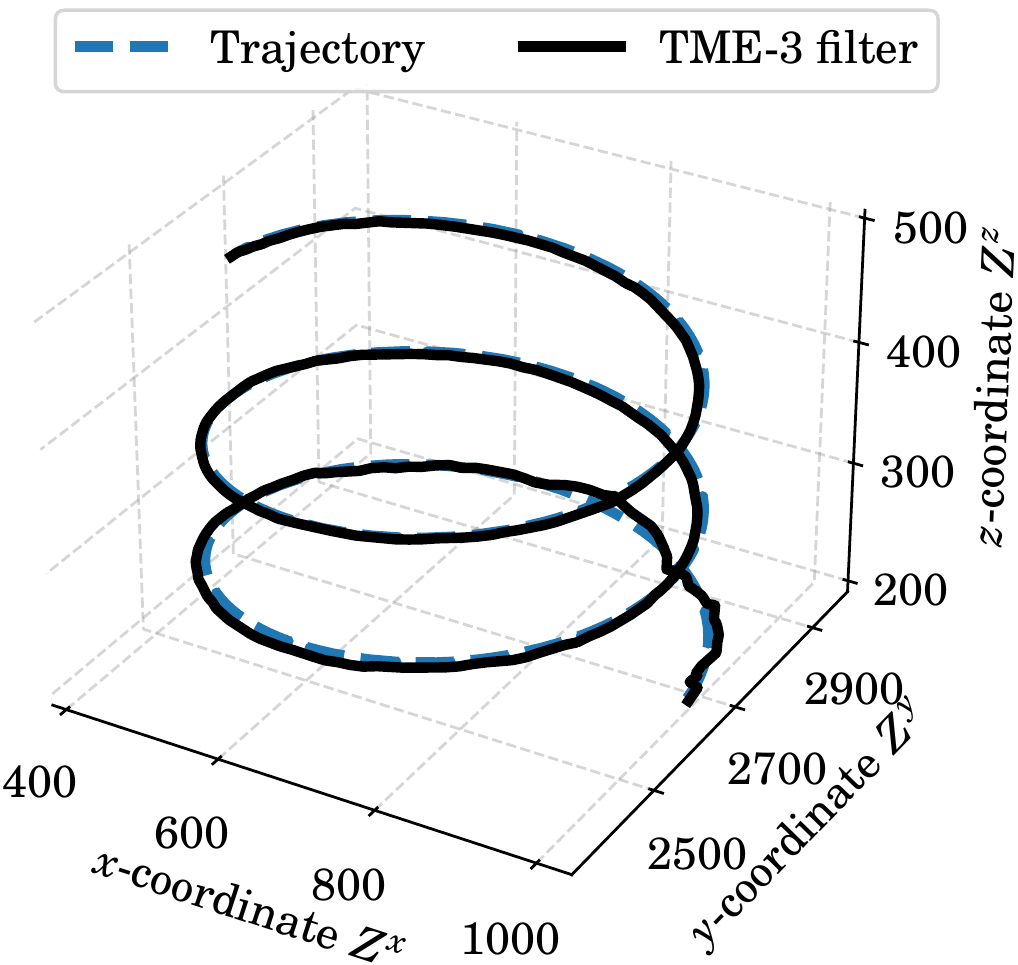}
	\includegraphics[width=.494\linewidth]{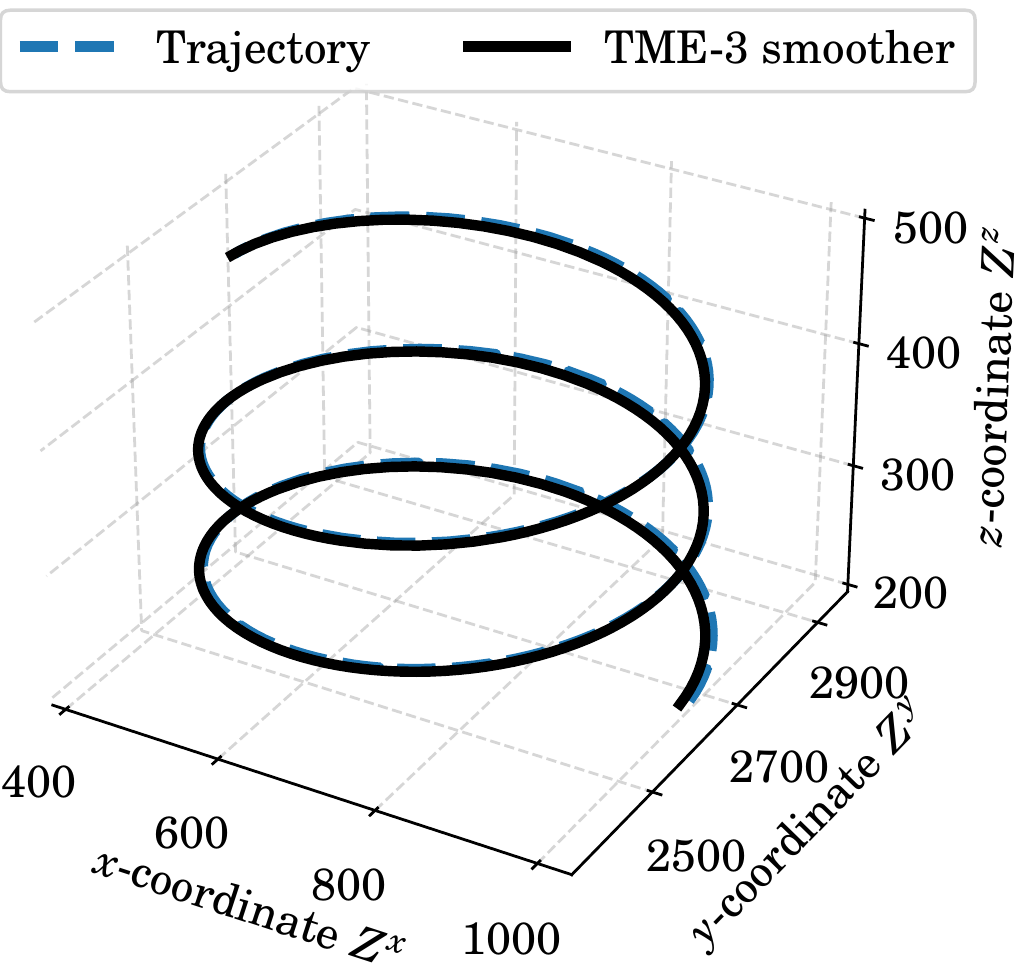}
	\caption{TME Gaussian filtering and smoothing for tracking a target moving as per the 3D coordinated turn model in Example~\ref{example:tme-ct-tracking}. }
	\label{fig:tme-ct3d}
\end{figure}

\begin{example}[3D coordinated turn tracking]
	\label{example:tme-ct-tracking}
	Consider a continuous-discrete model
	\begin{equation}
		\begin{split}
			\diff Z(t) &= a^{\mathrm{CT}}(Z(t)) \diff t + b^{\mathrm{CT}} \diff W(t), \\
			Y_k &= h^{\mathrm{CT}}(Z(t_k)) + \xi_k,
		\end{split}
	\end{equation}
	where the state $Z \colon \T \to \R^7 \coloneqq \begin{bmatrix} Z^x(t) & \dot{Z}^x(t) & Z^y(t) & \dot{Z}^y(t) & Z^z(t) & \dot{Z}^z(t) & \vartheta(t)\end{bmatrix}^\trans$ stands for the 3D Cartesian coordinate and the turn rate of a target. The SDE coefficients and the measurement function are defined by
	\begin{equation}
		\begin{split}
			a^{\mathrm{CT}}(Z(t)) &= 
			\begin{bmatrix}
				\dot{Z}^x(t) & - \vartheta(t) \, \dot{Z}^y(t) & \dot{Z}^y(t) & \vartheta(t) \, \dot{Z}^x(t) & \dot{Z}^x(t) & 0 & 0
			\end{bmatrix}^\trans, \\
			h^{\mathrm{CT}}(Z(t_k)) &= 
			\begin{bmatrix}
				\sqrt{(Z^x(t_k))^2 + (Z^y(t_k))^2 + (Z^z(t_k))^2} \\
				\arctan(Z^y(t_k) \, / \, Z^x(t_k)) \\
				\arctan\big(Z^z(t_k) \, / \, \sqrt{(Z^x(t_k))^2 + (Z^y(t_k))^2}\big)
			\end{bmatrix}.
		\end{split}
	\end{equation}
	For details of this model, we refer the reader to~\citet{ZhaoTME2020}. This model is widely used for manoeuvring target tracking and is very challenging for filtering and smoothing algorithms due to its high dimensionality and non-linearity~\citep{CDCKF2010, BarShalom2002}. A tracking example by using the TME-3 Gaussian filter and smoother is shown in Figure~\ref{fig:tme-ct3d}.
\end{example}

\chapter{State-space deep Gaussian processes}
\label{chap:dssgp}
In this chapter we introduce state-space deep Gaussian processes (SS-DGPs). The chapter starts with a brief review on Gaussian processes (GPs) and their state-space representations in Sections~\ref{sec:gp} and~\ref{sec:ssgp}, respectively. Subsequently, in Section~\ref{sec:ssdgp} deep Gaussian processes and their state-space representations (i.e., SS-DGPs) are defined. In Section~\ref{sec:deep-matern}, we introduce deep \matern processes which are a subclass of SS-DGPs where each GP element in the SS-DGP hierarchy is conditionally a \matern GP. Section~\ref{sec:ssdgp-reg} represents the SS-DGP regression problems as continuous-discrete filtering and smoothing problems. Finally, Section~\ref{sec:l1-r-dgp} illustrates how to solve $L^1$-regularised SS-DGP regression problems.

The content of this chapter is based on Publications~\cp{paperSSDGP} and~\cp{paperRNSSGP}. 

\section{Gaussian processes}
\label{sec:gp}
Gaussian processes (GPs) are a class of stochastic processes with finite-dimensional Gaussian distributions. More precisely, an $\R^d$-valued stochastic process $U\colon\T\to\R^d$ is said to be a GP if the following definition is satisfied. 
\begin{definition}[Gaussian process]
	\label{def:gp}
	A stochastic process $U\colon \T\to \R^d$ on some probability space is called a Gaussian process on $\T$ if for every integer $k\geq 0$ and real numbers $t_1<t_2<\cdots<t_k \in \T$, the random variables $U(t_1), U(t_2), \ldots, U(t_k)$ are jointly Gaussian~\citep[see, e.g.,][Section 2.9]{Karatzas1991}. 
\end{definition}
\begin{remark}
In the spirit of this thesis, we restrict Definition~\ref{def:gp} to temporal GPs only, however, it is possible to define GPs on more general domains~\citep{Carl2006GPML}.
\end{remark}

Since multivariate normal distributions are entirely determined by their means and covariances, Definition~\ref{def:gp} is usually interpreted by the shorthand notation
\begin{equation}
	U(t) \sim \GP\big(m(t), C(t, t')\big),
	\label{equ:gp-notation}
\end{equation}
where $m \colon \T \to\R^d$ and $C\colon \T \times \T \to\R^{d \times d}$ stand for the mean and covariance functions of the process, respectively. Under this notation, the finite-dimensional probability density function of $U$ at time instances $t_1, t_2,\ldots, t_k\in\T$ is given by
\begin{equation}
	\begin{split}
		&p_{U(t_1), U(t_2), \ldots, U(t_k)}(u_1, u_2, \ldots, u_k) \\
		&= \mathrm{N} 
		\begin{pmatrix}
			\begin{bmatrix}
				u_1 \\ u_2 \\ \vdots \\ u_k
			\end{bmatrix} \condBigg
			\begin{bmatrix}
				m(t_1) \\ m(t_2) \\ \vdots \\ m(t_k)
			\end{bmatrix}, 
			\begin{bmatrix}
				C(t_1, t_1) & C(t_1, t_2) & \cdots & C(t_1, t_k) \\
				C(t_2, t_1) & C(t_2, t_2) & \cdots & C(t_2, t_k) \\
				\vdots & \vdots & \ddots & \vdots \\
				C(t_k, t_1) & C(t_k, t_2) & \cdots & C(t_k, t_k)
			\end{bmatrix}
		\end{pmatrix}.
	\end{split}
\end{equation}
There are numerous possible choices for the covariance function $C$, and researchers and practitioners can choose one or the other depending on their applications. One of the most popular family of covariance functions to model continuous functions with varying degrees of regularity is given by the Whittle--Mat\'{e}rn covariance function~\citep{Matern1960}
\begin{equation}
	C_{\mathrm{Mat.}}(t, t') = \frac{\sigma^2 \, 2^{1-\nu}}{\varGamma(\nu)} \, \Bigg(\frac{\sqrt{2 \, \nu} \, \abs{t - t'}}{\ell}\Bigg)^\nu \, \mBesselsec \Bigg(\frac{\sqrt{2 \, \nu} \, \abs{t - t'}}{\ell}\Bigg),
	\label{equ:cov-matern}
\end{equation}
where $\ell$ and $\sigma$ are scale parameters, $\varGamma$ is the Gamma function, $\mBesselsec$ is the modified Bessel function of the second kind, and $\nu \in \big\lbrace \frac{1}{2}, \frac{3}{2}, \ldots \big\rbrace$. The smoothness of $U$ is controlled by the value of $\nu$. For example, if $\nu = \frac{3}{2}$, then $t\mapsto U(t)$ will be differentiable almost surely.  

Without loss of generality, we assume from now on that $m(t) = 0$ for all $t\in\T$, that is
\begin{equation}
	U(t) \sim \GP\big(0, C(t, t')\big).
\end{equation}
The covariance function $C$ thus entirely determines the properties of $U$, such as its continuity and stationarity. 
\begin{remark}
	\label{remark:stationary-gp}
	A stochastic process $U$ is said to be stationary if its finite-dimensional distribution is invariant under translation. That is,
	\begin{equation}
	p_{U(t_1+\tau),\ldots,U(t_k+\tau)}(u_1, \ldots, u_k) = p_{U(t_1),\ldots,U(t_k)}(u_1, \ldots, u_k), \nonumber
	\end{equation} 
	for all $k\geq 1$, $t_1<\cdots<t_k \in \T$, and $t_1+\tau<\cdots<t_k+\tau \in \T$~\citep{Karatzas1991}. Since GPs are characterised by their mean and covariance functions, we say that a zero-mean GP is stationary if $C(t+\tau, t'+\tau)$ does not depend on $\tau$, or equivalently, $C(t, t')$ is only a function of the time difference $t-t'$. 
\end{remark}
Stationarity is an important concept to keep in mind as many widely used covariance functions, such as the Mat\'{e}rn family and the radial basis function (RBF) lead to stationary GPs. However, as mentioned in Introduction, these stationary GPs might not be suitable priors in a number of applications.

\subsection*{Batch GP regression}
\label{sec:gp-reg-batch}
Consider a GP regression model
\begin{equation}
	\begin{split}
		U(t) &\sim \GP\big(0, C(t, t')\big), \\
		Y_k &= U(t_k) + \xi_k, \quad \xi_k \sim\mathrm{N}(0, \Xi_k),
	\end{split}
	\label{equ:batch-gp-reg}
\end{equation}
where we have a set of measurement data $y_{1:T} = \lbrace y_k \colon k=1,2,\ldots, T\rbrace$ at times $t_1, t_2, \ldots, t_T\in\T$. Let us denote by $C_{1:T}$ the (Gram) matrix obtained by evaluating the covariance function $C$ on the Cartesian grid $(t_1, t_2, \ldots, t_T) \times (t_1, t_2, \ldots, t_T)$. Let us also define $\Xi_{1:T} \coloneqq \diag{\Xi_1, \Xi_2, \ldots, \Xi_T}$ and $U_{1:T} \coloneqq \big\lbrace U(t_1), \allowbreak U(t_2), \ldots, U(t_T) \big\rbrace$.

Using Bayes' rule, and Gaussian identities, one can prove that the joint batch posterior probability density $p_{U_{1:T} \cond Y_{1:T}}(u_{1:T} \cond y_{1:T})$ is Gaussian. More specifically, the mean and covariance of the batch posterior density are given by
\begin{equation}
	\expec{U_{1:T} \cond y_{1:T}} = C_{1:T} \, (C_{1:T} + \Xi_{1:T})^{-1} \, y_{1:T}
	\label{equ:gp-reg-m}
\end{equation}
and
\begin{equation}
	\cov{U_{1:T} \cond y_{1:T}} = C_{1:T} - C_{1:T}\,(C_{1:T} + \Xi_{1:T})^{-1} \, C_{1:T},
	\label{equ:gp-reg-P}
\end{equation}
respectively. With a slight modification of the two equations above, the mean and covariance of the posterior density at test points (i.e., interpolation/extrapolation) can also be obtained in closed-form~\citep[see, e.g.,][Section 2.2]{Carl2006GPML}.
\begin{remark}
	The batch term in the name comes from the fact that the posterior density is solved jointly at $t_1, t_2, \ldots, t_T$ by using the full covariance matrix $C_{1:T}$.
\end{remark}
In Figure~\ref{fig:gp-fail}, we illustrate two examples of this batch GP regression using a \matern $\nu=3\,/\,2$ covariance function of the form in Equation~\eqref{equ:cov-matern}. 

It is worth pointing out two numerical problems of batch GP regressions. First, the computational complexity for computing the posterior mean and covariance is $O(T^3)$. This is due to the necessity of solving a system of equations of size $T$. This makes standard GP regression computationally expensive for large-scale datasets. This prompted researchers to introduce a number of alternatives (e.g., sparse GPs) that alleviate this prohibitive complexity. We refer the reader to Section~\ref{sec:literature-review} for a short review on this topic.

Another problem is that if the data times $t_1,t_2,\ldots, t_T$ are densely located (i.e., $t_{k} - t_{k-1}$ is numerically small for $k=1,2,\ldots,T$), or when some of them are identical, then the covariance matrix $C_{1:T}$ might be numerically close to singular~\citep[see, e.g.,][]{Ababou1994, Ranjan2011}. This numerical problem does not in general affect the numerical computation of Equations~\eqref{equ:gp-reg-m} and~\eqref{equ:gp-reg-P}, as the minimum eigenvalue of $C_{1:T} + \Xi_{1:T}$ is greater than the minimum eigenvalue of $\Xi_{1:T}$. However, it affects any procedure that needs to compute the matrix inverse of $C_{1:T}$ (e.g., maximum a posterior estimate of GP regression), or that the GP is observed without measurement noises~\citep{Ranjan2011}. It may also affect making samples from GP by means of Cholesky decomposition of $C_{1:T}$.

State-space representations of GPs, as formulated in the following section, can be used to avoid the two problems above.

\section{State-space Gaussian processes}
\label{sec:ssgp}
In this section, we introduce state-space representations of GPs. Namely, we represent GPs as solutions of linear SDEs. In order to do this, let $U \colon \T \to \R^d$ be the solution of a linear SDE
\begin{equation}
	\begin{split}
		\diff U(t) &= A(t) \, U(t) \diff t + B(t) \diff W(t), \\
		U(t_0) &= U_0,
	\end{split}
	\label{equ:ssgp}
\end{equation}
where coefficients $A \colon \T \to \R^{d \times d}$ and $B \colon \T \to \R^{d \times w}$ are deterministic time-dependent functions, $W\colon\T\to\R^w$ is a Wiener process, and ${U}_0 \sim \mathrm{N}(m_0, P_0)$. For the sake of simplicity, let us from now on assume that these coefficients are regular enough so that the SDE above is well-defined (see, e.g., Theorem~\ref{thm:linear-sde-solution} for sufficient conditions).

It turns out that the solution $U$ of the SDE in Equation~\eqref{equ:ssgp} verifies the axioms of Gaussian processes (given in Definition~\ref{def:gp}) on $\T$~\citep[see,][Section 5.6]{Karatzas1991}. Moreover, its mean $t\mapsto \expec{U(t)}$ and covariance $t\mapsto\cov{U(t)}$ functions are solutions of the following linear ODEs
\begin{equation}
	\begin{split}
		\frac{\diff m(t)}{\diff t} &= A(t) \, m(t), \\
		\frac{\diff P(t)}{\diff t} &= A(t) \, P(t) + P(t) \, A(t)^\trans + B(t) \, B(t)^\trans,
	\end{split}
\end{equation}
for every $t\in\T$ starting from the initial values $m(t_0) = m_0$ and $P(t_0) = P_0$. Note that if the initial mean $m_0 = 0$ then $m(t) = 0$ for all $t\in\T$, so that $U$ will be a zero-mean GP.

Compared to the batch GP representation in Equation~\eqref{equ:gp-notation}, state-space representations do not need to explicitly specify their mean and covariance functions. These functions are instead implicitly defined by the SDE coefficients. Finding the state-space representation of a GP with desired covariance function is possible as well~\citep[see, e.g.,][]{Hartikainen2010, Simo2013SSGP, Solin2016}.

Suppose that the coefficients $A(t) = A$ and $B(t) = B$ are constants, and all the real parts of the eigenvalues of $A$ are negative. Let $m_0=0$, and let $P_0$ solve the Lyapunov equation
\begin{equation}
	A \, P + P \, A^\trans + B \, B^\trans = 0,
	\label{equ:lyapunuv}
\end{equation}
then 
\begin{equation}
	U(t) \sim \mathrm{GP}\big(0, \cov{{U}(t), {U}(t')}\big)\nonumber
\end{equation} 
is a zero-mean stationary GP, and its covariance function is given by
\begin{equation}
	\cov{{U}(t), {U}(t')} = 
	\begin{cases}
		P_0 \, e^{\abs{t-t'} \, A^\trans}, & t < t' \in \T, \\
		e^{\abs{t-t'} \, A} \, P_0, & t' \leq t \in \T.
	\end{cases} \nonumber
\end{equation}
See, for example, \citet[][Theorem 6.7]{Karatzas1991}, \citet[][Section 3.7]{Pavliotis2014}, or~\citet[][Section 6.5]{Sarkka2019} for details.

\subsection*{State-space GP regression}
Due to the fact that state-space GPs (SS-GPs) are solutions of SDEs, they verify the Markov property. This is key in allowing to perform GP regression sequentially for $k=1,2,\ldots,T$ without computing the full covariance matrix $C_{1:T}$. To see this, let us consider a GP regression problem in the state-space form
\begin{equation}
	\begin{split}
		\diff {U}(t) &= A(t) \, {U}(t) \diff t + B(t) \diff W(t), \quad U(t_0) = U_0,\\
		Y_k &= H_k \, {U}(t_k) + \xi_k, \quad \xi_k \sim \mathrm{N}(0, \Xi_k).
	\end{split}
\end{equation}
We aim to compute the posterior density $p_{{U}(t_k) \cond Y_{1:T}}({u}_k \cond y_{1:T})$ for $k=1,2,\ldots, T$ instead of the joint posterior density $p_{U_{1:T} \cond Y_{1:T}}(u_{1:T} \cond y_{1:T})$. This state-space GP regression problem is equivalent to the continuous-discrete smoothing problem in Section~\ref{sec:rts}~\citep{Sarkka2019}. Therefore, one can apply Kalman filters and RTS smoothers (see, Algorithm~\ref{alg:kfs}) to carry out the state-space GP regression at hand exactly. Figure~\ref{fig:gp-kfs-eq} illustrates an example showing the equivalence between batch and state-space GP regression on a toy model. 

The computational complexity of state-space GP regression is $O(T)$, whereas the batch GP regression is $O(T^3)$. As an example, the batch and state-space GP regression shown in Figure~\ref{fig:gp-kfs-eq} take around $37$~s and $0.1$~s, respectively, on a computer with $T=10,000$ measurements.  Furthermore, by using prefix-sum algorithms, state-space GP regression can be solved in logarithmic $O(\log(T))$ time~\citep{Corenflos2021SSGP, Simo2021Parallel}. 

\begin{figure}[t!]
	\centering
	\includegraphics[width=.75\linewidth]{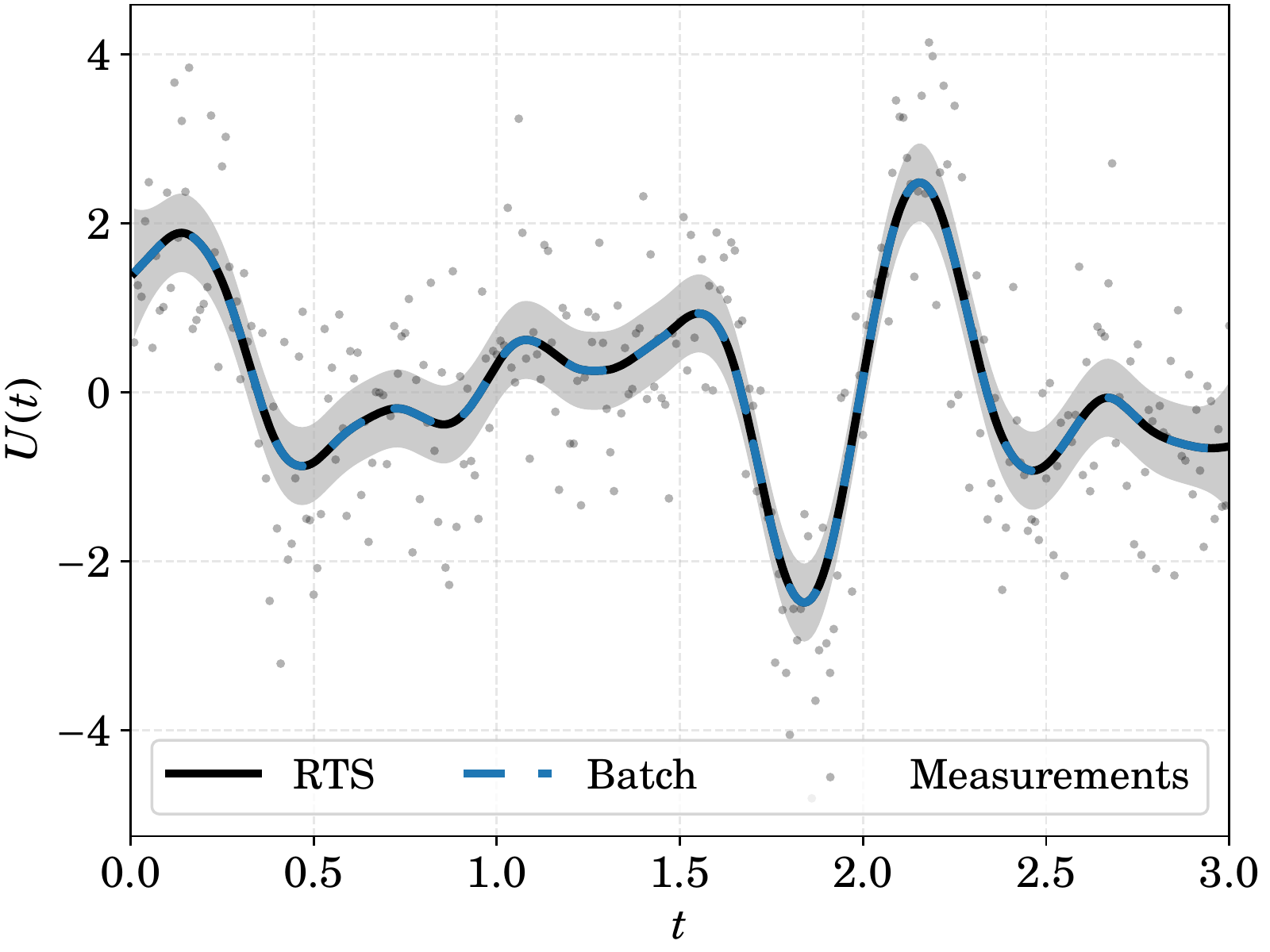}
	\caption{Batch and state-space GP regression on a toymodel with a \matern $\nu=3\, / \, 2$ covariance function and zero mean function. These two regression methods recover the same posterior densities (the lines and shaded area stand for the posterior mean and 0.95 confidence interval, respectively). }
	\label{fig:gp-kfs-eq}
\end{figure}

It is worth mentioning that not all GPs are Markov processes, hence, not all GPs have analytical state-space representations. As an example, \citet{Rozanov1977, Rozanov1982} show that certain stationary Gaussian processes/fields are Markovian if and only if the reciprocal of their spectral densities are polynomials. For instance, GPs using the RBF covariance function are not Markovian, but it is possible to approximate them up to an arbitrary order by using their approximate state-space representations~\citep{Simo2013SSGP}.

\section{State-space deep Gaussian processes (SS-DGPs)}
\label{sec:ssdgp}
State-space deep Gaussian processes (SS-DGPs) are stochastic processes that parametrise multiple conditional GPs hierarchically. This hierarchical construction makes SS-DGPs suitable priors for modelling irregular function in many applications. To see this, let us first consider a GP
\begin{equation}
	U(t) \sim \mathrm{GP}\big(0, C(t, t'; \ell(t)) \big),\nonumber
\end{equation}
where the covariance function $C(t, t';\ell(t))$ has an unknown (random) parameter $\ell(t)\in\R_{>0}$ (i.e., a time-varying length scale). When the parameter $\ell(t)$ does not depend on $t$, it can be assigned by human experts or automatically learnt from data by, for example, maximum likelihood estimation (MLE), maximum a posteriori (MAP), variational inference, or Markov chain Monte Carlo (MCMC)~\citep{Carl2006GPML}. However, the assumption that $\ell$ being independent of $t$ might not be reasonable for a number of applications that exhibit time-varying features. A way to mitigate this issue is, for example, to consider putting another GP prior on the length scale parameter, that is
\begin{equation}
	\ell(t) \sim \mathrm{GP}\big(0, C(t, t'; \ell_2) \big),\nonumber
\end{equation}
where $\ell_2$ is another length scale parameter. This hierarchical feature is meaningful in the sense that it allows the characteristics of $U$ to change over time, since its length scale $t \mapsto \ell(t)$ now is a stochastic process of $t$. It is then of interest to ask if this hierarchical recursion can be continued up to a given depth $L$:
\begin{equation}
	\begin{split}
		\ell_2(t) &\sim \mathrm{GP}\big(0, C(t, t'; \ell_3(t)) \big), \\
		\ell_3(t) &\sim \mathrm{GP}\big(0, C(t, t'; \ell_4(t)) \big), \\
		&\vdots\\
		\ell_{L}(t) &\sim \mathrm{GP}\big(0, C(t, t'; \ell_{L+1}) \big),
		\label{equ:cascading-ell}
	\end{split}
\end{equation}
where the final leaf $\ell_{L+1}$ is a constant.
This construction leads to a class of deep Gaussian processes (DGPs, see, Section~\ref{sec:literature-review} for background). 

In the rest of this chapter, we formulate the hierarchy in Equation~\eqref{equ:cascading-ell} in more abstract form in order to define DGPs. Thereupon we leverage this definition to represent DGPs as solutions of SDEs in order to arrive at SS-DGPs.

\subsection*{Deep Gaussian processes}
\label{sec:dgp-motiv-def}
Equation~\eqref{equ:cascading-ell} exemplifies a DGP where the length scale parameters \textit{only} are considered as GPs. In graph theory, this type of DGP hierarchy corresponds to a path graph~\citep{Gross2019} where the length scale parameters are vertices that ordered in a line/path. This type of DGP construction is the most studied case in the parametrisation-based DGP community~\citep{Roininen2016, Salimbeni2017ns, Emzir2020}. 

However, in principle, a GP can take any number of parameters. Thus, in order to abstract DGPs, we need to think of a DGP as a joint process defined over a set of conditional GPs. These conditional GPs are not necessarily limited to representing length scale parameters only. In order to do so, we need to introduce an indexing system and a few notations. Let $U^i_j \colon \T \to \R^{d_i}$ denote a GP indexed by an integer $i\in\N$. This superscript $i$ means that $U^i_j$ is the $i$-th GP element in a (yet to be defined) collection of GPs. The subscript $j$ in $U^i_j$ means that the GP $U^i_j$ is a parent of the $j$-th GP element (i.e., the $j$-th GP element is parametrised by the $i$-th element). The terminology ``parent'' follows from probabilistic graph model conventions~\citep{Koller2009}. The fact that GP element does not have any child means that it does not parametrise any other GP therefore, we define its subscript $j$ to be $j=0$. This is always true for the first element $U^1_0$ as we shall see later in the definition of the collection of these GP elements. 

Additionally, in order to give a well-defined graph, we restrict $j<i$ so that a GP element can only parametrise \emph{one} of its \emph{preceding} elements. This implies that a GP element can have multiple parents but no more than one child. Without this restriction, one might have two elements, for instance, $U^2_1$ and $U^1_2$ depending on each other, that is not within the scope of this thesis.
\begin{remark}
	The set of dependencies between the conditional GPs can be thought of as a collection of directed trees where the head of each tree has $j$ subscript $j=0$, and the notation $U_j^i$ implies that there is an edge pointing from $U_j^i$ to $U^j_k$ for some $k$. See, Figure~\ref{fig:dgp-examples-graph} for an illustration.
\end{remark}

Suppose that we have $L$ GPs $U^1_0, U^2_{j_2},\ldots, U^L_{j_L}$ and a set $J = \lbrace j_i \in \N \colon i=1,2\ldots,L, \, \, 0\leq j_i < i\rbrace$ that describes the conditional dependencies of these GPs. We define the collection of all these GPs as
\begin{equation}
	\mathcal{V} \coloneqq \mathcal{V}^L_J = \big\lbrace U^i_{j_i} \colon i= 1,2,\ldots, L, \,\, j_i \in J \big\rbrace,
	\label{equ:dgp-set-V}
\end{equation}
and we will call these conditional GPs the GP elements of $\mathcal{V}$. Based on this collection, we define a DGP as a vector-valued process composed of all the GP elements in $\mathcal{V}$.

\begin{definition}[Deep Gaussian process]
	\label{def:dgp}
	Let $\mathcal{V}$ be a collection of $L$ $\R^{d_i}$-valued conditional GPs defined by Equation~\eqref{equ:dgp-set-V}. An $\R^{\sum^L_{i=1} d_i}$-valued stochastic process $V \colon \T \to \R^{\sum^L_{i=1} d_i}$ is said to be a deep Gaussian process on $\T$ with respect to $\mathcal{V}$ if $V$ is a permutation of all the elements of $\mathcal{V}$.
\end{definition}

\begin{remark}
	Note that in the special case $L=1$, a DGP reduces to a standard GP. 
\end{remark}

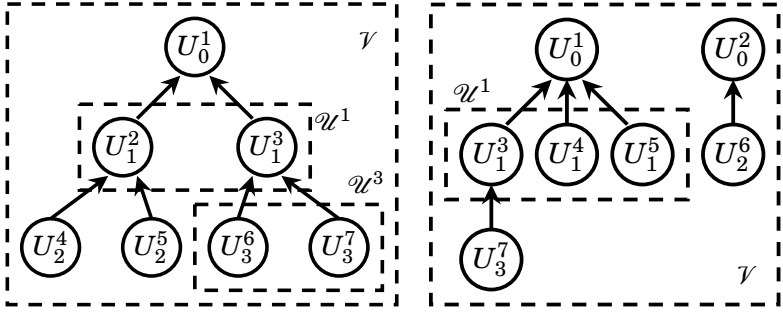
\begin{figure}[t!]
	\centering
	\resizebox{.42\linewidth}{!}{%
		\tikzset{every picture/.style={line width=0.75pt}} 

\begin{tikzpicture}[x=0.6pt,y=0.6pt,yscale=-1,xscale=1]

\draw  [line width=1.5]  (200,30) .. controls (200,18.95) and (208.95,10) .. (220,10) .. controls (231.05,10) and (240,18.95) .. (240,30) .. controls (240,41.05) and (231.05,50) .. (220,50) .. controls (208.95,50) and (200,41.05) .. (200,30) -- cycle ;
\draw  [line width=1.5]  (150,100) .. controls (150,88.95) and (158.95,80) .. (170,80) .. controls (181.05,80) and (190,88.95) .. (190,100) .. controls (190,111.05) and (181.05,120) .. (170,120) .. controls (158.95,120) and (150,111.05) .. (150,100) -- cycle ;
\draw  [line width=1.5]  (250,100) .. controls (250,88.95) and (258.95,80) .. (270,80) .. controls (281.05,80) and (290,88.95) .. (290,100) .. controls (290,111.05) and (281.05,120) .. (270,120) .. controls (258.95,120) and (250,111.05) .. (250,100) -- cycle ;
\draw [line width=1.5]    (207.17,52.83) -- (180,80) ;
\draw [shift={(210,50)}, rotate = 135] [fill={rgb, 255:red, 0; green, 0; blue, 0 }  ][line width=0.08]  [draw opacity=0] (13.4,-6.43) -- (0,0) -- (13.4,6.44) -- (8.9,0) -- cycle    ;
\draw [line width=1.5]    (232.83,52.83) -- (260,80) ;
\draw [shift={(230,50)}, rotate = 45] [fill={rgb, 255:red, 0; green, 0; blue, 0 }  ][line width=0.08]  [draw opacity=0] (13.4,-6.43) -- (0,0) -- (13.4,6.44) -- (8.9,0) -- cycle    ;
\draw  [line width=1.5]  (100,170) .. controls (100,158.95) and (108.95,150) .. (120,150) .. controls (131.05,150) and (140,158.95) .. (140,170) .. controls (140,181.05) and (131.05,190) .. (120,190) .. controls (108.95,190) and (100,181.05) .. (100,170) -- cycle ;
\draw  [line width=1.5]  (170,170) .. controls (170,158.95) and (178.95,150) .. (190,150) .. controls (201.05,150) and (210,158.95) .. (210,170) .. controls (210,181.05) and (201.05,190) .. (190,190) .. controls (178.95,190) and (170,181.05) .. (170,170) -- cycle ;
\draw  [line width=1.5]  (230,170) .. controls (230,158.95) and (238.95,150) .. (250,150) .. controls (261.05,150) and (270,158.95) .. (270,170) .. controls (270,181.05) and (261.05,190) .. (250,190) .. controls (238.95,190) and (230,181.05) .. (230,170) -- cycle ;
\draw  [line width=1.5]  (300,170) .. controls (300,158.95) and (308.95,150) .. (320,150) .. controls (331.05,150) and (340,158.95) .. (340,170) .. controls (340,181.05) and (331.05,190) .. (320,190) .. controls (308.95,190) and (300,181.05) .. (300,170) -- cycle ;
\draw [line width=1.5]    (156.8,122.4) -- (120,150) ;
\draw [shift={(160,120)}, rotate = 143.13] [fill={rgb, 255:red, 0; green, 0; blue, 0 }  ][line width=0.08]  [draw opacity=0] (13.4,-6.43) -- (0,0) -- (13.4,6.44) -- (8.9,0) -- cycle    ;
\draw [line width=1.5]    (181.26,123.79) -- (190,150) ;
\draw [shift={(180,120)}, rotate = 71.57] [fill={rgb, 255:red, 0; green, 0; blue, 0 }  ][line width=0.08]  [draw opacity=0] (13.4,-6.43) -- (0,0) -- (13.4,6.44) -- (8.9,0) -- cycle    ;
\draw [line width=1.5]    (258.74,123.79) -- (250,150) ;
\draw [shift={(260,120)}, rotate = 108.43] [fill={rgb, 255:red, 0; green, 0; blue, 0 }  ][line width=0.08]  [draw opacity=0] (13.4,-6.43) -- (0,0) -- (13.4,6.44) -- (8.9,0) -- cycle    ;
\draw [line width=1.5]    (283.2,122.4) -- (320,150) ;
\draw [shift={(280,120)}, rotate = 36.87] [fill={rgb, 255:red, 0; green, 0; blue, 0 }  ][line width=0.08]  [draw opacity=0] (13.4,-6.43) -- (0,0) -- (13.4,6.44) -- (8.9,0) -- cycle    ;
\draw  [dash pattern={on 5.63pt off 4.5pt}][line width=1.5]  (140,70) -- (300,70) -- (300,130) -- (140,130) -- cycle ;
\draw  [dash pattern={on 5.63pt off 4.5pt}][line width=1.5]  (220,140) -- (350,140) -- (350,200) -- (220,200) -- cycle ;
\draw  [dash pattern={on 5.63pt off 4.5pt}][line width=1.5]  (90,0) -- (360,0) -- (360,210) -- (90,210) -- cycle ;

\draw (340.5,25) node    {$\mathcal{V}$};
\draw (316,80.5) node    {$\mathcal{U}^{1}$};
\draw (339,124.5) node    {$\mathcal{U}^{3}$};
\draw (220,30) node  [font=\large]  {$U_{0}^{1}$};
\draw (170,100) node  [font=\large]  {$U_{1}^{2}$};
\draw (270,100) node  [font=\large]  {$U_{1}^{3}$};
\draw (120,170) node  [font=\large]  {$U_{2}^{4}$};
\draw (190,170) node  [font=\large]  {$U_{2}^{5}$};
\draw (250,170) node  [font=\large]  {$U_{3}^{6}$};
\draw (320,170) node  [font=\large]  {$U_{3}^{7}$};

\end{tikzpicture}
	}
	\resizebox{.379\linewidth}{!}{%
		\tikzset{every picture/.style={line width=0.75pt}} 

\begin{tikzpicture}[x=0.6pt,y=0.6pt,yscale=-1,xscale=1]
	
	\draw  [line width=1.5]  (130,70) .. controls (130,58.95) and (138.95,50) .. (150,50) .. controls (161.05,50) and (170,58.95) .. (170,70) .. controls (170,81.05) and (161.05,90) .. (150,90) .. controls (138.95,90) and (130,81.05) .. (130,70) -- cycle ;
	\draw  [line width=1.5]  (80,140) .. controls (80,128.95) and (88.95,120) .. (100,120) .. controls (111.05,120) and (120,128.95) .. (120,140) .. controls (120,151.05) and (111.05,160) .. (100,160) .. controls (88.95,160) and (80,151.05) .. (80,140) -- cycle ;
	\draw  [line width=1.5]  (180,140) .. controls (180,128.95) and (188.95,120) .. (200,120) .. controls (211.05,120) and (220,128.95) .. (220,140) .. controls (220,151.05) and (211.05,160) .. (200,160) .. controls (188.95,160) and (180,151.05) .. (180,140) -- cycle ;
	\draw [line width=1.5]    (137.17,92.83) -- (110,120) ;
	\draw [shift={(140,90)}, rotate = 135] [fill={rgb, 255:red, 0; green, 0; blue, 0 }  ][line width=0.08]  [draw opacity=0] (13.4,-6.43) -- (0,0) -- (13.4,6.44) -- (8.9,0) -- cycle    ;
	\draw [line width=1.5]    (162.83,92.83) -- (190,120) ;
	\draw [shift={(160,90)}, rotate = 45] [fill={rgb, 255:red, 0; green, 0; blue, 0 }  ][line width=0.08]  [draw opacity=0] (13.4,-6.43) -- (0,0) -- (13.4,6.44) -- (8.9,0) -- cycle    ;
	\draw [line width=1.5]    (150,94) -- (150,120) ;
	\draw [shift={(150,90)}, rotate = 90] [fill={rgb, 255:red, 0; green, 0; blue, 0 }  ][line width=0.08]  [draw opacity=0] (13.4,-6.43) -- (0,0) -- (13.4,6.44) -- (8.9,0) -- cycle    ;
	\draw  [line width=1.5]  (130,140) .. controls (130,128.95) and (138.95,120) .. (150,120) .. controls (161.05,120) and (170,128.95) .. (170,140) .. controls (170,151.05) and (161.05,160) .. (150,160) .. controls (138.95,160) and (130,151.05) .. (130,140) -- cycle ;
	\draw  [line width=1.5]  (80,210) .. controls (80,198.95) and (88.95,190) .. (100,190) .. controls (111.05,190) and (120,198.95) .. (120,210) .. controls (120,221.05) and (111.05,230) .. (100,230) .. controls (88.95,230) and (80,221.05) .. (80,210) -- cycle ;
	\draw  [line width=1.5]  (240,70) .. controls (240,58.95) and (248.95,50) .. (260,50) .. controls (271.05,50) and (280,58.95) .. (280,70) .. controls (280,81.05) and (271.05,90) .. (260,90) .. controls (248.95,90) and (240,81.05) .. (240,70) -- cycle ;
	\draw  [line width=1.5]  (240,140) .. controls (240,128.95) and (248.95,120) .. (260,120) .. controls (271.05,120) and (280,128.95) .. (280,140) .. controls (280,151.05) and (271.05,160) .. (260,160) .. controls (248.95,160) and (240,151.05) .. (240,140) -- cycle ;
	\draw [line width=1.5]    (100,164) -- (100,171) -- (100,190) ;
	\draw [shift={(100,160)}, rotate = 90] [fill={rgb, 255:red, 0; green, 0; blue, 0 }  ][line width=0.08]  [draw opacity=0] (13.4,-6.43) -- (0,0) -- (13.4,6.44) -- (8.9,0) -- cycle    ;
	\draw [line width=1.5]    (260,94) -- (260,120) ;
	\draw [shift={(260,90)}, rotate = 90] [fill={rgb, 255:red, 0; green, 0; blue, 0 }  ][line width=0.08]  [draw opacity=0] (13.4,-6.43) -- (0,0) -- (13.4,6.44) -- (8.9,0) -- cycle    ;
	\draw  [dash pattern={on 5.63pt off 4.5pt}][line width=1.5]  (60,40) -- (290,40) -- (290,240) -- (60,240) -- cycle ;
	\draw  [dash pattern={on 5.63pt off 4.5pt}][line width=1.5]  (70,110) -- (230,110) -- (230,170) -- (70,170) -- cycle ;
	
	\draw (150,70) node  [font=\large]  {$U_{0}^{1}$};
	\draw (100,140) node  [font=\large]  {$U_{1}^{3}$};
	\draw (200,140) node  [font=\large]  {$U_{1}^{5}$};
	\draw (150,140) node  [font=\large]  {$U_{1}^{4}$};
	\draw (100,210) node  [font=\large]  {$U_{3}^{7}$};
	\draw (260,70) node  [font=\large]  {$U_{0}^{2}$};
	\draw (260,140) node  [font=\large]  {$U_{2}^{6}$};
	\draw (73,87.4) node [anchor=north west][inner sep=0.75pt]    {$\mathcal{U}^{1}$};
	\draw (261,212.4) node [anchor=north west][inner sep=0.75pt]    {$\mathcal{V}$};

\end{tikzpicture}
	}
	\caption{Two DGP ($L=7$) examples in graph illustration.}
	\label{fig:dgp-examples-graph}
\end{figure}

It is also natural to define another set 
\begin{equation}
	\mathcal{U}^i = \big\lbrace U^k_{i} \in \mathcal{V} \colon k = 1,2,\ldots, L \big\rbrace
	\label{equ:dgp-set-Ui}
\end{equation}
that collects all the parent GPs of the $i$-th GP element in $\mathcal{V}$. It follows from Lemma~\ref{lemma:dgp-graph-partition} that all the collections of parent GPs form a partition of the set of all GP elements.

\begin{lemma}[Partition]
	\label{lemma:dgp-graph-partition}
	Let $ \mathcal{U}^0, \mathcal{U}^1, \ldots, \mathcal{U}^L$ be collections of parent GPs as defined by Equation~\eqref{equ:dgp-set-Ui}. These collections satisfy the axiom of a partition.
	\begin{enumerate}[I.]
		\item (Pairwise disjointness) For every $m,n\in\lbrace 0, 1,\ldots,L-1\rbrace$ and $m\neq n$,
		\begin{equation}
			\mathcal{U}^m \cap \mathcal{U}^n = \emptyset.
		\end{equation}
		\item (Exhaustiveness)
		\begin{equation}
			\bigcup_{i=0}^{L-1} \mathcal{U}^i = \mathcal{V}.
	\end{equation}
	\end{enumerate}
\end{lemma}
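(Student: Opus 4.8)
The plan is to recognise both partition axioms as immediate consequences of a single structural fact: every GP element of $\mathcal{V}$ carries exactly one subscript, and $\mathcal{U}^n$ is precisely the collection of those elements whose subscript equals $n$. Writing $\pi$ for the map that sends an element $U^i_{j_i} \in \mathcal{V}$ to its subscript $j_i$, the sets $\mathcal{U}^n$ are exactly the fibres $\pi^{-1}(n)$, and the fibres of a well-defined map always partition its domain. So the whole argument reduces to checking that $\pi$ is well defined with the claimed range.

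First I would make precise that $\pi$ is well defined and that its range lies in $\lbrace 0, 1, \ldots, L-1 \rbrace$. Each element of $\mathcal{V}$ is of the form $U^i_{j_i}$ with $i \in \lbrace 1, \ldots, L \rbrace$, where the superscript $i$ identifies the element uniquely and the subscript $j_i$ is fixed by the dependency set $J$. By the standing constraint on $J$ we have $0 \leq j_i < i \leq L$, hence $j_i \in \lbrace 0, 1, \ldots, L-1 \rbrace$, and this value is uniquely attached to the element. Comparing with the definition of $\mathcal{U}^n$ in Equation~\eqref{equ:dgp-set-Ui}, I observe that $U^i_{j_i} \in \mathcal{U}^n$ if and only if $j_i = n$; that is, $\mathcal{U}^n = \pi^{-1}(n)$.

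From this the two axioms follow. For pairwise disjointness, suppose some element $U$ lay in both $\mathcal{U}^m$ and $\mathcal{U}^n$. Then its subscript would equal both $m$ and $n$, forcing $m = n$; contrapositively, $m \neq n$ gives $\mathcal{U}^m \cap \mathcal{U}^n = \emptyset$. For exhaustiveness, each $\mathcal{U}^n$ is a subset of $\mathcal{V}$ by definition, which yields one inclusion; conversely, any $U^i_{j_i} \in \mathcal{V}$ has subscript $j_i \in \lbrace 0, \ldots, L-1 \rbrace$ and therefore lies in $\mathcal{U}^{j_i}$, giving $\mathcal{V} \subseteq \bigcup_{i=0}^{L-1} \mathcal{U}^i$.

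There is no genuine obstacle here; the only points requiring care are bookkeeping ones. The chief one is the index range: the constraint $j_i < i$ guarantees that no element carries the subscript $L$, so $\mathcal{U}^L = \emptyset$ and the union need only run up to $L-1$, in agreement with the statement. I would also remark that nothing prevents an individual $\mathcal{U}^n$ from being empty (this occurs precisely when the $n$-th element is a leaf with no parents); since the lemma asserts only disjointness and exhaustiveness rather than non-emptiness of each block, this causes no difficulty.
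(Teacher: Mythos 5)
Your proof is correct and rests on exactly the same two structural facts as the paper's: each GP element carries a unique subscript (it has at most one child), which gives disjointness, and the constraint $0 \leq j_i < i \leq L$ places every element in some $\mathcal{U}^{j_i}$ with $j_i \leq L-1$, which gives exhaustiveness. Your fibre-of-the-subscript-map packaging is merely a direct rephrasing of the paper's contradiction arguments, so the approaches are essentially identical.
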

\begin{remark}
	Note that $\mathcal{U}^L=\emptyset$ by construction.
\end{remark}
\begin{proof}
	In order to prove the first property, suppose that there exists a pair $m,n\in\lbrace 0, 1,\ldots,L-1\rbrace$ and $m\neq n$ such that $\mathcal{U}^m \cap \mathcal{U}^n$ is non-empty. This implies that there is a GP element pointing simultaneously to $U^m_{j_m}$ and to $U^n_{j_n}$, which violates the definition of a GP element. 

	Following Equations~\eqref{equ:dgp-set-V} and~\eqref{equ:dgp-set-Ui}, we have $\bigcup^{L-1}_{i=0} \mathcal{U}^i \subseteq \mathcal{V}$. Suppose that there exists a GP element that is in $\mathcal{V}$ but not in $\bigcup^{L-1}_{i=0} \mathcal{U}^i$, then this GP element is not a parent of any GP elements (i.e., it must be in $\mathcal{U}^0$) which violates the hypothesis.
\end{proof}

We mention that the indexing system for DGPs here is simplified compared to Publication~\cp{paperSSDGP} which additionally used an unnecessary index denoting the depth of the GP element in the hierarchy. Figure~\ref{fig:dgp-examples-graph} illustrates two graphical examples of DGPs to clarify the indexing and notations used here. 

\subsection*{Batch representations of DGPs}
\label{sec:batch-ss-dgps}
Following Definition~\ref{def:dgp}, we can use the following shorthand batch notation to represent a DGP $V\colon \T\to\R^{\sum^L_{i=1} d_i}$ with $L$ conditional GPs: 

\begin{equation}
	\begin{split}
		U^1_0(t) \condbig \mathcal{U}^1 &\sim \mathrm{GP}\big(0, C^1(t, t'; \mathcal{U}^1)\big), \\
		U^2_{j_2}(t) \condbig \mathcal{U}^2 &\sim \mathrm{GP}\big(0, C^2(t, t'; \mathcal{U}^2)\big), \\
		&\vdots\\
		U^i_{j_i}(t) \condbig \mathcal{U}^i &\sim \mathrm{GP}\big(0, C^i(t, t'; \mathcal{U}^i)\big), \\
		&\vdots\\
		U^L_{j_L}(t) &\sim \mathrm{GP}\big(0, C^L(t, t')\big),
		\label{equ:batch-dgp}
	\end{split}
\end{equation}
where $C^i\colon \T \times \T \to \R^{d_i \times d_i}$ is the covariance function of $U^i_{j_i}$ parametrised by the GPs in $\mathcal{U}^i$, and
\begin{equation}
	V(t) \coloneqq \begin{bmatrix} U^1_0(t) & U^2_{j_2}(t) & \cdots & U^L_{j_L}(t) \end{bmatrix}^\trans. \nonumber
\end{equation}
Thanks to the conditional hierarchy structure of the model, the probability density function
\begin{equation}
	\begin{split}
		p_{V(t)}(v, t) &\coloneqq p_{U^1_0(t),\ldots,U^L_{j_L}(t)}\big(u^1_0,\ldots,u^L_{j_L}, t\big) \\
		&= \prod^L_{i=1} p_{U^i_{j_i}(t) \cond \mathcal{U}^i} \big(u^i_{j_i}, t \cond \mathcal{U}^i \big)
		\label{equ:batch-dgp-density}
	\end{split}
\end{equation}
of $V$ can factorise over the probability densities of the GP elements $p_{U^i_{j_i}(t) \cond \mathcal{U}^i} \allowbreak \big(u^i_{j_i}, t \cond \mathcal{U}^i \big)$ for $i=1,\ldots, L$. Notice that for the sake of readability, we slightly abused the notation in Equation~\eqref{equ:batch-dgp-density}, in the sense that $\mathcal{U}^i$, appearing in the argument of $p_{U^i_{j_i}(t) \cond \mathcal{U}^i} \big(u^i_{j_i}, t \cond \mathcal{U}^i \big)$, actually stands for the realisation of all the GPs contained in $\mathcal{U}^i$. 

In order for the DGP $V$ represented by Equation~\eqref{equ:batch-dgp} to be well-defined, its covariance functions $C^1,\ldots, C^L$ must be chosen suitably. Many conventional covariance functions -- such as the Mat\'{e}rn $C_{\mathrm{Mat.}}$ in Equation~\eqref{equ:cov-matern} -- mostly fail to be positive definite if one replaces their parameters with time dependent functions. To allow for time-varying parameters, a typical choice is to use
\begin{equation}
	\begin{split}
		&C_{\mathrm{NS}}(t, t'; \ell, \sigma) \\
		&= \frac{\sigma(t) \, \sigma(t') \big(\ell(t) \, \ell(t')\big)^{\frac{1}{4}} \, \sqrt{2}}{\varGamma(\nu) \, 2^{\nu-1} \, \sqrt{\ell(t) + \ell(t')}} \left(\sqrt{\frac{8 \, \nu \, (t-t')^2}{\ell(t) + \ell(t')}}\right)^{\!\!\nu} \mBesselsec\!\!\left(\sqrt{\frac{8 \, \nu \, (t-t')^2}{\ell(t) + \ell(t')}}\right),
		\label{equ:cov-ns-matern}
	\end{split}
\end{equation}
which is a non-stationary generalisation of the Mat\'{e}rn family by~\citet{Paciorek2004, Paciorek2006}. \citet{Gibbs} introduces a similar formulation for constructing a non-stationary RBF covariance function. More non-stationary covariance function examples using time-varying parameters can also be found in, for example, \citet{Higdon1999non, Snoek2014, Remes2017}.

\subsection*{State-space representations of DGPs}
Another way to represent a DGP as defined in Definition~\ref{def:dgp} is through the use of SDEs. The idea consists in forming a (non-linear) system of SDE representations of all the GP elements appearing in the hierarchy. More precisely, let $U^1_0, U^2_{j_2}, \ldots, U^i_{j_i}, \ldots, U^L_{j_L}$ be $\R^{d_i}$-valued GPs that satisfy the following SDEs
\begin{equation}
	\begin{split}
		\diff U^1_0(t) &= A^1\big(t; \mathcal{U}^1\big) \, U^1_0 \diff t + B^1\big(t; \mathcal{U}^1\big) \diff W^1(t), \\
		\diff U^2_{j_2}(t) &= A^2\big(t; \mathcal{U}^2\big) \, U^2_{j_2} \diff t + B^2\big(t; \mathcal{U}^2\big) \diff W^2(t), \\
		&\vdots \\
		\diff U^i_{j_i}(t) &= A^i\big(t; \mathcal{U}^i\big) \, U^i_{j_i} \diff t + B^i\big(t; \mathcal{U}^i\big) \diff W^i(t), \\
		&\vdots \\
		\diff U^L_{j_L}(t) &= A^L(t) \, U^L_{j_L} \diff t + B^L(t) \diff W^L(t),
		\label{equ:ss-dgps-sde-split}
	\end{split}
\end{equation}
respectively.
In Equation~\eqref{equ:ss-dgps-sde-split}, $W^i \colon \T \to \R^{w_i}$ for $i=1,2,\ldots, L$ are $w_i$-dimensional Wiener processes, and $A^i \colon \T \to \R^{d_i \times d_i}$ and $B^i \colon \T \to \R^{d_i \times w_i}$ for $i=1,2,\ldots, L-1$ are stochastic processes that are parametrised by the GPs in $\mathcal{U}^i$. The $L$-th coefficients $A^L\colon\T\to\R^{d_L}$ and $B^L\colon\T\to\R^{d_L \times w_L}$, on the other hand, are deterministic, since $\mathcal{U}^L = \emptyset$ by definition. For the sake of simplicity, we collapse Equation~\eqref{equ:ss-dgps-sde-split} into a matricial form
\begin{equation}
	\begin{split}
		\diff V(t) &= a(V(t)) \diff t + b(V(t)) \diff W(t), \\
		V(t_0) &= V_0,
		\label{equ:ss-dgps-sde}
	\end{split}
\end{equation}
where $V(t) \coloneqq \begin{bmatrix} U^1_0(t) & U^2_{j_2}(t) & \cdots & U^L_{j_L}(t) \end{bmatrix}^\trans \in \R^{\sum_{i=1}^L d_i}$, and the SDE coefficients are defined by
\begin{equation}
	a(V(t)) \coloneqq 
	\begin{bmatrix}
		A^1\big(t; \mathcal{U}^1\big) & & & \\
		& A^2\big(t; \mathcal{U}^2\big) & & & \\
		& & \ddots & \\
		& & & A^L\big(t; \mathcal{U}^L\big)
	\end{bmatrix} \, V(t)
\end{equation}
and
\begin{equation}
	b(V(t)) \coloneqq
	\begin{bmatrix}
		B^1\big(t; \mathcal{U}^1\big) & & & \\
		& B^2\big(t; \mathcal{U}^2\big) & & & \\
		& & \ddots & \\
		& & & B^L\big(t; \mathcal{U}^L\big)
	\end{bmatrix}.
\end{equation}
The vector-valued Wiener process appearing in Equation~\eqref{equ:ss-dgps-sde} is similarly defined by $W(t) \coloneqq \begin{bmatrix}
W^1(t) & W^2(t) & \cdots & W^L(t)
\end{bmatrix}^\trans \in \R^{\sum_{i=1}^L w_i}$. 

A DGP $V\colon \T\to \R^{\sum^L_{i=1}d_i}$ that is characterised as per SDE~\eqref{equ:ss-dgps-sde} is called a state-space deep Gaussian process (SS-DGP). Compared to batch representations of DGPs, one specifies the SDE coefficients $A^i$ and $B^i$ for $i=1,2,\ldots, L$ and the initial condition $V_0$ instead of explicitly specifying the covariance functions of DGPs. In Section~\ref{sec:deep-matern} we present some concrete examples of how to select these SDE coefficients so that each GP element of the SS-DGPs is conditionally a \matern GP.

\section{Existence and uniqueness of SS-DGPs}
\label{sec:ssdgps-solution}
In the previous sections, we have defined SS-DGPs as SDE represented DGPs. However, the solution existence and uniqueness of the SDE in Equation~\eqref{equ:ss-dgps-sde-split} has still not been proven. In this section, we provide sufficient conditions on the SDE coefficients in SDE~\eqref{equ:ss-dgps-sde-split} so that the strong existence and pathwise uniqueness hold for the SDE. 

In particular, one must understand that the hierarchical nature of SS-DGPs makes a direct application of Theorem~\ref{thm:linear-sde-solution} slightly unsound. Indeed, the system of SDEs~\eqref{equ:ss-dgps-sde-split} is not a linear system when seen as a multidimensional SDE. However, the individual GP elements SDEs are (conditionally on their parents in the DGP hierarchy) linear. 

\begin{theorem}
	Let $W^i\colon \T \to \R^{w_i}$ and $U^i(t_0)$ for $i=1,2,\ldots, L$ be Wiener processes and initial random variables defined on filtered probability spaces $\big(\Omega^i, \FF^i, \FF^i_t, \PP^i \big)$ for $i=1,2,\ldots, L$, where their filtrations $\lbrace \FF^i_t\colon 1,2,\ldots,L\rbrace$ are generated by their Wiener processes and initial variables. Suppose that functions $A^i$ and $B^i$ for $i=1,2,\ldots, L$ in Equation~\eqref{equ:ss-dgps-sde-split} are locally bounded measurable, then the multidimensional SDE~\eqref{equ:ss-dgps-sde-split}, or equivalently, \eqref{equ:ss-dgps-sde} has a strong solution and the pathwise uniqueness holds.
\end{theorem}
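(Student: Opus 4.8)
\emph{Proof proposal.} The plan is to exploit the fact that, although the full system~\eqref{equ:ss-dgps-sde-split} is genuinely non-linear, its hierarchical structure lets us solve it one GP element at a time, in an order in which each element's equation is \emph{conditionally} linear. The constraint $j_i < i$ imposed in the construction of $\mathcal{V}$ means that every parent of the $i$-th element carries a strictly larger index, so the parametrisation dependencies form a directed acyclic graph in which we may process the elements in order of decreasing index $i = L, L-1, \ldots, 1$. First I would assemble a single filtered probability space $(\Omega, \FF, \FF_t, \PP)$ as the product of the spaces $(\Omega^i, \FF^i, \FF^i_t, \PP^i)$, on which the Wiener processes $W^1, \ldots, W^L$ and the initial variables are mutually independent, and take $\FF_t$ to be the (augmented) filtration they jointly generate.

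The induction then proceeds as follows. For the base case, any element with $\mathcal{U}^i = \emptyset$ (in particular $U^L_{j_L}$, since $\mathcal{U}^L = \emptyset$) has deterministic, locally bounded measurable coefficients $A^i, B^i$, so Theorem~\ref{thm:linear-sde-solution} immediately supplies a strong solution that is pathwise unique and adapted to $\FF_t$. For the inductive step, suppose all elements of index larger than $i$ have been constructed and are continuous, $\FF_t$-adapted, and pathwise unique. Substituting the now-known trajectories of the parents $\mathcal{U}^i$ into the coefficient maps turns them into the $\FF_t$-adapted processes $t \mapsto A^i(t; \mathcal{U}^i)$ and $t \mapsto B^i(t; \mathcal{U}^i)$, and the equation for $U^i_{j_i}$ becomes linear in $U^i_{j_i}$. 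Because each $W^i$ is independent of the Wiener processes driving its parents, it remains a Wiener process with respect to the enlarged filtration $\FF_t$, so the relevant It\^{o} integral against $W^i$ is well defined with an $\FF_s$-adapted integrand; a conditional (path-by-path) application of the variation-of-constants formula of Theorem~\ref{thm:linear-sde-solution} then yields the unique strong solution $U^i_{j_i}$. Concatenating the $U^i_{j_i}$ over $i$ produces $V$, and verifying Conditions~I--IV of Definition~\ref{def:strong-solution} together with the inductive pathwise-uniqueness argument (Definition~\ref{def:pathwise-unique}) establishes both claims.

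The hard part will be justifying the invocation of Theorem~\ref{thm:linear-sde-solution} once the coefficients have become random: that theorem is stated for \emph{deterministic} locally bounded measurable $A(t)$ and $B(t)$, whereas here $A^i(t; \mathcal{U}^i)$ and $B^i(t; \mathcal{U}^i)$ are stochastic processes obtained by composing the parametrisation maps with the continuous parent trajectories. The key verification is therefore that, for $\PP$-almost every realisation of the parents, the composed map $s \mapsto A^i(s; \mathcal{U}^i)$ is locally bounded and measurable in $s$, so that the theorem may be applied conditionally on the parents; this is precisely where the local-boundedness and measurability hypotheses on $A^i, B^i$, combined with the almost-sure continuity (hence local boundedness on compacta) of the parent paths, are needed. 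One must additionally confirm that the resulting fundamental matrix $\Lambda$ and the stochastic integral depend jointly measurably on $(\omega, s)$, so that $U^i_{j_i}$ is genuinely $\FF_t$-adapted and not merely adapted conditionally on the parents. This joint measurability, inherited from the measurable dependence of a linear ODE's solution on its coefficients, is the final ingredient that closes the induction.
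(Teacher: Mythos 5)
Your proposal is correct and follows essentially the same route as the paper's proof: conditionally on the parents each GP element's SDE is linear, so Theorem~\ref{thm:linear-sde-solution} yields the explicit variation-of-constants representation (the paper's Equation~\eqref{equ:ss-dgp-sde-integral}), the joint process is assembled on the product filtered probability space, and pathwise uniqueness is inherited from the element-wise linear equations. The only difference is presentational: you make the decreasing-index induction and the path-by-path measurability/local-boundedness justification for applying Theorem~\ref{thm:linear-sde-solution} with random coefficients explicit, points the paper's terser proof leaves implicit.
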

\begin{proof}
	By Theorem~\ref{thm:linear-sde-solution} and the conditions of this theorem, the SDEs in Equation~\eqref{equ:ss-dgps-sde-split} are exactly the same with the integral equations
	\begin{align}
		U^1_0(t) &= \Lambda^1\big(t;\mathcal{U}^1\big) \, U^1_0(t_0) + \Lambda^1\big(t;\mathcal{U}^1\big) \int^t_{t_0} \big(\Lambda^1\big(s;\mathcal{U}^1\big)\big)^{-1} \, B^1\big(s;\mathcal{U}^1\big) \diff W^1(s), \nonumber\\
		U^2_{j_2}(t) &= \Lambda^2\big(t;\mathcal{U}^2\big) \, U^2_{j_2}(t_0) + \Lambda^2\big(t;\mathcal{U}^2\big) \int^t_{t_0} \big(\Lambda^2\big(s;\mathcal{U}^2\big)\big)^{-1} \, B^2\big(s;\mathcal{U}^2\big) \diff W^2(s), \nonumber\\
		&\vdots \label{equ:ss-dgp-sde-integral}\\
		U^i_{j_i}(t) &= \Lambda^i\big(t;\mathcal{U}^i\big) \, U^i_{j_i}(t_0) + \Lambda^i\big(t;\mathcal{U}^i\big) \int^t_{t_0} \big(\Lambda^i\big(s;\mathcal{U}^i\big)\big)^{-1} \, B^i\big(s;\mathcal{U}^i\big) \diff W^i(s), \nonumber\\
		&\vdots \nonumber\\
		U^L_{j_L}(t) &= \Lambda^L(t) \, U^L_{j_L}(t_0) + \Lambda^L(t)\, \int^t_{t_0} \big(\Lambda^L(s)\big)^{-1} \, B^L(s) \diff W^L(s), \nonumber
	\end{align}
	where $\Lambda^i$ for $i=1,2,\ldots, L$ are defined as per Theorem~\ref{thm:linear-sde-solution}. Hence, the joint process $V(t) \coloneqq \begin{bmatrix} U^1_0(t) & U^2_{j_2}(t) & \cdots & U^L_{j_L}(t) \end{bmatrix}^\trans$ is an $\FF_t$-adapted process defined on the product space $(\Omega, \FF, \FF_t, \PP)$, where $\Omega = \Omega^1 \times \cdots \times \Omega^L$, $\FF$ and $\FF_t$ are the product sigma-algebras and filtrations~\citep{ReneMeasure2017}, and $\PP(E^1 \times \cdots \times E^L) = \PP^1(E^1) \, \cdots \, \PP^L(E^L)$ for every $E^1\in\Omega^1, \ldots, E^L\in\Omega^L$. Noting that the other properties in Definition~\ref{def:strong-solution} are also verified, Equation~\eqref{equ:ss-dgp-sde-integral} is a strong solution of the multidimensional SDE~\eqref{equ:ss-dgps-sde-split}. The pathwise uniqueness of SDE~\eqref{equ:ss-dgps-sde-split} follows from the fact that the pathwise uniqueness holds for the linear SDEs of all the GP elements~\citep[see,][Lemma 7]{Zhao2021RSSGP}.
\end{proof}

The theorem above shows that in order to give a well-defined SS-DGP we only needs to ensure the SDE coefficients be locally bounded measurable functions. This condition is substantially weaker compared to the classical ones, such as the global Lipschitz and linear growth conditions~\citep{Karatzas1991, Friedman1975, Mao2008, Shen2006}, because we have leveraged the hierarchical nature of SS-DGP. From now on, unless otherwise specified, we will assume that this condition holds whenever we construct an SS-DGP.

Thanks to the Markov property, probability densities of SS-DGPs can factorise in the time dimension. Suppose that we have temporal instances $t_1 \leq t_2 \leq \cdots \leq t_T \in\T$, then the probability density function of $V$ on these time instances reads
\begin{equation}
	p_{V_{1:T}}(v_{1:T}) = p_{V_1}(v_1)\,\prod^T_{k=1} p_{V_{k+1} \cond  V_k}(v_{k+1} \cond v_{k}), \nonumber
\end{equation}
where we denote $V_{1:T} \coloneqq \lbrace V(t_1), V(t_2), \ldots, V(t_T)\rbrace$. We can also factorise the probability density above in the GP element variable like in Equation~\eqref{equ:batch-dgp-density} as well. 

\subsection*{Covariance functions of SS-DGPs}
The equivalence between batch and state-space DGP representations can be stated in terms of equivalence of covariance functions. In particular, conditionally on its parents in the DGP hierarchy, we can express the covariance function of a GP element as a function of its SDE coefficients.  
\begin{theorem}
	\label{thm:ss-dgp-cov}
	Let $V(t)$ be an SS-DGP governed by the SDE in Equation~\eqref{equ:ss-dgps-sde} on some probability space $( \Omega, \FF, \PP)$. Also let $\FF^i \subset \FF$ be the sub-sigma-algebra generated by the GPs in $\mathcal{U}^i(t)$ for all $t\in\T$. Then the covariance function of the $i$-th GP element is
	\begin{equation}
		\begin{split}
			C^i_{\mathrm{SS}}(t, t'; \mathcal{U}^i) &\coloneqq \covbig{U^i_{j_i}(t), U^i_{j_i}(t') \cond \mathcal{F}^i}\\
			&=\cu{\Lambda}^i(t,t_0) \, \covbig{U^i_{j_i}(t_0) \cond \mathcal{U}^i(t_0)} \big(\cu{\Lambda}^i(t, t_0) \big)^\trans \\
			&\quad+\int^{t \,\wedge \, t'}_{t_0} \cu{\Lambda}^i(t, s) \, B^i\big(s; \mathcal{U}^i(s)\big) \, B^i\big(s; \mathcal{U}^i(s)\big)^\trans \, \big(\cu{\Lambda}^i(t, s) \big)^\trans \diff s,
			\label{equ:ss-dgp-cov}
		\end{split}
	\end{equation}
	where $\cu{\Lambda}^i(t,s) = \Lambda^i(t) \, \big(\Lambda^i(s)\big)^{-1}$ for $t,s\in\T$, and $\Lambda^i(t)$ is generated by $A^i$ as per Theorem~\ref{thm:linear-sde-solution}.
\end{theorem}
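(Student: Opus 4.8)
The plan is to reduce the problem, by conditioning on the parent processes, to a linear SDE with deterministic coefficients, and then to read off its covariance from the explicit solution provided by Theorem~\ref{thm:linear-sde-solution}.

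First I would observe that the coefficients $A^i(\,\cdot\,;\mathcal{U}^i)$ and $B^i(\,\cdot\,;\mathcal{U}^i)$ of the $i$-th GP element are functions of the parent GPs collected in $\mathcal{U}^i$, and hence are measurable with respect to $\FF^i$. Conditionally on $\FF^i$ they are therefore deterministic, locally bounded measurable functions of time, and the equation for $U^i_{j_i}$ becomes the linear SDE
\[
	\diff U^i_{j_i}(t) = A^i(t;\mathcal{U}^i)\,U^i_{j_i}(t)\diff t + B^i(t;\mathcal{U}^i)\diff W^i(t).
\]
Since, by the product-space construction of the preceding theorem, the driving Wiener process $W^i$ is independent of all the parent processes, $W^i$ remains a Wiener process under the conditional law given $\FF^i$. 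Theorem~\ref{thm:linear-sde-solution} then applies conditionally and yields
\[
	U^i_{j_i}(t) = \cu{\Lambda}^i(t,t_0)\,U^i_{j_i}(t_0) + \int^t_{t_0} \cu{\Lambda}^i(t,s)\,B^i(s;\mathcal{U}^i)\diff W^i(s),
\]
with $\cu{\Lambda}^i(t,s) = \Lambda^i(t)\,(\Lambda^i(s))^{-1}$ and $\Lambda^i$ generated by $A^i$ as in that theorem.

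Next I would compute the conditional mean $\expecbig{U^i_{j_i}(t)\cond\FF^i} = \cu{\Lambda}^i(t,t_0)\,\expecbig{U^i_{j_i}(t_0)\cond\mathcal{U}^i(t_0)}$, using that the It\^{o} integral is a conditional martingale with vanishing conditional expectation. Subtracting this mean and forming the outer product of the centred values at times $t$ and $t'$, the conditional covariance splits into an initial-condition contribution and a stochastic-integral contribution; the cross terms vanish because $U^i_{j_i}(t_0)$ is independent of $W^i$. The initial-condition part produces $\cu{\Lambda}^i(t,t_0)\,\covbig{U^i_{j_i}(t_0)\cond\mathcal{U}^i(t_0)}\,(\cu{\Lambda}^i(t,t_0))^\trans$, while the conditional It\^{o} isometry evaluates the integral contribution as $\int^{t\,\wedge\,t'}_{t_0}\cu{\Lambda}^i(t,s)\,B^i(s)\,B^i(s)^\trans\,(\cu{\Lambda}^i(t,s))^\trans\diff s$, the $t\,\wedge\,t'$ upper limit arising because the isometry pairs only the common segment of the two integrals. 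Together these give Equation~\eqref{equ:ss-dgp-cov}.

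The main obstacle I anticipate is making the conditional argument rigorous: I must justify that conditioning on $\FF^i$ simultaneously freezes the coefficients into deterministic functions and leaves $W^i$ a Wiener process, so that Theorem~\ref{thm:linear-sde-solution} and the It\^{o} isometry can be invoked conditionally. This hinges on the independence of $W^i$ from the parent processes guaranteed by the product-space construction, and on verifying that the conditional It\^{o} isometry holds for the now-$\FF^i$-measurable integrand $\cu{\Lambda}^i(t,s)\,B^i(s;\mathcal{U}^i)$; the remaining manipulations are the routine second-moment computation for a linear SDE.
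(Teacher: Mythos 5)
Your proposal is correct and follows essentially the same route as the paper's own proof: condition on $\FF^i$ so that Theorem~\ref{thm:linear-sde-solution} yields the explicit solution $\cu{\Lambda}^i(t,t_0)\,U^i_{j_i}(t_0) + \int^t_{t_0}\cu{\Lambda}^i(t,s)\,B^i(s;\mathcal{U}^i(s))\diff W^i(s)$, then substitute into $\covbig{\cdot,\cdot\cond\FF^i} = \expec{\cdot\,\cdot^\trans \cond \FF^i} - \expec{\cdot\cond\FF^i}\,\expec{\cdot\cond\FF^i}^\trans$ and apply the It\^{o} isometry. Your additional care about why the conditioning is legitimate (independence of $W^i$ from the parents via the product-space construction, and the conditional It\^{o} isometry) fills in details the paper defers to the cited reference, but the argument is the same.
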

\begin{proof}
	By It\^{o}'s formula and Theorem~\ref{thm:linear-sde-solution}, we have that
	\begin{equation}
		\begin{split}
			U^i_{j_i}(t) &= \Lambda^i(t)\,U^i_{j_i}(t_0) + \Lambda^i(t)\int^t_{t_0} \big(\Lambda^i(s)\big)^{-1} \, B^i\big(s; \mathcal{U}^i(s)\big) \, \diff W^i(s)\\
			&\coloneqq \cu{\Lambda}^i(t,t_0) \, U^i_{j_i}(t_0) + \int^t_{t_0} \cu{\Lambda}^i(t,s) \, B^i\big(s; \mathcal{U}^i(s)\big) \, \diff W^i(s)
			\label{equ:ss-dgp-cov-U}
		\end{split}
	\end{equation}
	with respect to $\FF^i$. Note that $\Lambda(t_0) = I$ as per Equation~\eqref{equ:solution-linear-sdes-initial-condition}.
	Hence, by It\^{o} isometry and by substituting $U^i_{j_i}(t)$ into
	\begin{equation}
		\begin{split}
			&\covbig{U^i_{j_i}(t), U^i_{j_i}(t') \cond \mathcal{F}^i} \\
			&= \expecbig{U^i_{j_i}(t)\, \big(U^i_{j_i}(t')\big)^\trans \cond \FF^i} - \expecbig{U^i_{j_i}(t) \cond \FF^i} \, \big(\expecbig{U^i_{j_i}(t') \cond \FF^i}\big)^\trans,
		\end{split}
	\end{equation}
	we arrive at Equation~\eqref{equ:ss-dgp-cov}. For details, see,~\citet{Zhao2021RSSGP}.
\end{proof}

\begin{remark}
	The matrix $\cu{\Lambda}^i(t, s)$ above is often referred to as the transition matrix in control theory~\citep{Brogan2011}. Although in general $\cu{\Lambda}^i(t, s)$ does not have a closed-form representation, Peano--Baker series in Theorem~\ref{thm:peano-baker} can be used to approximate it successively~\citep{Baake2011, DaCunha2005}. One can also use Magnus expansions, if an exponential representation of transition matrix (i.e., $\cu{\Lambda}^i(t,s)=\exp(\cdot)$) is required, but the convergence usually requires strict conditions on $A^i$~\citep{Moan2008MagnusConv}. 
	
	However, if $A^i$ is self-commuting for all $t\in\T$, then the transition matrix simplifies to $\cu{\Lambda}^i(t, s) = \exp\big(\int^t_{t_0} A^i\big( s; \mathcal{U}^i(s) \big) \diff s\big)$. 
\end{remark}

The converse of Theorem~\ref{thm:ss-dgp-cov} is also available to some extent in the sense that the covariance functions in batch DGPs can be translated into the SDE coefficients of state-space DGPs. For how to proceed on this, we refer the reader to~\citet{Hartikainen2010, Simo2013SSGP}.

\section{Numerical simulation of SS-DGPs}
\label{sec:ssdgps-lcd}
In this section we discuss the numerical simulation of the SDEs describing SS-DGPs. In particular we present approximate discretisation methods that leverage the hierarchical nature of SS-DGPs, then we discuss alternatives that would result in exact simulations. 

\subsection*{Discretisation of SDEs}

In order to simulate SS-DGPs, it is very common to consider discretisations of their SDEs. In particular, we focus on the Gaussian increment-based explicit discretisations of the form
\begin{equation}
	V_k \approx f_{k-1}(V_{k-1}) + q_{k-1}(V_{k-1}),
	\label{equ:ss-dgp-disc}
\end{equation}
where $V_k \coloneqq V(t_k)$ and $q_{k-1} \sim \mathrm{N}(0, Q_{k-1}(V_{k-1}))$, and the functions $f_{k-1}$ and $Q_{k-1}$ depend on the discretisation scheme used.

Unfortunately, many commonly used discretisation methods fail to provide valid numerical schemes for SS-DGPs. For instance, the Euler--Maruyama method yields singular covariance $Q_{k-1}$ for smooth Mat\'{e}rn SS-DGPs (see, e.g., Example~\ref{example:ssdgp-m32}). While higher-order It\^{o}--Taylor expansions, such as Milstein's method, exist, they are only numerically efficient for constant, diagonal, or more generally, commutative dispersion function $b$~\citep[see, the definition of commutative noise in][Chapter 10]{Kloeden1992}. However, dispersion functions of SS-DGPs may not always verify these conditions (e.g., Example~\ref{example:ssdgp-m12}).

The Taylor moment expansion (TME) method presented in Section~\ref{sec:tme} does not suffer from the problems of high-order It\^{o}--Taylor expansions, but on the other hand it requires sufficient smoothness on the SDE coefficients. Moreover, the resulting covariance estimate $Q_{k-1}$ used in the TME-based discretisation in Equation~\eqref{equ:ss-dgp-disc} may be singular. While the smoothness of the coefficients is a necessary price to pay, the possible singularity of the estimated covariance $Q_{k-1}$ can be addressed. We refer the reader back to Section~\ref{sec:tme-pd} for methods to do so.

In this thesis, we additionally present an ad-hoc discretisation approach by leveraging the hierarchical structure of SS-DGPs and explicit solutions of linear SDEs (e.g., Equation~\eqref{equ:ss-dgp-sde-integral}). The idea relies on approximating the SDE of each GP element between two time steps $t_{k-1}$ and $t_k$ by a time-invariant SDE, the coefficients of which depend on the values of its parent GPs at $t_{k-1}$. This idea roots in the so-called local linearisation methods as in~\citet{Ozaki1993, Sarkka2019}. By using this approach, the transition matrix $\cu{\Lambda}$, as defined in Theorem~\ref{thm:ss-dgp-cov}, reduces to a matrix exponential. The following algorithm shows how this hierarchical discretisation can be used in practice.
\begin{algorithm}[Locally conditional discretisation]
	\label{alg:local-cond-disc}
	Starting from any $t_{k-1}\in\T$, locally conditional discretisation (LCD) approximates the solution of SDEs~\eqref{equ:ss-dgps-sde} at time $t_k\in\T$ by
	\begin{equation}
		\begin{split}
			U^i_{j_i}(t_k) &\approx \widetilde{\cu{\Lambda}}^i(t_k, t_{k-1}) \, U^i_{j_i}(t_{k-1}) + \int^{t_k}_{t_{k-1}} \widetilde{\cu{\Lambda}}^i(t_k, s) \, B^i\big(t_{k-1}; \mathcal{U}^i(t_{k-1})\big) \diff W^i(s),
			\label{equ:lcd-solution}
		\end{split}
	\end{equation}
	for $i=1,2,\ldots, L$, where $\widetilde{\cu{\Lambda}}^i(t_k, s) \coloneqq \exp\big( (t_k - s) \, A^i (t_{k-1}; \mathcal{U}^i(t_{k-1}))\big)$, and $A^i$ and $B^i$ are defined in Equation~\eqref{equ:ss-dgps-sde-split}.
\end{algorithm}
\begin{remark}
	Except in special cases (such as \matern SS-DGPs presented later), Equation~\eqref{equ:lcd-solution} needs to be solved numerically. This can be done, for example, by using the methods highlighted around Equations~\eqref{equ:disc-linear} and~\eqref{equ:disc-coeff-exp}. 
\end{remark}

It is worth mentioning that the non-stationary Gaussian state-space model introduced by~\citet{YaoweiLi2020} coincides with the LCD approximation to the \matern class of SS-DGPs (see, Section~\ref{sec:deep-matern}). 

\begin{figure}[t!]
	\centering
	\includegraphics[width=.95\linewidth]{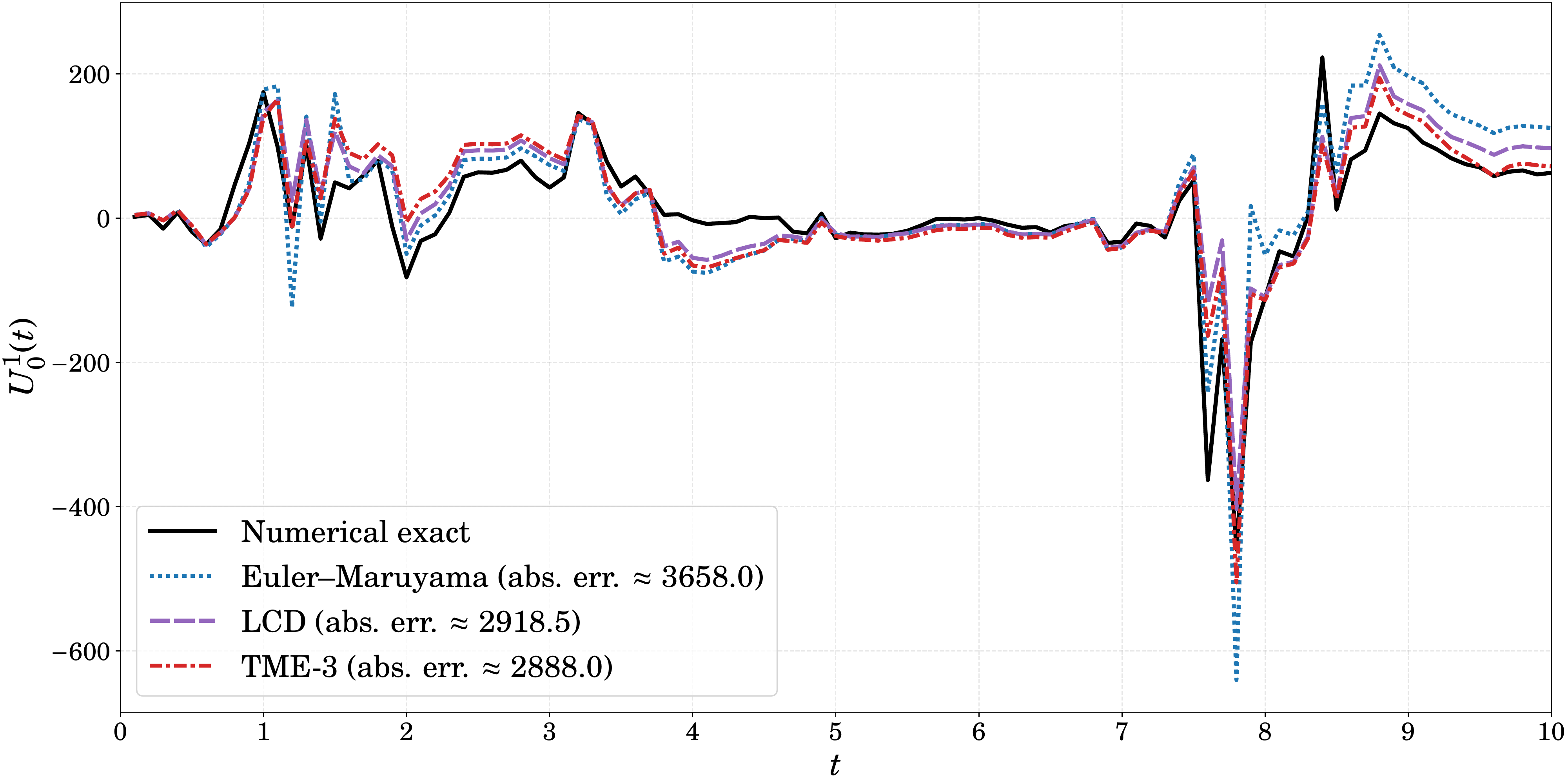}
	\caption{Comparison of different discretisation schemes on the \matern $\nu=1\,/\,2$ SS-DGP defined in Example~\ref{example:ssdgp-m12}, where we let parameters $\ell_2=\ell_3=1$ and $\sigma_2=\sigma_3=0.1$. The numbers in the legend are the cumulative absolute errors with respect to the (numerically) exact discretisation.}
	\label{fig:disc-err-dgp-m12}
\end{figure}

Figure~\ref{fig:disc-err-dgp-m12} illustrates a comparison amongst the Euler--Maruyama, TME, and LCD methods on a \matern SS-DGP. On this example, both LCD and TME methods outperform Euler--Maruyama substantially, especially in the ``high-frequency'' portions of this SDE trajectory.

\subsection*{Exact simulation methods}

Apart from discretisation-based simulations, there also exist exact simulation methods~\citep{Beskos2005, Kessler2012, Blanchet2020}. Although these methods can avoid the discretisation errors, they are usually limited to specific types of SDEs, which may not apply to all SS-DGPs. As an example, the method introduced by~\citet{Beskos2005} requires that the dispersion coefficient be constant, which is usually not the case in SS-DGPs.

Finally, it is worth noting that each sub-SDE in Equation~\eqref{equ:ss-dgps-sde} is a linear SDE conditionally on its parent GPs. Hence, we could borrow the idea of Gibbs sampling~\citep{Robert2004} in order to sample from $U^i_{j_i}$ for $i=L,L-1,\ldots, 1$. While this method was not implemented in the context of this thesis, it is likely to improve on the LCD method and will therefore be a subject of future work.

\section{Deep \matern processes}
\label{sec:deep-matern}
In this section, we present SS-DGPs that are constructed in the Mat\'{e}rn sense. Specifically, we choose the SDE coefficients in Equation~\eqref{equ:ss-dgps-sde-split} in such a way that each GP element is a \matern GP when conditioned on its parent GPs. 

Let us start by considering linear SDEs of the form
\begin{equation}\label{equ:matern-linear-sde}
	\diff U(t) = A \, U(t) \diff t+ B \diff W(t),
\end{equation}
where the initial condition $U(t_0) \sim \mathrm{N}(0, P_0)$ is a Gaussian random variable, and the Wiener process $W\colon \T\to\R$ takes value in $\R$. Let $\nu \in \big\lbrace \frac{1}{2}, \frac{3}{2}, \ldots \big\rbrace$ and $\gamma = \nu + \frac{1}{2}$. Suppose that the state $U\colon \T \to \R^\gamma$ verifies 
\begin{equation}
	U(t) = 
	\begin{bmatrix}
	\overline{U}(t) & \frac{\diff\overline{U}}{\diff t}(t) & \cdots & \frac{\diff^{\gamma-1}\overline{U}}{\diff t^{\gamma-1}}(t) 
	\end{bmatrix}^\trans,
	\label{equ:gp-matern-state}
\end{equation}
and that the coefficients in Equation~\eqref{equ:matern-linear-sde} are given by
\begin{equation}
	A = 
	\begin{bmatrix}
		0 & 1 &   &\\
		  & 0 & 1 &\\
		\vdots & & \ddots &\\
		-\binom{\gamma}{0} \kappa^\gamma & -\binom{\gamma}{1} \kappa^{\gamma-1} & \cdots & -\binom{\gamma}{\gamma-1} \kappa
	\end{bmatrix}, \quad 
	B = 
	\begin{bmatrix}
		0 \\
		0\\
		\vdots\\
		\frac{\sigma \varGamma(\gamma) \, (2\,\kappa)^{\gamma - \frac{1}{2}}}{\sqrt{\varGamma(2\,\gamma-1)}}
	\end{bmatrix},
	\label{equ:matern-sde-coeffs}
\end{equation}
where $\kappa = \sqrt{2 \, \nu} \, / \, \ell$. Furthermore, suppose that the initial covariance $P_0$ solves the corresponding Lyapunov equation (see, Equation~\eqref{equ:lyapunuv}) of the SDE. Then the process $\overline{U} \colon \T \to \R$ in Equation~\eqref{equ:gp-matern-state} is a zero-mean Mat\'{e}rn GP with the covariance function $C_{\mathrm{Mat.}}$ defined in Equation~\eqref{equ:cov-matern}~\citep{Simo2013SSGP, Solin2016}.

\begin{remark}
	The matrix $A$ in Equation~\eqref{equ:matern-sde-coeffs} is Hurwitz~\citep{Khalil2002} as all its eigenvalues have strictly negative real part. However, $A$ is prone to be ill-conditioned if $\nu$ is large, resulting in numerically unstable SDEs. This can be addressed by using balancing algorithms~\citep{Osborne1960, Parlett1971}. 
\end{remark}

Based on the aforementioned \matern SDE representation, we can now construct \matern SS-DGPs by choosing their coefficients $A^i$ and $B^i$ for $i=1,\ldots,L$, as per Equation~\eqref{equ:matern-sde-coeffs}. As an example, suppose that the $i$-th GP element $U^i_{j_i}\in\R^{\gamma}$ in Equation~\eqref{equ:ss-dgps-sde-split} has two parents $U^m_{i}\colon \T\to\R^{d_m}$ and $U^n_{i}\colon \T\to\R^{d_n}$, that encode the length scale and the magnitude parameters, respectively. Then, we can select two suitable transformation functions $g_m \colon \R^{d_m} \to \R_{>0}$ and $g_n \colon \R^{d_n} \to \R_{>0}$, and let 
\begin{equation}
	\ell_i(t) = g_m\big( U^m_{i}(t) \big)
	\label{equ:ss-dgp-ell}
\end{equation}
and
\begin{equation}
	\sigma_i(t) = g_n\big(U^n_{i}(t)\big).
	\label{equ:ss-dgp-sig}
\end{equation} 
Under these notations, the coefficient $A^i$ of $U^i_{j_i}$ reads
\begin{equation}
		A^i(t; \mathcal{U}^i) = A^i\big( U^m_{i}(t)\big) = \begin{bmatrix}
		0 & 1 &   &\\
		& 0 & 1 &\\
		\vdots & & \ddots &\\
		-\binom{\gamma}{0} \kappa^\gamma_i(t) & -\binom{\gamma}{1} \kappa^{\gamma-1}_i(t) & \cdots & -\binom{\gamma}{\gamma-1} \kappa_i(t)
		\end{bmatrix},
\end{equation}
where $\kappa_i(t) = \sqrt{2 \, \nu} \, / \, \ell_i(t)$. Likewise, one can derive the coefficient $B^i(t; \mathcal{U}^i) = B^i\big( U^m_{i}(t), U^n_{i}(t) \big) = \begin{bmatrix}
	0 & 0 & \cdots & \sigma_i(t) \, \varGamma(\gamma) \, (2\,\kappa_i(t))^{\gamma - \frac{1}{2}}\, (\varGamma(2\,\gamma-1))^{\frac{1}{2}}
\end{bmatrix}^\trans \in \R^{\gamma}$.

Transformation functions should also be chosen regular enough so that the solution of the related SDE is well-defined (see, Section~\ref{sec:ssdgps-solution}). For example, in~\citet{Zhao2020SSDGP, Zhao2021RSSGP}, we use $g(u) = \exp(u)$, $g(u) = \arctan(u) + \pi \, / \, 2$, or $g(u) = \log(1 + \exp(u))$.

SDEs of Mat\'{e}rn SS-DGPs are time-homogeneous by construction (i.e., the coefficients $A^i$ and $B^i$ for $i=1,2,\ldots,L$ do not explicitly depend on time). Provided that the transformation functions are chosen suitably as per~\citet[][Definition~7.1.1]{Oksendal2003}, the \matern SS-DGPs are then It\^{o} diffusions. This can bring many useful features, such as the strong Markov property~\citep{Ikeda1992}. 

In the following, we give some concrete examples of Mat\'{e}rn SS-DGPs and plot a few of their simulations. 

\begin{example}[Mat\'{e}rn $\nu=1\,/\,2$ SS-DGP with three GP elements]
	\label{example:ssdgp-m12}
	Let $\nu=1\, / \, 2$. Consider the following SDEs
	\begin{equation}
		\begin{split}
			\diff U^1_0(t) &= -\frac{1}{\ell_1(t)}\,U^1_0(t) \diff t + \frac{\sqrt{2} \, \sigma_1(t)}{\sqrt{\ell_1(t)}} \diff W^1(t),\\
			\diff U^2_1(t) &= -\frac{1}{\ell_2}\,U^2_1(t) \diff t + \frac{\sqrt{2} \, \sigma_2}{\sqrt{\ell_2}} \diff W^2(t),\\
			\diff U^3_1(t) &= -\frac{1}{\ell_3}\,U^3_1(t) \diff t + \frac{\sqrt{2} \, \sigma_3}{\sqrt{\ell_3}} \diff W^3(t),
		\end{split}
		\label{equ:ss-dgp-m12}
	\end{equation}
	where $\ell_1(t) = g\big( U^2_1(t) \big)$ and $\sigma_1 = g\big( U^3_1(t) \big)$ are the length scale and magnitude of $U^1_0(t)$, respectively. The solution $V(t) = \begin{bmatrix}
		U^1_0(t) & U^2_1(t) & U^3_1(t)
	\end{bmatrix}^\trans$ is said to be a Mat\'{e}rn $\nu=1\,/\,2$ SS-DGP. 
\end{example}

\begin{example}[Mat\'{e}rn $\nu=3\,/\,2$ SS-DGP with three GP elements]
	\label{example:ssdgp-m32}
	Let $\nu= 3 \, / \, 2$. Consider the following SDEs
	\begin{equation}
		\begin{split}
			\diff U^1_0(t) &= 
			\begin{bmatrix}
				0 & 1\\
				-\big(\frac{\sqrt{3}}{\ell_1(t)}\big)^2 & \frac{-2 \,\sqrt{3}}{\ell_1(t)}
			\end{bmatrix} \, U^1_0(t) \diff t + 
			\begin{bmatrix}
				0 \\
				2\,\sigma_1(t) \, \big(\frac{\sqrt{3}}{\ell_1(t)}\big)^{\frac{3}{2}}
			\end{bmatrix} \diff W^1(t), \\
			\diff U^2_1(t) &= 
			\begin{bmatrix}
				0 & 1\\
				-\big(\frac{\sqrt{3}}{\ell_2}\big)^2 & \frac{-2 \,\sqrt{3}}{\ell_2}
			\end{bmatrix} \, U^2_1(t) \diff t + 
			\begin{bmatrix}
				0 \\
				2\,\sigma_2 \, \big(\frac{\sqrt{3}}{\ell_2}\big)^{\frac{3}{2}}
			\end{bmatrix} \diff W^2(t), \\
			\diff U^3_1(t) &= 
			\begin{bmatrix}
				0 & 1\\
				-\big(\frac{\sqrt{3}}{\ell_3}\big)^2 & \frac{-2 \,\sqrt{3}}{\ell_3}
			\end{bmatrix} \, U^3_1(t) \diff t + 
			\begin{bmatrix}
				0 \\
				2\,\sigma_3 \, \big(\frac{\sqrt{3}}{\ell_3}\big)^{\frac{3}{2}}
			\end{bmatrix} \diff W^3(t), 
		\end{split}
	\end{equation}
	where $U^1_0(t) = \begin{bmatrix}
		\overline{U}^1_0(t) & \frac{\diff\overline{U}^1_0}{\diff t}(t)
	\end{bmatrix}^\trans$, and similarly for $U^2_1(t)$ and $U^3_1(t)$. The length scale and magnitude of $U^1_0(t)$ are given by $\ell_1(t) = g\big( U^2_1(t) \big)$ and $\sigma_1 = g\big( U^3_1(t) \big)$, respectively, for $g\colon \R^2 \to \R_{>0}$. The solution $V(t) = \begin{bmatrix}
	U^1_0(t) & U^2_1(t) & U^3_1(t)
	\end{bmatrix}^\trans$ is said to be a Mat\'{e}rn $\nu=3\,/\,2$ SS-DGP. 
	
	It is worth noting that for this model the Euler--Maruyama scheme gives a singular discretisation covariance.
\end{example}

\begin{figure}[t!]
	\centering
	\includegraphics[width=.95\linewidth]{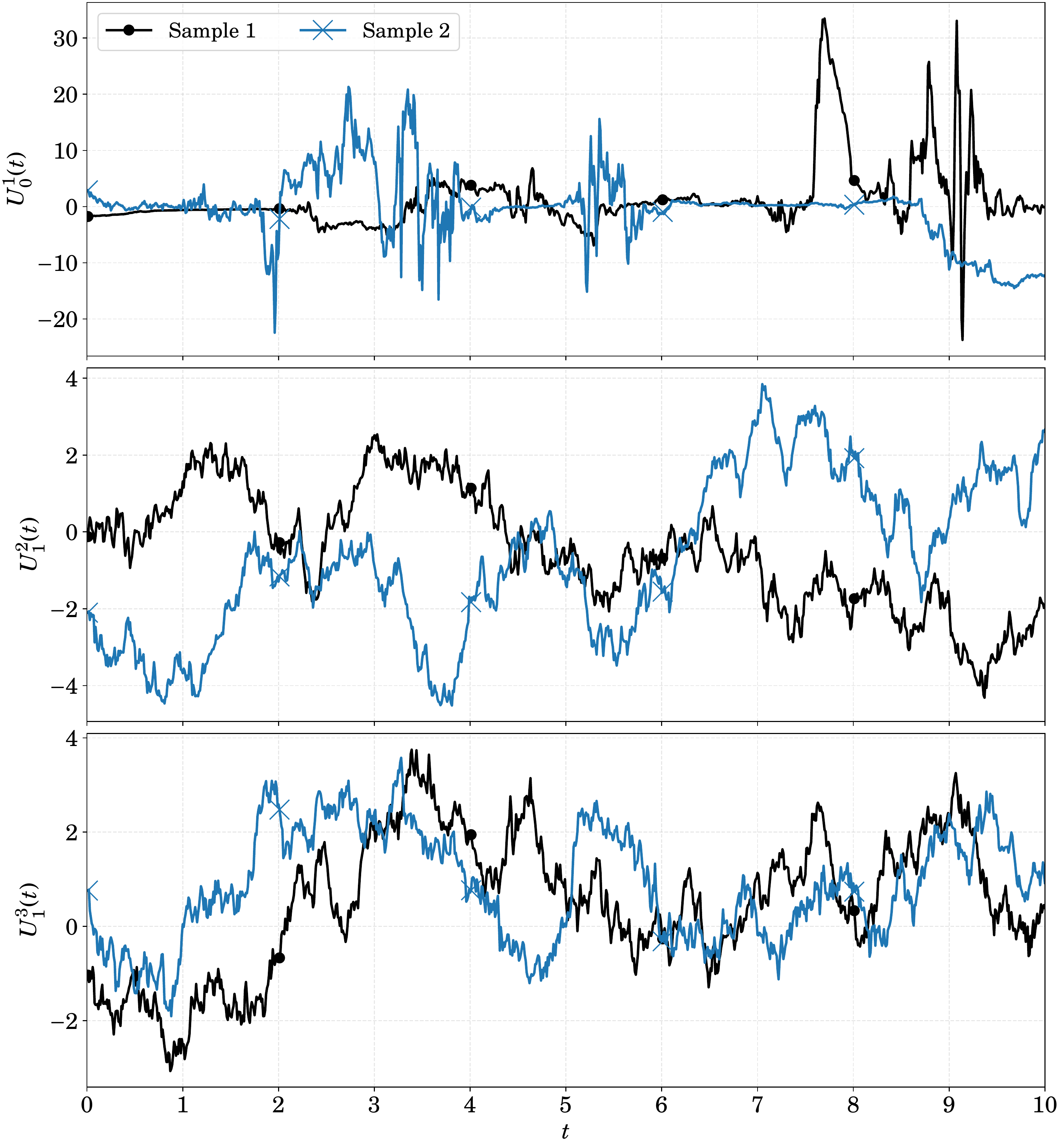}
	\caption{This figure shows two samples (plotted in different colours and markers) drawn from the Mat\'{e}rn $\nu=1\,/\,2$ SS-DGP defined in Example~\ref{example:ssdgp-m12}.}
	\label{fig:ssdgp-m12-samples}
\end{figure}
\begin{figure}[t!]
	\centering
	\includegraphics[width=.95\linewidth]{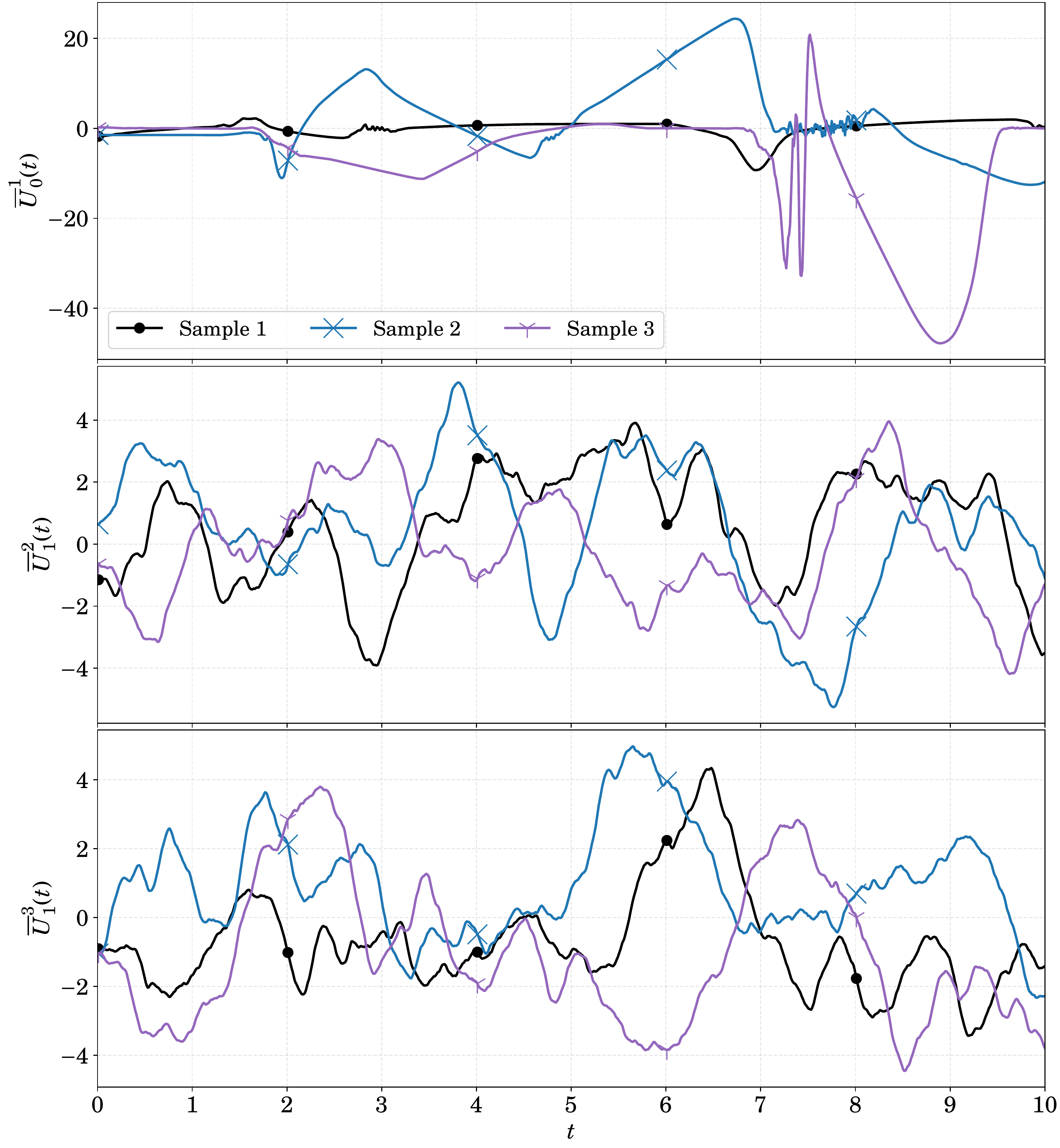}
	\caption{This figure shows three samples (plotted in different colours and markers) drawn from the Mat\'{e}rn $\nu=3\,/\,2$ SS-DGP defined in Example~\ref{example:ssdgp-m32}.}
	\label{fig:ssdgp-m32-samples}
\end{figure}

The Examples~\ref{example:ssdgp-m12} and~\ref{example:ssdgp-m32} feature Mat\'{e}rn SS-DGPs with only three GP elements. This hierarchy/depth can be continued further to represent higher degrees of non-stationarity. 

Figures~\ref{fig:ssdgp-m12-samples} and~\ref{fig:ssdgp-m32-samples} illustrate a few samples drawn from the Mat\'{e}rn SS-DGPs defined in Examples~\ref{example:ssdgp-m12} and~\ref{example:ssdgp-m32}, respectively. More specifically, for the Mat\'{e}rn $\nu=1\,/\,2$ SS-DGP in Example~\ref{example:ssdgp-m12} we use $\ell_2=\ell_3=\sigma_2=\sigma_3=2$ and $g(u) = \exp(u)$, while for the Mat\'{e}rn $\nu=3\,/\,2$ SS-DGP in Example~\ref{example:ssdgp-m32} we use $\ell_2=\ell_3=0.5$, $\sigma_2=\sigma_3=2$ and $g(u) = \exp\big( \big[ 1 \,\,0 \big] \, u \big)$. The initial states are standard Gaussian random vectors with unit covariances in both cases.

From Figures~\ref{fig:ssdgp-m12-samples} and~\ref{fig:ssdgp-m32-samples}, we can observe non-stationary in the behaviour of $U^1_0$. This results from its length scale and magnitude being driven by its parent GPs $U^2_1$ and $U^3_1$. As an example, Sample~2 (blue line) of $\overline{U}^1_0$ in Figure~\ref{fig:ssdgp-m32-samples} exhibits low-magnitude high-frequency jittering around $t\in[7, 8]$ because the length scale $g\big(\overline{U}^2_1\big)$ and magnitude $g\big(\overline{U}^3_1\big)$ are relatively small on $t\in[7, 8]$. On the other hand Sample~3 (magenta line) of $\overline{U}^1_0$ in Figure~\ref{fig:ssdgp-m32-samples} exhibits high-magnitude medium-frequency jittering around $t\in[7, 8]$ because the length scale $g\big(\overline{U}^2_1\big)$ and magnitude $g\big(\overline{U}^3_1\big)$ are relatively average and high, respectively, on $t\in[7, 8]$.

The main usefulness of this Mat\'{e}rn construction is that the resulting Mat\'{e}rn SS-DGPs can provide generic priors for modelling a wide class of continuous functions (with smoothness parameter $\gamma - 1$). These priors are flexible in the sense that they have non-stationary characteristics which can be learnt from data. In Chapter~\ref{chap:apps}, we will show some real applications of Mat\'{e}rn SS-DGPs. 

Apart from the Mat\'{e}rn construction, it is also possible to build SS-DGPs by formulating SDE coefficients in some other meaningful ways. For example, \citet{Solin2014} construct SDEs that represent quasi-periodic oscillators, and~\citet{Syama2018} parametrise SDEs with neural networks for time series forecasting.

\begin{figure}[t!]
	\centering
	\includegraphics[width=.9\linewidth]{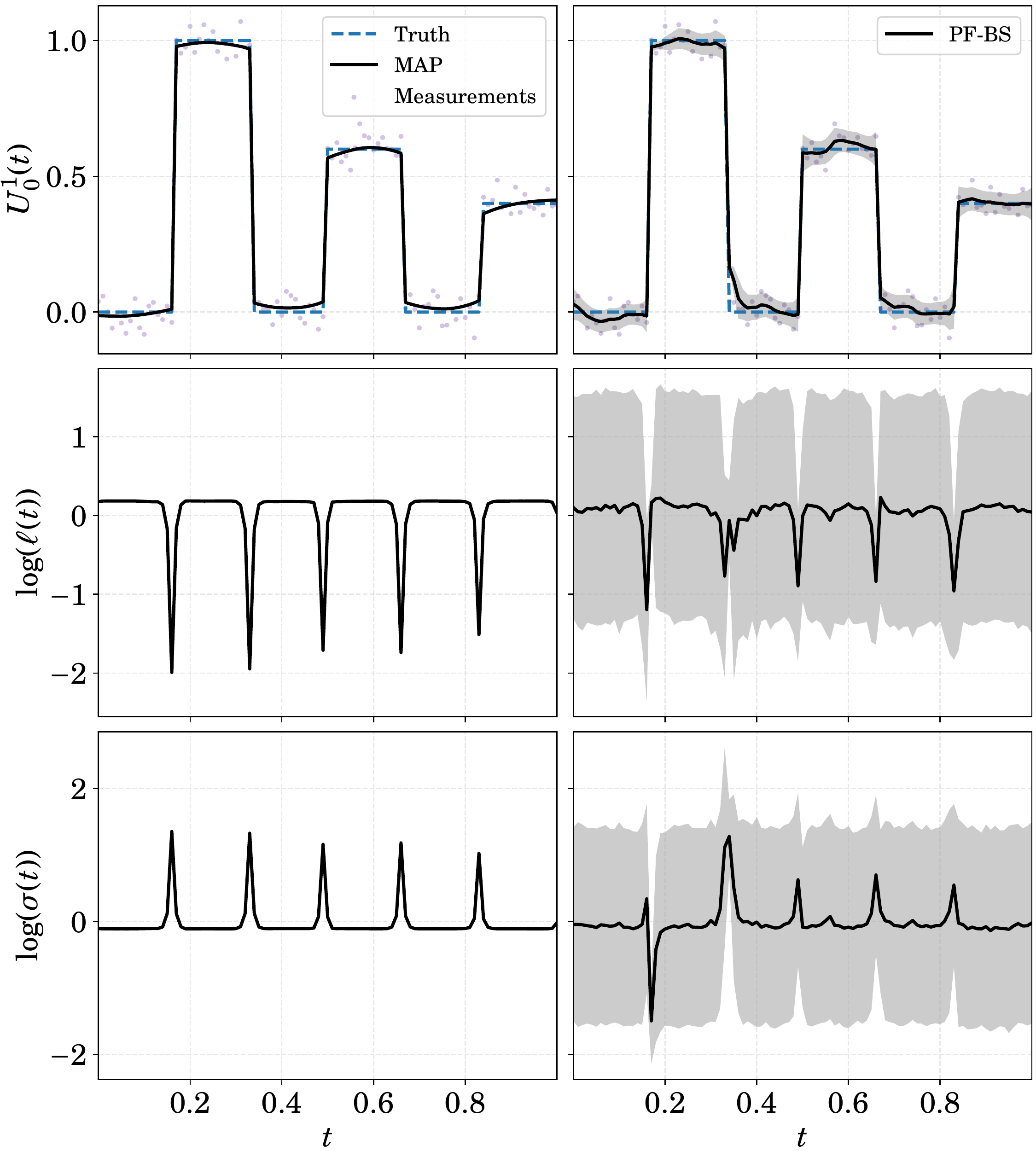}
	\caption{Mat\'{e}rn SS-DGP regression on a rectangular signal. The first column corresponds to the MAP estimate of the SDE state, while the second column corresponds to a full posterior estimate using PF-BS (particle filter and backward simulation smoother). }
	\label{fig:ssdgp-reg-rect}
\end{figure}
\begin{figure}[t!]
	\centering
	\includegraphics[width=.99\linewidth]{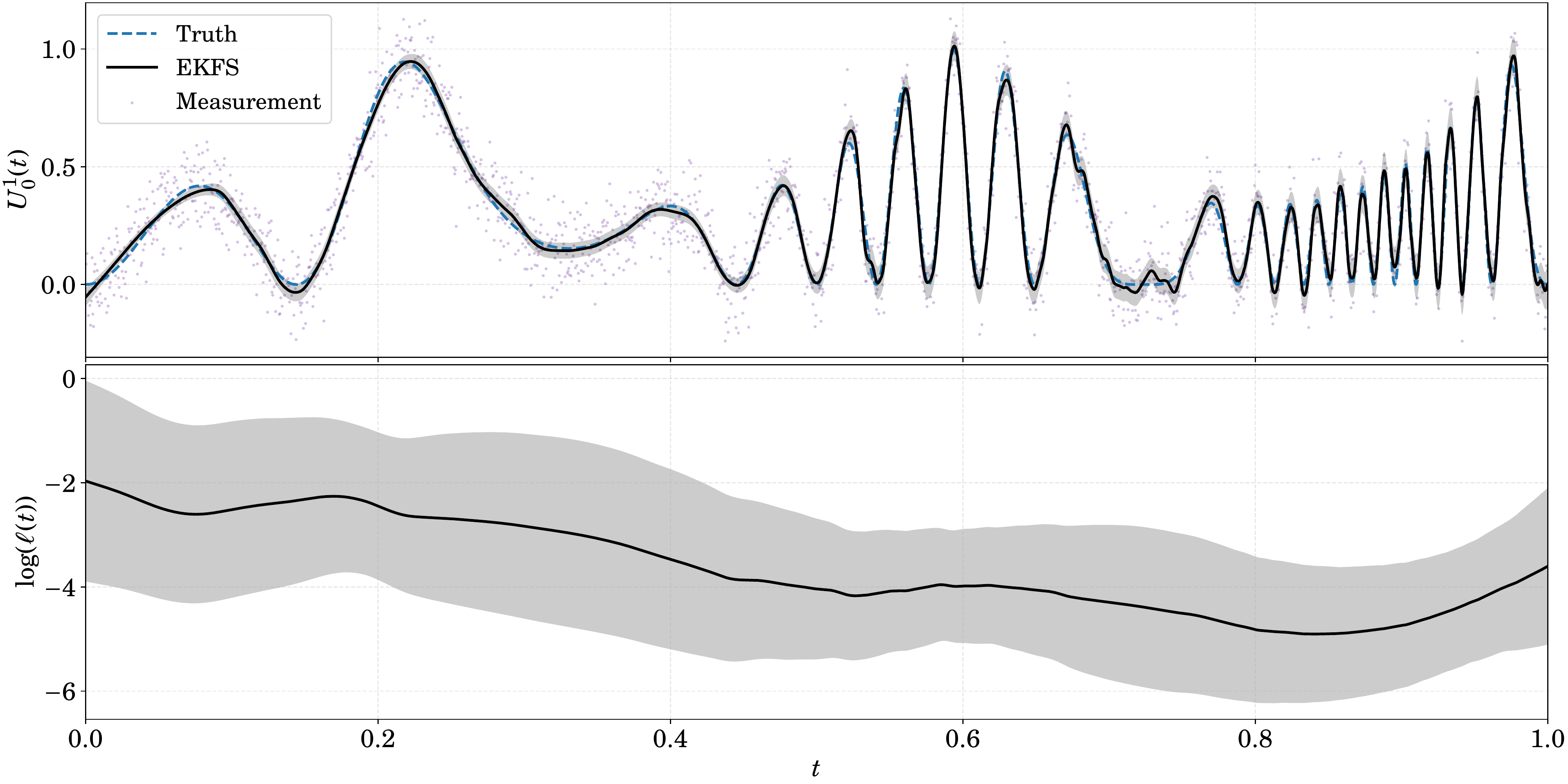}
	\caption{Mat\'{e}rn SS-DGP regression on a composite sinusoidal signal by using EKFS (extended Kalman filter and smoother). }
	\label{fig:ssdgp-reg-sine}
\end{figure}

\section{SS-DGP Regression}
\label{sec:ssdgp-reg}
In this section, we show how to solve SS-DGP regression problems for discrete measurement data. Since SS-DGPs are characterised by SDEs, we view these problems as continuous-discrete Bayesian smoothing problems (see, Section~\ref{sec:cd-smoothing}).

Let $V \colon \T \to \R^{\sum^L_{i=1} d_i}$ be an SS-DGP defined as per Equation~\eqref{equ:ss-dgps-sde}. Suppose that we measure $V$ at $t_1, t_2,\ldots, t_T \in \T$, by a (non-linear) function $h\colon \R^{\sum^L_{i=1} d_i} \to \R^{d_y}$ and additive Gaussian noises $\xi_k\sim\mathrm{N}(0, \Xi_k)$ for $k=1,2,\ldots, T$. We consider the SS-DGP regression problem in its continuous-discrete state-space form
\begin{equation}
	\begin{split}
		\diff V(t) &= a(V(t)) \diff t + b(V(t)) \diff W(t), \quad V(t_0) = V_0,\\
		Y_k &= h(V_k) + \xi_k,
		\label{equ:ssdgp-regression}
	\end{split}
\end{equation}
where we abbreviate $V_k \coloneqq V(t_k)$. Suppose we have a set of measurements $y_{1:T} = \lbrace y_1, y_2, \ldots, y_T\rbrace $, we want to learn the (smoothing) posterior density $p_{V_k \cond Y_{1:T}}(v_k \cond y_{1:T})$ for $k=1,2,\ldots, T$, or more generally $p_{V(t) \cond Y_{1:T}}(v,t \cond y_{1:T})$, for any $t\in\T$. We can then use the methods presented in Section~\ref{sec:cd-smoothing} to solve the regression/continuous-discrete smoothing problem above. 

Thanks to the Markov property of the SS-DGP prior, we can solve this regression problem in linear computational time with respect to $T$ by leveraging Bayesian filtering and smoothing methods. This is in contrast with batch DGPs, where one often needs to solve matrix inversions of dimension $T\times T$.

In Figures~\ref{fig:ssdgp-reg-rect} and~\ref{fig:ssdgp-reg-sine}, we plot some SS-DGP regression examples taken from~\citet{Zhao2020SSDGP}. Compared to the GP regression shown in Figure~\ref{fig:gp-fail}, we can see the advantages of SS-DGPs for fitting irregular data. Also, the estimated length scale and magnitude parameters can explain well the changes of regime in the data generating process. 

It would be also possible to extend SS-DGP regression to classification by modifying the measurement model in Equation~\eqref{equ:ssdgp-regression} accordingly~\citep{Neal1998, Carl2006GPML, Garcia2019}. For example, we can assume that the measurement follows a categorical distribution, with parameters determined by the SS-DGP states~\citep{Carl2006GPML}.

\section{Identifiability analysis of Gaussian approximated SS-DGP regression}
\label{sec:identi-problem}
Gaussian filters and smoothers (GFSs, see, Section~\ref{sec:gaussian-filter-smoother}) are widely used classes of Bayesian filters and smoothers. Moreover, \citet{Zhao2020SSDGP} show that GFSs are particularly efficient for solving SS-DGP regression problems. However, for a certain class of SS-DGPs, GFSs cannot identify (i.e., estimate the posterior density of) their state components as $t\to\infty$. More specifically, in this section, we show how -- under some weak assumptions on the SS-DGP regression model coefficients -- the posterior (cross-)covariance estimates of the regression problem solutions at the measurements times $t_k$ collapse to $0$ as $k\to\infty$.

\begin{figure}[t!]
	\centering
	\includegraphics[width=.7\linewidth]{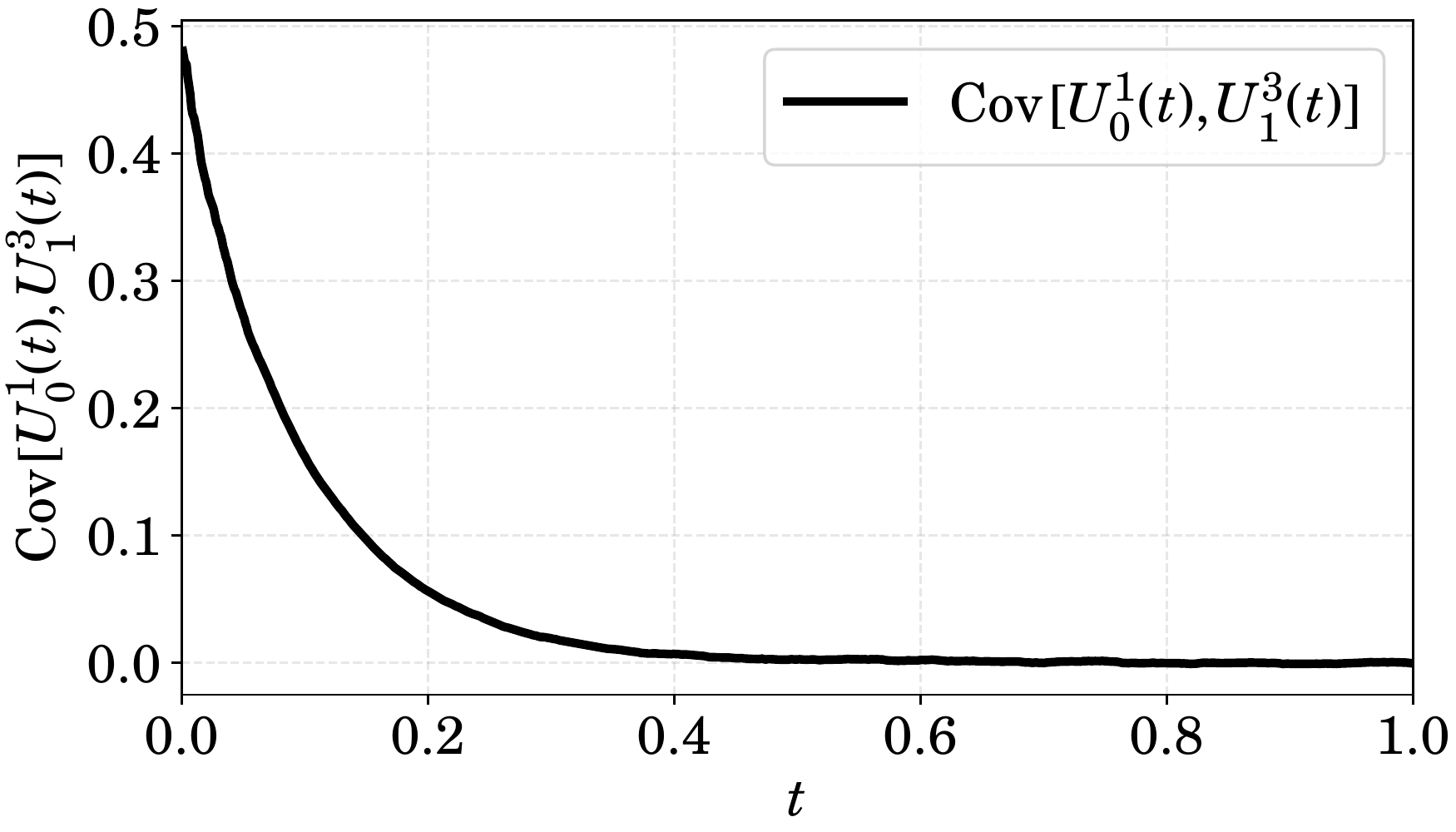}
	\caption{Evolution of $\covbig{U^1_0(t), U^3_1(t)}$ in Example~\ref{example:ssdgp-m12}, estimated by 20,000 independent Monte Carlo runs. Parameters are $\ell_2=\ell_3=\sigma_2=\sigma_3=0.1$, and the initial $\covbig{U^1_0(0), U^3_1(0)}=0.5$. We observe that the covariance (numerically) converges to zero monotonically as $t$ grows.}
	\label{fig:ssdgp-vanishing-cov}
\end{figure}

To explain the problem in short, let us suppose that we have an SS-DGP regression model with the SDE given by Example~\ref{example:ssdgp-m12}, and that we measure the first GP element $U^1_0$ with additive Gaussian noises. Further suppose that we apply GFSs (see, Algorithm~\ref{alg:gfs}) to solve the regression problem. It turns out that this SDE has a vanishing $\covbig{U^1_0(t), U^3_1(t)} \to 0$ as $t\to\infty$ (see, Figure~\ref{fig:ssdgp-vanishing-cov} for a numerical illustration), and that the GFS estimated posterior $\covbig{U^1_0(t_k), U^3_1(t_k) \cond y_{1:k}}$ vanishes to zero as $k\to \infty$ too. Consequently, the Kalman gain for the component $U^3_1$ converges to zero as $k\to\infty$. This means that the posterior distribution of $U^3_1$ estimated by GFSs will use no information from measurements as $k\to\infty$.

In order to formulate the problem, we limit ourselves to a class of SS-DGP regression models for which the dispersion term of the observed GP element is parametrised by another GP element. Formally, we consider $U^1_0\colon \T \to \R$ and $U^2_1 \colon \T \to \R$ that are the solutions of the pair of SDEs
\begin{equation}
	\begin{split}
		\diff U^1_0(t) &= A^1( \psi(t) ) \, U^1_0(t) \diff t + B^1\big(\psi(t), U^2_1(t)\big) \diff W^1(t),\\
		\diff U^2_1(t) &= A^2(\varphi(t)) \, U^2_1(t) \diff t + B^2(\varphi(t)) \diff W^2(t),
		\label{equ:ident-sde}
	\end{split}
\end{equation}
on a filtered probability space $(\Omega, \FF, \FF_t, \PP )$, where the initial conditions, the processes $\psi \colon \T \to \R$ and $\varphi \colon \T \to \R$, and the Wiener processes $W^1 \colon \T \to \R$ and $W^2 \colon \T \to \R$ are mutually independent.
\begin{remark}
	Note that the index $2$ on $U^2_1$ is arbitrary as the definition of a DGP is invariant of reindexing of its components (see, Definition~\ref{def:dgp}).
\end{remark}

\begin{remark}
	The SDEs given by Equation~\eqref{equ:ident-sde} represent a class of SS-DGPs for which an inner GP element $U^2_1$ parametrises the dispersion term of the measured GP element $U^1_0$. Since the parents of $U^1_0$ and $U^2_1$ are not necessarily Gaussian (but are instead conditional Gaussian), we generically name their parents $\psi$ and $\varphi$ which can be any well-defined processes. As an example, in the left figure of Figure~\ref{fig:dgp-examples-graph}, one can imagine $\psi$ as the representation of $U^3_1$, $U^6_3$, and $U^7_3$, while $\varphi$ as the representation of $U^4_2$ and $U^5_2$.
\end{remark}

Let the random variables
\begin{equation}
	Y_k = U^1_0(t_k) + \xi_k, \quad \xi_k \sim \mathrm{N}(0, \Xi_k),
	\label{equ:ident-y}
\end{equation}
for $k=1,2,\ldots$ stand for the measurements at time $t_1, t_2, \ldots$, and assume that $\inf_k \lbrace t_k - t_{k-1} \colon k=1,2,\ldots\rbrace > 0$.

We use the following assumptions. 

\begin{assumption}
	\label{assump:ssdgp-ident-solution}
	The coefficients $A^1 \colon \R \to \R_{<0}$, $B^1 \colon \R \times \R \to \R$, $A^2 \colon \R \to \R_{<0}$, $B^2\colon \R \to \R$, and the initial conditions are chosen regular enough so that the weak uniqueness (see, Definition~\ref{def:weakly-unique}) holds for the SDE in Equation~\eqref{equ:ident-sde} .
\end{assumption}

\begin{assumption}
	\label{assump:ssdgp-ident-init}
	The components $U^1_0$ and $U^2_1$ at the initial time $t_0$ satisfy
	$\expecbig{\absbig{U^1_0(t_0)}} < \infty$, $\expecbig{ \absbig{U^2_1(t_0)}} < \infty$, and $\expecbig{ \absbig{U^1_0(t_0) \, U^i_1(t_0)}} < \infty$.
\end{assumption}

\begin{assumption}
	\label{assump:ssdgp-ident-neg}
	There exists constants $c_{A^1}<0$ and $c_{A^2}<0$ such that $\PP$-almost surely the processes $(A^1 \circ \psi)(t) \leq c_{A^1}$ and $(A^2 \circ \varphi)(t) \leq c_{A^2}$ for all $t\in\T$. 
\end{assumption}

\begin{assumption}
	\label{assump:ssdgp-ident-bound-A-B}
	There exist constants $c_A > 0$ and $c_B>0$ such that, for all $t \in \T$, $\expecbig{\big( A^1(\psi(t))\,U^1_0(t) \big)^2} \leq c_A$ and $\expecbig{\big( B^1\big(\psi(t), U^2_1(t)\big) \big)^2} \geq c_B$.
\end{assumption}

\begin{assumption}
	\label{assump:ssdgp-ident-bound-R}
	There exists $\Xi_{\mathrm{inf}} > 0$ such that for every $k=1,2,\ldots$, either $\Xi_k > \Xi_{\mathrm{inf}}$ or $\Xi_k = 0$.
\end{assumption}

Assumption~\ref{assump:ssdgp-ident-solution} ensures SDEs~\eqref{equ:ident-sde} be well-defined. Assumption~\ref{assump:ssdgp-ident-init} postulates absolute integrability of the initial conditions which is used in the proof of Lemma~\ref{lemma:vanishing-prior-cov}. Assumption~\ref{assump:ssdgp-ident-bound-A-B} aims to yield a positive lower bound for $\varrbig{U^1_0(t)}$ as used in Corollary~\ref{corollary:var-f-bound}. 

Assumption~\ref{assump:ssdgp-ident-neg} is the key assumption to have the prior covariance vanishing. This assumption is pragmatic because it ensures that the mean of $U^1_0$ and $U^2_1$ shrinks to zero over time. Also, if one considers $-1 \,/\, A^1$ and $-1 \, / \, A^2$ as length scales, then this assumption guarantees their positivity.

\begin{lemma}
	\label{lemma:vanishing-prior-cov}
	Under Assumptions~\ref{assump:ssdgp-ident-solution} to~\ref{assump:ssdgp-ident-neg}, 
	\begin{equation}
		\lim_{t\to\infty} \covbig{U^1_0(t), U^2_1(t)} = 0.
		\label{equ:vanish-cov-0}
	\end{equation}
\end{lemma}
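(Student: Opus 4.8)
The plan is to remove the hierarchical coupling by conditioning on the $\sigma$-algebra $\mathcal{G} \coloneqq \sigma\big(\psi(s),\varphi(s)\colon s\in\T\big)$ generated by the full paths of the two parent processes. Because $\psi$, $\varphi$, $W^1$, $W^2$ and the initial conditions are mutually independent, conditionally on $\mathcal{G}$ the coefficients $A^1(\psi(t))$ and $A^2(\varphi(t))$ become fixed (locally bounded measurable, by Assumption~\ref{assump:ssdgp-ident-solution}) functions of $t$, while $W^1$ and $W^2$ remain independent Wiener processes. The two state equations are then conditionally linear in their own state with additive, state-independent driving noise — note that $B^1\big(\psi(t),U^2_1(t)\big)$ does not depend on $U^1_0$ — so the variation-of-constants formula of Theorem~\ref{thm:linear-sde-solution} applies to each. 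Writing $\Phi_i(t)\coloneqq\exp\!\big(\int_{t_0}^t A^i(\cdot)\diff u\big)$ (scalar, hence self-commuting), I obtain $U^i(t)=\Phi_i(t)\,U^i(t_0)+\Phi_i(t)\int_{t_0}^t \Phi_i(s)^{-1}(\text{dispersion})\diff W^i(s)$.

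First I would treat the two means. Conditionally on $\mathcal{G}$, the Itô-integral part of $U^1_0$ has zero mean (its integrand is adapted and $W^1$ is independent of $\mathcal{G}$), so $\expec{U^1_0(t)\cond\mathcal{G}}=\Phi_1(t)\,\expec{U^1_0(t_0)}$, and likewise for $U^2_1$; Assumption~\ref{assump:ssdgp-ident-init} guarantees the initial expectations are finite. Taking total expectations and using Assumption~\ref{assump:ssdgp-ident-neg}, $\abs{\expec{U^1_0(t)}}\le \exp(c_{A^1}(t-t_0))\,\abs{\expec{U^1_0(t_0)}}\to 0$ and similarly $\expec{U^2_1(t)}\to 0$, so the product of the means vanishes in the limit.

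Next, for the cross second moment I would apply Itô's product formula to $U^1_0U^2_1$, use $\langle W^1,W^2\rangle\equiv 0$ (independence of $W^1$ and $W^2$) to annihilate the quadratic-covariation term, and take $\expec{\cdot\cond\mathcal{G}}$ to obtain the closed conditional ODE
\[
  \frac{\diff}{\diff t}\expec{U^1_0(t)\,U^2_1(t)\cond\mathcal{G}} = \big(A^1(\psi(t))+A^2(\varphi(t))\big)\,\expec{U^1_0(t)\,U^2_1(t)\cond\mathcal{G}},
\]
whose $\mathcal{G}$-measurable coefficient can be pulled out of the conditional expectation. Solving with initial value $\expec{U^1_0(t_0)U^2_1(t_0)\cond\mathcal{G}}=\expec{U^1_0(t_0)U^2_1(t_0)}$ (independent of $\mathcal{G}$) gives $\expec{U^1_0(t)U^2_1(t)\cond\mathcal{G}}=\Phi_1(t)\Phi_2(t)\,\expec{U^1_0(t_0)U^2_1(t_0)}$. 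By Assumption~\ref{assump:ssdgp-ident-neg} the deterministic bound $0<\Phi_1(t)\Phi_2(t)\le\exp\big((c_{A^1}+c_{A^2})(t-t_0)\big)$ forces $\expec{U^1_0(t)U^2_1(t)}\to 0$ after taking total expectation. Combining with the previous paragraph, $\cov{U^1_0(t),U^2_1(t)}=\expec{U^1_0(t)U^2_1(t)}-\expec{U^1_0(t)}\expec{U^2_1(t)}\to 0$.

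The main obstacle is the bookkeeping that makes the cross terms vanish — equivalently, justifying that the martingale contributions to the conditional ODE above have zero conditional mean. The delicate point is a mixed stochastic-integral term of the form $\expec{\big(\int\cdots\diff W^1\big)\big(\int\cdots\diff W^2\big)\cond\mathcal{G}}$, where the $W^1$-integrand depends on $U^2_1$ and hence on $W^2$, so the two factors are not independent. I would resolve this by conditioning additionally on $\sigma\big(\mathcal{G},W^2,U^2_1(t_0)\big)$, under which the $W^1$-integrand is a fixed function of $s$ while $W^1$ is still an independent Wiener process, whence that integral has zero conditional mean; the tower property then returns zero. The only other care needed is the integrability ensuring these local martingales are genuine martingales and that the conditional product moment is well defined, which follow from Assumptions~\ref{assump:ssdgp-ident-solution} and~\ref{assump:ssdgp-ident-init}; the concluding decay is then immediate and uniform in $\omega$.
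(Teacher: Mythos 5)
Your proposal is correct and follows essentially the same route as the paper: the paper likewise applies It\^{o}'s formula together with the law of total expectation (i.e., conditioning on the parent processes $\psi$ and $\varphi$) to obtain the covariance as an expectation of exponentials of $\int_{t_0}^t A^1(\psi(s)) + A^2(\varphi(s)) \diff s$, and then invokes Assumption~\ref{assump:ssdgp-ident-neg} to drive these terms to zero. Your explicit treatment of the conditional variation-of-constants representation and of the mixed It\^{o}-integral term (conditioning additionally on $W^2$ before applying the tower property) simply fills in the details behind the paper's one-line derivation of Equation~\eqref{equ:vanish-cov-eq1}.
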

\begin{proof}
	By It\^{o}'s formula and the law of total expectation, one can find that 
	\begin{align}
		\covbig{U^1_0(t), U^2_1(t)} &= \expecBig{\expecbig{U^1_0(t_0) \, U^2_1(t_0) \cond \psi(t_0), \varphi(t_0)} e^{\int^t_{t_0} A^1( \psi(s)) + A^2( \varphi(s)) \diff s}} \nonumber\\
		&\quad-\expecBig{\expecbig{U^1_0(t_0) \cond \psi(t_0)} e^{\int^t_{t_0} A^1( \psi(s)) \diff s }} \label{equ:vanish-cov-eq1}\\
		&\qquad\times\expecBig{\expecbig{U^2_1(t_0) \cond \varphi(t_0)} e^{\int^t_{t_0} A^2( \varphi(s)) \diff s }}.\nonumber
	\end{align}
	Next, knowing that $A^1 \circ \psi$ and $A^2 \circ \varphi$ are upper bounded by Assumption~\ref{assump:ssdgp-ident-neg}, we can then apply the triangle inequality and conditional Jensen's inequality~\citep[][Theorem 8.20]{AchimKlenke2014} to bound the three terms in the equation above. As an example, the first term admits
	\begin{equation}
		\begin{split} &\absBig{\expecBig{\expecbig{U^1_0(t_0) \, U^2_1(t_0) \cond \psi(t_0), \varphi(t_0)} \, e^{\int^t_{t_0} A^1( \psi(s)) + A^2(\varphi(s)) \diff s }}}\\
		&\leq \expecBig{\absbig{\expecbig{U^1_0(t_0) \, U^2_1(t_0) \cond \psi(t_0), \varphi(t_0)}} \, \absbig{e^{\int^t_{t_0} A^1( \psi(s)) + A^2(\varphi(s)) \diff s }}} \\
		&\leq \expecBig{\absbig{U^1_0(t_0) \, U^2_1(t_0) }} \, e^{(c_{A^1} + c_{A^2}) \, (t - t_0)}.
		\label{equ:ssdgp-identi-lemma-cov-ineq}
		\end{split}
	\end{equation}
	%
	For the rest two expectation terms in Equation~\eqref{equ:vanish-cov-eq1}, \textit{mutatis mutandis}. Finally, by taking limits on both side of Equation~\eqref{equ:vanish-cov-eq1} and using the bounds above, one arrives at Equation~\eqref{equ:vanish-cov-0}.
\end{proof}
\begin{remark}
	The proof above slightly deviates from the proof given in~\citet[][Lemma 1]{Zhao2020SSDGP} as the original proof uses Assumption~\ref{assump:ssdgp-ident-neg} in a different order, which yields an unnecessarily stricter bound than that of Equation~\eqref{equ:ssdgp-identi-lemma-cov-ineq}. 
\end{remark}
%
%
%
\begin{lemma}
	\label{lemma:var-f-bound}
	Under Assumption~\ref{assump:ssdgp-ident-init}, for every $\epsilon > 0$, there exists $\zeta_\epsilon > 0$ such that
	\begin{equation}
		\begin{split}
			&\varrbig{U^1_0(t)} \\
			&\geq \frac{1}{z(t)} \, \int^t_{t_0} z(s)\,\Big( \expecbig{\big( B^1\big(\psi(s), U^2_1(s)\big) \big)^2} -2\,\epsilon \sqrt{\expecbig{\big( A^1(\psi(s))\,U^1_0(s) \big)^2}}\Big) \diff s,
			\label{equ:var-bound-1}
		\end{split}
	\end{equation}
	where 
	\begin{equation}
		z(t) = \exp \Bigg\lbrace\int^t_{t_0} 2 \, \zeta_\epsilon \,\sqrt{\expecbig{\big( A^1(\psi(s))\,U^1_0(s) \big)^2}} \diff s\Bigg\rbrace. \nonumber
	\end{equation}
\end{lemma}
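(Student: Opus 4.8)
The plan is to derive a first-order linear differential inequality for the variance $V(t) \coloneqq \varrbig{U^1_0(t)}$ and then integrate it with an integrating factor. First I would apply It\^{o}'s formula (Theorem~\ref{thm:ito-formula}) to the process $U^1_0$ with the target functions $\phi(x)=x$ and $\phi(x)=x^2$, and take expectations. Assumption~\ref{assump:ssdgp-ident-init}, together with the regularity of the coefficients, guarantees the relevant moments are finite and that the It\^{o} integral is a genuine martingale, so its expectation vanishes. This yields $\frac{\diff}{\diff t}\expec{U^1_0(t)} = \expec{A^1(\psi(t))\,U^1_0(t)}$ and $\frac{\diff}{\diff t}\expec{(U^1_0(t))^2} = 2\,\expec{A^1(\psi(t))\,(U^1_0(t))^2} + \expec{(B^1(\psi(t),U^2_1(t)))^2}$. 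Subtracting $2\,\expec{U^1_0(t)}\,\frac{\diff}{\diff t}\expec{U^1_0(t)}$ and recognising the resulting drift contribution as twice a covariance gives
\[ \frac{\diff V}{\diff t}(t) = \expec{(B^1(\psi(t),U^2_1(t)))^2} + 2\,\cov{A^1(\psi(t))\,U^1_0(t),\, U^1_0(t)}. \]

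The heart of the argument is to bound the covariance term from below. Using the Cauchy--Schwarz inequality for covariances, $\absbig{\cov{A^1(\psi(t))\,U^1_0(t),\, U^1_0(t)}} \le \sqrt{\varrbig{A^1(\psi(t))\,U^1_0(t)}}\,\sqrt{V(t)} \le \sqrt{W(t)}\,\sqrt{V(t)}$, where I discard the mean to pass from the variance to the raw second moment $W(t) \coloneqq \expec{(A^1(\psi(t))\,U^1_0(t))^2}$. This produces $\frac{\diff V}{\diff t} \ge \expec{(B^1(\psi(t),U^2_1(t)))^2} - 2\sqrt{W(t)}\,\sqrt{V(t)}$. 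To linearise the troublesome $\sqrt{V}$ term, I would invoke Young's inequality in the sharp form $\sqrt{V} \le \zeta\,V + \tfrac{1}{4\zeta}$, valid for every $\zeta>0$ since it is equivalent to the perfect square $\zeta\,(\sqrt{V} - \tfrac{1}{2\zeta})^2 \ge 0$. Choosing $\zeta_\epsilon = \tfrac{1}{4\epsilon}$ then converts the error term $2\sqrt{W}\cdot\tfrac{1}{4\zeta_\epsilon}$ into exactly $2\epsilon\sqrt{W(t)}$, so that $\frac{\diff V}{\diff t} + 2\zeta_\epsilon\sqrt{W(t)}\,V \ge \expec{(B^1(\psi(t),U^2_1(t)))^2} - 2\epsilon\sqrt{W(t)}$.

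Finally, I would integrate this inequality with the integrating factor $z(t) = \exp\bigl(\int^t_{t_0} 2\zeta_\epsilon\sqrt{W(s)}\,\diff s\bigr)$, which is precisely the $z$ appearing in the statement. Multiplying through by $z(t)>0$ rewrites the left-hand side as $\frac{\diff}{\diff t}\bigl(z\,V\bigr)$, and integrating from $t_0$ to $t$ gives $z(t)\,V(t) - V(t_0) \ge \int^t_{t_0} z(s)\bigl(\expec{(B^1(\psi(s),U^2_1(s)))^2} - 2\epsilon\sqrt{W(s)}\bigr)\diff s$, where I used $z(t_0)=1$. Since $V(t_0) = \varrbig{U^1_0(t_0)} \ge 0$, dropping this term and dividing by $z(t)$ delivers Equation~\eqref{equ:var-bound-1}.

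I expect the main obstacle to be the analytic justification of the interchange of expectation and differentiation and the integrability needed to annihilate the It\^{o} integral's expectation; these rely on Assumption~\ref{assump:ssdgp-ident-init} and the coefficient regularity (possibly through a localisation argument), and on the continuity in $t$ of the expectations involved, so that the fundamental theorem of calculus makes $V$ differentiable --- exactly the device used in the proof of Theorem~\ref{thm:tme}. The algebraic core, namely spotting the covariance structure, applying Cauchy--Schwarz, and calibrating Young's inequality, is then routine.
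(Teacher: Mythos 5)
Your proposal is correct and takes essentially the same route as the paper's proof: express $\varrbig{U^1_0(t)}$ through the moment ODEs obtained from It\^{o}'s formula, bound the cross term by Cauchy--Schwarz (the paper's H\"{o}lder step), linearise $\sqrt{\varrbig{U^1_0(t)}}$ via the calibrated Young-type inequality $\sqrt{V}\leq \zeta_\epsilon \, V + \epsilon$ with $\zeta_\epsilon = 1\,/\,(4\,\epsilon)$, and close with the integrating factor $z(t)$ applied to $z(t)\,\varrbig{U^1_0(t)}$. The paper only sketches these three steps, and your details --- including the explicit choice of $\zeta_\epsilon$, the discarding of $\varrbig{U^1_0(t_0)}\geq 0$, and the caveat about justifying the martingale property and differentiability of the moments --- fill in the sketch correctly.
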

\begin{proof}
	We give the idea of the proof, for details, see,~\citet{Zhao2020SSDGP}. The first step is to express $\varrbig{U^1_0(t)}$ as the solution of an integral/differential equation. Then by using H\"{o}lder's inequality one can obtain an integral/differential inequality. Finally, by using the integrating factor method on $\big(z(t)\varrbig{U^1_0(t)}\big)$, one can recover the desired bound.
\end{proof}

\begin{remark}
	\label{remark:langenhop-var-f}
	We can also use Theorem~\ref{thm:langenhop} to obtain alternative positive lower bounds by letting $v(x) = \sqrt{x}$ or $v(x) = \epsilon + \zeta_\epsilon \, x$ in Theorem~\ref{thm:langenhop}. The resulting bounds do not involve the dispersion term $\expecbig{\big( B^1\big(\psi(s), U^2_1(s)\big) \big)^2}$ but includes the initial variance $\varrbig{U^1_0(t_0)}$ instead.
\end{remark}

\begin{corollary}
	\label{corollary:var-f-bound}
	Under Assumptions~\ref{assump:ssdgp-ident-solution} and~\ref{assump:ssdgp-ident-bound-A-B}, there exists an $\epsilon >0$ such that
	\begin{equation}
		\varrbig{U^1_0(t)} \geq \frac{(c_B - 2 \, \epsilon \, \sqrt{c_A}) \, (t - t_0)}{\exp\big(2 \, \zeta_\epsilon \, \sqrt{c_A} \, (t-t_0)\big)} > 0,
	\end{equation}
	for all $t\in\T$.
\end{corollary}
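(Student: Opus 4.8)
The plan is to treat this result as a direct specialisation of Lemma~\ref{lemma:var-f-bound}: the variational lower bound established there already has exactly the right shape, so no new machinery is needed, and all that remains is to feed in the uniform constants supplied by Assumption~\ref{assump:ssdgp-ident-bound-A-B} and to fix $\epsilon$ so that the leading coefficient stays strictly positive. First I would choose any $\epsilon$ with $0 < \epsilon < c_B \, / \, (2\,\sqrt{c_A})$, which is possible because $c_A, c_B > 0$; this choice guarantees $c_B - 2\,\epsilon\,\sqrt{c_A} > 0$. With this $\epsilon$ in hand, Lemma~\ref{lemma:var-f-bound} applies (its hypothesis, Assumption~\ref{assump:ssdgp-ident-init}, being part of the standing set-up) and furnishes the constant $\zeta_\epsilon > 0$ together with the lower bound for $\varrbig{U^1_0(t)}$ expressed through the weight $z$.

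Next I would estimate the integrand pointwise. By Assumption~\ref{assump:ssdgp-ident-bound-A-B} we have $\expecbig{\big(B^1(\psi(s), U^2_1(s))\big)^2} \geq c_B$ and $\sqrt{\expecbig{\big(A^1(\psi(s))\,U^1_0(s)\big)^2}} \leq \sqrt{c_A}$, so the integrand in Lemma~\ref{lemma:var-f-bound} is bounded below by the constant $c_B - 2\,\epsilon\,\sqrt{c_A}$. Since $z(s) > 0$, this pointwise estimate may be pulled through the integral, giving $\varrbig{U^1_0(t)} \geq (c_B - 2\,\epsilon\,\sqrt{c_A}) \, z(t)^{-1} \int^t_{t_0} z(s) \diff s$.

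The only real care needed is the direction of the two remaining estimates on $z$, which occurs both inside the integral and in the denominator. Using the same upper bound $\sqrt{\expecbig{\big(A^1\,U^1_0\big)^2}} \leq \sqrt{c_A}$ in the definition of $z$, I would bound $z(t) \leq \exp\big(2\,\zeta_\epsilon\,\sqrt{c_A}\,(t-t_0)\big)$, hence $z(t)^{-1} \geq \exp\big(-2\,\zeta_\epsilon\,\sqrt{c_A}\,(t-t_0)\big)$; simultaneously, because the exponent defining $z$ is non-negative, $z(s) \geq 1$ for every $s \geq t_0$, so $\int^t_{t_0} z(s)\diff s \geq t - t_0$. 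As these are lower bounds on positive quantities, multiplied by the positive constant $c_B - 2\,\epsilon\,\sqrt{c_A}$, combining them yields the claimed inequality, and strict positivity of the right-hand side follows from the choice of $\epsilon$.

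The argument involves no genuine obstacle; the one point to watch is that $z$ is monotone increasing, so it must be bounded \emph{above} in the denominator but \emph{below} (by $1$) inside the integral — using it in the wrong direction would collapse the bound to something useless. I would also verify at the outset that Assumption~\ref{assump:ssdgp-ident-solution} (well-posedness and weak uniqueness of~\eqref{equ:ident-sde}) is what licenses treating $\varrbig{U^1_0(t)}$ and the conditional expectations appearing above as well-defined, finite quantities for all $t \in \T$.
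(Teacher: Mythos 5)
Your proposal is correct and follows essentially the same route as the paper: the paper's proof is precisely "apply Lemma~\ref{lemma:var-f-bound} with the constants from Assumption~\ref{assump:ssdgp-ident-bound-A-B}, then choose $\epsilon < c_B \, / \, (2\,\sqrt{c_A})$ to keep the bound positive." Your write-up simply makes explicit the details the paper leaves implicit — bounding $z(t)$ \emph{above} in the denominator and $z(s) \geq 1$ \emph{below} inside the integral — and these directional estimates are exactly right.
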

\begin{proof}
	The bound follows from Lemma~\ref{lemma:var-f-bound} and Assumption~\ref{assump:ssdgp-ident-bound-A-B}. In order to make the bound positive, one then needs to choose $\epsilon < c_B \, / \, (2 \, \sqrt{c_A})$.
\end{proof}

We can now analyse the limit of the posterior covariance 
\begin{equation}
	\covbig{U^1_0(t_k), U^2_1(t_k) \cond y_{1:k}} \approx P^{1, 2}_k 
\end{equation}
as approximated by Gaussian filters as $k\to \infty$. In order to do so, in Algorithm~\ref{alg:abs-gf}, we consider an abstract general form of Gaussian filters that suppose perfect integration in the prediction step. We use the notations $\covbig{U^1_0(t_k), U^2_1(t_k)}_{z_s}$ and $\varrbig{U^1_0(t_k)}_{z_s}$ to represent the values of $\covbig{U^1_0(t_k), U^2_1(t_k)}$ and $\varrbig{U^1_0(t_k)}$, respectively, at time $t_k$ starting from any initial value $z_s$ at time $t_s<t_k \in \T$.

\begin{algorithm}[Abstract Gaussian filter for $P^{1, 2}_k$]
	\label{alg:abs-gf}
	Suppose that we have initial conditions $P^{1, 2}_0 = \covbig{U^1_0(t_0), U^2_1(t_0)}$ and $P^{1,1}_0=\varrbig{U^1_0(t_0)}$. Starting from $k=1$ the abstract Gaussian filter predicts
	\begin{equation}
		\begin{split}
			\overline{P}^{1, 2}_k &= \covbig{U^1_0(t_k), U^2_1(t_k)}_{P^{1, 2}_{k-1}},\\
			\overline{P}^{1, 1}_k &= \varrbig{U^1_0(t_k)}_{P^{1, 1}_{k-1}},
			\label{equ:abs-gfs-pred}
		\end{split}
	\end{equation}
	and updates
	\begin{equation}
		\begin{split}
			P^{1, 2}_k &= \overline{P}^{1, 2}_k - \frac{\overline{P}^{1, 1}_k \, \overline{P}^{1, 2}_k}{\overline{P}^{1, 1}_k + \Xi_k},\\
			K^{1, 2}_k &= \frac{\overline{P}^{1, 2}_k}{\overline{P}^{1, 1}_k + \Xi_k},
		\end{split}
	\end{equation}
	for $k=1,2,\ldots$
\end{algorithm}
\begin{remark}
	Algorithm~\ref{alg:abs-gf} is a skeleton of Algorithm~\ref{alg:gfs} that is only concerned with the covariance estimates $P^{1, 2}_k$ for $k=1,2,\ldots$. However, this algorithm assumes that the predictions through the SDE are done exactly as per Equation~\eqref{equ:abs-gfs-pred}, which is usually unrealistic in practice. \citet{Zhao2020SSDGP} explain how this abstraction is derived.
\end{remark}

We can finally state the main result of this section.
\begin{theorem}
	\label{thm:vanish-post-cov}
	Suppose that Assumptions~\ref{assump:ssdgp-ident-solution} to~\ref{assump:ssdgp-ident-bound-R} hold. Further assume that $\absbig{\covbig{U^1_0(t_k), U^2_1(t_k)}_{z_{k-1}}} \leq \abs{z_{k-1}}$ for all initial $z_{k-1} \in\R$ and $k=1,2,\ldots$, then Algorithm~\ref{alg:abs-gf} gives
	\begin{equation}
		\lim_{k\to\infty} P^{1, 2}_k = 0.
	\end{equation}
\end{theorem}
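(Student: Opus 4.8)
The plan is to reduce the whole statement to the vanishing of the \emph{prior} cross-covariance established in Lemma~\ref{lemma:vanishing-prior-cov}, using the update step of Algorithm~\ref{alg:abs-gf} only to guarantee that incorporating a measurement never increases the magnitude of the cross-covariance estimate. First I would rewrite the update for $P^{1,2}_k$ in the multiplicative form
\begin{equation*}
	P^{1,2}_k = \overline{P}^{1,2}_k \, \frac{\Xi_k}{\overline{P}^{1,1}_k + \Xi_k},
\end{equation*}
obtained by factoring $\overline{P}^{1,2}_k$ out of the two update terms. Corollary~\ref{corollary:var-f-bound} gives $\overline{P}^{1,1}_k = \varrbig{U^1_0(t_k)}_{P^{1,1}_{k-1}} > 0$, and together with Assumption~\ref{assump:ssdgp-ident-bound-R} (ensuring $\overline{P}^{1,1}_k + \Xi_k > 0$ even when $\Xi_k = 0$) this shows the scalar factor $\Xi_k \, / \, (\overline{P}^{1,1}_k + \Xi_k)$ is well defined and lies in $[0, 1]$. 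Hence $\absbig{P^{1,2}_k} \leq \absbig{\overline{P}^{1,2}_k}$, and in particular $P^{1,2}_k = 0$ whenever $\Xi_k = 0$.

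The key structural observation is that the one-step prediction of the cross-covariance is linear and \emph{homogeneous} in its initial value. Because the driving Wiener processes $W^1$ and $W^2$ in Equation~\eqref{equ:ident-sde} are independent, there is no cross process-noise contribution, so the cross-covariance evolution carries no additive source term and the map $z \mapsto \covbig{U^1_0(t_k), U^2_1(t_k)}_{z}$ has the form $z \mapsto \alpha_k \, z$. The hypothesis $\absbig{\covbig{U^1_0(t_k), U^2_1(t_k)}_{z}} \leq \abs{z}$ then forces $\abs{\alpha_k} \leq 1$. Writing $c_k \coloneqq \covbig{U^1_0(t_k), U^2_1(t_k)}$ for the prior cross-covariance (the trajectory of Equation~\eqref{equ:abs-gfs-pred} with no update, started from $c_0 = P^{1,2}_0$), the transition factor $\alpha_k$ over $[t_{k-1}, t_k]$ is common to the prior and the predicted-filter recursions, so $c_k = \alpha_k \, c_{k-1}$ and $\overline{P}^{1,2}_k = \alpha_k \, P^{1,2}_{k-1}$.

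I would then prove $\absbig{P^{1,2}_k} \leq \abs{c_k}$ by induction on $k$: the base case is $P^{1,2}_0 = c_0$, and the inductive step combines the bounds above as $\absbig{P^{1,2}_k} \leq \absbig{\overline{P}^{1,2}_k} = \abs{\alpha_k} \, \absbig{P^{1,2}_{k-1}} \leq \abs{\alpha_k} \, \abs{c_{k-1}} = \abs{c_k}$. Finally, since $\inf_k \lbrace t_k - t_{k-1} \rbrace > 0$ implies $t_k \to \infty$, Lemma~\ref{lemma:vanishing-prior-cov} yields $c_k \to 0$, and the sandwich $0 \leq \absbig{P^{1,2}_k} \leq \abs{c_k}$ gives $\lim_{k\to\infty} P^{1,2}_k = 0$.

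The main obstacle is the homogeneity claim for the prediction map and the identification of the common factor $\alpha_k$ shared by the prior and the filtered recursions; this is precisely what allows the non-expansive update to be dominated by the prior trajectory rather than requiring any quantitative control of the (possibly large) noise levels $\Xi_k$. Some care is also needed to treat the degenerate $\Xi_k = 0$ case consistently and to confirm that the filtered prediction inherits exactly the transition factor of the prior evolution.
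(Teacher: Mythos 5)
Your route is genuinely different from the paper's, so it is worth stating the contrast before the problem. The paper obtains the decay from the \emph{measurement updates}: using the theorem's hypothesis only to make the prediction non-expansive, it expands the recursion by induction into
\begin{equation*}
	\absbig{P^{1,2}_k} \leq \absbig{P^{1,2}_0} \prod_{j=1}^{k} D_j, \qquad D_j = \frac{\Xi_j}{\overline{P}^{1,1}_j + \Xi_j},
\end{equation*}
and then shows $\prod_j D_j \to 0$ because $\overline{P}^{1,1}_j$ is uniformly bounded below by a positive constant (Lemma~\ref{lemma:var-f-bound}, Corollary~\ref{corollary:var-f-bound}) together with Assumption~\ref{assump:ssdgp-ident-bound-R}. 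Lemma~\ref{lemma:vanishing-prior-cov} is never invoked in the paper's proof of this theorem; it only motivates the problem. You instead try to get the decay from the \emph{prior dynamics}, dominating the filtered sequence by the prior trajectory and quoting Lemma~\ref{lemma:vanishing-prior-cov}, with the update treated merely as a contraction in $[0,1]$.

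The gap is your central homogeneity claim, namely that $g_k(z) \coloneqq \covbig{U^1_0(t_k), U^2_1(t_k)}_{z}$ has the form $g_k(z) = \alpha_k \, z$ with one factor $\alpha_k$ common to the prior and filtered recursions. Nothing in the theorem grants this: the only hypothesis on $g_k$ is the non-expansiveness $\abs{g_k(z)} \leq \abs{z}$, and a non-expansive map need not be monotone in $\abs{z}$, so your induction $\absbig{P^{1,2}_k} \leq \abs{c_k}$ breaks at the first step after an update separates the two trajectories (nothing forbids $\abs{g_{k}(P^{1,2}_{k-1})} > \abs{g_{k}(c_{k-1})}$ when $\absbig{P^{1,2}_{k-1}} < \abs{c_{k-1}}$). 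Nor can linearity be derived from independence of $W^1$ and $W^2$: conditionally on the paths of $\psi$ and $\varphi$ the cross-covariance indeed propagates multiplicatively (your intuition is exactly right, and your proof would be valid, when $A^1$ and $A^2$ are deterministic), but the algorithm propagates the \emph{unconditional} covariance, and Equation~\eqref{equ:vanish-cov-eq1} shows this is a functional of the whole joint law --- initial cross-moments and means weighted by the random exponentials $e^{\int A^1(\psi)}$, $e^{\int A^2(\varphi)}$, which are correlated with those moments --- not a fixed multiple of the scalar initial cross-covariance; taking expectations destroys the multiplicative structure, so no common $\alpha_k$ exists in general. A telltale symptom is that your argument makes no essential use of Assumption~\ref{assump:ssdgp-ident-bound-R} or of the uniform variance lower bound: under your homogeneity assumption the conclusion would hold for arbitrarily large (uninformative) noise, which is precisely the regime those hypotheses, and the paper's product argument, are designed to handle. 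To repair your reduction you would have to add linearity (or at least monotonicity of $\abs{g_k}$ in $\abs{z}$) as an explicit assumption; within the theorem as stated, the safe path is the paper's recursion $\absbig{P^{1,2}_k} \leq D_k \absbig{P^{1,2}_{k-1}}$.
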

\begin{proof}
	The basic idea is to expand the recursion in Algorithm~\ref{alg:abs-gf} for $k=1,2,\ldots$, and by mathematical induction one can prove that
	\begin{equation}
		\absbig{P^{1, 2}_k} \leq \absbig{P^{1, 2}_0} \, \prod^k_{j=1} D_j, \nonumber
	\end{equation}
	where
	\begin{equation}
		D_j = \frac{R_j}{\overline{P}^{1, 1}_j + R_j}.\nonumber
	\end{equation}
	Hence, the limit of $P^{1, 2}_k$ depends on the limit of $\prod^k_{j=1} D_j$. Although $D_j$ is always less than $1$, the infinite product $\prod^\infty_{j=1} D_j$ does not necessarily converge to zero (e.g., Vi\`{e}te's formula). However, Lemma~\ref{lemma:var-f-bound} and Assumption~\ref{assump:ssdgp-ident-bound-R} ensure that $\overline{P}^{1, 1}_j$ is lower bounded uniformly by some positive $P_{\inf}$, so that $D_j < \frac{R_j}{R_j + P_{\inf}}$, for all $j > 0$. Assumption~\ref{assump:ssdgp-ident-bound-R} then allows to conclude. For details, see~\citet{Zhao2020SSDGP}.
\end{proof}

The consequence of Theorem~\ref{thm:vanish-post-cov} is that the Kalman gain $K^{1, 2}_k$ in Algorithm~\ref{alg:abs-gf} will also converge to zero as $k\to\infty$. It means that the Kalman update for the state $U^2_1(t_k)$ will not use information from measurements in the limit $k\to\infty$.

\section{$L^1$-regularised batch and state-space DGP regression}
\label{sec:l1-r-dgp}
Constrained/regularised regression, for example, the sparsity-inducing least absolute shrinkage and selection operator~\citep[LASSO,][]{Tibshirani1996} method is an important topic in statistics, machine learning, and inverse problems~\citep{Kaipio2005, Hastie2015}. Heuristically, sparsity in DGPs may also yield several benefits, in particular for modelling discontinuous signals for which the length scale around the discontinuities should jump from a high value to almost zero (see, e.g., Figure~\ref{fig:ssdgp-reg-rect}). In this section, we show how $L^1$-regularisation can be interpreted and implemented in the context of batch and state-space DGP regressions.

\subsection*{Regularised batch DGP regression}

For the sake of exposition and to keep notations simple, we will restrict ourselves to a shallow DGP, with only one observed GP element depending on two latent GPs:
\begin{equation}
	\begin{split}
		U^1_0(t) \condbig \mathcal{U}^1 &\sim \mathrm{GP}\big(0, C^1(t, t'; \mathcal{U}^1)\big), \\
		U^2_1(t) &\sim \mathrm{GP}\big(0, C^2(t, t')\big),\\
		U^3_1(t) &\sim \mathrm{GP}\big(0, C^3(t, t')\big),\\
		Y_k &= U^1_0(t_k) + \xi_k, \quad \xi_k \sim \mathrm{N}(0, \Xi_k),
		\label{equ:r-dgp-reg-model}
	\end{split}
\end{equation}
where $\mathcal{U}^1 = \big\lbrace U^2_1, U^3_1 \big\rbrace$, and we let the DGP $V(t) = \begin{bmatrix} U^1_0(t) & U^2_1(t) & U^3_1(t)
\end{bmatrix}^\trans$. Suppose that at times $\lbrace t_k \in\T \colon k=1,2,\ldots, T \rbrace$ we have measurements $y_{1:T} = \lbrace y_k\colon k=1,2,\ldots, T \rbrace$, the DGP regression aims to learn the posterior density
\begin{equation}
	\begin{split}
		&p_{V_{1:T} \cond Y_{1:T}}(v_{1:T} \cond y_{1:T}) \\
		&\propto p_{Y_{1:T} \cond V_{1:T}}(y_{1:T} \cond v_{1:T}) \, p_{V_{1:T}}(v_{1:T}) \\
		&= p_{Y_{1:T} \cond U^1_{0, 1:T}}\big(y_{1:T} \cond u^1_{0,1:T}\big) \, p_{U^1_{0, 1:T} \cond U^2_{1, 1:T}, U^3_{1, 1:T}}\big( u^1_{0, 1:T} \cond u^2_{1, 1:T}, u^3_{1, 1:T} \big)\\
		&\quad\times p_{U^2_{1, 1:T}}\big( u^2_{1, 1:T} \big) \, p_{U^3_{1, 1:T}}\big( u^3_{1, 1:T} \big),
		\label{equ:r-dgp-post}
	\end{split}
\end{equation}
where we define $V_{1:T} \coloneqq \lbrace V(t_k)\colon k=1,2,\ldots, T\rbrace$, $U^1_{0, 1:T} \coloneqq \big\lbrace U^1_{0}(t_k)\colon k=1,2,\allowbreak\ldots, T \big\rbrace$, and similarly for $U^2_{1,1:T}$ and $U^3_{1,1:T}$. Now let us introduce three regularisation-inducing matrices $\Phi^1\in \R^{T \times T}$, $\Phi^2\in \R^{T\times T}$, and $\Phi^3\in \R^{T\times T}$. We are interested in learning the posterior density~\eqref{equ:r-dgp-post} under an $L^1$-regularisation of the GP elements, that is by introducing the penalty terms
\begin{equation}
	\normbig{ \Phi^1 \, u^1_{0, 1:T}}_1, \quad \normbig{ \Phi^2 \, u^2_{1, 1:T}}_1, \quad \text{and} \quad \normbig{ \Phi^3 \, u^3_{1, 1:T}}_1.
\end{equation}
In other words, we encourage the $\Phi$-transformed variables to be sparse in the $L^1$ norm sense. For example, if we let $\Phi$ to be the identity matrix (respectively, a finite difference matrix), then the resulting penalty will correspond to increasing elementwise sparsity (respectively, reducing the total variation of the function).

We consider a maximum a posterior (MAP) approach for solving Equation~\eqref{equ:r-dgp-post}, and we express the regularised DGP regression problem as a penalised optimisation problem. Namely, by taking the negative log of Equation~\eqref{equ:r-dgp-post}, we get an objective function
\begin{align}
	\mathcal{L}^{\mathrm{B}} &\coloneqq \mathcal{L}^{\mathrm{B}}(v_{1:T}) \coloneqq \mathcal{L}^{\mathrm{B}}(u^1_{0, 1:T}, u^2_{1, 1:T}, u^3_{1, 1:T}) \nonumber\\
	&= \normbig{u^1_{0, 1:T} - y_{1:T}}^2_{\Xi_{1:T}} + \normbig{u^1_{0, 1:T}}^2_{C^1_{1:T}} + \log \,\det\big(2 \, \pi \, C^1_{1:T}\big) \label{equ:r-dgp-batch-obj}\\
	&\quad+ \normbig{u^2_{1, 1:T}}^2_{C^2_{1:T}} + \log \,\det\big(2 \, \pi \, C^2_{1:T}\big) + \normbig{u^3_{1, 1:T}}^2_{C^3_{1:T}} + \log \,\det\big(2 \, \pi \, C^3_{1:T}\big)\nonumber,
\end{align}
where we omit the factor $1 \, / \, 2$ and let $v_{1:T} \coloneqq \big\lbrace u^1_{0, 1:T}, u^2_{1, 1:T}, u^3_{1, 1:T} \big\rbrace$ for simplicity. In the above Equation~\eqref{equ:r-dgp-batch-obj}, notation $\norm{x}_G = (x \, G^{-1} \, x)^{1 \, / \, 2}$ stands for the $G$-weighted Euclidean norm given a non-singular matrix $G$. We write $C^1_{1:T} \in \R^{T \times T}$ for the matrix obtained by evaluating the covariance function $C^1$ on the Cartesian grid $(t_1, \ldots, t_T) \times (t_1, \ldots, t_T)$, and similarly for $C^2_{1:T}$ and $C^3_{1:T}$. The noise covariance $\Xi_{1:T}$ is the diagonal matrix of $\Xi_1,\ldots, \Xi_T$. We now introduce the regularisation term 
\begin{equation}
	\begin{split}
		\mathcal{L}^{\mathrm{B-REG}} &\coloneqq \mathcal{L}^{\mathrm{B-REG}}(v_{1:T})\\
		&= \lambda_1 \, \normbig{ \Phi^1 \, u^1_{0, 1:T}}_1 + \lambda_2 \, \normbig{ \Phi^2 \, u^2_{1, 1:T}}_1 + \lambda_3 \, \normbig{ \Phi^3 \, u^3_{1, 1:T}}_1,
		\label{equ:b-reg}
	\end{split}
\end{equation}
where the positive parameters $\lambda_1$, $\lambda_2$, and $\lambda_3$ stand for the strength of regularisation. The regularised batch DGP (R-DGP) regression aims at solving
\begin{equation}
	v_{1:T} = \argmin_{v_{1:T}} \mathcal{L}^{\mathrm{B}} + \mathcal{L}^{\mathrm{B-REG}}.
	\label{equ:r-dgp-argmin}
\end{equation}
\begin{remark}
	It is important to recall that the covariance matrix $C^1_{1:T}$ in Equation~\eqref{equ:r-dgp-batch-obj} depends on the objective variables $u^2_{1, 1:T}$ and $u^3_{1, 1:T}$, hence, $\mathcal{L}^\mathrm{B}$ can be non-convex. 
\end{remark}

\subsection*{ADMM solution of regularised batch DGP regression}
There are many approaches to solving penalised optimisation problems of the form given in Equation~\eqref{equ:r-dgp-argmin}~\citep{Ruszczynski2006, Wright2006}, however $\mathcal{L}^{\mathrm{B}}$ is usually non-convex, and $\mathcal{L}^\mathrm{B-REG}$ is not differentiable-everywhere. Heuristically, we can interpret the gradients of $\mathcal{L}^\mathrm{B-REG}$ at non-differential points by subgradients and then use gradient descent (GD) methods to find the minima. These GD-based approaches, however, can suffer from a slow convergence rate~\citep{Hastie2015} which limits their applicability in practice.

\citet{Zhao2021RSSGP} propose using the alternating direction method of multipliers \citep[ADMM, ][]{Boyd2004Convex} for solving the optimisation problem in Equation~\eqref{equ:r-dgp-argmin}. The idea is to split the complicated optimisation problem in Equation~\eqref{equ:r-dgp-argmin} into simpler subproblems. 
To do so, let us introduce auxiliary variables $\theta_{1:T} \coloneqq \big\lbrace \theta^1_{1:T}, \theta^2_{1:T}, \theta^3_{1:T}\big\rbrace$, where $\theta^1_{1:T}\in\R^T$, $\theta^2_{1:T}\in\R^T$, and $\theta^3_{1:T}\in\R^T$. We rewrite Equation~\eqref{equ:r-dgp-argmin} as an equality constrained problem
\begin{align}
	v_{1:T} &= \argmin_{v_{1:T}} \normbig{u^1_{0, 1:T} - y_{1:T}}^2_{\Xi_{1:T}} + \normbig{u^1_{0, 1:T}}^2_{C^1_{1:T}} + \log \,\det\big(2 \, \pi \, C^1_{1:T}\big) \nonumber\\
	&\quad+ \normbig{u^2_{1, 1:T}}^2_{C^2_{1:T}} + \log \,\det\big(2 \, \pi \, C^2_{1:T}\big) + \normbig{u^3_{1, 1:T}}^2_{C^3_{1:T}} + \log \,\det\big(2 \, \pi \, C^3_{1:T}\big) \nonumber\\
	&\quad+ \lambda_1 \, \normbig{\theta^1_{1:T}}_1 + \lambda_2 \, \normbig{\theta^2_{1:T}}_1 + \lambda_3 \, \normbig{\theta^3_{1:T}}_1 \nonumber\\
	&\text{subject to } \label{equ:r-dgp-equal-constrained}\\
	&\quad\theta^1_{1:T}=\Phi^1 \, u^1_{0, 1:T}, \quad \theta^2_{1:T}=\Phi^2 \, u^2_{1, 1:T}, \quad \theta^3_{1:T}=\Phi^3 \, u^3_{1, 1:T}.\nonumber
\end{align}
Then let us introduce multiplier variables $\eta_{1:T} \coloneqq \big\lbrace \eta^1_{1:T}, \eta^2_{1:T}, \eta^3_{1:T}\big\rbrace$, where $\eta^1_{1:T}\in\R^T$, $\eta^2_{1:T}\in\R^T$, and $\eta^3_{1:T}\in\R^T$. We can construct the augmented Lagrangian function $\mathcal{L}(v_{1:T}, \theta_{1:T}, \eta_{1:T})$ associated with problem~\eqref{equ:r-dgp-equal-constrained} as
\begin{align}
	&\mathcal{L}^\mathrm{A}(v_{1:T}, \theta_{1:T}, \eta_{1:T}) \nonumber\\
	&= \normbig{u^1_{0, 1:T} - y_{1:T}}^2_{\Xi_{1:T}} + \normbig{u^1_{0, 1:T}}^2_{C^1_{1:T}} + \log \,\det\big(2 \, \pi \, C^1_{1:T}\big) \nonumber\\
	&\quad+ \normbig{u^2_{1, 1:T}}^2_{C^2_{1:T}} + \log \,\det\big(2 \, \pi \, C^2_{1:T}\big) + \normbig{u^3_{1, 1:T}}^2_{C^3_{1:T}} + \log \,\det\big(2 \, \pi \, C^3_{1:T}\big) \nonumber\\
	&\quad+ \lambda_1  \, \normbig{\theta^1_{1:T}}_1 + \big(\eta^1_{1:T}\big)^\trans \, \big(\Phi^1 \, u^1_{0,1:T} - \theta^1_{1:T}\big) \label{equ:r-dgp-lag}\\
	&\quad+ \lambda_2  \, \normbig{\theta^2_{1:T}}_1 + \big(\eta^2_{1:T}\big)^\trans \, \big(\Phi^2 \, u^2_{1,1:T} - \theta^2_{1:T}\big) \nonumber\\
	&\quad+ \lambda_3  \, \normbig{\theta^3_{1:T}}_1 + \big(\eta^3_{1:T}\big)^\trans \, \big(\Phi^3 \, u^3_{1,1:T} - \theta^3_{1:T}\big) \nonumber\\
	&\quad+ \frac{\rho_1}{2} \,  \normbig{\Phi^1 \, u^1_{0,1:T} - \theta^1_{1:T}}^2_2 + \frac{\rho_2}{2} \,  \normbig{\Phi^2 \, u^2_{1,1:T} - \theta^2_{1:T}}^2_2 + \frac{\rho_3}{2} \,  \normbig{\Phi^3 \, u^3_{1,1:T} - \theta^3_{1:T}}^2_2,\nonumber
\end{align}
where $\rho_1>0$, $\rho_2>0$, and $\rho_3>0$ are penalty parameters. The ADMM method works by generating a sequence of estimates $\big\lbrace v^{(i)}_{1:T}, \theta^{(i)}_{1:T}, \eta^{(i)}_{1:T} \colon i=0,1,\ldots\big\rbrace$ to iteratively approximate the optimal $\lbrace v_{1:T}, \theta_{1:T}, \eta_{1:T}\rbrace$ of Equation~\eqref{equ:r-dgp-lag}, as shown in the following algorithm.

\begin{algorithm}[ADMM for R-DGP regression]
	\label{alg:admm}
	Let $\big\lbrace v^{(0)}_{1:T}, \theta^{(0)}_{1:T}, \eta^{(0)}_{1:T}\big\rbrace$ be a given initial estimate. Then for $i=0,1,\ldots$, the ADMM algorithm updates the estimate by solving the following subproblems iteratively
	\begin{align}
		v^{(i+1)}_{1:T} &= \argmin_{v_{1:T}} \normbig{u^1_{0, 1:T} - y_{1:T}}^2_{\Xi_{1:T}} + \normbig{u^1_{0, 1:T}}^2_{C^1_{1:T}} + \log \,\det\big(2 \, \pi \, C^1_{1:T}\big) \nonumber\\
		&\quad+ \normbig{u^2_{1, 1:T}}^2_{C^2_{1:T}} + \log \,\det\big(2 \, \pi \, C^2_{1:T}\big) + \normbig{u^3_{1, 1:T}}^2_{C^3_{1:T}} + \log \,\det\big(2 \, \pi \, C^3_{1:T}\big) \nonumber\\
		&\quad+ \big(\eta^{1, (i)}_{1:T}\big)^\trans \, \big(\Phi^1 \, u^1_{0,1:T} - \theta^{1, (i)}_{1:T}\big) + \frac{\rho_1}{2}  \, \normbig{\Phi^1 \, u^1_{0,1:T} - \theta^{1, (i)}_{1:T}}^2_2 \label{equ:r-dgp-sub-f}\\
		&\quad+ \big(\eta^{2, (i)}_{1:T}\big)^\trans \, \big(\Phi^2 \, u^2_{1,1:T} - \theta^{2, (i)}_{1:T}\big) + \frac{\rho_2}{2}  \, \normbig{\Phi^2 \, u^2_{1,1:T} - \theta^{2, (i)}_{1:T}}^2_2 \nonumber\\
		&\quad+ \big(\eta^{3, (i)}_{1:T}\big)^\trans \, \big(\Phi^3 \, u^3_{1,1:T} - \theta^{3, (i)}_{1:T}\big) + \frac{\rho_3}{2} \,  \normbig{\Phi^3 \, u^3_{1,1:T} - \theta^{3, (i)}_{1:T}}^2_2, \nonumber
	\end{align}
	\begin{equation}
		\begin{split}
			\theta^{(i+1)}_{1:T} &= \argmin_{\theta_{1:T}} \lambda_1 \normbig{\theta^1_{1:T}}_1 + \frac{\rho_1}{2}  \, \normBig{\Phi^1 \, u^{1, (i+1)}_{0,1:T} - \theta^1_{1:T} + \frac{1}{\rho_1}\,\eta^{1, (i)}}_2^2\\
			&\quad+ \lambda_2 \normbig{\theta^2_{1:T}}_1 + \frac{\rho_2}{2} \,  \normBig{\Phi^2 \, u^{2, (i+1)}_{1,1:T} - \theta^2_{1:T} + \frac{1}{\rho_2}\,\eta^{2, (i)}}_2^2\\
			&\quad+ \lambda_3 \normbig{\theta^3_{1:T}}_1 + \frac{\rho_3}{2} \,  \normBig{\Phi^3 \, u^{3, (i+1)}_{1,1:T} - \theta^3_{1:T} + \frac{1}{\rho_3}\,\eta^{3, (i)}}_2^2,
			\label{equ:r-dgp-sub-theta}
		\end{split}
	\end{equation}
	and
	\begin{equation}
		\begin{split}
			\eta^{1, (i+1)}_{1:T} &= \eta^{1, (i)}_{1:T} + \rho_1\, \big( \Phi^1 \, u^{1, (i+1)}_{0,1:T} - \theta^{1, (i+1)}_{1:T} \big), \\
			\eta^{2, (i+1)}_{1:T} &= \eta^{2, (i)}_{1:T} + \rho_2\, \big( \Phi^2 \, u^{2, (i+1)}_{1,1:T} - \theta^{2, (i+1)}_{1:T} \big), \\
			\eta^{3, (i+1)}_{1:T} &= \eta^{3, (i)}_{1:T} + \rho_3\, \big( \Phi^3 \, u^{3, (i+1)}_{1,1:T} - \theta^{3, (i+1)}_{1:T} \big). 
		\end{split}
	\end{equation}
\end{algorithm}

The subproblem in Equation~\eqref{equ:r-dgp-sub-f} is a standard unconstrained optimisation problem which can be solved numerically by a vast number of non-linear optimisers. For a review of such optimisers, we refer the reader to \citet{Wright2006}. As for the subproblem in Equation~\eqref{equ:r-dgp-sub-theta}, one can use the soft thresholding scheme in order to obtain a closed-form solution~\citep{Hastie2015, Boyd2011admm}.

\subsection*{Convergence analysis of Algorithm~\ref{alg:admm}}
The goal is now to analyse whether the sequence generated by Algorithm~\ref{alg:admm} converges to a local minimum. For notational convenience we concatenate the objective variables in $v_{1:T}$ in a vector $\overline{v}_{1:T} \in \R^{3 \, T}$ defined by $\overline{v}_{1:T} \coloneqq \begin{bmatrix}
	u^1_{0, 1:T} & u^2_{1, 1:T} & u^3_{1, 1:T}
\end{bmatrix}^\trans$.

We consider the following assumptions on the DGP and the minimisation problem parameters.

\begin{assumption}
	\label{assump:r-dgp-mineig-C}
	The covariance matrix $C^1_{1:T}$ is strictly positive definite. That is, $\mineig(C^1_{1:T})$ has a positive lower bound uniformly for all $u^2_{1, 1:T}\in\R^T$ and $u^3_{1,1:T}\in\R^T$.
\end{assumption}

\begin{assumption}
	\label{assump:r-dgp-rho}
	The penalty parameters $\rho_i$ and sparsity parameters $\Phi_i$ for $i=1,2,3$ satisfy
	\begin{equation}
		\frac{\rho_i}{2} \, \big(\mineig(\Phi_i)\big)^2 - \frac{c_v}{2} \geq 0,
	\end{equation}
	where the constant $c_v$ is defined in Lemma~\ref{lemma:r-dgp-lip}.
\end{assumption}

Assumption~\ref{assump:r-dgp-mineig-C} is an important prerequisite for Theorem~\ref{thm:admm-converge} because the proof in~\citet{Zhao2021RSSGP} requires that $(C^1_{1:T})^{-1}$ be bounded so that the Lagrangian function in Equation~\eqref{equ:r-dgp-lag} admits a lower bound independent of its arguments.

The constant $c_v$ in Assumption~\ref{assump:r-dgp-rho} is a fixed number determined by $\mathcal{L}^\mathrm{B}$ (and therefore by the DGP model itself) and is independent of data $y_{1:T}$. This gives a lower bound on the free penalty parameters $\rho_1$, $\rho_2$, and $\rho_3$ for the problem to be well defined.

\begin{remark}
	\citet{Zhao2021RSSGP} considers the specific case of a non-stationary \matern covariance function $C_{\mathrm{NS}}$ defined as per Equation~\eqref{equ:cov-ns-matern}. Due to this choice, additional assumptions need to be introduced so as to proceed with the convergence analysis of the problem in Equation~\eqref{equ:r-dgp-argmin}.
\end{remark}

\begin{lemma}[Lipschitz condition]
	\label{lemma:r-dgp-lip}
	Suppose that there exists a constant $c_v>0$ such that the norm $\norm{\hessian_{\overline{v}_{1:T}} \mathcal{L}^B}_2 \leq c_v$ for all $\overline{v}_{1:T} \in \R^{3 \, T}$. Then for every two vectors $\overline{v}^{1}_{1:T}, \overline{v}^{2}_{1:T} \in \R^{3 \, T}$,
	\begin{equation}
		\begin{split}
			&\absBig{\mathcal{L}^\mathrm{B}\big(\overline{v}^{1}_{1:T}\big) - \mathcal{L}^\mathrm{B}\big(\overline{v}^{2}_{1:T}\big) - \big( \overline{v}^{1}_{1:T} - \overline{v}^{2}_{1:T} \big)^\trans\,\nabla_{\overline{v}_{1:T}} \mathcal{L}^\mathrm{B}\big(\overline{v}^{2}_{1:T}\big)} \\
			&\quad\leq \frac{c_v}{2} \, \normbig{\overline{v}^{1}_{1:T} - \overline{v}^{2}_{1:T}}^2_2.
		\end{split}
	\end{equation}
\end{lemma}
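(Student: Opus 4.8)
The plan is to reduce this multivariate estimate to a one-dimensional second-order Taylor expansion along the line segment joining $\overline{v}^{2}_{1:T}$ to $\overline{v}^{1}_{1:T}$, and then to control the resulting integral remainder by the uniform spectral-norm bound on the Hessian. First I would set $\delta \coloneqq \overline{v}^{1}_{1:T} - \overline{v}^{2}_{1:T}$ and introduce the scalar auxiliary function $g(s) \coloneqq \mathcal{L}^{\mathrm{B}}\big(\overline{v}^{2}_{1:T} + s \, \delta\big)$ for $s \in [0,1]$. Since the hypothesis presupposes that $\mathcal{L}^{\mathrm{B}}$ is twice differentiable with a Hessian of bounded norm, $g$ is $\mathcal{C}^2$ on $[0,1]$, and the chain rule gives $g'(s) = \delta^\trans \, \nabla_{\overline{v}_{1:T}} \mathcal{L}^{\mathrm{B}}\big(\overline{v}^{2}_{1:T} + s \, \delta\big)$ together with $g''(s) = \delta^\trans \, \hessian_{\overline{v}_{1:T}} \mathcal{L}^{\mathrm{B}}\big(\overline{v}^{2}_{1:T} + s \, \delta\big) \, \delta$.

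Second, I would apply Taylor's theorem with the integral form of the second-order remainder to $g$, namely \[ g(1) = g(0) + g'(0) + \int^1_0 (1-s) \, g''(s) \diff s. \] Unwinding the definitions, $g(1) = \mathcal{L}^{\mathrm{B}}\big(\overline{v}^{1}_{1:T}\big)$, $g(0) = \mathcal{L}^{\mathrm{B}}\big(\overline{v}^{2}_{1:T}\big)$, and $g'(0) = \delta^\trans \, \nabla_{\overline{v}_{1:T}} \mathcal{L}^{\mathrm{B}}\big(\overline{v}^{2}_{1:T}\big)$, so that the quantity appearing inside the absolute value in the statement is exactly the remainder $\int^1_0 (1-s) \, g''(s) \diff s$. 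This rewriting is the crux of the argument: it converts the claimed inequality into a pointwise bound on $g''$.

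Third, I would bound the remainder. By the sub-multiplicativity of the spectral norm and the Cauchy--Schwarz inequality, $\abs{g''(s)} = \absbig{\delta^\trans \, \hessian_{\overline{v}_{1:T}} \mathcal{L}^{\mathrm{B}} \, \delta} \leq \normbig{\hessian_{\overline{v}_{1:T}} \mathcal{L}^{\mathrm{B}}}_2 \, \norm{\delta}_2^2 \leq c_v \, \norm{\delta}_2^2$ uniformly in $s$, using the standing hypothesis on the Hessian. Since $\int^1_0 (1-s) \diff s = \tfrac{1}{2}$, I would conclude \[ \absBigg{\int^1_0 (1-s) \, g''(s) \diff s} \leq \frac{c_v}{2} \, \normbig{\overline{v}^{1}_{1:T} - \overline{v}^{2}_{1:T}}_2^2, \] which is precisely the asserted estimate. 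This result is essentially the standard descent lemma, so no genuinely hard step arises; the only point requiring care is the justification that $\mathcal{L}^{\mathrm{B}}$ is regular enough (i.e. $\mathcal{C}^2$ along the segment) for the integral-remainder form of Taylor's theorem to apply, which follows from the smoothness of the covariance functions and of the log-determinant terms entering $\mathcal{L}^{\mathrm{B}}$ in Equation~\eqref{equ:r-dgp-batch-obj}. An equivalent route, should one prefer to avoid the second-order remainder, is to write the difference of gradients as $\int^s_0 \hessian_{\overline{v}_{1:T}} \mathcal{L}^{\mathrm{B}} \, \delta \diff \tau$, deduce that $\nabla_{\overline{v}_{1:T}} \mathcal{L}^{\mathrm{B}}$ is $c_v$-Lipschitz, and then integrate; this yields the same constant $\tfrac{c_v}{2}$.
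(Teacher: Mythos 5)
Your proposal is correct and is essentially the paper's own argument: the paper simply cites Lemmas 1.2.2 and 1.2.3 of Nesterov (2004), and your Taylor expansion with integral remainder along the segment, bounded via the uniform spectral-norm hypothesis on the Hessian, is precisely the standard proof of those lemmas (your closing ``equivalent route'' via a $c_v$-Lipschitz gradient is exactly the 1.2.3\,+\,1.2.2 combination). Nothing further is needed.
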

\begin{proof}
	See, Lemma 1.2.2 and 1.2.3 in~\citet{Nesterov2004}.
\end{proof}

\begin{theorem}
	\label{thm:admm-converge}
	Suppose that Assumptions~\ref{assump:r-dgp-mineig-C} and \ref{assump:r-dgp-rho} hold. Further assume that the subproblem in Equation~\eqref{equ:r-dgp-sub-f} has a stationary point. Then the sequence $\big\lbrace v^{(i)}_{1:T}, \theta^{(i)}_{1:T}, \eta^{(i)}_{1:T} \colon i=0,1,\ldots \big\rbrace$ generated by Algorithm~\ref{alg:admm} converges to a local minimum.
\end{theorem}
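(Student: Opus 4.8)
The plan is to treat Theorem~\ref{thm:admm-converge} as a nonconvex ADMM convergence result, and hence not to invoke any classical convex duality but rather the now-standard three-pillar programme for nonconvex ADMM: (i) a \emph{sufficient decrease} of the augmented Lagrangian $\mathcal{L}^{\mathrm{A}}$ along the iterations, (ii) a uniform \emph{lower bound} on $\mathcal{L}^{\mathrm{A}}$, and (iii) the identification of every limit point of the generated sequence as a KKT (stationary) point of the constrained problem~\eqref{equ:r-dgp-equal-constrained}. The nonconvexity of $\mathcal{L}^{\mathrm{B}}$ — which stems from the dependence of $C^1_{1:T}$ on $u^2_{1,1:T}$ and $u^3_{1,1:T}$ — forbids a contraction argument, so the entire analysis must be driven by the monotone behaviour of $\mathcal{L}^{\mathrm{A}}$.

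First I would establish the sufficient decrease by splitting the change in $\mathcal{L}^{\mathrm{A}}$ over one full sweep of Algorithm~\ref{alg:admm} into its three partial updates. The $v$-step~\eqref{equ:r-dgp-sub-f} and the $\theta$-step~\eqref{equ:r-dgp-sub-theta} are by construction nonincreasing; to make the decrease strict and quadratic I would exploit that, along the constrained directions, each $v$-subproblem is effectively strongly convex with modulus governed by $\rho_i\,(\mineig(\Phi_i))^2$, which by Assumption~\ref{assump:r-dgp-rho} dominates the curvature bound $c_v$ furnished by Lemma~\ref{lemma:r-dgp-lip}. The dual step, in contrast, \emph{increases} $\mathcal{L}^{\mathrm{A}}$ by $\sum_i \rho_i^{-1}\,\normbig{\eta^{i,(i+1)}_{1:T}-\eta^{i,(i)}_{1:T}}_2^2$, so the crux is to absorb this increase into the primal decrements by bounding $\normbig{\eta^{i,(i+1)}_{1:T}-\eta^{i,(i)}_{1:T}}_2$ by a multiple of $\normbig{\Phi^i\,(u^{i,(i+1)}_{1:T}-u^{i,(i)}_{1:T})}_2$. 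This bound I would extract from the first-order optimality condition of~\eqref{equ:r-dgp-sub-f} together with the Lipschitz gradient estimate of Lemma~\ref{lemma:r-dgp-lip}. Assumption~\ref{assump:r-dgp-rho} then yields a net decrease $\mathcal{L}^{\mathrm{A}}\big(v^{(i+1)}_{1:T},\theta^{(i+1)}_{1:T},\eta^{(i+1)}_{1:T}\big) \le \mathcal{L}^{\mathrm{A}}\big(v^{(i)}_{1:T},\theta^{(i)}_{1:T},\eta^{(i)}_{1:T}\big) - c\,\normbig{\overline{v}^{(i+1)}_{1:T}-\overline{v}^{(i)}_{1:T}}_2^2$ for some $c>0$.

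Next I would show $\mathcal{L}^{\mathrm{A}}$ is bounded below, where Assumption~\ref{assump:r-dgp-mineig-C} is essential: the strict positive definiteness of $C^1_{1:T}$ keeps $\big(C^1_{1:T}\big)^{-1}$ and the log-determinant terms controlled, so $\mathcal{L}^{\mathrm{B}}$ has an infimum independent of the iterates; the $L^1$ terms are nonnegative, and the linear multiplier and quadratic penalty terms recombine by completing the square — using that optimality of the $\theta$-step forces $\eta^{i,(i+1)}_{1:T}\in\lambda_i\,\partial\normbig{\theta^{i,(i+1)}_{1:T}}_1$, a bounded set — into an expression bounded from below. Telescoping the sufficient-decrease inequality then gives $\sum_i \normbig{\overline{v}^{(i+1)}_{1:T}-\overline{v}^{(i)}_{1:T}}_2^2<\infty$, whence the successive primal differences vanish, the dual-difference bound forces the $\eta$-increments to vanish, and the $\eta$-update shows the constraint residuals $\Phi^i u^i_{1:T}-\theta^i_{1:T}$ tend to zero.

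Finally, since $\mathcal{L}^{\mathrm{A}}$ is nonincreasing and bounded below, the iterates remain in a bounded sublevel set (the prior and fidelity terms confining $\overline{v}_{1:T}$ under Assumption~\ref{assump:r-dgp-mineig-C}, and the bounded $L^1$-subdifferential confining $\eta$), so by Bolzano--Weierstrass the sequence admits a convergent subsequence. Passing to the limit in the optimality conditions of~\eqref{equ:r-dgp-sub-f} and~\eqref{equ:r-dgp-sub-theta} — using continuity of $\nabla_{\overline{v}_{1:T}}\mathcal{L}^{\mathrm{B}}$ and outer semicontinuity of the subdifferential of $\normbig{\cdot}_1$ — shows the limit point satisfies the KKT system of~\eqref{equ:r-dgp-equal-constrained}, i.e.\ is a stationary point, which under the stated hypotheses is a local minimum. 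The hard part will be the dual-difference bound: because the $\theta$-block carries the nonsmooth $L^1$ penalty rather than a Lipschitz-differentiable objective, the usual Wang--Yin--Zeng surjectivity argument does not apply to that block, so the bound must instead be routed through the $v$-subproblem optimality and Lemma~\ref{lemma:r-dgp-lip}; closely tied to this is that the nonconvexity of $\mathcal{L}^{\mathrm{B}}$ limits the conclusion to a local (stationary) point rather than a global optimum.
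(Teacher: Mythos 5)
Your proposal follows essentially the same route as the paper's proof: show that the augmented Lagrangian $\mathcal{L}^{\mathrm{A}}$ is non-increasing along the iterates (with the dual ascent absorbed using Assumption~\ref{assump:r-dgp-rho} and the curvature bound of Lemma~\ref{lemma:r-dgp-lip}) and bounded below (via Assumption~\ref{assump:r-dgp-mineig-C}), then exploit the convexity of the $\theta$-subproblem~\eqref{equ:r-dgp-sub-theta} to pass to the limit and identify the limit point as a local minimum. The paper's proof is only a sketch deferring details to \citet{Zhao2021RSSGP}, but its three ingredients --- monotonicity, lower boundedness, and convexity of the $\theta$-step --- are exactly the pillars of your programme.
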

\begin{proof}
	The key is to prove that the sequence $\big\lbrace \mathcal{L}^\mathrm{A}\big(v^{(i)}_{1:T}, \theta^{(i)}_{1:T}, \allowbreak \eta^{(i)}_{1:T}\big)\colon i=0,1,\ldots \big\rbrace$ is non-increasing and lower bounded over $i=0,1,\ldots$ Using the convexity of subproblem~\eqref{equ:r-dgp-sub-theta} we can then prove the convergence of Algorithm~\ref{alg:admm}~\citep[see, e.g.,][]{Boyd2004Convex, Nesterov2018optimization}. We refer the reader to~\citet{Zhao2021RSSGP} for details.
\end{proof}

\subsection*{Regularised SS-DGP regression}
We can also derive the state-space versions of regularised DGPs. However, the resulting method turns out to be very similar to that of batch DGPs. We therefore only sketch out the basic idea in this section, and refer to \citet{Zhao2021RSSGP} for details.

Consider the state-space representation of the DGP defined in Equation~\eqref{equ:r-dgp-reg-model}. The first step is to derive the state-space version of the MAP objective function in Equation~\eqref{equ:r-dgp-batch-obj}. One approach is to discretise the SDE as in Equation~\eqref{equ:ss-dgp-disc}, then we can factorise the SDE prior density over $t_1,\ldots, t_T$. As shown in~\citet{Zhao2021RSSGP}, the state-space (approximate) MAP objective function reads
\begin{equation}
	\begin{split}
		\mathcal{L}^\mathrm{S}(v_{1:T}) = v_0^\trans \, P_0^{-1}\,v_0 + \sum^T_{k=1} \Big[ &\norm{y_k - H \, v_k}^2_{\Xi_k} + \norm{v_k - f_{k-1}(v_{k-1})}^2_{Q_{k-1}(v_{k-1})} \\
		&+ \log \, \det\big(2 \, \pi \, Q_{k-1}(v_{k-1})\big)\Big],
	\end{split}
\end{equation}
where the matrix $H$ selects $U^1_0$ from $V$. Now, let $\Psi^1$, $\Psi^2$, and $\Psi^3$ be three suitable matrices that, respectively, select the components $U^1_0$, $U^2_1$, and $U^3_1$ from $V$. We can now introduce the regularisation term
\begin{equation}
	\mathcal{L}^{\mathrm{S-REG}}(v_{1:T}) = \sum^T_{k=0} \sum^3_{i=1} \lambda_i \, \norm{\Psi^i \, v_k}_1.
\end{equation}
One can then analogously derive the augmented Lagrangian function and the corresponding ADMM algorithm in their state-space versions.

\begin{remark}
	The computational complexities of $\mathcal{L}^\mathrm{S}$ and $\mathcal{L}^\mathrm{B}$ are linear and cubic with respect to $T$, respectively. Hence, the state-space version is significantly computationally cheaper than the batch version when the number of measurements is large. 
\end{remark}

\begin{remark}
	$\mathcal{L}^\mathrm{B}$ and $\mathcal{L}^\mathrm{S}$ are generally not equal, since discretisations of SS-DGPs often involve approximations.
\end{remark}

\subsection*{Uncertainty quantification}
The regularised DGP regression method presented in this section is rooted in the MAP framework, which provides point estimates of the quantities at hand, and ignores the uncertainty in the solution. This can be partially remedied by using, for example, Laplace's method to approximate $p_{V_{1:T} \cond Y_{1:T}}(v_{1:T} \cond y_{1:T})$ around the MAP estimate with a Gaussian density~\citep{Bishop2006}. However, computing the Hessian (of dimension $T \times T$) can be computationally intensive and limits the applicability for high dimensional problems such as ours. 

\begin{figure}[t!]
	\centering
	\includegraphics[width=.8\linewidth]{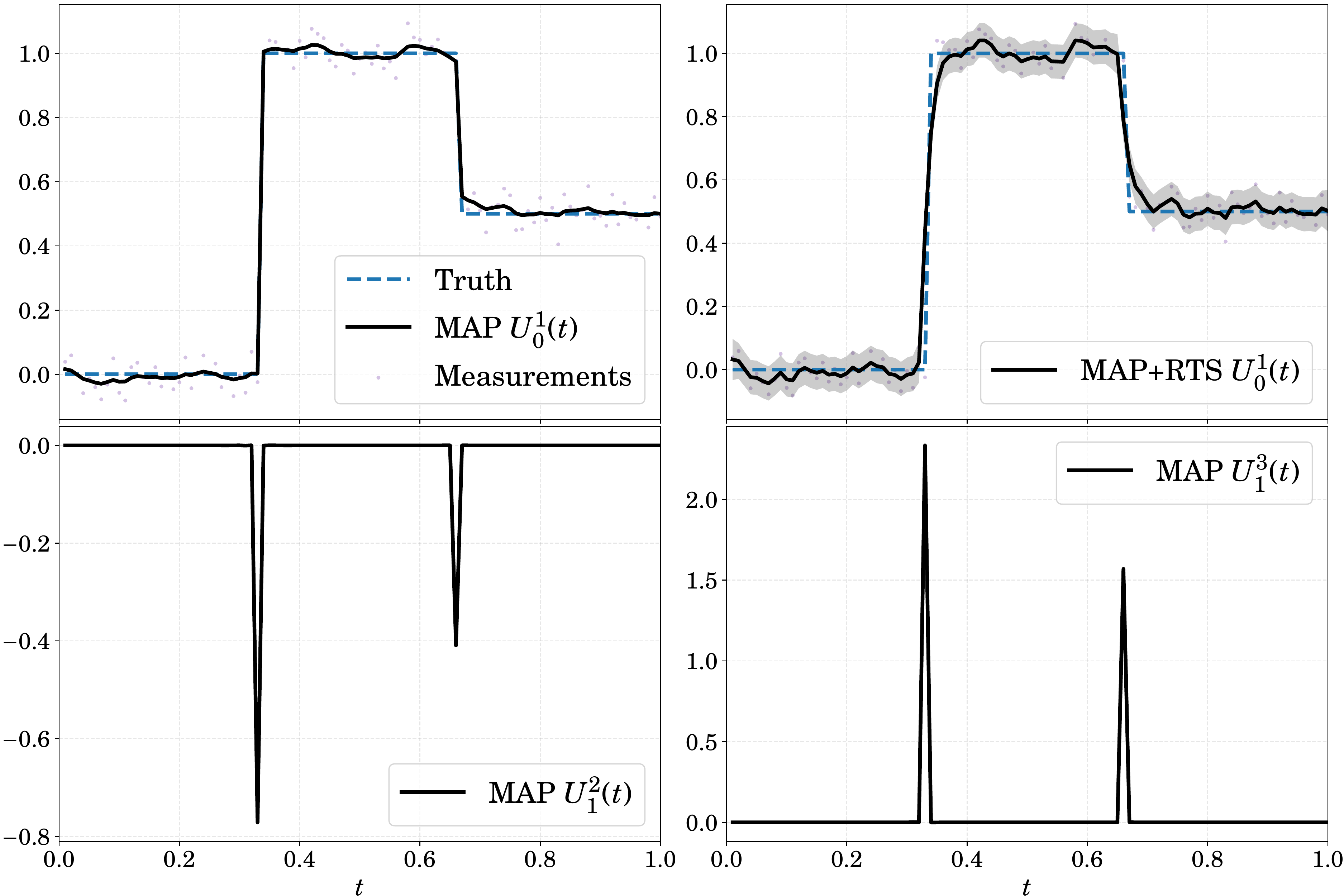}
	\caption{Regularised SS-DGP regression on a rectangular signal. The uncertainty is quantified by using Equation~\eqref{equ:r-ssdgp-uncertainty} and an RTS smoother.}
	\label{fig:r-ssdgp-admm}
\end{figure}

Another solution is to solve the subproblems of ADMM by using Bayesian solvers instead of deterministic optimisers, resulting in an estimate of the uncertainty in the form of a posterior distribution on the solution. For instance, it is known that the iterated extended Kalman smoother is in some sense equivalent to the Gauss--Newton method~\citep{Bell1994, Simo2020IEKFS}, and \citet{Rui2019ieks} showed that this connection could be extended to the ADMM method. For a review of these, we refer the reader to the discussion in \citet{Rui2020thesis}.

However, if we are only interested in the marginal posterior density $p_{U^1_{0, 1:T} \cond Y_{1:T}}\big(u^1_{0, 1:T} \cond y_{1:T}\big)$ instead of the full density $p_{V_{1:T} \cond Y_{1:T}}(v_{1:T} \cond y_{1:T})$, then we can leverage the hierarchical nature of DGPs to approximate the marginal density efficiently. In order to do so, we can write the approximation
\begin{equation}
	\begin{split}
		&p_{U^1_{0, 1:T} \cond U^2_{1, 1:T}, U^3_{1, 1:T}, Y_{1:T}}\big(u^1_{0, 1:T} \cond u^2_{1, 1:T}, u^3_{1, 1:T}, y_{1:T}\big) \\
		&\approx p_{U^{1}_{0, 1:T} \cond Y_{1:T}}\big(u^1_{0, 1:T} \cond u^{2, \star}_{1, 1:T}, u^{3, \star}_{1, 1:T}, y_{1:T}\big),
		\label{equ:r-ssdgp-uncertainty}
	\end{split}
\end{equation}
where $u^{2, \star}_{1, 1:T}, u^{3, \star}_{1, 1:T}$ stand for the MAP estimates of $U^{2}_{1, 1:T}\mid y_{1:T}$ and $U^{3}_{1, 1:T} \mid y_{1:T}$. Afterwards, computing $p_{U^1_{0, 1:T} \cond Y_{1:T}}\big(u^1_{0, 1:T} \cond y_{1:T}\big)$ simply consists in solving a standard GP regression problem, which can be obtained in closed form~\citep{Zhao2021RSSGP}.

Figure~\ref{fig:r-ssdgp-admm} illustrates such an example of regularised SS-DGP, where we set the sparsity inducing matrices to be identity matrices~\citep{Zhao2021RSSGP}. The latent states $U^2_1$ and $U^3_1$ exhibit spiking behaviours, being almost zero except at the two discontinuities.

\chapter{Applications}
\label{chap:apps}
In this chapter, we present the experimental results in Publications~\cp{paperDRIFT}, \cp{paperKFSECG}, \cp{paperKFSECGCONF}, \cp{paperSSDGP}, and~\cp{paperMARITIME}. These works are mainly concerned with the applications of state-space (deep) GPs. Specifically, in Section~\ref{sec:drift-est} we show how to use the SS-GP regression method to estimate unknown drift functions in SDEs. Similarly, under that same state-space framework, in Section~\ref{sec:spectro-temporal} we show how to estimate the posterior distributions of the Fourier coefficients of signals. Sections~\ref{sec:apps-ssdgp} and~\ref{sec:maritime} illustrate how SS-DGPs can be used to model real-world signals, such as gravitational waves, accelerometer recordings of human motion, and maritime vessel trajectories.

\section{Drift estimation in stochastic differential equations}
\label{sec:drift-est}
Consider a scalar-valued stochastic process $X \colon \T \to \R$ governed by a stochastic differential equation
\begin{equation}
	\diff X(t) = a(X(t)) \diff t + b \diff W(t), \quad X(t_0) = X_0,
	\label{equ:drift-est-sde}
\end{equation}
where $b\in\R$ is a constant, $W\colon \T\to\R$ is a Wiener process, and $a\colon \R\to\R$ is an \emph{unknown} drift function. Suppose that we have measurement random variables $X(t_1), X(t_2), \ldots, X(t_T)$ of $X$ at time instances $t_1, t_2, \ldots, t_T\in\T$, the goal is to estimate the drift function $a$ from these measurements.

One way to proceed is to assume a parametric form of function $a=a_\vartheta(\cdot)$ and estimate its parameters $\vartheta$ by using, for example, maximum likelihood estimation~\citep{Zmirou1986, Yoshida1992, Kessler1997, Sahalia2003} or Monte Carlo methods~\citep{Roberts2001, Beskos2006}.

In this chapter, we are mainly concerned with the GP regression approach for estimating the unknown $a$~\citep{Papaspiliopoulos2012, Ruttor2013, Garcia2017, Batz2018, Opper2019}. The key idea of this approach is to assume that the unknown drift function is distributed according to a GP, that is
\begin{equation}
	a(x) \sim \mathrm{GP}(0, C(x, x')).
\end{equation}
Having at our disposal measurements $X(t_1), X(t_2), \ldots, X(t_T)$ observed directly from SDE~\eqref{equ:drift-est-sde}, we can formulate the problem of estimating $a$ as a GP regression problem. In order to do so, we discretise the SDE in Equation~\eqref{equ:drift-est-sde} and thereupon define the measurement model as 
\begin{equation}
	Y_k \coloneqq X(t_{k}) - X(t_{k-1}) = \check{f}_{k-1}(X(t_{k-1})) + \check{q}_{k-1}(X(t_{k-1}))
\end{equation}
for $k=1,2,\ldots, T$, where the function $\check{f}_{k-1}\colon \R \to \R$ and the random variable $\check{q}_{k-1}$ represent the exact discretisation of $X$ at $t_{k}$ from $t_{k-1}$. We write the GP regression model for estimating the drift function by
\begin{equation}
	\begin{split}
		a(x) &\sim \mathrm{GP}(0, C(x, x')),\\
		Y_k &= \check{f}_{k-1}(X_{k-1}) + \check{q}_{k-1}(X_{k-1}).
		\label{equ:drift-est-reg-model}
	\end{split}
\end{equation}
The goal now is to estimate the posterior density of $a(x)$ for all $x\in\R$ from a set of data $y_{1:T} = \lbrace x_k - x_{k-1} \colon k=1,2,\ldots, T \rbrace$. 

However, the exact discretisation of non-linear SDEs is rarely possible. In practice, we often have to approximate $\check{f}_{k-1}$ and $\check{q}_{k-1}$ by using, for instance, Euler--Maruyama scheme, Milstein's method, or more generally It\^{o}--Taylor expansions~\citep{Kloeden1992}. As an example, application of the Euler--Maruyama method to Equation~\eqref{equ:drift-est-sde} gives 
\begin{equation}
	\begin{split}
		\check{f}_{k-1} &\approx a(x) \, \Delta t_k, \\
		\check{q}_{k-1} &\approx b \, \delta_k, 
	\end{split}
\end{equation}
where $\Delta t_k \coloneqq t_{k} - t_{k-1}$ and $\delta_k \sim \mathrm{N}(0, \Delta t_k)$. 

However, the discretisation by the Euler--Maruyama scheme can sometimes be crude, especially when the discretisation step is relatively large, making the measurement representation obtained from it inaccurate.~\citet{ZhaoZheng2020Drift} show that if the prior of $a$ is chosen of certain regularities, it is possible to leverage high-order It\^{o}--Taylor expansions in order to discretise the SDE with higher accuracy. As an example, suppose that the GP prior $a$ is twice-differentiable almost surely. Then, the It\^{o}--Taylor strong order 1.5 (It\^{o}-1.5) method~\citep{Kloeden1992} gives
\begin{equation}
	\begin{split}
		\check{f}_{k-1}(x) &\approx a(x) \, \Delta t_{k} + \frac{1}{2} \Big( \frac{\diff a}{\diff x}(x) \, a(x) + \frac{1}{2}\, \frac{\diff^2 a}{\diff x^2}(x) \, b^2 \Big) \, \Delta t_k^2,\\
		\check{q}_{k-1}(x) &\approx b\,\delta_{1,k} + \frac{\diff a}{\diff x}(x) \, b \, \delta_{2, k},
		\label{equ:drift-est-ito15}
	\end{split}
\end{equation}
where
\begin{equation}
	\begin{bmatrix}
		\delta_{1,k} \\
		\delta_{2,k}
	\end{bmatrix} \sim \mathrm{N}\left(
	\begin{bmatrix}
		0\\0
	\end{bmatrix}, 
	\begin{bmatrix}
		\frac{(\Delta t_k)^3}{3} & \frac{(\Delta t_k)^2}{2}\\
		\frac{(\Delta t_k)^2}{2} & \Delta t_k
	\end{bmatrix}\right).
\end{equation}
Indeed, using a higher order It\^{o}--Taylor expansion can lead to a better measurement representation, however, this in turn requires more computations and limits the choice of the prior model. It is also worth mentioning that if one uses the approximations of high order It\^{o}--Taylor expansions -- such as the one in Equation~\eqref{equ:drift-est-ito15} -- the resulting measurement representation in the GP regression model~\eqref{equ:drift-est-reg-model} is no longer linear with respect to $a$. Consequently, the GP regression solution may not admit a closed-form solution. 

One problem of this GP regression-based drift estimation approach is that the computation can be demanding if the number of measurements $T$ is large. Moreover, if the measurements are densely located then the covariance matrices used in GP regression may be numerically close to singular. These two issues are already discussed in Introduction and Section~\ref{sec:ssgp}. In addition, the GP regression model is not amenable to high order It\^{o}--Taylor expansions, as these expansions result in non-linear measurement representations and require to compute the derivatives of $a$ up to a certain order.

\begin{figure}[t!]
	\centering
	\includegraphics[width=.99\linewidth]{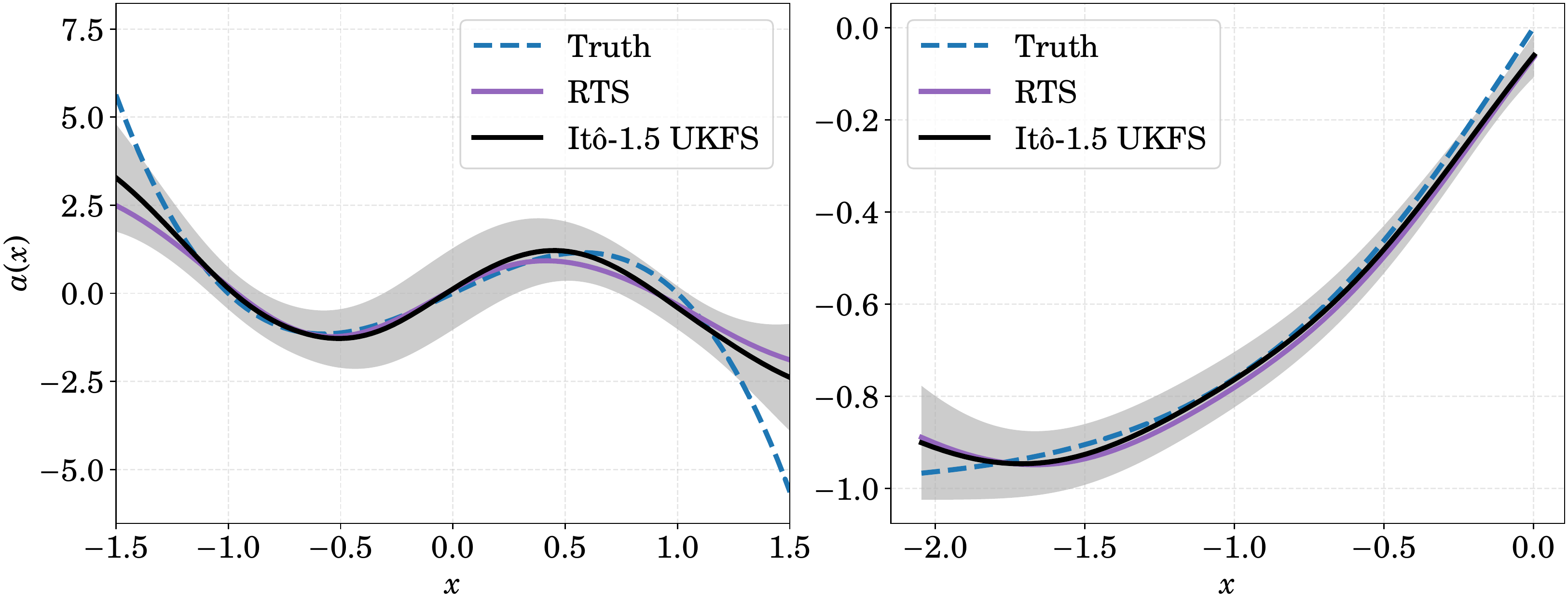}
	\caption{Estimation of drift functions $a(x)=3\,(x-x^3)$ (left) and $a(x)=\tanh(x)$ (right) by \citet{ZhaoZheng2020Drift}. UKFS stands for unscented Kalman filter and RTS smoother (UKFS). Shaded area stands for 0.95 confidence interval associated with the UKFS estimation.}
	\label{fig:drift-est}
\end{figure}

\citet{ZhaoZheng2020Drift} address the problems above by considering solving the GP regression problem in Equation~\eqref{equ:drift-est-reg-model} under the state-space framework. More precisely, they put an SS-GP prior over the unknown $a$ instead of a standard batch GP. The main benefit of doing so for this application is that the SS-GP regression solvers are computationally more efficient for large-scale measurements compared to the standard batch GP regression (see, Introduction and Section~\ref{sec:ssgp}). Moreover, in order to use high order It\^{o}--Taylor expansions, \citet{ZhaoZheng2020Drift} consider putting SS-GP priors over $a$ of the \matern family, so that the derivatives of $a$ naturally appear as the state components of $a$ (see, Section~\ref{sec:deep-matern}). In this way, computing the covariance matrices of the derivatives of $a$ is no longer needed.

\begin{remark}
	Note that the SS-GP approach requires to treat $X(t_1), X(t_2), \ldots ,\allowbreak X(t_T)$ as time variables and sort their data $x_{1:T}=\lbrace x_1,x_2,\ldots,x_T \rbrace$ in temporal order. 
\end{remark}

In Figure~\ref{fig:drift-est}, we show a representative result from~\citet{ZhaoZheng2020Drift}, where the SS-GP approach is employed to approximate the drift functions of two SDEs. In particular, the solutions are obtained by using the It\^{o}-1.5 discretisation, and an unscented Kalman filter and an RTS smoother. For more details regarding the experiments the reader is referred to~\citet{ZhaoZheng2020Drift}.

\section{Probabilistic spectro-temporal signal analysis}
\label{sec:spectro-temporal}
Let $z\colon \T \to \R$ be a periodic signal. In signal processing, it is often of interest to approximate the signal by Fourier expansions of the form
\begin{equation}
	z(t) \approx \alpha_0 + \sum^{N}_{n=1} \big[\alpha_n \cos(2 \, \pi \, \mathring{f}_n \, t) + \beta_n \sin(2 \, \pi \, \mathring{f}_n \, t)\big],
\end{equation}
where $\big\lbrace \mathring{f}_n\colon n=1,2,\ldots,N \big\rbrace$ stand for the frequency components, and $N$ is a given expansion order. When $z$ satisfies certain conditions~\citep{Katznelson2004}, the representation in the equation above converges as $N\to\infty$ (in various modes). 

Let use denote $y_k\coloneqq y(t_k)$ and suppose that we have a set of measurement data $y_{1:T}=\lbrace y_k\colon k=1,2,\ldots,T \rbrace$ of the signal at time instances $t_1, t_2, \ldots, t_T\in\T$. In order to quantify the truncation and measurement errors, we introduce Gaussian random variables $\xi_k \sim \mathrm{N}(0, \Xi_k)$ for $k=1,2,\ldots,T$ and let
\begin{equation}
	\begin{split}
		Y_k = \alpha_0 + \sum^{N}_{n=1} \big[\alpha_n \cos(2 \, \pi \, \mathring{f}_n \, t_k) + \beta_n \sin(2 \, \pi \, \mathring{f}_n \, t_k)\big] + \xi_k
		\label{equ:spectro-temporal-y}
	\end{split}
\end{equation}
represent the random measurements of $z$ at $t_k$. The goal now is to estimate the coefficients $\lbrace \alpha_0, \alpha_n, \beta_n \colon n=1,2,\ldots, N \rbrace$ from the data $y_{1:T}$. We call this problem the \emph{spectro-temporal estimation} problem.

One way to proceed is by using the MLE method~\citep{Bretthorst1988}, but~\citet{QiYuan2002, ZhaoZheng2018KF, ZhaoZheng2020KFECG} show that we can also consider this spectro-temporal estimation problem as a GP regression problem. More precisely, the modelling assumption is that
\begin{equation}
	\begin{split}
		\alpha_0(t) &\sim \mathrm{GP}\big(0, C^0_\alpha(t, t')\big),\\
		\alpha_n(t) &\sim \mathrm{GP}\big(0, C^n_\alpha(t, t')\big),\\
		\beta_n(t) &\sim \mathrm{GP}\big(0, C^n_\beta(t, t')\big),
		\label{equ:spectro-temporal-gp-priors}
	\end{split}
\end{equation}
for $n=1,2,\ldots, N$, and that the measurements follow
\begin{equation}
	Y_k = \alpha_0(t_k) + \sum^{N}_{n=1} \big[\alpha_n(t_k) \, \cos(2 \, \pi \, \mathring{f}_n \, t_k) + \beta_n(t_k) \, \sin(2 \, \pi \, \mathring{f}_n \, t_k)\big] + \xi_k,\nonumber
\end{equation}
for $k=1,2,\ldots,T$. This results in a standard GP regression problem therefore, the posterior distribution of coefficients $\lbrace \alpha_0, \alpha_n, \beta_n \colon n=1,2,\ldots, N \rbrace$ have a close-form solution. However, solving this GP regression problem is, in practice, infeasible when the expansion order $N$ and the number of measurements $T$ are large. This is due to the fact that one needs to compute $2\,N+1$ covariance matrices of dimension $T\times T$ and compute their inverse.

\citet{ZhaoZheng2018KF} propose to solve this spectro-temporal GP regression problem under the state-space framework, that is, by replacing the GP priors in Equation~\eqref{equ:spectro-temporal-gp-priors} with their SDE representations. Since SS-GPs have already been extensively discussed in previous sections, we omit the resulting state-space spectro-temporal estimation formulations. However, the details can be found in Section~\ref{sec:ssgp} and in~\citet{ZhaoZheng2018KF}.

The computational cost of the state-space spectro-temporal estimation method is substantially cheaper than that of standard batch GP methods. Indeed, Kalman filters and smoothers only need to compute one $E$-dimensional covariance matrix at each time step (see, Algorithm~\ref{alg:kfs}) instead of those required by batch GP methods. The dimension $E$ is equal to the sum of all the state dimensions of the SS-GPs $\lbrace \alpha_0, \alpha_n, \beta_n \colon n=1,2,\ldots, N \rbrace$. 

\citet{ZhaoZheng2020KFECG} further extend the state-space spectro-temporal estimation method by putting quasi-periodic SDE priors~\citep{Solin2014} over the Fourier coefficients instead of the Ornstein--Uhlenbeck SDE priors used in~\citet{ZhaoZheng2018KF}. This consideration generates a time-invariant version of the measurement model in Equation~\eqref{equ:spectro-temporal-y}, thus, one can apply steady-state Kalman filters and smoothers (SS-KFSs) in order to achieve lower computational costs. The computational cost is further reduced because SS-KFSs do not need to compute the $E$-dimensional covariances of the state in their filtering and smoothing loops. Instead, the state covariances in SS-KFSs are replaced by a pre-computed steady covariance matrix obtained as the solution of its discrete algebraic Riccati equation (DARE). Moreover, solving the DARE is independent of data/measurements, which is especially useful when the model is known or fixed. However, SS-KFSs may not always be computationally efficient when $N \gg T$, since solving an $E$-dimensional DARE can be demanding when $E$ is large.

\citet{ZhaoZheng2018KF, ZhaoZheng2020KFECG} show that the state-space spectro-temporal estimation method can be a useful feature extraction mechanism for detecting atrial fibrillation from electrocardiogram signals. More specifically, the spectro-temporal method estimates the spectrogram images of atrial fibrillation signals. These images are then fed to a deep convolutional neural network classifier which is tasked with recognising atrial fibrillation manifestations.

Since the measurement noises $\lbrace \xi_k\colon k=1,2,\ldots,T \rbrace$ in Equation~\eqref{equ:spectro-temporal-y} encode the truncation and measurement errors, it is also of interest to estimate them. This is done in~\citet{GaoRui2019ALKS}, where the variances $\Xi_k$ of $\xi_k$ for $k=1,2,\ldots, T$ are estimated under the alternating direction method of multipliers.

\begin{figure}[t!]
	\centering
	\includegraphics[width=.99\linewidth]{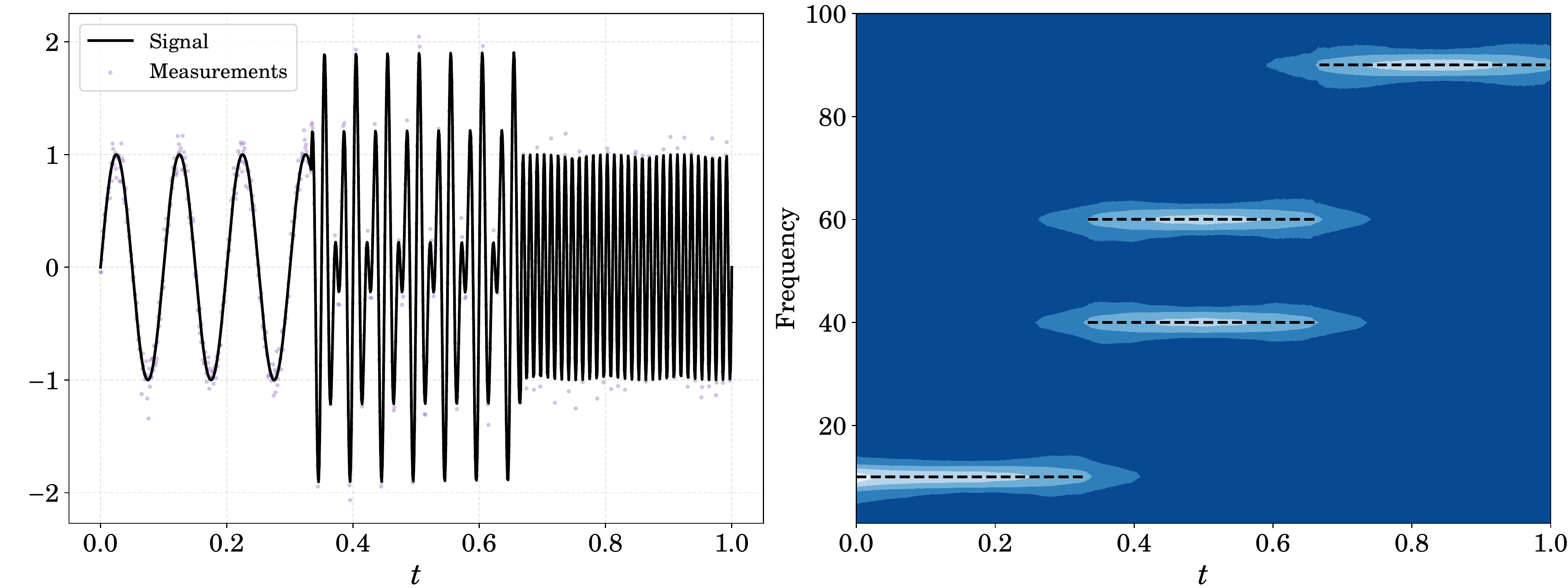}
	\caption{Spectrogram (right, contour plot) of a sinusoidal signal (left) generated by Kalman filtering and RTS smoothing using the method in Section~\ref{sec:spectro-temporal}. Dashed black lines (right) stand for the ground truth frequencies.}
	\label{fig:spectro-temporal-demo}
\end{figure}

Figure~\ref{fig:spectro-temporal-demo} illustrates an example of using the state-space spectro-temporal estimation method to estimate the spectrogram of a sinusoidal signal with multiple frequency bands. 

\section{Signal modelling with SS-DGPs}
\label{sec:apps-ssdgp}
In this section, we apply SS-DGPs for modelling gravitational waves and human motion (i.e., acceleration). We consider these as SS-DGP regression problems, where the measurement models are assumed to be linear with respect to the SS-DGPs with additive Gaussian noises. As for their priors, we chose the Mat\'{e}rn $\nu=3 \, / \, 2$ SS-DGP in Example~\ref{example:ssdgp-m32}, except that the parent GPs $U^2_1$ and $U^3_1$ use the Mat\'{e}rn $\nu=1 \, / \, 2$ representation.

\subsection*{Modelling gravitational waves}
Gravitational waves are curvatures of spacetime caused by the movement of objects with mass~\citep{Maggiore2008}. Since the time Albert Einstein predicted the existence of gravitational waves theoretically from a linearised field equation in 1916~\citep{EinsteinGW1937, Hill2017}, much effort has been done to observe their presence~\citep{Blair1991}. In 2015, the laser interferometer gravitational-wave observatory (LIGO) team first observed a gravitational wave from the merging of a black hole binary~\citep[event GW150914,][]{LIGO2016}. This wave/signal is challenging for standard GPs to fit because the frequency of the signal changes over time. It is then of our interest to see if SS-DGPs can fit this gravitational wave signal.

\begin{figure}[t!]
	\centering
	\includegraphics[width=.99\linewidth]{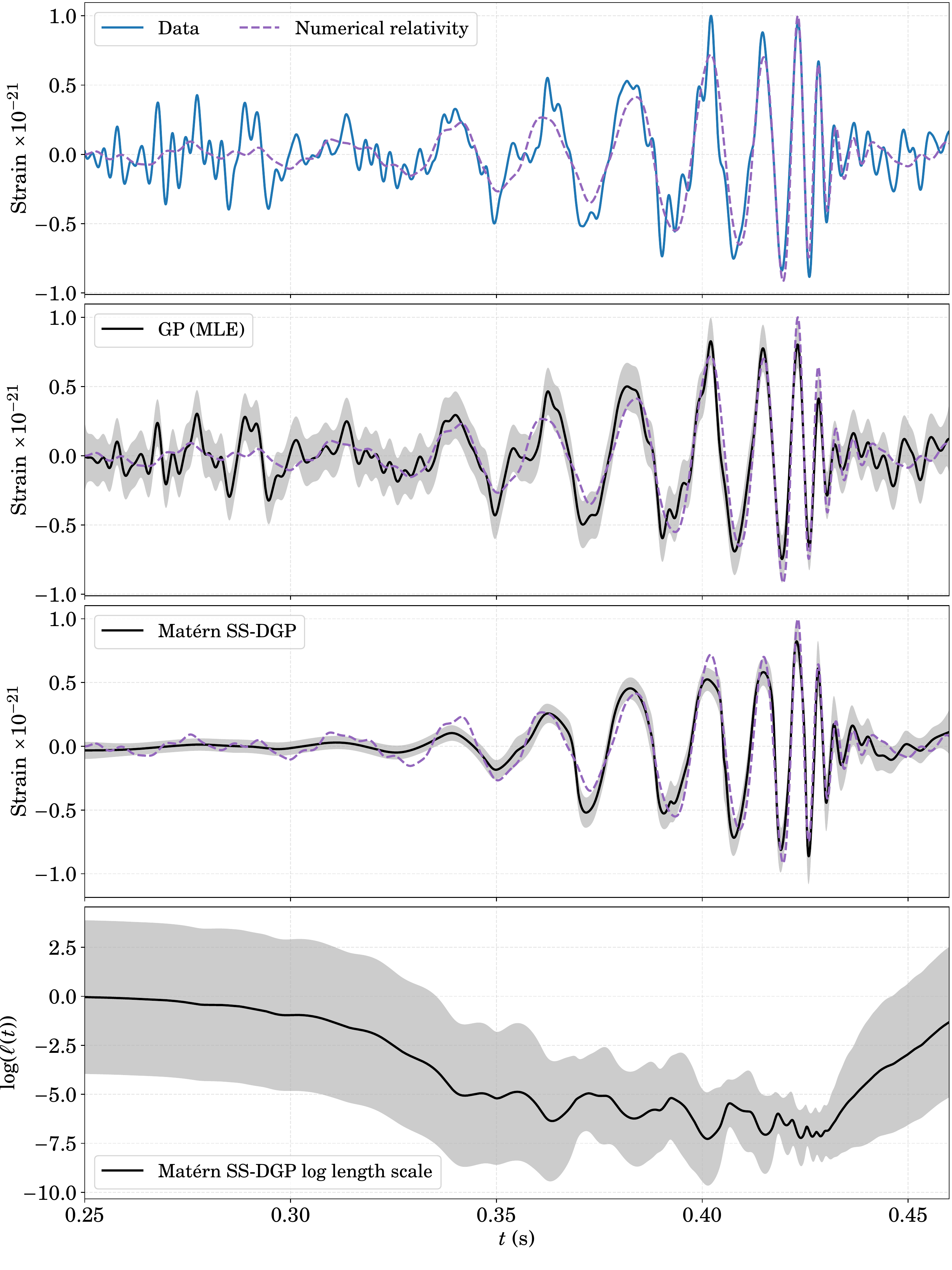}
	\caption{Mat\'{e}rn $\nu=3\, \, /\, 2$ SS-DGP regression (solved by cubature Kalman filter and smoother) for the gravitational wave in event GW150914 (Hanford, Washington). The shaded area stands for 0.95 confidence interval. Details about the data can be found in~\citet{Zhao2020SSDGP}.}
	\label{fig:gravit-wave-ssdgp}
\end{figure}

Figure~\ref{fig:gravit-wave-ssdgp} plots the SS-DGP fit for the gravitational wave observed in the event GW150914. In the same figure, we also show the fit from a \matern $\nu=3\,/\,2$ GP as well as a waveform (which is regarded as the ground truth) computed from the numerical relativity (purple dashed lines) for comparison. Details about the experiment and data are found in~\citet{Zhao2020SSDGP}.

Figure~\ref{fig:gravit-wave-ssdgp} shows that the GP fails to give a reasonable fit to the gravitational wave because the GP over-adapts the high-frequency section of the signal around $0.4$~s. On the contrary, the SS-DGP does not have such a problem, and the fit is closer to the numerical relativity waveform compared that of the GP. Moreover, the estimated length scale (in log transformation) can interpret the data in the sense that the length scale value decreases as the signal frequency increases. 

\subsection*{Modelling human motion}
We apply the regularised SS-DGP (R-SS-DGP) presented in Section~\ref{sec:l1-r-dgp} to fit an accelerometer recording of human motion. The reason for using R-SS-DGP here is that the recording (see, the first row of Figure~\ref{fig:imu-r-ssdgp}) is found to have some sharp changes and artefacts. Hence, we aim at testing if we can use sparse length scale and magnitude to describe such data. The collection of accelerometer recordings and the experiment settings are detailed in~\citet{Hostettler2018} and~\citet{Zhao2021RSSGP}, respectively.

\begin{figure}[t!]
	\centering
	\includegraphics[width=.99\linewidth]{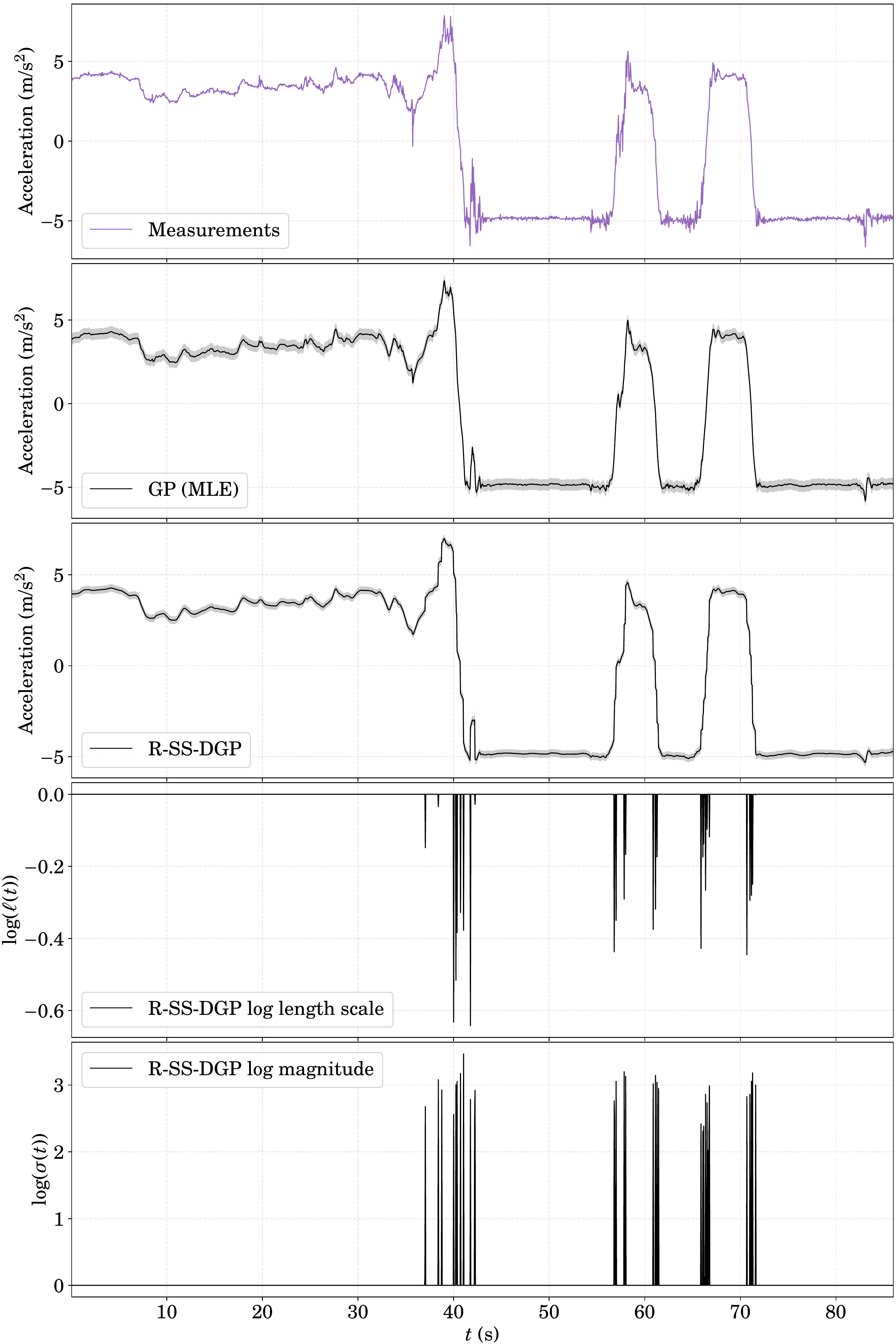}
	\caption{Human motion modelling with an R-SS-DGP. The GP here uses a \matern $\nu=3\,/\,2$ covariance function. Shaded area stands for 0.95 confidence interval.}
	\label{fig:imu-r-ssdgp}
\end{figure}

A demonstrative result is shown in Figure~\ref{fig:imu-r-ssdgp}. We see that the fit of R-SS-DGP is smoother than that of GP. Moreover, the posterior variance of R-SS-DGP is also found to be reasonably smaller than GP. It is also evidenced from the figure that the GP does not handle the artefacts well, for example, around times $t=55$~s and $62$~s. Finally, we find that the learnt length scale and magnitude (in log transformation) are sparse, and that they can respond sharply to the abrupt signal changes and artefacts.

\section{Maritime situational awareness}
\label{sec:maritime}
Another area of applications of (deep) GPs is  autonomous maritime navigation. In~\citet{Sarang2020}, we present a literature review on the sensor technology and machine learning methods for autonomous vessel navigation. In particular, we show that GP-based methods are able to analyse ship trajectories~\citep{Rong2019}, detect navigation abnormality~\citep{Kowalska2012, Smith2014}, and detect/classify vessels~\citep{XiaoZ2017}.

\begin{figure}[t!]
	\centering
	\includegraphics[width=.99\linewidth]{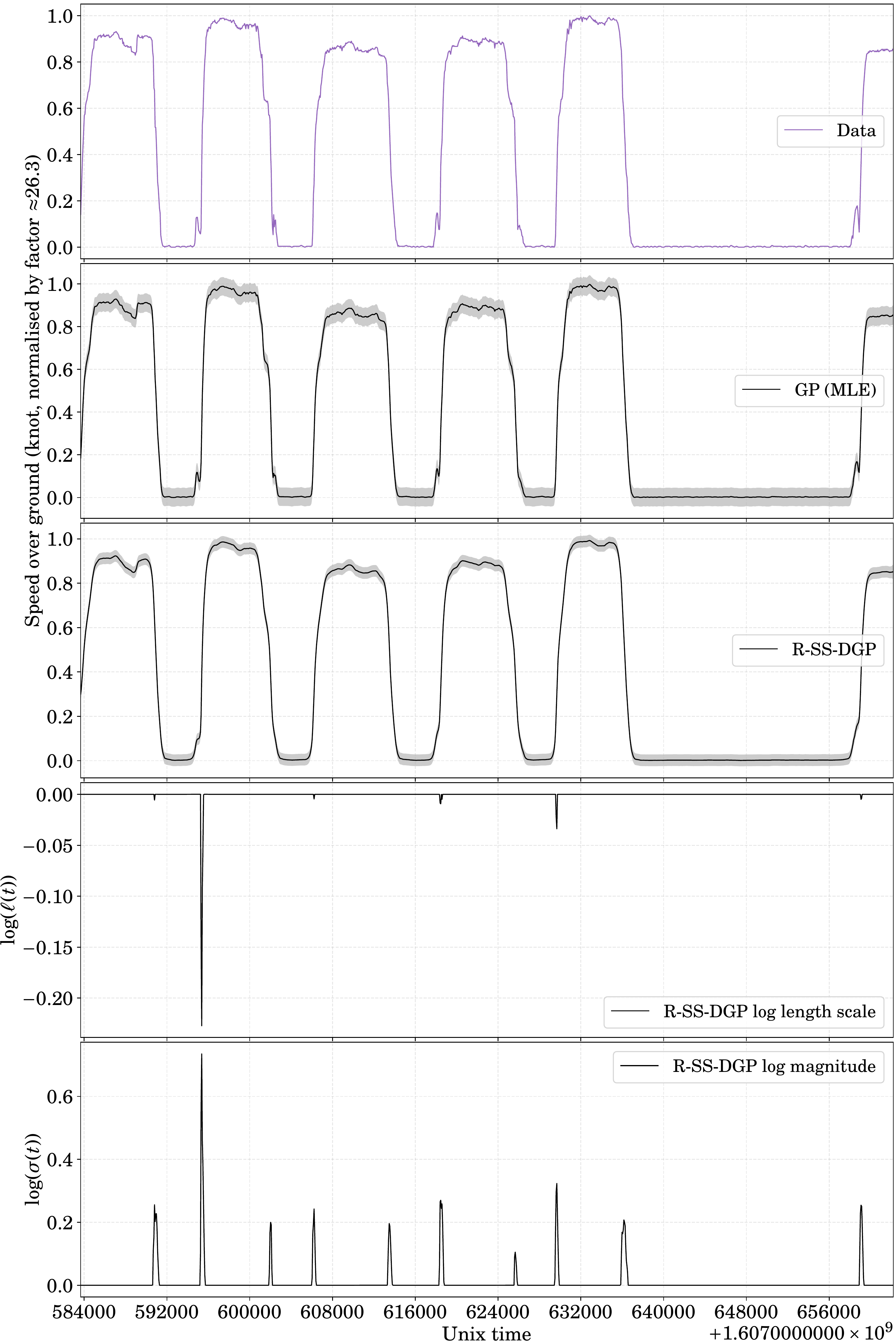}
	\caption{Modelling AIS recording (speed over ground) of MS Finlandia with an R-SS-DGP. The GP here uses a \matern $\nu=3\,/\,2$ covariance function. Shaded area stands for 0.95 confidence interval.}
	\label{fig:ais-ssdgp}
\end{figure}

In Figure~\ref{fig:ais-ssdgp}, we present an example for fitting an automatic identification system (AIS) recording by using an R-SS-DGP. The recording is taken from MS Finlandia (Helsinki--Tallinn) by Fleetrange Oy on December 10, 2020. We see from the figure that the fit of R-SS-DGP is smoother than that of GP. Moreover, the learnt length scale and magnitude parameters are flat and jump at the acceleration/deceleration points.

\chapter{Summary and discussion}
\label{chap:summary}
In this chapter we present a concise summary of Publications I--VII as well as discussion on a few unsolved problems and possible future extensions.

\section{Summary of publications}
This section briefly summaries the contributions of Publications~I--VII and highlights their significances.

\subsection*{Publication~\cp{paperTME} (Chapter~\ref{chap:tme})}
This paper proposes a new class of non-linear continuous-discrete Gaussian filters and smoothers by using the Taylor moment expansion (TME) scheme to predict the means and covariances from SDEs. The main significance of this paper is that the TME method can provide asymptotically exact solutions of the predictive mean and covariances required in the Gaussian filtering and smoothing steps. Secondly, the paper analyses the positive definiteness of TME covariance approximations and thereupon presents a few sufficient conditions to guarantee the positive definiteness. Lastly, the paper analyses the stability of TME Gaussian filters.

\subsection*{Publication~\cp{paperSSDGP} (Chapter~\ref{chap:dssgp})}
This paper introduces state-space representations of a class of deep Gaussian processes (DGPs). More specifically, the paper defines DGPs as vector-valued stochastic processes over collections of conditional GPs, thereupon, the paper represents DGPs in hierarchical systems of the SDE representations of their conditional GPs. The main significance of this paper is that the resulting state-space DGPs (SS-DGPs) are Markov processes, so that the SS-DGP regression problem is computationally cheap (i.e., linear with respect to the number of measurements) by using continuous-discrete filtering and smoothing methods. Secondly, the paper identifies that for a certain class of SS-DGPs the Gaussian filtering and smoothing methods fail to learn the posterior distributions of their state components. Finally, the paper features a real application of SS-DGPs in modelling a gravitational wave signal. 

\subsection*{Publication~\cp{paperKFSECG} (Section~\ref{sec:spectro-temporal})}
This paper is an extension of Publication~\cp{paperKFSECGCONF}. In particular, the quasi-periodic SDEs are used to model the Fourier coefficients instead of the Ornstein--Uhlenbeck ones used in Publication~\cp{paperKFSECGCONF}. This consideration leads to state-space models for which the measurement representations are time-invariant therefore, one can use steady-state Kalman filters and smoothers to solve the spectro-temporal estimation problem with lower computational cost compared to Publication~\cp{paperKFSECGCONF}. This paper also expands the experiments for atrial fibrillation detection by taking into account more classifiers. 

\subsection*{Publication~\cp{paperDRIFT} (Section~\ref{sec:drift-est})}
This paper is concerned with the state-space GP approach for estimating unknown drift functions of SDEs from partially observed trajectories. This approach is significant mainly in terms of computation, as the computational complexity scales linearly in the number of measurements. In addition, the state-space GP approach allows for using high-order It\^{o}--Taylor expansions in order to give accurate SDE discretisations without the necessity to compute the covariance matrices of the derivatives of the GP prior.

\subsection*{Publication~\cp{paperKFSECGCONF} (Section~\ref{sec:spectro-temporal})}
This paper introduces a state-space probabilistic spectro-temporal estimation method and thereupon applies the method for detecting atrial fibrillation from electrocardiogram signals. The so-called probabilistic spectro-temporal estimation is a GP regression-based model for estimating the coefficients of Fourier expansions. The main significance of this paper is that the state-space framework allows for dealing with large sets of measurements and high-order Fourier expansions. Also, the combination of the spectro-temporal estimation method and deep convolutional neural networks shows efficacy for classifying a class of electrocardiogram signals.

\subsection*{Publication~\cp{paperMARITIME} (Section~\ref{sec:maritime})}
This paper reviews sensor technologies and machine learning methods for autonomous maritime vessel navigation. In particular, the paper lists and reviews a number of studies that use deep learning and GP methods for vessel trajectory analysis, ship detection and classification, and ship tracking. The paper also features a ship detection example by using a deep convolutional neural network. 

\subsection*{Publication~\cp{paperRNSSGP} (Section~\ref{sec:l1-r-dgp})}
This paper solves $L^1$-regularised DGP regression problems under the alternating direction method of multipliers (ADMM) framework. The significance of this paper is that one can introduce regularisation (e.g., sparseness or total variation) at any level of the DGP component hierarchy. Secondly, the paper provides a general framework that allows for regularising both batch and state-space DGPs. Finally, the paper presents a convergence analysis for the proposed ADMM solution of $L^1$-regularised DGP regression problems.

\section{Discussion}
Finally, we end this thesis with discussion on some unsolved problems and possible future extensions.

\subsection*{Positive definiteness analysis for high-order and high-dimensional TME covariance approximation}
Theorem~\ref{thm:tme-cov-pd} provides a sufficient condition to guarantee the positive definiteness of TME covariance approximations. However, the use of Theorem~\ref{thm:tme-cov-pd} soon becomes infeasible as the expansion order $M$ and the state dimension $d$ grow large. In practice, it can be easier to check the positive definiteness numerically when $d$ is small. 

\subsection*{Practical implementation of TME}
A practical challenge with implementing TME consists in the presence of derivative terms in $\A$ (see, Equation~\eqref{equ:generator-ito}). This in turn implies that the iterated generator $\A^M$ further requires the computation of derivatives of the SDE coefficients up to order $M$. While the derivatives of $\A$ are easily computed by hand, the derivatives in $\A^M$ require more consideration as they involve numerous applications of the chain rule, not to mention the multidimensional operator $\Am$ in Remark~\ref{remark:multidim-generator}.

While in our current implementation we chose to use symbolic differentiation (for ease of implementation as well as portability across languages), several things can be said against using it. Symbolic differentiation explicitly computes full Jacobians, where only vector-Jacobian/Jacobian-vector products would be necessary. This induces an unnecessary overhead that grows with the dimension of the problem. Also, symbolic differentiation is usually independent of the philosophy of modern differentiable programming frameworks and the optimisation for parallelisable hardware (e.g., GPUs), hence they may incur a loss of performance on these.

Automatic differentiation tools, for instance, TensorFlow and JaX are amenable to computing the derivatives in $\Am$. Furthermore, they provide efficient computations for Jacobian-vector/vector-Jacobian products. We hence argue that these tools are worthwhile for performance improvement in the future\footnote{By the time of the pre-examination of this thesis, the TME method is now implemented in JaX as an open source library (see, Section~\ref{sec:codes}).}.

\subsection*{Generalisation of the identifiability analysis}
The identifiability analysis in Section~\ref{sec:identi-problem} is limited to SS-DGPs for which the GP elements are one-dimensional. This dimension assumption is used in order to derive Equation~\eqref{equ:vanish-cov-eq1} in closed-form. However, it is of interest to see whether we can generalise Lemma~\ref{lemma:vanishing-prior-cov} for SS-DGPs that have multidimensional GP elements.

The abstract Gaussian filter in Algorithm~\ref{alg:abs-gf} assumes that the prediction steps are done exactly. However, this assumption may not always be realistic because Gaussian filters often involve numerical integrations to predict through SDEs, for example, by using sigma-point methods. Hence, it is important to verify if Lemma~\ref{lemma:vanishing-prior-cov} still holds when one computes the filtering predictions by some numerical means.

\subsection*{Spatio-temporal SS-DGPs}
SS-DGPs are stochastic processes defined on temporal domains. In order to model spatio-temporal data, it is necessary to generalise SS-DGPs to take values in infinite-dimensional spaces~\citep{Giuseppe2014}. A path for this generalisation is to leverage the stochastic partial differential equation (SPDE) representations of spatio-temporal GPs. To see this, let us consider an $\mathbb{H}$-valued stochastic process $U \colon \T \to \mathbb{H}$ governed by a well-defined SPDE
\begin{equation}
	\diff U(t) = A \, U(t) \diff t + B \diff W(t) \nonumber
\end{equation}
with some boundary and initial conditions, where $A\colon \mathbb{H} \to \mathbb{H}$ and $B\colon \mathbb{W} \to \mathbb{H}$ are linear operators, and $W\colon \T \to \mathbb{W}$ is a $\mathbb{W}$-valued Wiener process. Then we can borrow the idea presented in Section~\ref{sec:ssdgp} to form a spatio-temporal SS-DGP by hierarchically composing such SPDEs of the form above.

A different path for generalising SS-DGPs is shown by~\citet{Emzir2020}. Specifically, they build deep Gaussian fields based on the SPDE representations of \matern fields~\citep{Whittle1954, Lindgren2011}. However, we should note that this approach gives random fields instead of spatio-temporal processes.

\renewcommand{\bibname}{References}
\bibliographystyle{plainnat}
\bibliography{refs}

\begin{thebibliography}{217}
\providecommand{\natexlab}[1]{#1}
\providecommand{\url}[1]{\texttt{#1}}
\expandafter\ifx\csname urlstyle\endcsname\relax
  \providecommand{\doi}[1]{doi: #1}\else
  \providecommand{\doi}{doi: \begingroup \urlstyle{rm}\Url}\fi

\bibitem[Ababou et~al.(1994)Ababou, Bagtzoglou, and Wood]{Ababou1994}
Rachid Ababou, Amvrossios~C. Bagtzoglou, and Eric~F. Wood.
\newblock On the condition number of covariance matrices in kriging,
  estimation, and simulation of random fields.
\newblock \emph{Mathematical Geology}, 26\penalty0 (1):\penalty0 99--133, 1994.

\bibitem[Abbott et~al.(2016)]{LIGO2016}
Benjamin~P. Abbott et~al.
\newblock Observation of gravitational waves from a binary black hole merger.
\newblock \emph{Physical Review Letters}, 116\penalty0 (6):\penalty0 061102,
  2016.

\bibitem[A\"{i}t-Sahalia(2003)]{Sahalia2003}
Yacine A\"{i}t-Sahalia.
\newblock Maximum likelihood estimation of discretely sampled diffusions: a
  closed‐form approximation approach.
\newblock \emph{Econometrica}, 70\penalty0 (1):\penalty0 223--262, 2003.

\bibitem[Anderson(1972)]{Anderson1972}
Brian D.~O. Anderson.
\newblock Fixed interval smoothing for nonlinear continuous time systems.
\newblock \emph{Information and Control}, 20\penalty0 (3):\penalty0 294--300,
  1972.

\bibitem[Anderson and Moore(1981)]{Anderson1981}
Brian D.~O. Anderson and John~B. Moore.
\newblock Detectability and stabilizability of time-varying discrete-time
  linear systems.
\newblock \emph{SIAM Journal on Control and Optimization}, 19\penalty0
  (1):\penalty0 20--32, 1981.

\bibitem[Andrieu et~al.(2010)Andrieu, Doucet, and Holenstein]{Christophe2010}
Christophe Andrieu, Arnaud Doucet, and Roman Holenstein.
\newblock Particle {M}arkov chain {M}onte {C}arlo methods.
\newblock \emph{Journal of the Royal Statistical Society: Series B (Statistical
  Methodology)}, 72\penalty0 (3):\penalty0 269--342, 2010.

\bibitem[\'{A}ngel F.~Garc\'{i}a-Fern\'{a}ndez et~al.(2019)\'{A}ngel
  F.~Garc\'{i}a-Fern\'{a}ndez, Tronarp, and S\"{a}rkk\"{a}]{Garcia2019}
\'{A}ngel F.~Garc\'{i}a-Fern\'{a}ndez, Filip Tronarp, and Simo S\"{a}rkk\"{a}.
\newblock {G}aussian process classification using posterior linearization.
\newblock \emph{IEEE Signal Processing Letters}, 26\penalty0 (5):\penalty0
  735--739, 2019.

\bibitem[Arasaratnam and Haykin(2009)]{Cubature2009}
Ienkaran Arasaratnam and Simon Haykin.
\newblock Cubature {K}alman filters.
\newblock \emph{IEEE Transactions on Automatic Control}, 54\penalty0
  (6):\penalty0 1254--1269, 2009.

\bibitem[Arasaratnam et~al.(2007)Arasaratnam, Haykin, and
  Elliott]{Arasaratnam2007}
Ienkaran Arasaratnam, Simon Haykin, and Robert~J. Elliott.
\newblock Discrete-time nonlinear filtering algorithms using {G}auss--{H}ermite
  quadrature.
\newblock \emph{Proceedings of the IEEE}, 95\penalty0 (5):\penalty0 953--977,
  2007.

\bibitem[Arasaratnam et~al.(2010)Arasaratnam, Haykin, and Hurd]{CDCKF2010}
Ienkaran Arasaratnam, Simon Haykin, and Thomas~R. Hurd.
\newblock Cubature {K}alman filtering for continuous-discrete systems: Theory
  and simulations.
\newblock \emph{IEEE Transactions on Automatic Control}, 58\penalty0
  (10):\penalty0 4977--4993, 2010.

\bibitem[Archambeau et~al.(2007)Archambeau, Cornford, Opper, and
  Shawe-Taylor]{Archambeau2007}
C\'{e}dric Archambeau, Dan Cornford, Manfred Opper, and John Shawe-Taylor.
\newblock Gaussian process approximations of stochastic differential equations.
\newblock In \emph{Gaussian Processes in Practice}, volume~1, pages 1--16.
  PMLR, 2007.

\bibitem[Archambeau et~al.(2008)Archambeau, Opper, Shen, Cornford, and
  Shawe-taylor]{Archambeau2008}
C\'{e}dric Archambeau, Manfred Opper, Yuan Shen, Dan Cornford, and John
  Shawe-taylor.
\newblock Variational inference for diffusion processes.
\newblock In \emph{Advances in Neural Information Processing Systems 20}, pages
  1--8. Curran Associates, Inc., 2008.

\bibitem[Axelsson and Gustafsson(2015)]{Axelsson2015}
Patrik Axelsson and Fredrik Gustafsson.
\newblock Discrete-time solutions to the continuous-time differential
  {L}yapunov equation with applications to {K}alman filtering.
\newblock \emph{IEEE Transactions on Automatic Control}, 60\penalty0
  (3):\penalty0 632--643, 2015.

\bibitem[Baake and Schl\"{a}gel(2011)]{Baake2011}
Michael Baake and Ulrike Schl\"{a}gel.
\newblock The {P}eano--{B}aker series.
\newblock In \emph{Proceedings of the Steklov Institute of Mathematics}, volume
  275, pages 155--159, 2011.

\bibitem[Bain and Crisan(2009)]{Bain2009}
Alan Bain and Dan Crisan.
\newblock \emph{Fundamentals of Stochastic Filtering}.
\newblock Springer-Verlag New York, 2009.

\bibitem[Bar‐Shalom et~al.(2002)Bar‐Shalom, Li, and
  Kirubarajan]{BarShalom2002}
Yaakov Bar‐Shalom, Xiao-Rong Li, and Thiagalingam Kirubarajan.
\newblock \emph{Estimation with Applications to Tracking and Navigation:
  Theory, Algorithms and Software}.
\newblock John Wiley \& Sons, 2002.

\bibitem[Basu et~al.(2006)Basu, Pollack, and Roy]{Basu2006}
Saugata Basu, Richard Pollack, and Marie-Fran\c{c}oise Roy.
\newblock \emph{Algorithms in Real Algebraic Geometry}.
\newblock Springer-Verlag Berlin Heidelberg, 2006.

\bibitem[Batz et~al.(2018)Batz, Ruttor, and Opper]{Batz2018}
Philipp Batz, Andreas Ruttor, and Manfred Opper.
\newblock Approximate {B}ayes learning of stochastic differential equations.
\newblock \emph{Physical Review E}, 98\penalty0 (2):\penalty0 022109, 2018.

\bibitem[Beard et~al.(1999)Beard, Kenney, Gunther, Lawton, and
  Stirling]{Beard1999}
Randal Beard, John Kenney, Jacob Gunther, Jonathan Lawton, and Wynn Stirling.
\newblock Nonlinear projection filter based on {G}alerkin approximation.
\newblock \emph{Journal of Guidance, Control, and Dynamics}, 22\penalty0
  (2):\penalty0 258--266, 1999.

\bibitem[Bell(1994)]{Bell1994}
Bradley~M. Bell.
\newblock The iterated {K}alman smoother as a {G}auss--{N}ewton method.
\newblock \emph{SIAM Journal on Optimization}, 4\penalty0 (3):\penalty0
  626--636, 1994.

\bibitem[Bernstein(2009)]{Bernstein2009}
Dennis~S. Bernstein.
\newblock \emph{Matrix Mathematics: Theory, Facts, and Formulas}.
\newblock Princeton University Press, 2009.

\bibitem[Beskos and Roberts(2005)]{Beskos2005}
Alexandros Beskos and Gareth~O. Roberts.
\newblock Exact simulation of diffusions.
\newblock \emph{The Annals of Applied Probability}, 15\penalty0 (4):\penalty0
  2422--2444, 2005.

\bibitem[Beskos et~al.(2006)Beskos, Papaspiliopoulos, Roberts, and
  Fearnhead]{Beskos2006}
Alexandros Beskos, Omiros Papaspiliopoulos, Gareth~O. Roberts, and Paul
  Fearnhead.
\newblock Exact and computationally efficient likelihood-based estimation for
  discretely observed diffusion processes.
\newblock \emph{Journal of the Royal Statistical Society: Series B (Statistical
  Methodology)}, 68\penalty0 (3):\penalty0 333--382, 2006.

\bibitem[Bishop(2006)]{Bishop2006}
Christopher~M. Bishop.
\newblock \emph{Pattern Recognition and Machine Learning}.
\newblock Springer, 2006.

\bibitem[Blair(1991)]{Blair1991}
David~G. Blair, editor.
\newblock \emph{The Detection of Gravitational Waves}.
\newblock Cambridge University Press, 1991.

\bibitem[Blanchet and Zhang(2020)]{Blanchet2020}
Jose Blanchet and Fan Zhang.
\newblock Exact simulation for multivariate {I}t\^{o} diffusions.
\newblock \emph{Advances in Applied Probability}, 52\penalty0 (4):\penalty0
  1003--1034, 2020.

\bibitem[Bl\"{o}mker et~al.(2013)Bl\"{o}mker, Law, Stuart, and
  Zygalakis]{Blomker2013}
Dirk Bl\"{o}mker, Kody J.~H. Law, Andrew~M. Stuart, and Konstantinos~C.
  Zygalakis.
\newblock Accuracy and stability of the continuous-time {3DVAR} filter for the
  {N}avier--{S}tokes equation.
\newblock \emph{Nonlinearity}, 26\penalty0 (8):\penalty0 2193--2219, 2013.

\bibitem[Bogachev(1998)]{Bogachev1998}
Vladimir~I. Bogachev.
\newblock \emph{Gaussian Measures}.
\newblock American Mathematical Society, 1998.

\bibitem[Boyd and Vandenberghe(2004)]{Boyd2004Convex}
Stephen Boyd and Lieven Vandenberghe.
\newblock \emph{Convex Optimization}.
\newblock Cambridge University Press, 2004.

\bibitem[Boyd et~al.(2011)Boyd, Parikh, Chu, Peleato, and
  Eckstein]{Boyd2011admm}
Stephen Boyd, Neal Parikh, Eric Chu, Borja Peleato, and Jonathan Eckstein.
\newblock Distributed optimization and statistical learning via the alternating
  direction method of multipliers.
\newblock \emph{Foundations and Trends in Machine Learning}, 3\penalty0
  (1):\penalty0 1--122, 2011.

\bibitem[Braumann(2019)]{Braumann2019}
Carlos~A. Braumann.
\newblock \emph{Introduction to Stochastic Differential Equations with
  Applications to Modelling in Biology and Finance}.
\newblock John Wiley \& Sons, 2019.

\bibitem[Bretthorst(1988)]{Bretthorst1988}
George~L. Bretthorst.
\newblock \emph{Bayesian Spectrum Analysis and Parameter Estimation}.
\newblock Springer-Verlag Berlin Heidelberg, 1988.

\bibitem[Brigo et~al.(1998)Brigo, Hanzon, and LeGland]{Brigo1998}
Damiano Brigo, Bernard Hanzon, and Fran\c{c}ois LeGland.
\newblock A differential geometric approach to nonlinear filtering: the
  projection filter.
\newblock \emph{IEEE Transactions on Automatic Control}, 43\penalty0
  (2):\penalty0 247--252, 1998.

\bibitem[Brogan(2011)]{Brogan2011}
William~L. Brogan.
\newblock \emph{Modern Control Theory}.
\newblock Pearson, 3rd edition, 2011.

\bibitem[Bui et~al.(2016)Bui, Hern\'{a}ndez-Lobato, Hern\'{a}ndez-Lobato, Li,
  and Turner]{Bui2016}
Thang Bui, Jos\'{e}~M. Hern\'{a}ndez-Lobato, Daniel Hern\'{a}ndez-Lobato,
  Yingzhen Li, and Richard Turner.
\newblock Deep {G}aussian processes for regression using approximate
  expectation propagation.
\newblock In \emph{Proceedings of the 33rd International Conference on Machine
  Learning}, volume~48, pages 1472--1481, New York, USA, 2016. PMLR.

\bibitem[Calandra et~al.(2016)Calandra, Peters, Rasmussen, and
  Deisenroth]{Calandra2016ManifoldGP}
Roberto Calandra, Jan Peters, Carl~E. Rasmussen, and Marc~P. Deisenroth.
\newblock Manifold {G}aussian processes for regression.
\newblock In \emph{Proceedings of the 2016 International Joint Conference on
  Neural Networks (IJCNN)}, pages 3338--3345, Vancouver, Canada, 2016.

\bibitem[Canuto and Tabacco(2014)]{Canuto2015}
Claudio Canuto and Anita Tabacco.
\newblock \emph{Mathematical Analysis II}.
\newblock Springer International Publishing, 2nd edition, 2014.

\bibitem[Challa and Bar-Shalom(2000)]{Challa2000}
Subhash Challa and Yaakov Bar-Shalom.
\newblock Nonlinear filter design using {F}okker--{P}lanck--{K}olmogorov
  probability density evolutions.
\newblock \emph{IEEE Transactions on Aerospace and Electronic Systems},
  36\penalty0 (1):\penalty0 309--315, 2000.

\bibitem[Chalupka et~al.(2013)Chalupka, Williams, and Murray]{Chalupka2013}
Krzysztof Chalupka, Christopher K.~I. Williams, and Iain Murray.
\newblock A framework for evaluating approximation methods for {G}aussian
  process regression.
\newblock \emph{Journal of Machine Learning Research}, 14:\penalty0 333--350,
  2013.

\bibitem[Chopin and Papaspiliopoulos(2020)]{Chopin2020}
Nicolas Chopin and Omiros Papaspiliopoulos.
\newblock \emph{An Introduction to Sequential Monte Carlo}.
\newblock Springer Series in Statistics. Springer International Publishing,
  2020.

\bibitem[Chung and Williams(1990)]{Chung1990}
Kai~Lai Chung and Ruth~J. Williams.
\newblock \emph{Introduction to Stochastic Integration}.
\newblock Probability and Its Applications. Birkh\"{a}user Boston, 2nd edition,
  1990.

\bibitem[Corenflos et~al.(2021{\natexlab{a}})Corenflos, Thornton,
  Deligiannidis, and Doucet]{Corenflos2021OT}
Adrien Corenflos, James Thornton, George Deligiannidis, and Arnaud Doucet.
\newblock Differentiable particle filtering via entropy-regularized optimal
  transport.
\newblock In \emph{Proceedings of the 38th International Conference on Machine
  Learning}, volume 139, pages 2100--2111. PMLR, 2021{\natexlab{a}}.

\bibitem[Corenflos et~al.(2021{\natexlab{b}})Corenflos, Zhao, and
  S{\"a}rkk{\"a}]{Corenflos2021SSGP}
Adrien Corenflos, Zheng Zhao, and Simo S{\"a}rkk{\"a}.
\newblock Gaussian process regression in logarithmic time.
\newblock \emph{arXiv preprint arXiv:2102.09964}, 2021{\natexlab{b}}.

\bibitem[Courts et~al.(2021)Courts, Wills, and Sch\"{o}n]{Courts2021}
Jarrad Courts, Adrian Wills, and Thomas~B. Sch\"{o}n.
\newblock Gaussian variational state estimation for nonlinear state-space
  models.
\newblock \emph{IEEE Transactions on Signal Processing}, 2021.
\newblock In press.

\bibitem[Csat\'{o} and Opper(2002)]{Csato2002}
Lehel Csat\'{o} and Manfred Opper.
\newblock Sparse on-line {G}aussian processes.
\newblock \emph{Neural Computation}, 14\penalty0 (3):\penalty0 641--668, 2002.

\bibitem[DaCunha(2005)]{DaCunha2005}
Jeffery~J. DaCunha.
\newblock Transition matrix and generalized matrix exponential via the
  {P}eano--{B}aker series.
\newblock \emph{Journal of Difference Equations and Applications}, 11\penalty0
  (15):\penalty0 1245--1264, 2005.

\bibitem[Dacunha-Castelle and Florens-Zmirou(1986)]{Zmirou1986}
Didier Dacunha-Castelle and Danielle Florens-Zmirou.
\newblock Estimation of the coefficients of a diffusion from discrete
  observations.
\newblock \emph{Stochastics}, 19\penalty0 (4):\penalty0 263--284, 1986.

\bibitem[Damianou and Lawrence(2013)]{Damianou2013}
Andreas Damianou and Neil~D. Lawrence.
\newblock Deep {G}aussian processes.
\newblock In \emph{Proceedings of the 16th International Conference on
  Artificial Intelligence and Statistics}, volume~31, pages 207--215,
  Scottsdale, Arizona, USA, 2013. PMLR.

\bibitem[Davis and Rabinowitz(1984)]{Davis1984}
Philip~J. Davis and Philip Rabinowitz.
\newblock \emph{Methods of Numerical Integration}.
\newblock Academic Press, 1984.

\bibitem[Doucet et~al.(2000)Doucet, Godsill, and Andrieu]{Docet2000}
Arnaud Doucet, Simon Godsill, and Christophe Andrieu.
\newblock On sequential {M}onte {C}arlo sampling methods for {B}ayesian
  filtering.
\newblock \emph{Statistics and Computing}, 10\penalty0 (3):\penalty0 197--208,
  2000.

\bibitem[Doucet et~al.(2001)Doucet, de~Freitas, and Gordon]{Doucet2001}
Arnaud Doucet, Nando de~Freitas, and Neil Gordon, editors.
\newblock \emph{Sequential Monte Carlo Methods in Practice}.
\newblock Statistics for Engineering and Information Science. Springer-Verlag
  New York, 2001.

\bibitem[Dunlop et~al.(2018)Dunlop, Girolami, Stuart, and
  Teckentrup]{Dunlop2018JMLR}
Matthew~M. Dunlop, Mark~A. Girolami, Andrew~M. Stuart, and Aretha~L.
  Teckentrup.
\newblock How deep are deep {G}aussian processes?
\newblock \emph{Journal of Machine Learning Research}, 19\penalty0
  (54):\penalty0 1--46, 2018.

\bibitem[Duvenaud(2014)]{Duvenaud2014Thesis}
David~K. Duvenaud.
\newblock \emph{Automatic Model Construction with {G}aussian Processes}.
\newblock PhD thesis, University of Cambridge, 2014.

\bibitem[Duvenaud et~al.(2014)Duvenaud, Rippel, Adams, and
  Ghahramani]{Duvenaud2014}
David~K. Duvenaud, Oren Rippel, Ryan Adams, and Zoubin Ghahramani.
\newblock Avoiding pathologies in very deep networks.
\newblock In \emph{Proceedings of the 17th International Conference on
  Artificial Intelligence and Statistics}, volume~33, pages 202--210,
  Reykjav\'{i}k, Iceland, 2014. PMLR.

\bibitem[Dynkin(1965)]{Dynkin1965}
Eugene~B. Dynkin.
\newblock \emph{Markov Processes: Volume 1}.
\newblock Springer-Verlag Berlin Heidelberg, 1965.

\bibitem[Einstein and Rosen(1937)]{EinsteinGW1937}
Albert Einstein and Nathan Rosen.
\newblock On gravitational waves.
\newblock \emph{Journal of the Franklin Institute}, 223\penalty0 (1):\penalty0
  43--54, 1937.

\bibitem[Emzir et~al.(2020)Emzir, Lasanen, Purisha, Roininen, and
  S\"{a}rkk\"{a}]{Emzir2020}
Muhammad Emzir, Sari Lasanen, Zenith Purisha, Lassi Roininen, and Simo
  S\"{a}rkk\"{a}.
\newblock Non-stationary multi-layered {G}aussian priors for {B}ayesian
  inversion.
\newblock \emph{Inverse Problems}, 37\penalty0 (1):\penalty0 015002, 2020.

\bibitem[Ethier and Kurtz(1986)]{Ethier1986}
Stewart~N. Ethier and Thomas~G. Kurtz.
\newblock \emph{Markov Processes: Characterization and Convergence}.
\newblock John Wiley \& Sons, 1986.

\bibitem[Evensen(2009)]{Evensen2009}
Geir Evensen.
\newblock \emph{Data Assimulation: The Ensemble Kalman Filter}.
\newblock Springer-Verlag Berlin Heidelberg, 2nd edition, 2009.

\bibitem[Florens-Zmirou(1989)]{Zmirou1989}
Danielle Florens-Zmirou.
\newblock Approximate discrete-time schemes for statistics of diffusion
  processes.
\newblock \emph{Statistics}, 20\penalty0 (4):\penalty0 547--557, 1989.

\bibitem[Friedman(1975)]{Friedman1975}
Avner Friedman.
\newblock \emph{Stochastic Differential Equations and Applications: Volume 1}.
\newblock Academic Press, 1975.

\bibitem[Gall(2016)]{LeGall2016}
Jean-Fran\c{c}ois~Le Gall.
\newblock \emph{Brownian Motion, Martingales, and Stochastic Calculus}, volume
  274 of \emph{Graduate Texts in Mathematics}.
\newblock Springer International Publishing Switzerland, 2016.

\bibitem[Gao(2020)]{Rui2020thesis}
Rui Gao.
\newblock \emph{Rescursive Smoother Type Variable Splitting Methods for State
  Estimation}.
\newblock PhD thesis, Aalto University, 2020.

\bibitem[Gao et~al.(2019{\natexlab{a}})Gao, Tronarp, and
  S{\"a}rkk{\"a}]{Rui2019ieks}
Rui Gao, Filip Tronarp, and Simo S{\"a}rkk{\"a}.
\newblock Iterated extended {K}alman smoother-based variable splitting for
  ${L}_1$-regularized state estimation.
\newblock \emph{IEEE Transactions on Signal Processing}, 97\penalty0
  (19):\penalty0 5078--5092, 2019{\natexlab{a}}.

\bibitem[Gao et~al.(2019{\natexlab{b}})Gao, Tronarp, Zhao, and
  S\"{a}rk\"{a}]{GaoRui2019ALKS}
Rui Gao, Filip Tronarp, Zheng Zhao, and Simo S\"{a}rk\"{a}.
\newblock Regularized state estimation and parameter learning via augmented
  {L}agrangian {K}alman smoother method.
\newblock In \emph{Proceedings of the 29th IEEE International Workshop on
  Machine Learning for Signal Processing (MLSP)}, pages 1--6, Pittsburgh, PA,
  USA, 2019{\natexlab{b}}.

\bibitem[Garc\`{i}a et~al.(2017)Garc\`{i}a, Otero, F\'{e}lix, Presedo, and
  M\'{a}rquez]{Garcia2017}
Constantino~A. Garc\`{i}a, Abraham Otero, Paulo F\'{e}lix, Jes\'{u}s Presedo,
  and David~G. M\'{a}rquez.
\newblock Nonparametric estimation of stochastic differential equations with
  sparse {G}aussian processes.
\newblock \emph{Physical Review E}, 96\penalty0 (2):\penalty0 022104, 2017.

\bibitem[Gardner et~al.(2018)Gardner, Pleiss, Weinberger, Bindel, and
  Wilson]{Gardner2018}
Jacob Gardner, Geoff Pleiss, Kilian~Q. Weinberger, David Bindel, and Andrew~G.
  Wilson.
\newblock {GPyTorch}: Blackbox matrix-matrix {G}aussian process inference with
  {GPU} acceleration.
\newblock In \emph{Advances in Neural Information Processing Systems 31}, pages
  1--11. Curran Associates, Inc., 2018.

\bibitem[Gibbs(1997)]{Gibbs}
Mark~N. Gibbs.
\newblock \emph{{B}ayesian {G}aussian Processes for Regression and
  Classification}.
\newblock PhD thesis, University of Cambridge, 1997.

\bibitem[Godsill et~al.(2004)Godsill, Doucet, and West]{Simon2004}
Simon~J. Godsill, Arnaud Doucet, and Mike West.
\newblock {M}onte {C}arlo smoothing for nonlinear time series.
\newblock \emph{Journal of the American Statistical Association}, 99\penalty0
  (465):\penalty0 156--168, 2004.

\bibitem[Grigorievskiy et~al.(2017)Grigorievskiy, Lawrence, and
  S\"{a}rkk\"{a}]{Grigorievskiy2017}
Alexander Grigorievskiy, Neil Lawrence, and Simo S\"{a}rkk\"{a}.
\newblock Parallelizable sparse inverse formulation {G}aussian processes
  ({SpInGP}).
\newblock In \emph{Proceedings of the 27th IEEE International Workshop on
  Machine Learning for Signal Processing (MLSP)}, pages 1--6, Tokyo, Japan,
  2017.

\bibitem[Gross et~al.(2018)Gross, Yellen, and Anderson]{Gross2019}
Jonathan~L. Gross, Jay Yellen, and Mark Anderson.
\newblock \emph{Graph Theory and Its Applications}.
\newblock Chapman \& Hall/CRC, 3rd edition, 2018.

\bibitem[Hartikainen and S\"{a}rkk\"{a}(2010)]{Hartikainen2010}
Jouni Hartikainen and Simo S\"{a}rkk\"{a}.
\newblock Kalman filtering and smoothing solutions to temporal {G}aussian
  process regression models.
\newblock In \emph{Proceedings of the 20th IEEE International Workshop on
  Machine Learning for Signal Processing (MLSP)}, pages 379--384, 2010.

\bibitem[Hastie et~al.(2015)Hastie, Tibshirani, and Wainwright]{Hastie2015}
Trevor Hastie, Robert Tibshirani, and Martin Wainwright.
\newblock \emph{Statistical Learning with Sparsity: The Lasso and
  Generalizations}.
\newblock CRC Press, 2015.

\bibitem[Heinonen et~al.(2016)Heinonen, Mannerstr\"{o}m, Rousu, Kaski, and
  L\"{a}hdesm\"{a}ki]{Heinonen2016}
Markus Heinonen, Henrik Mannerstr\"{o}m, Juho Rousu, Samuel Kaski, and Harri
  L\"{a}hdesm\"{a}ki.
\newblock Non-stationary {G}aussian process regression with {H}amiltonian
  {M}onte {C}arlo.
\newblock In \emph{Proceedings of the 19th International Conference on
  Artificial Intelligence and Statistics}, volume~51, pages 732--740, Cadiz,
  Spain, 2016. PMLR.

\bibitem[Helmke and Rosenthal(1995)]{Helmke1995}
Uwe Helmke and Joachim Rosenthal.
\newblock Eigenvalue inequalities and {S}chubert calculus.
\newblock \emph{Mathematische Nachrichten}, 171\penalty0 (1):\penalty0
  207--225, 1995.

\bibitem[Hennig et~al.(2015)Hennig, Osborne, and Girolami]{Hennig2015}
Philipp Hennig, Michael~A. Osborne, and Mark Girolami.
\newblock Probabilistic numerics and uncertainty in computations.
\newblock \emph{Proceedings of the Royal Society A: Mathematical, Physical and
  Engineering Sciences}, 471\penalty0 (2179):\penalty0 20150142, 2015.

\bibitem[Hensman et~al.(2013)Hensman, Fusi, and Lawrence]{Hensman2013}
James Hensman, Nicol\`{o} Fusi, and Neil~D. Lawrence.
\newblock Gaussian processes for big data.
\newblock In \emph{Proceedings of the 29th Conference on Uncertainty in
  Artificial Intelligence}, pages 282--290. AUAI Press, 2013.

\bibitem[Higdon et~al.(1999)Higdon, Swall, and Kern]{Higdon1999non}
Dave Higdon, Jenise Swall, and J.~Kern.
\newblock Non-stationary spatial modeling.
\newblock \emph{Bayesian Statistics}, 6\penalty0 (1):\penalty0 761--768, 1999.

\bibitem[Hill et~al.(2017)Hill, Nuroski, Bieri, Garfinkle, and Yunes]{Hill2017}
Clyde~D. Hill, Pawe{\l{}} Nuroski, Lydia Bieri, David Garfinkle, and
  Nicol\'{a}s Yunes.
\newblock The mathematics of gravitational waves.
\newblock \emph{Notice of the AMS}, 64\penalty0 (7):\penalty0 686--707, 2017.

\bibitem[Horn and Johnson(1991)]{HornBook1991}
Roger~A. Horn and Charles~R. Johnson.
\newblock \emph{Topics in Matrix Analysis}.
\newblock Cambridge University Press, 1991.

\bibitem[Hostettler et~al.(2018)Hostettler, Lumikari, Palva, Nieminen, and
  S\"{a}rkk\"{a}]{Hostettler2018}
Roland Hostettler, Tuomas Lumikari, Lauri Palva, Tuomo Nieminen, and Simo
  S\"{a}rkk\"{a}.
\newblock Motion artifact reduction in ambulatory electrocardiography using
  inertial measurement units and {K}alman filtering.
\newblock In \emph{Proceedings of the 21st International Conference on
  Information Fusion (FUSION)}, pages 780--787, Cambridge, UK, 2018.

\bibitem[Iacus(2008)]{Iacus2008}
Stefano~M. Iacus.
\newblock \emph{Simulation and Inference for Stochastic Differential Equations:
  With R Examples}.
\newblock Springer-Verlag New York, 2008.

\bibitem[Ikeda and Watanabe(1992)]{Ikeda1992}
Nobuyuki Ikeda and Shinzo Watanabe.
\newblock \emph{Stochastic Differential Equations and Diffusion Processes}.
\newblock North Holland, 2nd edition, 1992.

\bibitem[It\^{o} and Xiong(2000)]{Kazufumi2000}
Kazufumi It\^{o} and Kaiqi Xiong.
\newblock Gaussian filters for nonlinear filtering problems.
\newblock \emph{IEEE Transactions on Automatic Control}, 45\penalty0
  (5):\penalty0 910--927, 2000.

\bibitem[It\^{o}(1944)]{Ito1944}
Kiyosi It\^{o}.
\newblock Stochastic integral.
\newblock In \emph{Proceedings of the Imperial Academy}, volume~20, pages
  519--524, 1944.

\bibitem[It\^{o}(2004)]{Ito2004}
Kiyosi It\^{o}.
\newblock \emph{Stochastic Processes: Lectures given at Aarhus University}.
\newblock Springer-Verlag Berlin Heidelberg, 2004.

\bibitem[Jazwinski(1970)]{Jazwinski1970}
Andrew~H. Jazwinski.
\newblock \emph{Stochastic Processes and Filtering Theory}.
\newblock Academic Press, 1970.

\bibitem[Jia et~al.(2012)Jia, Xin, and Cheng]{Jia2012}
Bin Jia, Ming Xin, and Yang Cheng.
\newblock Sparse-grid quadrature nonlinear filtering.
\newblock \emph{Automatica}, 48\penalty0 (2):\penalty0 327--341, 2012.

\bibitem[Julier and Uhlmann(2004)]{Julier2004}
Simo~J. Julier and Jeffrey~K. Uhlmann.
\newblock Unscented filtering and nonlinear estimation.
\newblock In \emph{Proceedings of the IEEE}, volume~92, pages 401--422, 2004.

\bibitem[Kaipio and Somersalo(2005)]{Kaipio2005}
Jari~P. Kaipio and Erkki Somersalo.
\newblock \emph{Statistical and Computational Inverse Problems}.
\newblock Springer-Verlag New York, 2005.

\bibitem[K\'{a}lm\'{a}n and Bucy(1961)]{KalmanBucy1961}
Rudolf~E. K\'{a}lm\'{a}n and Richard~S. Bucy.
\newblock New results in linear filtering and prediction theory.
\newblock \emph{Journal of Basic Engineering}, 83\penalty0 (1):\penalty0
  95--108, 1961.

\bibitem[Karatzas and Shreve(1991)]{Karatzas1991}
Ioannis Karatzas and Steven~E. Shreve.
\newblock \emph{Brownian Motion and Stochastic Calculus}, volume 113 of
  \emph{Graduate Texts in Mathematics}.
\newblock Springer-Verlag New York, 2nd edition, 1991.

\bibitem[Karvonen et~al.(2020)Karvonen, Bonnabel, Moulines, and
  S\"{a}rkk\"{a}]{Toni2020}
Toni Karvonen, Silv\`{e}re Bonnabel, Eric Moulines, and Simo S\"{a}rkk\"{a}.
\newblock On stability of a class of filters for nonlinear stochastic systems.
\newblock \emph{SIAM Journal on Control and Optimization}, 58\penalty0
  (4):\penalty0 2023--2049, 2020.

\bibitem[Katznelson(2004)]{Katznelson2004}
Yitzhak Katznelson.
\newblock \emph{An Introduction to Harmonic Analysis}.
\newblock Cambridge University Press, 3rd edition, 2004.

\bibitem[Kessler(1997)]{Kessler1997}
Mathieu Kessler.
\newblock Estimation of an ergodic diffusion from discrete observations.
\newblock \emph{Scandinavian Journal of Statistics}, 24\penalty0 (2):\penalty0
  211--229, 1997.

\bibitem[Kessler et~al.(2012)Kessler, Lindner, and S{\o}rensen]{Kessler2012}
Mathieu Kessler, Alexander Lindner, and Michael S{\o}rensen.
\newblock \emph{Statistical Methods for Stochastic Differential Equations}.
\newblock Chapman \& Hall/CRC, 2012.

\bibitem[Khalil(2002)]{Khalil2002}
Hassan~K. Khalil.
\newblock \emph{Nonlinear Systems}.
\newblock Pearson, 3rd edition, 2002.

\bibitem[Khasminskii(2012)]{Khasminskii2012}
Rafail Khasminskii.
\newblock \emph{Stochastic Stability of Differential Equations}.
\newblock Springer-Verlag Berlin Heidelberg, 2012.

\bibitem[Kitagawa(1987)]{Kitagawa1987}
Genshiro Kitagawa.
\newblock Non-{G}aussian state--space modeling of nonstationary time series.
\newblock \emph{Journal of the American Statistical Association}, 82\penalty0
  (400):\penalty0 1032--1041, 1987.

\bibitem[Klenke(2014)]{AchimKlenke2014}
Achim Klenke.
\newblock \emph{Probability Theory: A Comprehensive Course}.
\newblock Springer-Verlag London, 2nd edition, 2014.

\bibitem[Kloeden and Platen(1992)]{Kloeden1992}
Peter~E. Kloeden and Eckhard Platen.
\newblock \emph{Numerical Solution of Stochastic Differential Equations}.
\newblock Springer-Verlag Berlin Heidelberg, 1992.

\bibitem[Kocijan(2016)]{Kocijan2016}
Ju{\v{s}} Kocijan.
\newblock \emph{Modelling and Control of Dynamic Systems Using Gaussian Process
  Models}.
\newblock Springer International Publishing, 2016.

\bibitem[Koller and Friedman(2009)]{Koller2009}
Daphne Koller and Nir Friedman.
\newblock \emph{Probabilistic Graphical Models: Principles and Techniques}.
\newblock MIT Press, 2009.

\bibitem[Koralov and Sinai(2007)]{Koralov2007}
Leonid~B. Koralov and Yakov~G. Sinai.
\newblock \emph{Theory of Probability and Random Processes}.
\newblock Lecture Notes in Mathematics. Springer-Verlag Berlin Heidelberg, 2nd
  edition, 2007.

\bibitem[Kowalska and Peel(2012)]{Kowalska2012}
Kira Kowalska and Leto Peel.
\newblock Maritime anomaly detection using {G}aussian process active learning.
\newblock In \emph{Proceedings of the 15th International Conference on
  Information Fusion (FUSION)}, pages 1164--1171, Singapore, 2012.

\bibitem[Koyama(2018)]{Koyama2018}
Shinsuke Koyama.
\newblock Projection smoothing for continuous and continuous-discrete
  stochastic dynamic systems.
\newblock \emph{Signal Processing}, 144:\penalty0 333--340, 2018.

\bibitem[Kulikov and Kulikova(2014)]{Kulikova2014}
Gennady~Yu. Kulikov and Maria~V. Kulikova.
\newblock Accurate numerical implementation of the continuous-discrete extended
  {K}alman filter.
\newblock \emph{IEEE Transactions on Automatic Control}, 59\penalty0
  (1):\penalty0 273--279, 2014.

\bibitem[Kuo(1975)]{Kuo1975Book}
Hui-Hsiung Kuo.
\newblock \emph{Gaussian Measures in Banach Spaces}, volume 463 of
  \emph{Lecture Notes in Mathematics}.
\newblock Springer-Verlag New York, 1975.

\bibitem[Kuo(2006)]{Kuo2006Book}
Hui-Hsiung Kuo.
\newblock \emph{Introduction to Stochastic Integration}.
\newblock Universitext. Springer-Verlag New York, 2006.

\bibitem[Kushner(1964)]{Kushner1964}
Harold~J. Kushner.
\newblock On the differential equations satisfied by conditional probablitity
  densities of {M}arkov processes, with applications.
\newblock \emph{Journal of the Society for Industrial and Applied Mathematics,
  Series A: Control}, 2\penalty0 (1):\penalty0 106--119, 1964.

\bibitem[Langenhop(1960)]{Langenhop1960}
Carl~E. Langenhop.
\newblock Bounds on the norm of a solution of a general differential equation.
\newblock In \emph{Proceedings of the American Mathematical Society},
  volume~11, pages 795--799, 1960.

\bibitem[Law et~al.(2014)Law, Shukla, and Stuart]{LawK2014}
Kody J.~H. Law, Abhishek Shukla, and Andrew~M. Stuart.
\newblock Analysis of the {3DVAR} filter for the partially observed {L}orenz'63
  model.
\newblock \emph{Discrete and Continuous Dynamical Systems}, 34\penalty0
  (3):\penalty0 1061--1078, 2014.

\bibitem[Law et~al.(2015)Law, Stuart, and Zygalakis]{Law2015}
Kody J.~H. Law, Andrew~M. Stuart, and Konstantinos~C. Zygalakis.
\newblock \emph{Data Assimilation: A Mathematical Introduction}.
\newblock Springer International Publishing Switzerland, 2015.

\bibitem[L\'{a}zaro-Gredilla(2012)]{Gredilla2012}
Miguel L\'{a}zaro-Gredilla.
\newblock {B}ayesian warped {G}aussian processes.
\newblock In \emph{Advances in Neural Information Processing Systems 25}, pages
  1--9. Curran Associates, Inc., 2012.

\bibitem[L\'{a}zaro-Gredilla et~al.(2010)L\'{a}zaro-Gredilla,
  Qui{\~{n}}onero-Candela, Rasmussen, and An\'{i}bal]{Lazaro2010}
Miguel L\'{a}zaro-Gredilla, Joaquin Qui{\~{n}}onero-Candela, Carl~E. Rasmussen,
  and Figueiras-Vidal~R. An\'{i}bal.
\newblock Sparse spectrum {G}aussian process regression.
\newblock \emph{Journal of Machine Learning Research}, 11\penalty0
  (63):\penalty0 1865--1881, 2010.

\bibitem[Li et~al.(2020)Li, Wong, Chen, and Duvenaud]{XuechenLi2020}
Xuechen Li, Ting-Kam~Leonard Wong, Ricky T.~Q. Chen, and David Duvenaud.
\newblock Scalable gradients for stochastic differential equations.
\newblock In \emph{Proceedings of the Twenty Third International Conference on
  Artificial Intelligence and Statistics}, volume 108, pages 3870--3882. PMLR,
  2020.

\bibitem[Li(2020)]{YaoweiLi2020}
Yaowei Li.
\newblock {Non-stationary State Space Gaussian Processes}.
\newblock Master thesis, Aalto University, 2020.

\bibitem[Lindgren et~al.(2011)Lindgren, Rue, and Lindstr\"{o}m]{Lindgren2011}
Finn Lindgren, H{\aa}vard Rue, and Johan Lindstr\"{o}m.
\newblock An explicit link between {G}aussian fields and {G}aussian {M}arkov
  random fields: The stochastic partial differential equation approach.
\newblock \emph{Journal of the Royal Statistical Society: Series B (Statistical
  Methodology)}, 73\penalty0 (4):\penalty0 423--498, 2011.

\bibitem[Liu et~al.(2020)Liu, Ong, Shen, and Cai]{LiuHaitao2020}
Haitao Liu, Yew-Soon Ong, Xiaobo Shen, and Jianfei Cai.
\newblock When {G}aussian process meets big data: a review of scalable {GP}s.
\newblock \emph{IEEE Transactions on Neural Networks and Learning Systems},
  31\penalty0 (11):\penalty0 4405--4423, 2020.

\bibitem[Lord et~al.(2014)Lord, Powell, and
  Shardlow]{Lord_powell_shardlow_2014}
Gabriel~J. Lord, Catherine~E. Powell, and Tony Shardlow.
\newblock \emph{An Introduction to Computational Stochastic PDEs}, volume~50 of
  \emph{Cambridge Texts in Applied Mathematics}.
\newblock Cambridge University Press, 2014.

\bibitem[Maggiore(2008)]{Maggiore2008}
Michele Maggiore.
\newblock \emph{Gravitational Waves: Volume 1: Theory and Experiments}.
\newblock Oxford University Press, 2008.

\bibitem[Mao(2008)]{Mao2008}
Xuerong Mao.
\newblock \emph{Stochastic Differential Equations and Applications}.
\newblock Woodhead Publishing, 2nd edition, 2008.

\bibitem[Mat\'{e}rn(1960)]{Matern1960}
Bertil Mat\'{e}rn.
\newblock \emph{Spatial Variation: Stochastic models and their applications to
  some problems in forest surveys and other sampling investigations}.
\newblock PhD thesis, Stockholm University, 1960.

\bibitem[Maybeck(1982)]{Maybeck1982}
Peter~S. Maybeck.
\newblock \emph{Stochastic Models, Estimation, and Control: Volume 2}.
\newblock Academic Press, 1982.

\bibitem[Meeds and Osindero(2006)]{Meeds2006}
Edward Meeds and Simon Osindero.
\newblock An alternative infinite mixture of {G}aussian process experts.
\newblock In \emph{Advances in Neural Information Processing Systems 18}, pages
  883--890. MIT press, 2006.

\bibitem[Moan and Niesen(2008)]{Moan2008MagnusConv}
Per~C. Moan and Jitse Niesen.
\newblock Convergence of the {M}agnus series.
\newblock \emph{Foundations of Computational Mathematics}, 8\penalty0
  (3):\penalty0 291--301, 2008.

\bibitem[Monterrubio-G\'{o}mez et~al.(2020)Monterrubio-G\'{o}mez, Roininen,
  Wade, Damoulas, and Girolami]{Karla2020}
Karla Monterrubio-G\'{o}mez, Lassi Roininen, Sara Wade, Theodoros Damoulas, and
  Mark Girolami.
\newblock Posterior inference for sparse hierarchical non-stationary models.
\newblock \emph{Computational Statistics and Data Analysis}, 148:\penalty0
  106954, 2020.

\bibitem[M\"{o}rters and Peres(2010)]{Morters2010book}
Peter M\"{o}rters and Yuval Peres.
\newblock \emph{Brownian Motion}, volume~30 of \emph{Cambridge Series in
  Statistical and Probabilistic Mathematics}.
\newblock Cambridge University Press, 2010.

\bibitem[Neal(1999)]{Neal1998}
Radford~M. Neal.
\newblock Regression and classification using {G}aussian process priors.
\newblock In \emph{Proceedings of the Sixth Valencia International Meeting},
  volume~6 of \emph{Bayesian Statistics}, pages 475--501. Oxford University
  Press, 1999.

\bibitem[Nesterov(2004)]{Nesterov2004}
Yurii Nesterov.
\newblock \emph{Introductory Lectures on Convex Optimization: A Basic Course}.
\newblock Springer, 2004.

\bibitem[Nesterov(2018)]{Nesterov2018optimization}
Yurii Nesterov.
\newblock \emph{Lectures on Convex Optimization}.
\newblock Springer International Publishing, 2nd edition, 2018.

\bibitem[Nocedal and Wright(2006)]{Wright2006}
Jorge Nocedal and Stephen~J. Wright.
\newblock \emph{Numerical Optimization}.
\newblock Springer-Verlag New York, 2nd edition, 2006.

\bibitem[{\O}ksendal(2007)]{Oksendal2003}
Bernt {\O}ksendal.
\newblock \emph{Stochastic Differential Equations: An Introduction with
  Applications}.
\newblock Universitext. Springer-Verlag Berlin Heidelberg, 6th edition, 2007.

\bibitem[Opper(2019)]{Opper2019}
Manfred Opper.
\newblock Variational inference for stochastic differential equations.
\newblock \emph{Annalen der Physik}, 531\penalty0 (3):\penalty0 1800233, 2019.

\bibitem[Osborne(1960)]{Osborne1960}
E.~E. Osborne.
\newblock On pre-conditioning of matrices.
\newblock \emph{Journal of the ACM}, 7\penalty0 (4):\penalty0 338--345, 1960.

\bibitem[Ozaki(1993)]{Ozaki1993}
Tohru Ozaki.
\newblock A local linearization approach to nonlinear filtering.
\newblock \emph{International Journal of Control}, 57\penalty0 (1):\penalty0
  75--96, 1993.

\bibitem[Pachpatte(1998)]{Pachpatte1997}
Baburao~G. Pachpatte.
\newblock \emph{Inequalities for Differential and Integral Equations}, volume
  197 of \emph{Mathematics in Science and Engineering}.
\newblock Academic Press, 1998.

\bibitem[Paciorek and Schervish(2004)]{Paciorek2004}
Christopher~J. Paciorek and Mark~J. Schervish.
\newblock Nonstationary covariance functions for {G}aussian process regression.
\newblock In \emph{Advances in Neural Information Processing Systems 16}, pages
  273--280. MIT Press, 2004.

\bibitem[Paciorek and Schervish(2006)]{Paciorek2006}
Christopher~J. Paciorek and Mark~J. Schervish.
\newblock Spatial modelling using a new class of nonstationary covariance
  functions.
\newblock \emph{Environmetrics}, 17\penalty0 (5):\penalty0 483--506, 2006.

\bibitem[Paley and Wiener(1934)]{Paley1934}
Raymond E. A.~C. Paley and Nobert Wiener.
\newblock \emph{Fourier Transform in the Complex Domain}, volume~19 of
  \emph{Colloquium Publications}.
\newblock American Mathematical Society, 1934.

\bibitem[Papaspiliopoulos et~al.(2012)Papaspiliopoulos, Pokern, Roberts, and
  Stuart]{Papaspiliopoulos2012}
Omiros Papaspiliopoulos, Yvo Pokern, Gareth~O. Roberts, and Andrew~M. Stuart.
\newblock Nonparametric estimation of diffusions: a differential equations
  approach.
\newblock \emph{Biometrika}, 99\penalty0 (3):\penalty0 511--531, 2012.

\bibitem[Parlett and Reinsch(1971)]{Parlett1971}
Beresford~N. Parlett and Christian Reinsch.
\newblock Balancing a matrix for calculation of eigenvalues and eigenvectors.
\newblock In \emph{Handbook for Automatic Computation}, pages 315--326.
  Springer-Verlag Berlin, 1971.

\bibitem[Pavliotis(2014)]{Pavliotis2014}
Grigorios~A. Pavliotis.
\newblock \emph{Stochastic Processes and Applications: Diffusion Processes, the
  Fokker-Planck and Langevin Equations}, volume~60 of \emph{Texts in Applied
  Mathematics}.
\newblock Springer-Verlag New York, 2014.

\bibitem[Prato and Zabczyk(2014)]{Giuseppe2014}
Giuseppe~Da Prato and Jerzy Zabczyk.
\newblock \emph{Stochastic Equations in Infinite Dimensions}, volume 152 of
  \emph{Encyclopedia of Mathematics and Its Applications}.
\newblock Cambridge University Press, 2nd edition, 2014.

\bibitem[Qi et~al.(2002)Qi, Minka, and Picara]{QiYuan2002}
Yuan Qi, Thomas~P. Minka, and Rosalind~W. Picara.
\newblock {B}ayesian spectrum estimation of unevenly sampled nonstationary
  data.
\newblock In \emph{Proceedings of the 37th IEEE International Conference on
  Acoustics, Speech, and Signal Processing}, volume~2, pages 1473--1476,
  Orlando, FL, USA, 2002.

\bibitem[Qui{\~{n}}onero-Candela and Rasmussen(2005)]{Quinonero2005unifying}
Joaquin Qui{\~{n}}onero-Candela and Carl~E. Rasmussen.
\newblock A unifying view of sparse approximate {G}aussian process regression.
\newblock \emph{Journal of Machine Learning Research}, 6:\penalty0 1939--1959,
  2005.

\bibitem[Radhakrishnan et~al.(2016)Radhakrishnan, Singh, Bhaumik, and
  Tomar]{Radhakrishnan2016}
Rahul Radhakrishnan, Abhinoy~Kumar Singh, Shovan Bhaumik, and Nutan~Kumar
  Tomar.
\newblock Multiple sparse-grid {G}auss--{H}ermite filtering.
\newblock \emph{Applied Mathematical Modelling}, 40\penalty0 (7--8):\penalty0
  4441--4450, 2016.

\bibitem[Rangapuram et~al.(2018)Rangapuram, Seeger, Gasthaus, Stella, Wang, and
  Januschowski]{Syama2018}
Syama~S. Rangapuram, Matthias~W. Seeger, Jan Gasthaus, Lorenzo Stella, Yuyang
  Wang, and Tim Januschowski.
\newblock Deep state space models for time series forecasting.
\newblock In \emph{Advances in Neural Information Processing Systems 31}, pages
  7785--7794. Curran Associates, Inc., 2018.

\bibitem[Ranjan et~al.(2011)Ranjan, Haynes, and Karsten]{Ranjan2011}
Pritam Ranjan, Ronald Haynes, and Richard Karsten.
\newblock A computationally stable approach to {G}aussian process interpolation
  of deterministic computer simulation data.
\newblock \emph{Technometrics}, 53\penalty0 (4):\penalty0 366--378, 2011.

\bibitem[Rasmussen and Ghahramani(2002)]{Rasmussen2002GPexperts}
Carl~E. Rasmussen and Zoubin Ghahramani.
\newblock Infinite mixtures of {G}aussian process experts.
\newblock In \emph{Advances in Neural Information Processing Systems 14}, pages
  881--888. MIT press, 2002.

\bibitem[Rasmussen and Williams(2006)]{Carl2006GPML}
Carl~E. Rasmussen and Christopher K.~I. Williams.
\newblock \emph{Gaussian Processes for Machine Learning}.
\newblock MIT Press, 2006.

\bibitem[Reif et~al.(1999)Reif, G\"{u}nther, Yaz, and Unbehauen]{Reif1999}
Konrad Reif, Stefan G\"{u}nther, Engin Yaz, and Rolf Unbehauen.
\newblock Stochastic stability of the discrete-time extended {K}alman filter.
\newblock \emph{IEEE Transactions on Automatic Control}, 44\penalty0
  (4):\penalty0 714--728, 1999.

\bibitem[Remes et~al.(2017)Remes, Heinonen, and Kaski]{Remes2017}
Sami Remes, Markus Heinonen, and Samuel Kaski.
\newblock Non-stationary spectral kernels.
\newblock In \emph{Advances in Neural Information Processing Systems 30}.
  Curran Associates, Inc., 2017.

\bibitem[Rios and Tobar(2019)]{Rios2019}
Gonzalo Rios and Felipe Tobar.
\newblock Compositionally-warped {G}aussian processes.
\newblock \emph{Neural Networks}, 118:\penalty0 235--246, 2019.

\bibitem[Robert and Casella(2004)]{Robert2004}
Christian~P. Robert and George Casella.
\newblock \emph{Monte Carlo Statistical Methods}.
\newblock Springer New York, 2nd edition, 2004.

\bibitem[Roberts and Stramer(2001)]{Roberts2001}
Gareth~O. Roberts and Osnat Stramer.
\newblock On inference for partially observed nonlinear diffusion models using
  the {M}etropolis--{H}astings algorithm.
\newblock \emph{Biometrika}, 88\penalty0 (3):\penalty0 603--421, 2001.

\bibitem[Roberts et~al.(2013)Roberts, Osborne, Ebden, Reece, Gibson, and
  Aigrain]{Roberts2013}
Stephen Roberts, Michael~A. Osborne, Mark Ebden, S.~Reece, N.~Gibson, and
  S.~Aigrain.
\newblock {G}aussian processes for time-series modelling.
\newblock \emph{Philosophical Transactions of the Royal Society A:
  Mathematical, Physical and Engineering Sciences}, 371\penalty0
  (1984):\penalty0 20110550, 2013.

\bibitem[Rogers and Williams(2000)]{Williams2000Vol2}
Chris Rogers and David Williams.
\newblock \emph{Diffusions, Markov Processes, and Martingales}.
\newblock Cambridge University Press, 2nd edition, 2000.

\bibitem[Roininen et~al.(2019)Roininen, Girolami, Lasanen, and
  Markkanen]{Roininen2016}
Lassi Roininen, Mark Girolami, Sari Lasanen, and Makku Markkanen.
\newblock Hyperpriors for {M}at\'{e}rn fields with applications in {B}ayesian
  inversion.
\newblock \emph{Inverse Problems \& Imaging}, 13\penalty0 (1):\penalty0 1--29,
  2019.

\bibitem[Rong et~al.(2019)Rong, Teixeira, and Soares]{Rong2019}
H.~Rong, A.~P. Teixeira, and C.~G. Soares.
\newblock Ship trajectory uncertainty prediction based on a {G}aussian process
  model.
\newblock \emph{Ocean Engineering}, pages 499--511, 2019.

\bibitem[Rozanov(1977)]{Rozanov1977}
Iurii~A. Rozanov.
\newblock Markov random fields and stochastic partial differential equations.
\newblock \emph{Mathematics of the USSR-Sbornik}, 32\penalty0 (4):\penalty0
  515--534, 1977.

\bibitem[Rozanov(1982)]{Rozanov1982}
Iurii~A. Rozanov.
\newblock \emph{Markov Random Fields}.
\newblock Springer-Verlag New York, 1982.

\bibitem[Rue and Held(2005)]{Rue2005Book}
H{\aa}vard Rue and Leonhard Held.
\newblock \emph{Gaussian Markov Random Fields: Theory and Applications}.
\newblock Chapman \& Hall/CRC, 2005.

\bibitem[Rue and Martino(2007)]{Rue2007}
H{\aa}vard Rue and Sara Martino.
\newblock Approximate {B}ayesian inference for hierarchical {G}aussian {M}arkov
  random field models.
\newblock \emph{Journal of Statistical Planning and Inference}, 137\penalty0
  (10):\penalty0 3177--3192, 2007.

\bibitem[Rue et~al.(2009)Rue, Martino, and Chopin]{Rue2009N}
H{\aa}vard Rue, Sara Martino, and Nicolas Chopin.
\newblock Approximate {B}ayesian inference for latent {G}aussian models by
  using integrated nested {L}aplace approximations.
\newblock \emph{Journal of the Royal Statistical Society: Series B (Statistical
  Methodology)}, 71\penalty0 (2):\penalty0 319--392, 2009.

\bibitem[Ruszczy\'{n}ski(2006)]{Ruszczynski2006}
Andrzej~P. Ruszczy\'{n}ski.
\newblock \emph{Nonlinear Optimization}.
\newblock Princeton University Press, 2006.

\bibitem[Ruttor et~al.(2013)Ruttor, Batz, and Opper]{Ruttor2013}
Andreas Ruttor, Philipp Batz, and Manfred Opper.
\newblock Approximate {G}aussian process inference for the drift function in
  stochastic differential equations.
\newblock In \emph{Advances in Neural Information Processing Systems 26}, pages
  1--9. Curran Associates, Inc., 2013.

\bibitem[Salimbeni and Deisenroth(2017{\natexlab{a}})]{Salimbeni2017Doubly}
Hugh Salimbeni and Marc~P. Deisenroth.
\newblock Doubly stochastic variational inference for deep {G}aussian
  processes.
\newblock In \emph{Advances in Neural Information Processing Systems 30}, pages
  1--12. Curran Associates, Inc., 2017{\natexlab{a}}.

\bibitem[Salimbeni and Deisenroth(2017{\natexlab{b}})]{Salimbeni2017ns}
Hugh Salimbeni and Marc~P. Deisenroth.
\newblock Deeply non-stationary {G}aussian processes.
\newblock In \emph{NIPS Workshop on Bayesian Deep Learning},
  2017{\natexlab{b}}.

\bibitem[Sampson and Guttorp(1992)]{Sampson1992}
Paul~D. Sampson and Peter Guttorp.
\newblock Nonparametric estimation of nonstationary spatial covariance
  structure.
\newblock \emph{Journal of the American Statistical Association}, 87\penalty0
  (417):\penalty0 108--119, 1992.

\bibitem[Sancho(1970)]{Sancho1970}
Neville Sancho.
\newblock On the approximate moment equations of a nonlinear stochastic
  differential equation.
\newblock \emph{Journal of Mathematical Analysis and Applications}, 29\penalty0
  (2):\penalty0 384--391, 1970.

\bibitem[S\"{a}rkk\"{a}(2007)]{Sarkka2007}
Simo S\"{a}rkk\"{a}.
\newblock On unscented {K}alman filtering for state estimation of
  continuous-time nonlinear systems.
\newblock \emph{IEEE Transactions on Automatic Control}, 52\penalty0
  (9):\penalty0 1631--1641, 2007.

\bibitem[S\"{a}rkk\"{a}(2010)]{Sarkka2010CD}
Simo S\"{a}rkk\"{a}.
\newblock Continuous-time and continuous-discrete-time unscented
  {R}auch--{T}ung--{S}triebel smoothers.
\newblock \emph{Signal Processing}, 90\penalty0 (1):\penalty0 225--235, 2010.

\bibitem[S\"{a}rkk\"{a}(2013)]{Sarkka2013}
Simo S\"{a}rkk\"{a}.
\newblock \emph{{B}ayesian Filtering and Smoothing}, volume~3 of
  \emph{Institute of Mathematical Statistics Textbooks}.
\newblock Cambridge University Press, 2013.

\bibitem[S\"{a}rkk\"{a} and Garc\'{i}a-Fern\'{a}ndez(2021)]{Simo2021Parallel}
Simo S\"{a}rkk\"{a} and \'{A}ngel~F. Garc\'{i}a-Fern\'{a}ndez.
\newblock Temporal parallelization of {B}ayesian smoothers.
\newblock \emph{IEEE Transactions on Automatic Control}, 366\penalty0
  (1):\penalty0 299--306, 2021.

\bibitem[S\"{a}rkk\"{a} and Sarmavuori(2013)]{SimoGFS2013}
Simo S\"{a}rkk\"{a} and Juha Sarmavuori.
\newblock Gaussian filtering and smoothing for continuous-discrete dynamic
  systems.
\newblock \emph{Signal Processing}, 93\penalty0 (2):\penalty0 500--510, 2013.

\bibitem[S\"{a}rkk\"{a} and Solin(2012)]{Simo2012}
Simo S\"{a}rkk\"{a} and Arno Solin.
\newblock On continuous-discrete {C}ubature {K}alman filtering.
\newblock In \emph{Proceedings of 16th IFAC Symposium on System
  Identification}, volume~45, pages 1221--1226, 2012.

\bibitem[S\"{a}rkk\"{a} and Solin(2019)]{Sarkka2019}
Simo S\"{a}rkk\"{a} and Arno Solin.
\newblock \emph{Applied Stochastic Differential Equations}, volume~10 of
  \emph{Institute of Mathematical Statistics Textbooks}.
\newblock Cambridge University Press, 2019.

\bibitem[S\"{a}rkk\"{a} and Svensson(2020)]{Simo2020IEKFS}
Simo S\"{a}rkk\"{a} and Lennart Svensson.
\newblock {L}evenberg--{M}arquardt and line-search extended {K}alman smoothers.
\newblock In \emph{Proceedings of the 45th IEEE International Conference on
  Acoustics, Speech, and Signal Processing (ICASSP)}, pages 5875--5879, 2020.

\bibitem[S\"{a}rkk\"{a} et~al.(2013)S\"{a}rkk\"{a}, Solin, and
  Hartikainen]{Simo2013SSGP}
Simo S\"{a}rkk\"{a}, Arno Solin, and Jouni Hartikainen.
\newblock Spatiotemporal learning via infinite-dimensional {B}ayesian filtering
  and smoothing: a look at {G}aussian process regression through {K}alman
  filtering.
\newblock \emph{IEEE Signal Processing Magazine}, 30\penalty0 (4):\penalty0
  51--61, 2013.

\bibitem[Schilling(2017)]{ReneMeasure2017}
Ren\'{e}~L. Schilling.
\newblock \emph{Measures, Integrals and Martingales}.
\newblock Cambridge University Press, 2nd edition, 2017.

\bibitem[Schilling and Partzsch(2012)]{ReneBrownianBook2012}
Ren\'{e}~L. Schilling and Lothar Partzsch.
\newblock \emph{Brownian Motion: An Introduction to Stochastic Processes}.
\newblock De Gruyter, 2012.

\bibitem[Schmidt and O'Hagan(2003)]{Schmidt2003}
Alexandra~M. Schmidt and Anthony O'Hagan.
\newblock {B}ayesian inference for non-stationary spatial covariance structure
  via spatial deformations.
\newblock \emph{Journal of the Royal Statistical Society: Series B (Statistical
  Methodology)}, 65\penalty0 (3):\penalty0 743--758, 2003.

\bibitem[Shen et~al.(2006)Shen, Luo, and Mao]{Shen2006}
Yi~Shen, Qi~Luo, and Xuerong Mao.
\newblock The improved {L}a{S}alle-type theorems for stochastic functional
  differential equations.
\newblock \emph{Journal of Mathematical Analysis and Applications},
  318\penalty0 (1):\penalty0 134--154, 2006.

\bibitem[Smith et~al.(2014)Smith, Reece, Roberts, and Rezek]{Smith2014}
Mark Smith, Steven Reece, Stephen Roberts, and Iead Rezek.
\newblock Maritime abnormality detection using {G}aussian processes.
\newblock \emph{Knowledge and Information Systems}, 38\penalty0 (3):\penalty0
  717--740, 2014.

\bibitem[Snelson and Ghahramani(2006)]{Snelson2006}
Edward Snelson and Zoubin Ghahramani.
\newblock Sparse {G}aussian processes using pseudo-inputs.
\newblock In \emph{Advances in Neural Information Processing Systems 18}, pages
  1257--1264. MIT Press, 2006.

\bibitem[Snelson and Ghahramani(2007)]{Snelson2007}
Edward Snelson and Zoubin Ghahramani.
\newblock Local and global sparse {G}aussian process approximations.
\newblock In \emph{Proceedings of the 11th International Conference on
  Artificial Intelligence and Statistics}, volume~2, pages 524--531. PMLR,
  2007.

\bibitem[Snelson et~al.(2004)Snelson, Ghahramani, and Rasmussen]{Snelson2004}
Edward Snelson, Zoubin Ghahramani, and Carl~E. Rasmussen.
\newblock Warped {G}aussian processes.
\newblock In \emph{Advances in Neural Information Processing Systems 16}, pages
  1--8. MIT Press, 2004.

\bibitem[Snoek et~al.(2014)Snoek, Swersky, Zemel, and Adams]{Snoek2014}
Jasper Snoek, Kevin Swersky, Rich Zemel, and Ryan Adams.
\newblock Input warping for {B}ayesian optimization of non-stationary
  functions.
\newblock In \emph{Proceedings of the 31st International Conference on Machine
  Learning}, volume~32, pages 1674--1682. PMLR, 2014.

\bibitem[Solin(2016)]{Solin2016}
Arno Solin.
\newblock \emph{Stochastic Differential Equation Methods for Spatio-Temporal
  Gaussian Process Regression}.
\newblock PhD thesis, Aalto University, 2016.

\bibitem[Solin and S\"{a}rkk\"{a}(2014)]{Solin2014}
Arno Solin and Simo S\"{a}rkk\"{a}.
\newblock Explicit link between periodic covariance functions and state space
  models.
\newblock In \emph{Proceedings of the Seventeenth International Conference on
  Artificial Intelligence and Statistics}, pages 904--912, Reykjav\'{i}k,
  Iceland, 2014. PMLR.

\bibitem[Stratonovich(1966)]{Stratonovich1966}
Ruslan~Leont'evich Stratonovich.
\newblock A new representation for stochastic integrals and equations.
\newblock \emph{SIAM Journal on Control}, 4\penalty0 (2):\penalty0 362--371,
  1966.

\bibitem[Stroock and Varadhan(1969)]{Stroock1969}
Daniel~W. Stroock and Sathamangalam R.~S. Varadhan.
\newblock Diffusion processes with continuous coefficients, {I} and {II}.
\newblock \emph{Communications on Pure and Applied Mathematics}, 22\penalty0 (3
  and 4):\penalty0 345--400 and 478--530, 1969.

\bibitem[Stroock and Varadhan(1979)]{Strock1979}
Daniel~W. Stroock and Sathamangalam R.~S. Varadhan.
\newblock \emph{Multidimensional Diffusion Processes}.
\newblock Springer-Verlag Berlin Heidelberg, 1979.

\bibitem[Thombre et~al.(2020)Thombre, Zhao, Ramm-Schmidt, Garc\'{i}a,
  Malkam\"{a}ki, Nikolskiy, Hammarberg, Nuortie, Bhuiyan, S\"{a}rkk\"{a}, and
  Lehtola]{Sarang2020}
Sarang Thombre, Zheng Zhao, Henrik Ramm-Schmidt, Jos\'{e} M.~Vallet Garc\'{i}a,
  Tuomo Malkam\"{a}ki, Sergey Nikolskiy, Toni Hammarberg, Hiski Nuortie,
  M.~Z.~H. Bhuiyan, Simo S\"{a}rkk\"{a}, and Ville~V. Lehtola.
\newblock Sensors and {AI} techniques for situational awareness in autonomous
  ships: a review.
\newblock \emph{IEEE Transactions on Intelligent Transportation Systems}, pages
  1--20, 2020.
\newblock In press.

\bibitem[Tibshirani(1996)]{Tibshirani1996}
Robert Tibshirani.
\newblock Regression shrinkage and selection via the {L}asso.
\newblock \emph{Journal of the Royal Statistical Society: Series B (Statistical
  Methodology)}, 58\penalty0 (1):\penalty0 267--288, 1996.

\bibitem[Titsias(2009)]{Titsias2009}
Michalis Titsias.
\newblock Variational learning of inducing variables in sparse {G}aussian
  processes.
\newblock In \emph{Proceedings of the 12th International Conference on
  Artificial Intelligence and Statistics}, volume~5, pages 567--574. PMLR,
  2009.

\bibitem[Wang et~al.(2019)Wang, Pleiss, Gardner, Tyree, Weinberger, and
  Wilson]{KeWang2019}
Ke~A. Wang, Geoff Pleiss, Jacob~R. Gardner, Stephen Tyree, Kilian~Q.
  Weinberger, and Andrew~G. Wilson.
\newblock Exact {G}aussian processes on a million data points.
\newblock In \emph{Advances in Neural Information Processing Systems 32}, pages
  14648--14659. Curran Associates, Inc., 2019.

\bibitem[Weyl(1912)]{Weyl1912}
Hermann Weyl.
\newblock Das asymptotische verteilungsgesetz der eigenwerte linearer
  partieller differentialgleichungen.
\newblock \emph{Mathematische Annalen}, 71\penalty0 (4):\penalty0 441--479,
  1912.

\bibitem[Whittle(1954)]{Whittle1954}
Peter Whittle.
\newblock On statioanry process in the plane.
\newblock \emph{Biometrika}, 41\penalty0 (3--4):\penalty0 434--449, 1954.

\bibitem[Wiener(1923)]{Wiener1923}
Nobert Wiener.
\newblock Differential-space.
\newblock \emph{Journal of Mathematics and Physics}, 2\penalty0 (1-4):\penalty0
  131--174, 1923.

\bibitem[Williams(1998)]{Williams1998}
Christopher K.~I. Williams.
\newblock Computation with infinite neural networks.
\newblock \emph{Neural Computation}, 10\penalty0 (5):\penalty0 1203--1216,
  1998.

\bibitem[Wilson et~al.(2016)Wilson, Hu, Salakhutdinov, and
  Xing]{Wilson2016DeepKernel}
Andrew~G. Wilson, Zhiting Hu, Ruslan Salakhutdinov, and Eric~P. Xing.
\newblock Deep kernel learning.
\newblock In \emph{Proceedings of the 19th International Conference on
  Artificial Intelligence and Statistics}, volume~51, pages 370--378, Cadiz,
  Spain, 2016. PMLR.

\bibitem[Xiao et~al.(2017)Xiao, Dai, Li, Wu, Xu, Zeng, and Chen]{XiaoZ2017}
Zhipeng Xiao, Bin Dai, Hongdong Li, Tao Wu, Xin Xu, Yujun Zeng, and Tongtong
  Chen.
\newblock {G}aussian process regression-based robust free space detection for
  autonomous vehicle by 3-{D} point cloud and 2-{D} appearance information
  fusion.
\newblock \emph{International Journal of Advanced Robotic Systems}, 14\penalty0
  (4):\penalty0 1--20, 2017.

\bibitem[Xiong et~al.(2006)Xiong, Zhang, and Chan]{Xiong2006}
Kaiqi Xiong, H.~Y. Zhang, and C.~W. Chan.
\newblock Performance evaluation of {UKF}-based nonlinear filtering.
\newblock \emph{Automatica}, 42\penalty0 (2):\penalty0 261--270, 2006.

\bibitem[Xiu(2010)]{DongbinXiu2010}
Dongbin Xiu.
\newblock \emph{Numerical Methods for Stochastic Computations: A Spectral
  Method Approach}.
\newblock Princeton University Press, 2010.

\bibitem[Yamada and Watanabe(1971)]{Yamada1971}
Toshio Yamada and Shinzo Watanabe.
\newblock On the uniqueness of solutions of stochastic differential equations.
\newblock \emph{Journal of Mathematics of Kyoto University}, 11\penalty0
  (1):\penalty0 155--167, 1971.

\bibitem[Yoshida(1992)]{Yoshida1992}
Nakahiro Yoshida.
\newblock Estimation for diffusion processes from discrete observation.
\newblock \emph{Journal of Multivariate Analysis}, 41\penalty0 (2):\penalty0
  220--242, 1992.

\bibitem[Zakai(1969)]{Zakai1969}
Moshe Zakai.
\newblock On the optimal filtering of diffusion processes.
\newblock \emph{Zeitschrift f\"{u}r Wahrscheinlichkeitstheorie und Verwandte
  Gebiete}, 11\penalty0 (3):\penalty0 230--243, 1969.

\bibitem[Zhang and Williamson(2019)]{ZhangMinyi2019}
Michael~Minyi Zhang and Sinead~A. Williamson.
\newblock Embarrassingly parallel inference for {G}aussian processes.
\newblock \emph{Journal of Machine Learning Research}, 20\penalty0
  (169):\penalty0 1--26, 2019.

\bibitem[Zhao and S{\"a}rkk{\"a}(2021)]{ZhaoZTMEsmoother2021}
Zheng Zhao and Simo S{\"a}rkk{\"a}.
\newblock Non-linear {G}aussian smoothing with {T}aylor moment expansion.
\newblock \emph{IEEE Signal Processing Letters}, 2021.
\newblock In press.

\bibitem[Zhao et~al.(2018)Zhao, S\"{a}rkk\"{a}, and Rad]{ZhaoZheng2018KF}
Zheng Zhao, Simo S\"{a}rkk\"{a}, and Ali~Bahrami Rad.
\newblock Spectro-temporal {ECG} analysis for atrial fibrillation detection.
\newblock In \emph{Proceedings of the 28th IEEE International Workshop on
  Machine Learning for Signal Processing (MLSP)}, pages 1--6, Aalborg, Denmark,
  2018.

\bibitem[Zhao et~al.(2020{\natexlab{a}})Zhao, S\"{a}rkk\"{a}, and
  Rad]{ZhaoZheng2020KFECG}
Zheng Zhao, Simo S\"{a}rkk\"{a}, and Ali~Bahrami Rad.
\newblock {K}alman-based spectro-temporal {ECG} analysis using deep
  convolutional networks for atrial fibrillation detection.
\newblock \emph{Journal of Signal Processing Systems}, 92\penalty0
  (7):\penalty0 621--636, 2020{\natexlab{a}}.

\bibitem[Zhao et~al.(2020{\natexlab{b}})Zhao, Tronarp, Hostettler, and
  S\"{a}rkk\"{a}]{ZhaoZheng2020Drift}
Zheng Zhao, Filip Tronarp, Roland Hostettler, and Simo S\"{a}rkk\"{a}.
\newblock State-space {G}aussian process for drift estimation in stochastic
  differential equations.
\newblock In \emph{Proceedings of the 45th IEEE International Conference on
  Acoustics, Speech, and Signal Processing (ICASSP)}, pages 5295--5299,
  Barcelona, Spain, 2020{\natexlab{b}}.

\bibitem[Zhao et~al.(2021{\natexlab{a}})Zhao, Emzir, and
  S{\"a}rkk{\"a}]{Zhao2020SSDGP}
Zheng Zhao, Muhammad Emzir, and Simo S{\"a}rkk{\"a}.
\newblock Deep state-space {G}aussian processes.
\newblock \emph{Statistics and Computing}, 31\penalty0 (6):\penalty0 75,
  2021{\natexlab{a}}.

\bibitem[Zhao et~al.(2021{\natexlab{b}})Zhao, Karvonen, Hostettler, and
  S\"{a}rkk\"{a}]{ZhaoTME2020}
Zheng Zhao, Toni Karvonen, Roland Hostettler, and Simo S\"{a}rkk\"{a}.
\newblock Taylor moment expansion for continuous-discrete {G}aussian filtering.
\newblock \emph{IEEE Transactions on Automatic Control}, 66\penalty0
  (9):\penalty0 4460--4467, 2021{\natexlab{b}}.

\bibitem[Zhao et~al.(2021{\natexlab{c}})Zhao, Rui, and
  S{\"a}rkk{\"a}]{Zhao2021RSSGP}
Zheng Zhao, Gao Rui, and Simo S{\"a}rkk{\"a}.
\newblock Hierarchical non-stationary temporal {G}aussian processes with
  {$L^1$}-regularization.
\newblock \emph{arXiv preprint arXiv:2105.09695}, 2021{\natexlab{c}}.

\end{thebibliography}

\errata

\addpublication{Zheng Zhao, Toni Karvonen, Roland Hostettler, and Simo S\"{a}rkk\"{a}}{Taylor moment expansion for continuous-discrete Gaussian filtering}{IEEE Transactions on Automatic Control}{Volume 66, Issue 9, Pages 4460--4467}{December}{2020}{Zheng Zhao, Toni Karvonen, Roland Hostettler, and Simo S\"{a}rkk\"{a}}{paperTME}
\addcontribution{Zheng Zhao wrote the article and produced the results. The stability analysis is mainly due to Toni Karvonen. Roland Hostettler gave useful comments. Simo S\"{a}rkk\"{a} contributed the idea.}
\adderrata{In Example 7, the coefficient $\Phi_{x,2}$ should multiply with a factor $2$.}

\addpublication{Zheng Zhao, Muhammad Emzir, and Simo S\"{a}rkk\"{a}}{Deep state-space Gaussian processes}{Statistics and Computing}{Volume 31, Issue 6, Article number 75, Pages 1--26}{September}{2021}{Zheng Zhao, Muhammad Emzir, and Simo S\"{a}rkk\"{a}}{paperSSDGP}
\addcontribution{Zheng Zhao wrote the article and produced the results. Muhammad Emzir and Simo S\"{a}rkk\"{a} gave useful comments.}

\addpublication{Zheng Zhao, Simo S\"{a}rkk\"{a}, and Ali Bahrami Rad}{Kalman-based spectro-temporal ECG analysis using deep convolutional networks for atrial fibrillation detection}{Journal of Signal Processing Systems}{Volume 92, Issue 7, Pages 621--636}{April}{2020}{Zheng Zhao, Simo S\"{a}rkk\"{a}, and Ali Bahrami Rad}{paperKFSECG}
\addcontribution{Zheng Zhao wrote the article and produced the results. Ali Bahrami Rad helped with the experiments. Simo S\"{a}rkk\"{a} came up with the spectro-temporal idea.}

\addpublication[conference]{Zheng Zhao, Filip Tronarp, Roland Hostettler, and Simo S\"{a}rkk\"{a}}{State-space Gaussian process for drift estimation in stochastic differential equations}{Proceedings of the 45th IEEE International Conference on Acoustics, Speech and Signal Processing (ICASSP)}{Barcelona, Spain, Pages 5295--5299}{May}{2020}{IEEE}{paperDRIFT}
\addcontribution{Zheng Zhao wrote the article and produced the results. Filip Tronarp provided codes for the iterated posterior linearisation filter. Roland Hostettler gave useful comments. Idea was due to Simo S\"{a}rkk\"{a}.}

\addpublication[conference]{Zheng Zhao, Simo S\"{a}rkk\"{a}, and Ali Bahrami Rad}{Spectro-temporal ECG analysis for atrial fibrillation detection}{Proceedings of the IEEE 28th International Workshop on Machine Learning for Signal Processing (MLSP)}{Aalborg, Denmark, 6 pages}{September}{2018}{IEEE}{paperKFSECGCONF}
\addcontribution{Zheng Zhao wrote the article and produced the results. Ali Bahrami Rad helped with the experiments. Simo S\"{a}rkk\"{a} came up with the spectro-temporal idea.}

\addpublication[accepted]{Sarang Thombre, Zheng Zhao, Henrik Ramm-Schmidt, Jos\'{e} M. Vallet García, Tuomo Malkam\"{a}ki, Sergey Nikolskiy, Toni Hammarberg, Hiski Nuortie, M. Zahidul H. Bhuiyan, Simo S\"{a}rkk\"{a}, and Ville V. Lehtola}{Sensors and AI techniques for situational awareness in autonomous ships: a review}{IEEE Transactions on Intelligent Transportation Systems}{20 pages}{September}{2020}{IEEE}{paperMARITIME}
\addcontribution{Zheng Zhao wrote the reviews of AI techniques and produced corresponding results.}

\addpublication[submitted]{Zheng Zhao, Rui Gao, and Simo S\"{a}rkk\"{a}}{Hierarchical Non-stationary temporal Gaussian processes with $L^1$-regularization}{Statistics and Computing}{20 pages}{May}{2021}{Zheng Zhao, Rui Gao, and Simo S\"{a}rkk\"{a}}{paperRNSSGP}
\addcontribution{Zheng Zhao wrote the article and produced the results. Rui Gao contributed the convergence analysis. Simo S\"{a}rkk\"{a} gave useful comments.}

\includepdf[pages=-]{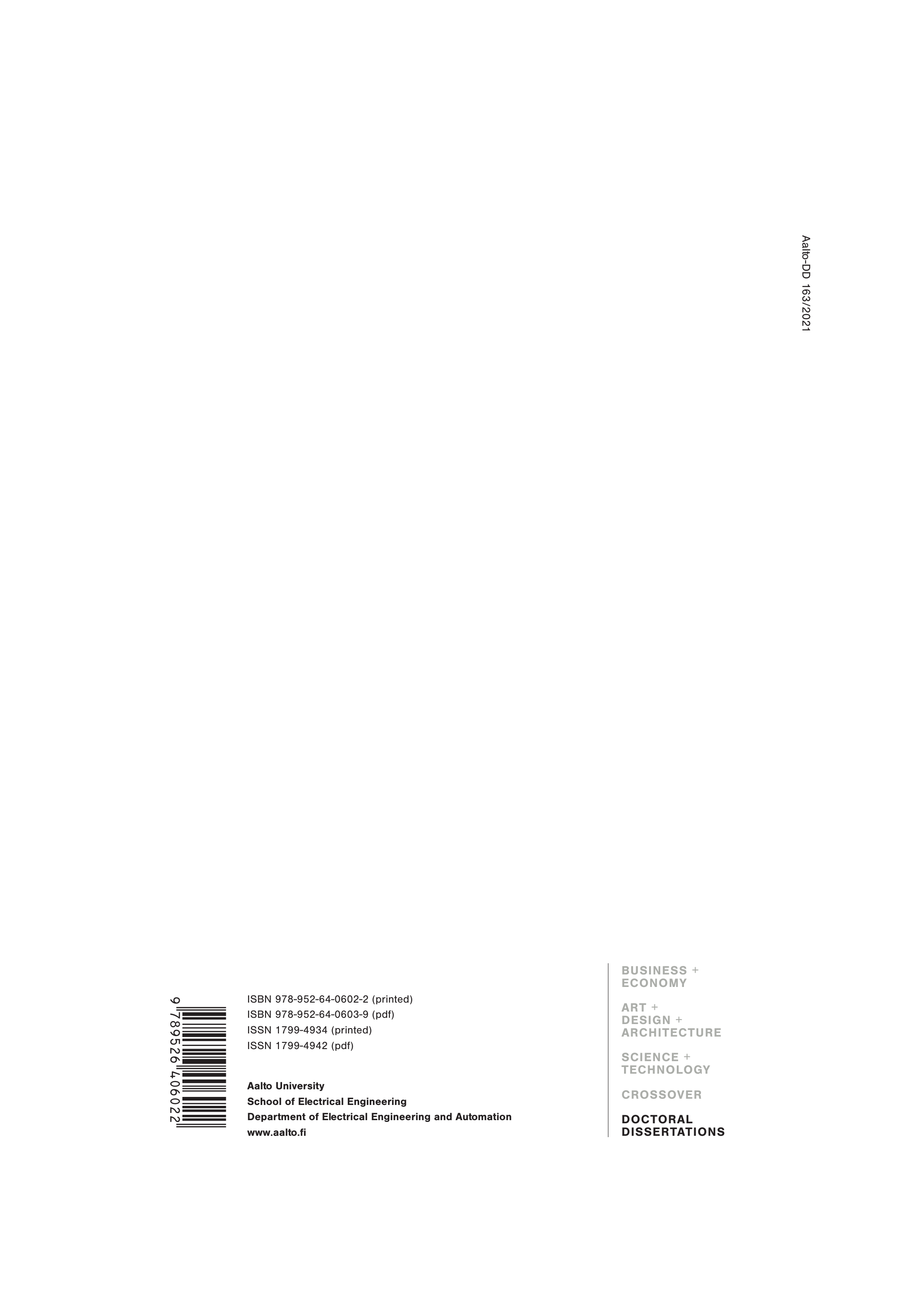}

\end{document}